\DeclarePairedDelimiter{\bra}{\langle}{\rvert}
\DeclarePairedDelimiter{\ket}{\lvert}{\rangle}
\DeclarePairedDelimiterX\braket[2]{\langle}{\rangle}{#1\mathclose{}\delimsize\vert\mathopen{}#2}
\DeclarePairedDelimiterX\braxket[3]{\langle}{\rangle}{#1\mathclose{}\delimsize\vert\,\mathopen{}#2\mathclose{}\,\delimsize\vert\mathopen{}#3}
\DeclarePairedDelimiterX\ketbra[2]{\lvert}{\rvert}{#1\mathclose{}\delimsize\rangle\!\delimsize\langle\mathopen{}#2}
\DeclarePairedDelimiterX\proj[1]{\lvert}{\rvert}{#1\mathclose{}\delimsize\rangle\!\delimsize\langle\mathopen{}#1}
\newtheorem{theorem}{Theorem}
\newtheorem{proposition}[theorem]{Proposition}
\newtheorem{lemma}[theorem]{Lemma}
\newtheorem{corollary}[theorem]{Corollary}
\theoremstyle{definition}
\newtheorem{definition}[theorem]{Definition}
\theoremstyle{remark}
\newtheorem*{remark}{Remark}
\Crefname{property}{Property}{Properties}
\Crefname{relation}{Relation}{Relations}
    \let\Re\relax
    \DeclareMathOperator{\Re}{\mathfrak{Re}}
\newcommand{\approxe}[1]{\overset{#1}{\approx}}
\newcommand{\bigO}{O}
\DeclareMathOperator*{\E}{\mathbb{E}}
\newcommand{\hilb}[1]{\mathcal{#1}}
\newcommand{\mathrmpair}[2]{\mathrm{#1}, \mathrm{#2}}
\DeclareMathOperator{\poly}{poly}
\DeclareMathOperator{\rank}{rank}
\newcommand{\regular}[1]{\hat{#1}}
\DeclareMathOperator{\tr}{tr}
\Crefname{protocol}{\protocolname}{\protocolname s}
\begin{document}

\title{Parallel remote state preparation for fully \\ device-independent verifiable blind quantum computation}
\author{Sean A. Adamson\thanks{\href{mailto:sean.adamson@ed.ac.uk}{\texttt{sean.adamson@ed.ac.uk}}}}
\affil{School of Informatics, University of Edinburgh, \protect\\ 10 Crichton Street, Edinburgh EH8 9AB, United Kingdom}
\date{}

\maketitle

\begin{abstract}
We introduce a device-independent two-prover scheme in which a classical verifier can use a simple untrusted quantum measurement device (the client device) to securely delegate a quantum computation to an untrusted quantum server. To do this, we construct a parallel self-testing protocol to perform device-independent remote state preparation of $n$ qubits and compose this with the unconditionally secure universal verifiable blind quantum computation (VBQC) scheme of Fitzsimons and Kashefi [\href{https://doi.org/10.1103/PhysRevA.96.012303}{Phys. Rev. A \textbf{96}, 012303 (2017)}]. Our self-test achieves a multitude of desirable properties for the application we consider, giving rise to practical and fully device-independent VBQC. It certifies parallel measurements of all cardinal and intercardinal directions in the $XY$ plane as well as the computational basis, uses few input questions (of size logarithmic in $n$ for the client and a constant number communicated to the server), and requires only single-qubit measurements to be performed by the client device.
\end{abstract}

\section{Introduction}

With the advent of cloud-based quantum computing services (such as those now offered by IBM, Amazon, and Microsoft among others \cite{castelvecchi2017ibm,alsina2016experimental,devitt2016performing,ibmquantum,amazonbraket,azurequantum}), it is becoming increasingly important to allow the secure delegation of quantum computations to powerful remote quantum servers.
In such a scenario, a client wishes to securely delegate some computation to one of these remote servers.
Such a client desires that the remote server cannot learn about the computation (a property called ``blindness'') and that they can be sure that the computation was performed correctly by the server (called ``verifiability'').
This is known succinctly as verifiable blind delegated quantum computation (VBQC).
Ideally, the delegating party would only require strictly classical capabilities.
This has been done \cite{fitzsimons2018post,natarajan2017quantum,hayashi2018self}, but such protocols come with disadvantages such as not exhibiting blindness, requiring many provers, or assuming a tensor product structure of the provers' systems.
Instead, one may allow the client party to have some minimal quantum technological capabilities (such as the ability to perform single-qubit measurements) that are foreseeable of possible future personal quantum devices (e.g., those that might fit inside a mobile phone).
Unconditionally secure protocols for VBQC already exist provided that the client can initially ensure the preparation of a product of single-qubit states on the server side.
One of the most prevalent of these is the seminal protocol of \textcite{fitzsimons2017unconditionally}, which we henceforth refer to as the \emph{FK protocol}.
It requires input states to be prepared in any of the eight cardinal and intercardinal directions of the $XY$ plane or the computational basis, and is based on the measurement-based quantum computing (MBQC) model \cite{raussendorf2001one,danos2007measurement,broadbent2009parallelizing}.
Further improvements have since been made to the protocol such as reducing the overhead of qubits involved in verification \cite{kashefi2017optimised,xu2020improved}.
While possible deviations by the server are taken into account in such protocols, the level of trust given to client side devices must also be taken into consideration.
This is important since the devices held by a client are likely to be error-prone and could have been prepared by external parties.
The most general form of security for protocols in this context is known as ``device-independence'' \cite{mayers1998quantum,colbeck2011private}, in which no assumptions are made about the honesty of the devices (they may even be adversarially prepared).
In contrast, blind verifiable delegation protocols such as \cite{aharonov2017interactive,morimae2014verification,hayashi2015verifiable,fitzsimons2017unconditionally,broadbent2018verify,fujii2017verifiable,morimae2017verification,fitzsimons2018post} are not inherently device-independent \cite{fitzsimons2017private}.
One promising approach to the desired device-independent remote state preparation is that of ``self-testing'' \cite{mayers1998quantum,mayers2004self}.
The idea is for a classical verifier (in this case the client) to certify the existence of maximally-entangled states shared between two provers (in this case one being the quantum device of the client and the other being the server) from measurement statistics alone.
The client side prover then performs particular measurements on the entangled states which, depending on their outcomes, teleports particular states to the server side.
Most self-testing protocols, however, do not exhibit the qualities required for practical composability with FK-type VBQC protocols \cite{gheorghiu2015robustness}.
Typical approaches have thus far either prepared states sequentially \cite{hajdusek2015device} or have appealed to other verifiable protocols such as \cite{broadbent2018verify} and proven their blindness property separately as inherited from the use of self-testing itself \cite{coladangelo2019verifier}.

\subsection{Our contributions}

We exhibit a two-prover parallel self-testing-based scheme in which a classical verifier is able to delegate a quantum computation to an untrusted quantum server Bob (who is assumed to be in possession of a powerful universal quantum computer) using only a simple untrusted measurement device (which may only perform single-qubit measurements) and shared entanglement.
In our context, Alice acts both as the verifier and the client who is in possession of the measurement device.
The computation is performed blindly by the server and its correctness is verifiable by the client.
The protocol proceeds for the verifier in the following way:
\begin{enumerate}
    \item Certify $n$ EPR pairs of entanglement shared between Alice and Bob, and measurements of each in the $XY$ plane and computational bases.
    \item Prepare a suitable $n$-qubit state on Bob's side by measurements performed on Alice's side.
    \item Perform the unconditionally secure interactive protocol of \textcite{fitzsimons2017unconditionally} (or another FK-type protocol) with Bob.
\end{enumerate}
The blindness and verifiability properties of our protocol are inherited from the FK protocol, since using an entanglement-based approach to the remote state preparation ensures that blindness is not compromised \cite{gheorghiu2015robustness,dunjko2016blind}.
Our main result is the parallel device-independent certification of remotely prepared states that can be used for the first two steps of the protocol.
\begin{theorem}[Informal \cref{thm:state_preparation}]
\label{thm:state_preparation_informal}
    Suppose that the maximal quantum expectation values of all given Bell expressions are approximately attained by the measurement statistics collected from provers Alice and Bob, i.e., Alice and Bob pass all requested Bell tests with high probability (this is achievable by using the honest strategy of \cref{sec:honest_strat}).
    Then, with high probability, the state on Bob's side upon Alice performing a measurement is close (up to some isometry that is independent of the measurement or its outcome) in trace distance to an ideal valid input to the FK protocol or that with all dummy qubits flipped.
    The prepared state is known only to Alice.
\end{theorem}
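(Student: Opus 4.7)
The plan is to build the result on top of a parallel self-testing statement certifying $n$ EPR pairs together with the full set of measurement operators that the client device is supposed to implement. Concretely, I would introduce a family of Bell tests whose honest strategy consists of measuring each shared EPR pair in one of the bases $X$, $Z$, and the four $XY$-plane bases at angles $k\pi/4$ for $k=1,3,5,7$. The tests should be parallel (all $n$ pairs probed in one round) and should have the property that any strategy whose winning probability is within $\varepsilon$ of the ideal is close---up to a local isometry on the client's Hilbert space---to the honest one on each pair, with an error scaling only polynomially in $n$ and $\sqrt{\varepsilon}$. A natural building block is a parallel version of a Mayers--Yao / CHSH-based rigidity result, enriched with tilted CHSH-type inequalities to pin down the intercardinal measurement directions.

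First I would argue that if all requested tests are passed with winning probability at least $1-\varepsilon$ (which, by a Chernoff--Hoeffding bound over the sampled statistics, holds with high probability when all tests are passed in the protocol), then the self-testing result yields a local isometry $\Phi = \Phi_A \otimes \Phi_B$ and a junk state $\sigma_{\mathrm{junk}}$ such that the real shared state $\lvert\psi\rangle_{AB}$ is mapped close to $\lvert\mathrm{EPR}\rangle^{\otimes n} \otimes \lvert\mathrm{junk}\rangle$ and each of Alice's measurement operators is mapped to the ideal one, up to an error of order $\poly(n)\sqrt{\varepsilon}$ in operator norm. Crucially, the isometry must be the same across all choices of Alice's measurement settings and outcomes---this is the standard strong form of self-testing that I will insist the underlying rigidity theorem provides. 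The $\mathbb{Z}_2$ gauge ambiguity inherent to self-testing (the honest strategy and an appropriate complex conjugate/relabelling give the same statistics) is exactly what accommodates the ``all dummy qubits flipped'' alternative in the statement.

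Next I would convert this into a statement about Bob's post-measurement state. Given the operator-norm closeness of the real and ideal measurement operators, a triangle-inequality argument on trace distance shows that, conditioned on any outcome $a$ from Alice's measurement, Bob's reduced state is within $\poly(n)\sqrt{\varepsilon}$ in trace distance of the state obtained by teleportation through the ideal EPR pairs with ideal measurements---that is, the intended FK input state (or its dummy-flipped counterpart, depending on the branch fixed by the self-test). Because the isometry is independent of the measurement and outcome, this approximation is uniform in Alice's choices, and the conclusion then follows for whatever measurement Alice actually ends up performing without needing to union-bound over settings in a way that would blow up the error.

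The main obstacle I anticipate is the parallel rigidity for the intercardinal $XY$-plane measurements: standard parallel self-tests pin down $X$ and $Z$ on each qubit, but certifying operators of the form $\cos(k\pi/4)X + \sin(k\pi/4)Y$ acting on all $n$ qubits simultaneously---with a single isometry and robust polynomial-in-$n$ error scaling---is more delicate. It will require carefully chosen tilted Bell inequalities together with an approximate commutation/anticommutation argument to reconstruct each non-Pauli measurement from the certified Pauli operators, and keeping the robustness bound from degrading to something exponential in $n$ will likely force me to work in a sum-of-squares formulation and to accumulate the perturbations via operator-norm (rather than trace-norm) inequalities before converting back at the very end.
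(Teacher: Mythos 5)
Your overall architecture matches the paper's: a parallel self-test certifying $n$ EPR pairs and all required measurement bases (with the complex-conjugation $\mathbb{Z}_2$ ambiguity confined so that it only flips the dummy qubits), followed by a conversion to a statement about Bob's post-measurement states. However, the step you describe as "a triangle-inequality argument on trace distance" conditioned on \emph{any} outcome $\bm a$ contains a genuine gap, and it is precisely the gap the paper devotes \cref{sec:post-measurement_states} to closing. A self-testing statement gives you $\lVert V M_{\bm a}\ket{\psi} - M'_{\bm a}\ket{\psi'} \rVert \leq \delta$ for the \emph{subnormalized} vectors $M_{\bm a}\ket{\psi}$ (and note it is state-dependent closeness, not operator-norm closeness of the measurement operators, which self-testing does not provide). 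To compare the \emph{normalized} post-measurement states you must divide by $\sqrt{p(\bm a)}$, and since there are $2^{n}$ outcomes each occurring with probability roughly $2^{-n}$, a per-outcome triangle inequality yields a bound of order $\delta\, 2^{n/2}$ --- exponentially bad in $n$, not $\poly(n)\sqrt{\varepsilon}$. The paper escapes this via \cref{lem:general_robust_prob,thm:robust_prob}: summing the guarantee over all observables $M^{\bm s}$ gives $\sum_{\bm a}\lVert V M_{\bm a}\ket{\psi}-M'_{\bm a}\ket{\psi'}\rVert^{2}\leq\delta^{2}$, and a Markov-inequality argument over the outcome distribution then shows the normalized trace distance is at most $\delta^{2/3}$ except on a set of outcomes of probability at most $4\delta^{2/3}$. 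This is exactly why the theorem is stated ``with high probability'' over outcomes rather than for every outcome; your proposal as written proves either an exponentially weak bound or a statement that is false for atypical outcomes.

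Two smaller remarks. First, you identify the intercardinal $XY$-plane measurements as the main difficulty and propose tilted CHSH inequalities; the paper handles this much more cheaply by first certifying $\sigma_{\mathrm{x}}$ and $\sigma_{\mathrm{y}}$ via triple CHSH and then adding perfect-correlation tests forcing Alice's extra observables to act as $(\sigma_{\mathrm{x}}\pm\sigma_{\mathrm{y}})/\sqrt{2}$ on the state, so no new rigidity machinery is needed there. Second, you ask for a single isometry independent of all of Alice's settings and outcomes; the paper only needs (and only obtains) independence of \emph{Bob's} local isometry $V_{\hilb{B}}$ and the junk state from the preparation question, while Alice's isometry $V_{\hilb{A}}^{\bm\chi}$ does vary with $\bm\chi$ --- this weaker requirement suffices for composability with the FK protocol and is easier to establish. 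You would also need an explicit lemma (the paper's \cref{lem:single_to_multiple_observable}) lifting certification of single observables acting on $\ket{\psi}$ to products of $n$ of them, since the self-test conclusions do not automatically compose under repeated application of physical operators.
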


The self-testing protocol on which we base \cref{thm:state_preparation_informal} (whose resulting statement is given in \cref{thm:protocol_isometry}) simultaneously exhibits many properties desirable for the VBQC application we consider.
The rationale behind all of these properties is discussed in further detail in \cref{sec:efficient_testing}.
We briefly summarize them here.

Our self-test is parallel, meaning that $n$ Bell states are certified at once with no prior assumption on the tensor product structure of the underlying state space.
Permutations of $n$ single-qubit measurements on Alice's side, each either in the computational basis or one of the four bases of the eight canonical states of the $XY$ plane (those corresponding to all cardinal and intercardinal directions) are also certified.
The possible correlated complex conjugation freedom that arises for measurement operators of this kind is accounted for and, moreover, is limited to measurements in the computational basis (so as only to affect the preparation of dummy qubits for the VBQC protocol).
The number of possible input questions in the test is small.
The client side measurement device is asked questions of size logarithmic in $n$, while questions of only a constant size need be communicated to the server.
In the honest case, Alice need only perform single-qubit measurements local to each of the EPR pairs she shares with Bob.
This reflects the minimal quantum capabilities she is given access to.
Moreover, despite Bob being assumed to possess a powerful universal quantum computer, he need only perform two-qubit Bell measurements in the self-testing protocol, and most of the time only measures single qubits.
The isometry on Bob's subsystem (and resulting reduced junk state) guaranteed by the self-test is independent of any string of $n$ measurement bases selected for state preparation.
This ensures composability with the FK protocol.
Classical processing of the gathered raw experimental data required for the self-test scales efficiently in $n$.

While we do not explicitly attempt to derive robustness bounds for our self-testing statement, we believe that our derivations are compatible with the many standard techniques that have been developed and used successfully for this purpose in other works \cite{mckague2016self,mckague2017self,coladangelo2017parallel,chao2018test,bowles2018self}.
Nevertheless, we do phrase all of the subtests comprising our full self-test in terms of an error tolerance $\varepsilon$, and derive all relations between operators in terms of this quantity.
We therefore expect that our self-testing statement would exhibit analytic robustness at worst $\bigO(\sqrt{\varepsilon} n^{2})$, thereby giving a trace distance that is $\bigO(\varepsilon^{1/3} n^{4/3})$ in \cref{thm:state_preparation_informal}.
Large improvements in robustness having been shown achievable using numerical optimization techniques such as semidefinite programming \cite{wu2016device,navascues2007bounding,navascues2008convergent,navascues2015almost,yang2014robust,bancal2015physical,supic2021device,wang2016all,wu2014robust,agresti2021experimental}.

The information aggregated from experimental outcomes that is used in each subtest is \emph{local} in the sense that it corresponds to measurements of only individual or pairs of the $n$ Bell states in the honest strategy (despite being conditioned on the measurements that are asked for at other positions).
The quantity $\varepsilon$ for each subtest then does not refer to the noise and statistical uncertainty present over all $n$ Bell states, but rather for constant sized chunks of the experimental resources.
Thus, in a (possibly noisy) physical implementation of the honest strategy, the error estimate $\varepsilon$ would not typically increase with the number of qubits being prepared $n$.
It is also for this reason that the exponential number of possible outcomes in $n$ associated with each question does not lead to the estimation of probabilities requiring exponential time; local consideration of outcomes effectively transforms exponentially many probabilities with one distribution per question to a linear number of distributions per question each with a constant number of probabilities to estimate.

It is important to acknowledge that while we demonstrate several practical advantages over other device-independent VBQC schemes, it is still infeasible with current hardware to achieve a sufficiently small robustness for large enough, computationally interesting $n$ \cite{nadlinger2022experimental}.
In practice, we would likely require fault-tolerant devices, in which many physical qubits are error corrected to perform local measurements of required logical qubits.
Fortunately the present protocol is compatible with fault-tolerance \cite{gheorghiu2015robustness}.
It should also be stated that in the present work we do not consider the well-known problem of reusing devices in device-independent protocols.
That is, devices containing an internal memory could be prepared that record inputs and outputs and then (if reused) leak information about them in later runs.
This issue was first identified by \textcite{barrett2013memory}.
Further information on this issue can be found in the review by \textcite{portmann2022security}.

Many of our results are not specific to VBQC alone.
In particular, our self-test and remote state preparation protocol have properties that may be desirable for other quantum delegation applications.
Furthermore, due to the range of measurements we are able to certify, our tests could be easily adapted to the remote preparation of other states.

\subsection{Overview of techniques}

The basis for our self-testing is a careful consideration of the statistics one would expect to find from parallel Clauser--Horne--Shimony--Holt (CHSH) measurements of $n$ maximally entangled Bell states shared between provers Alice and Bob \cite{clauser1969proposed}.
We take Alice to be the one performing measurements of Pauli observables from the standard strategy for the CHSH game.
By appropriately chunking the raw data received by the verifier into outcomes for the different questions of a local CHSH inequality conditioned on the different possible fixed input questions asked at other position, it is possible to construct sets of CHSH-type inequalities for each of the individual Bell states, all of which would be saturated with honest behavior.
We proceed to show the opposite of this---that the saturation of these inequalities is sufficient to prove all operator relations required and achieve a self-testing statement for the Bell states and CHSH measurements.
Moreover, we show that after removing many of the requested inequalities, the remaining tests are still sufficient.
Enough of the tests can be removed that the total number of remaining tests (and thus input questions) is reduced to scale quadratically (rather than exponentially) in $n$, and there are only a constant number of possible actions that Bob need take.
Intuitively, the players cannot cheat in the tests by sharing fewer than $n$ Bell states since Alice cannot be sure which of $n$ positions of Bob she is being tested against, while at the same time Bob does not know how correlated different positions in Alice's input question are with one another.
The reduction in the number of our questions comes from the fact that only pairwise correlations in Alice's question strings must be hidden (see \cref{sec:protocol} for further details).

We show that the ``triple CHSH'' inequalities introduced by \textcite{acin2016optimal} can be used to extend our technique to include certification of all Pauli observables $\sigma_{\mathrm{x}}$, $\sigma_{\mathrm{y}}$, and $\sigma_{\mathrm{z}}$.
We construct our isometry such that the complex conjugation ambiguity appears in $\sigma_{\mathrm{z}}$ rather than the usual $\sigma_{\mathrm{y}}$ measurements.
We then introduce further tests (also efficient), based on perfect correlations between further measurements for Alice and those present already for Bob, that ensure that these additional untrusted measurements for Alice certify reference measurements of the intercardinal directions of the $XY$ plane, as required to generate input states to the FK VBQC protocol.
Finally, we augment our self-test thus far with a test that expects Bob to perform Bell measurements on two sets of pairs of his qubits, in order to ensure that any possible correlated complex conjugation of the provers' measurement operators occurs globally across all $n$ of their registers (this is similar to techniques used in \cite{coladangelo2019verifier,bowles2018self} for the same purpose).

One drawback of the technique we use to reduce Alice's questions to quadratic order is that the resulting local isometry is only able to certify the measurement operators for a constant number of choices of bases for Alice's measurement of the $n$ EPR pairs.
The greater structure present in her restricted set of possible inputs may leak some information about this choice of bases, which in turn would allow Bob to gain some knowledge of the states prepared for him and cheat in the subsequent VBQC interactive protocol.
To remedy this, we instead use polynomially many different sets of our quadratically many questions (polynomially many questions in $n$ overall) and perform the certification for each of these.
This results in a polynomial-sized subset $\mathcal{S}$ of questions for Alice (which we call ``special'' questions), for each of which a different local isometry certifies a different string of bases measured on Alice's reference system.
In order for our remote state preparation protocol to be composable with the FK interactive protocol, it must be the case that states are prepared up to an isometry that is independent of the choice of bases in $\mathcal{S}$ (otherwise one could not assume that the physical state of Bob originates from local quantum operations applied to his ideal reference state without also assuming that he has knowledge of the bases chosen \cite{gheorghiu2015robustness}).
This does not present a problem for the use of our self-test as, despite each question in $\mathcal{S}$ requiring a different local isometry, we show that the isometry local to Bob's subsystem is the same in all of these cases.
While it is the case that the resulting security parameter for the FK protocol will go as the reciprocal of $\lvert \mathcal{S} \rvert$, which is inverse polynomial in $n$ for our choice of questions, this trade-off between question size and security scaling is an inescapable feature of any remote state preparation protocol used for FK-type protocols.

Since our protocol must perform remote state preparation, we are interested in self-testing statements that estimate the closeness of (normalized) physical postmeasurement states from their (normalized) ideal counterparts.
The robustness guarantees usually given by self-testing statements estimate this distance for observables acting on states, which naively lead to similar estimates for (subnormalized) measurement projectors acting on states.
This is acceptable for protocols that prepare states sequentially (such as in \cite{reichardt2013classical1,reichardt2013classical2,gheorghiu2015robustness,hajdusek2015device}), however, for parallel protocols of $n$ states (which have exponentially many outcomes per measurement) would lead to robustness estimates that scale exponentially in $n$ for post-measurement states.
We overcome this using \cref{lem:general_robust_prob,thm:robust_prob} of \cref{sec:post-measurement_states} at the cost of relaxing the estimate by a factor that is polynomial in the original robustness and allowing acceptance with high probability.

\subsection{Related works}

General composability of delegated quantum computation was studied by \textcite{dunjko2014composable}.
The original two-prover protocol for VBQC that we make use of is that of \textcite{fitzsimons2017unconditionally}, which is a verifiable extension to the blind protocol of \textcite{broadbent2009universal}.
Other forms of resource states and repetition schemes have since been devised in order to improve the practicality and overhead of the protocol \cite{kashefi2017optimised,xu2020improved,kashefi2021securing}.
This VBQC protocol was proven to be robust and composable with device-independent state preparation protocols by \textcite{gheorghiu2015robustness}.
In this work, they also used the CHSH rigidity results of \textcite{reichardt2013classical1,reichardt2013classical2} to achieve such preparation sequentially, resulting in a device-independent VBQC protocol using total resources scaling like $g^{2048}$, where $g$ is the number of gates in the circuit to be delegated.
A more efficient scheme (again with sequential preparation but this time based on self-testing) was given by \textcite{hajdusek2015device} and uses $\Theta(g^{4} \log{g})$ resources.
This, however, requires that the server party is in possession of $n$ spacelike separated provers, which is difficult to achieve in practice and cannot be verified to be the case by the client.
Another many-prover protocol was presented by \textcite{mckague2016interactive} based on self-testing graph states.
A more recent approach is the so-called ``verifier-on-a-leash'' protocol of \textcite{coladangelo2019verifier}.
This device-independent protocol is based on the verifiable delegation approach of \textcite{broadbent2018verify} rather than the FK protocol, with the blindness property a result of its combination with self-testing.
The efficient resource usage $\Theta(g \log{g})$ is primarily a result of the self-testing protocol used: a modified version of the ``Pauli braiding test'' of \textcite{natarajan2017quantum}.
However, this protocol requires that the client possesses a more powerful quantum measuring device (being able to perform joint measurements of multiple qubits), has exponentially many questions of size $\bigO(n)$ bits, and is not fault-tolerant.
The robustness guarantee given by the underlying self-testing protocol is $\poly(\epsilon)$, where $\epsilon$ is the overall rejection probability in the test (a quantity that is not directly comparable with the local error tolerances $\varepsilon$ used in many other self-testing statements including our own).
The quantum steering scenario, in which the device of one party is entirely trusted, was considered in the context of VBQC by \textcite{gheorghiu2017rigidity}.
In the setting of \emph{computational} security, there exist recent single-prover protocols such as those of \textcite{mahadev2018classical,gheorghiu2019computationally,gheorghiu2023quantum}, the latter of which perform remote state preparation in this setting.

The concept of self-testing was first introduced by \textcite{mayers1998quantum} in a cryptographic context, with the first mention of the term ``self-testing'' appearing in \cite{mayers2004self}.
The question of which states can be self-tested was answered by \textcite{coladangelo2017all} in the bipartite case and later for the multipartite case \cite{supic2023quantum}.
Arbitrary parallel self-testing of EPR pairs was first introduced by \textcite{mckague2016self} and a host of self-tests for entangled states of arbitrary local dimension have now been proposed \cite{yang2013robust,mancinska2014maximally,mckague2016self,coladangelo2017parallel,coudron2016parallel,mckague2017self,ostrev2016structure,coladangelo2017all,coladangelo2018generalization,chao2018test,natarajan2017quantum,natarajan2018low,bowles2018self,natarajan2019neexp,mancinska2021constant,supic2021device,sarkar2021self,fu2022constant,adamson2022practical}.
Complex conjugation ambiguity in self-testing complex measurements was first recognized by \textcite{mckague2011generalized}.
The triple CHSH inequality was first introduced by \textcite{acin2016optimal} and subsequently used for self-testing by \textcite{bowles2018self}, whose results we make use of in the present work.
It was also used by \textcite{renou2021quantum} to devise an experiment to rule out quantum theory with real numbers.
Commutation-based measures were introduced by \textcite{kaniewski2017self} and also used to certify multiple anticommuting observables.
Works of \textcite{mckague2016self} and \cite{coladangelo2017parallel} gave general theorems for converting certain sets of approximate commutation relations between observables to robust self-testing statements (with real measurements).
The latter is based on work by \textcite{chao2018test} and also gives self-testing statements for parallel repeated CHSH games, while the former was later used as such by the same author \cite{mckague2017self}.
Parallel self-tests with some of the desirable properties for state preparation were given in \cite{adamson2022practical}, and are based on the family of ``magic rectangle'' nonlocal games \cite{adamson2020quantum}.
The possibility of self-testing with just a single prover by replacing nonlocal correlations by computational assumptions was examined by \textcite{metger2021self}.
More details on self-testing can be found in the excellent review by \textcite{supic2020self}.

\subsection{Organization of the paper}

In \cref{sec:prelims}, we give details of the notation and basic results of which we make frequent use.
In \cref{sec:efficient_testing}, we explain in further detail the desirable properties required of self-testing protocols that are to be used for remote state preparation in the context of VBQC.
The triple CHSH inequality of \textcite{acin2016optimal} is exhibited in \cref{sec:triple_chsh}, and we use it to show a single-copy self-testing statement that is used as a building block in later results.
\Cref{sec:parallel_self-testing} contains our results linking the existence of certain operator relations to a self-testing statement with desired properties, as well as a result for lifting parallel self-testing statements certifying only single-qubit observables to those with arbitrary tensor products of observables.
Our main protocol is outlined in \cref{sec:protocol}, where we give some intuition behind our self-testing protocol, state our main remote state preparation result (\cref{thm:state_preparation}), detail the construction of Alice's question set in \cref{sec:alice_questions}, define the measurement scenario and tests required in \cref{sec:requested_observations}, and give the honest strategy in \cref{sec:honest_strat}.
We prove in \cref{sec:operator_rels} that approximate acceptance in our tests yields approximate operator relations, and thus give our formal self-testing statements of \cref{thm:protocol_isometry,cor:protocol_isometry}.
We finish in \cref{sec:discussion} with some discussion and possible future directions.

As there are many interconnected results, let us summarize the main tree of dependencies.
Our main result regarding state preparation in the context of our VBQC application is stated in \cref{thm:state_preparation}.
Its proof is contained in \cref{sec:state_preparation_proof} and makes use of \cref{cor:protocol_isometry,thm:robust_prob}.
In turn, \cref{cor:protocol_isometry} is a special case of the more general self-testing statement \cref{thm:protocol_isometry} (giving us device-independence) after applying \cref{lem:single_to_multiple_observable} in order to certify simultaneous measurements of all qubits, rather than just one.
\Cref{thm:robust_prob} is used to ensure the distance of prepared states from ideal does not blow up exponentially in the number of qubits.
Our general self-testing statement, \cref{thm:protocol_isometry}, is a consequence of \cref{thm:single_observable_isometry} (a general result on self-testing) and relies on the Bell expressions and correlations described in \cref{sec:requested_observations} being experimentally satisfied.
To aid in its proof, \cref{thm:assumptions_satisfied} converts these particular expressions to a number of operator relations (of a standard form as found in the statement of \cref{thm:single_observable_isometry}).
This is done by combining \cref{prop:symmetry,prop:com_alice,prop:com_bob,prop:acomm_alice,prop:acomm_bob,prop:conj_rel} of \cref{sec:individual_relations}.

\section{Preliminaries}
\label{sec:prelims}

We begin by introducing in \cref{sec:notation} the notation that will be used throughout the paper and some elementary results thereof.
We then introduce some of the definitions for the self-testing of quantum systems in \cref{sec:self-testing}.
In \cref{sec:sos_decomposition} we discuss sum-of-squares decomposition for Bell operators.
In \cref{sec:regularization}, we detail the unitary regularization technique for operators and exhibit a convenient result on the anticommutativity of such operators.
Finally, in \cref{sec:vbqc}, we briefly outline how blindness and verifiability are achieved in the FK-type VBQC protocols.

\subsection{Notation and elementary results}
\label{sec:notation}

In this section, we introduce notation that will be used throughout.
We also state some useful properties of the mathematical objects presented.

\subsubsection{Pauli observables and important states}

The Pauli observables $\sigma_{\mathrm{x}}$, $\sigma_{\mathrm{y}}$, and $\sigma_{\mathrm{z}}$ will be denoted interchangeably by $\sigma_{1}$, $\sigma_{2}$, and $\sigma_{3}$, respectively.
In the computational basis, these have matrices
\begin{subequations}
\begin{align}
    \sigma_{\mathrm{x}} = \sigma_{1} & =
    \begin{pmatrix}
        0 & \phantom{-} 1 \\
        1 & \phantom{-} 0
    \end{pmatrix} , \\
    \sigma_{\mathrm{y}} = \sigma_{2} & =
    \begin{pmatrix}
        0 & -i \\
        i & \phantom{-} 0
    \end{pmatrix} , \\
    \sigma_{\mathrm{z}} = \sigma_{3} & =
    \begin{pmatrix}
        1 & \phantom{-} 0 \\
        0 & -1
    \end{pmatrix} .
\end{align}
\end{subequations}
We will define the $\ket{\pm_{\theta}}$ qubit states for $\theta \in \mathbb{R}$ as
\begin{equation}
    \ket{\pm_{\theta}} = \frac{\ket{0} \pm e^{i \theta} \ket{1}}{\sqrt{2}} .
\end{equation}
We define operators $\sigma_{\mathrm{x} + \mathrm{y}}$ and $\sigma_{\mathrm{x} - \mathrm{y}}$ (denoted interchangeable by $\sigma_{4}$ and $\sigma_{5}$) as
\begin{equation}
    \sigma_{\mathrm{x} + \mathrm{y}} = \sigma_{4} = \frac{\sigma_{\mathrm{x}} + \sigma_{\mathrm{y}}}{\sqrt{2}} ,\quad
    \sigma_{\mathrm{x} - \mathrm{y}} = \sigma_{5} = \frac{\sigma_{\mathrm{x}} - \sigma_{\mathrm{y}}}{\sqrt{2}} .
\end{equation}
The $\ket{\pm_{0}} = \ket{\pm}$ states are eigenvectors of $\sigma_{\mathrm{x}}$, and the $\ket{\pm_{2 \pi / 4}} = \ket{\pm i}$ states are eigenvectors of $\sigma_{\mathrm{y}}$.
Meanwhile, the $\ket{\pm_{\pi / 4}}$ states are eigenvectors of $\sigma_{\mathrm{x} + \mathrm{y}}$, and the $\ket{\pm_{3 \pi / 4}}$ states are eigenvectors of $\sigma_{\mathrm{x} - \mathrm{y}}$.
Another convenient notation that we make use of is writing $\ket{\sigma_{\chi}^{\lambda}}$ for the eigenstate of $\sigma_{\chi}$ having corresponding eigenvalue $\lambda \in \{+1, -1\}$, where $\sigma_{\chi}$ is any of the five operators just introduced with $\chi \in \{\mathrm{x}, \mathrm{y}, \mathrm{z}, \mathrm{x} + \mathrm{y}, \mathrm{x} - \mathrm{y}\}$.

We denote by $\ket{\Phi^{+}}$ the maximally entangled Bell state
\begin{equation}
    \ket{\Phi^{+}} = \frac{\ket{00} + \ket{11}}{\sqrt{2}} .
\end{equation}

\subsubsection{Complex conjugation and composition}

For a vector $\ket{v}$ or linear map $M$, we denote using a star symbol as $\ket{v}^{*}$ or $M^{*}$ the complex conjugation of their matrix entries with respect to a fixed basis.
This is not to be confused with the notation $\mathbb{N}^{*}$, which we use for the set of natural numbers excluding $0$.
To denote the composition of many linear maps, we adopt product notation, with the convention for the order in which they are applied given by
\begin{equation}
    \prod_{j=1}^{n} M_{j} = M_{1} \dots M_{n} .
\end{equation}
Given also some $\bm s \in \{0, 1\}^{n}$, we define the related notation denoted by an operator ``raised to the power'' of this string by
\begin{equation}
    M^{\bm s} = \prod_{j=1}^{n} M_{j}^{s_{j}} .
\end{equation}
This is an abuse of notation in which $M$ has not been defined on its own, but is nonetheless used in the notation on the left-hand side due to the same letter being used for all the given $M_{j}$ on the right-hand side.
It will always be clear in context to which set of operators $\{ M_{j} \mid 1 \leq j \leq n \}$ the notation $M^{\bm s}$ is associated with due to the choice of capital letter being used.

\subsubsection{Hilbert space notation}

Hilbert spaces will be denoted using calligraphic symbols, for example $\mathcal{H}$.
In the context of delegated computation, Alice will refer to the client party, while Bob will refer to the server party.
Hilbert spaces corresponding to the parties Alice and Bob will be denoted by variations of the symbols $\hilb{A}$ and $\hilb{B}$, respectively.
In certain situations, it will be useful to explicitly write to which space a state belongs.
To this end, we sometimes write the Hilbert space as a subscript $\ket{\psi}_{\hilb{H}} \in \hilb{H}$.
Similarly, for a linear map $M_{\hilb{H}}$, a subscript or superscript calligraphic symbol denotes its domain.
In the case of a joint space $\hilb{A} \otimes \hilb{B}$, we may omit the tensor product symbol in this notation, so that $\ket{\psi}_{\hilb{A} \hilb{B}} \in \hilb{A} \otimes \hilb{B}$.
Given a state $\ket{\psi}$ in a joint space $\hilb{A} \otimes \hilb{B}$ and a linear map $M$ defined on $\hilb{A}$, we often also denote by $M$ its extension to $\hilb{A} \otimes \hilb{B}$ which acts trivially on $\hilb{B}$.
That is, we adopt the notation
\begin{equation}
    M \ket{\psi} = (M \otimes I_{\hilb{B}}) \ket{\psi} ,
\end{equation}
where $I_{\hilb{B}}$ is the identity operator on $\hilb{B}$.

\subsubsection{Properties of norms}

The norm $\lVert \cdot \rVert$ will refer throughout to that induced by the inner product of the Hilbert space being considered as $\lVert \ket{v} \rVert = \sqrt{\braket{v}{v}}$.
In the case of a linear operator $O$ defined on this Hilbert space, $\lVert O \rVert_{p}$ will refer to its Schatten $p$-norm.
\begin{definition}[Schatten $p$-norm]
    Let $\hilb{H}_{1}$ and $\hilb{H}_{2}$ be Hilbert spaces.
    For $1 \leq p < \infty$, the \emph{Schatten $p$-norm} of a bounded linear operator $O \colon \hilb{H}_{1} \to \hilb{H}_{2}$ is given by
    \begin{equation}
        \lVert O \rVert_{p} = \tr(\lvert O \rvert^{p})^{\frac{1}{p}} ,
    \end{equation}
    where $\lvert O \rvert = (O^{\dagger} O)^{1/2}$.
    If, moreover, $O$ is compact and both $\hilb{H}_{1}$ and $\hilb{H}_{2}$ are separable then, equivalently,
    \begin{equation}
        \lVert O \rVert_{p} = \Biggl( \sum_{j} s_{j}^{p} \Biggr)^{\frac{1}{p}} ,
    \end{equation}
    where the $s_{j} \geq 0$ are the singular values of $O$ (i.e., the eigenvalues of $\lvert O \rvert$) given in descending order.
\end{definition}

Of particular importance are the trace class norm (the case where $p = 1$) and the operator norm (conventionally denoted with $p = \infty$).
We also denote the operator norm without any subscript.
The operator norm has the important properties that $\lVert O \ket{v} \rVert \leq \lVert O \rVert \lVert \ket{v} \rVert$ for all vectors $\ket{v}$, and that $\lVert U \rVert = 1$ if $U$ is unitary.
The trace class norm satisfies
\begin{equation}
\label{eq:traced_out_norm}
    \lVert \tr_{\hilb{B}}(O) \rVert_{1} \leq \lVert O \rVert_{1}
\end{equation}
if $O$ is defined on a joint Hilbert space $\hilb{A} \otimes \hilb{B}$ \cite{rastegin2012relations}.

\subsubsection{Tolerance relations}

We define a tolerance relation $\approxe{\varepsilon}$ to denote when two vectors are $\varepsilon$-close in the vector norm distance.
Given vectors $\ket{u}$ and $\ket{v}$ in the same Hilbert space, this relation is defined by
\begin{equation}
    \ket{u} \approxe{\varepsilon} \ket{v} \iff \lVert \ket{u} - \ket{v} \rVert \leq \varepsilon .
\end{equation}
By the triangle inequality, we can then succinctly state the property that
\begin{equation}
    \ket{u} \approxe{\varepsilon} \ket{v} \text{ and } \ket{v} \approxe{\delta} \ket{w}
    \implies \ket{u} \approxe{\varepsilon + \delta} \ket{w} .
\end{equation}
The following lemma will prove useful to estimate the action of unitary operators on some state.
\begin{lemma}
\label{lem:state_estimate}
    Let $\ket{\varphi}$ and $\ket{\chi}$ be normalized states belonging to the same Hilbert space and let $\varepsilon \geq 0$.
    The real part $\Re{\braket{\varphi}{\chi}} \geq 1 - \varepsilon$ if and only if
    \begin{equation}
        \ket{\varphi} \approxe{\sqrt{2 \varepsilon}} \ket{\chi} .
    \end{equation}
\end{lemma}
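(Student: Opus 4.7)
The plan is to prove both directions simultaneously by computing the squared norm distance and relating it directly to the real part of the inner product. Since $\ket{\varphi}$ and $\ket{\chi}$ are normalized, expanding the inner product gives
\begin{equation}
    \lVert \ket{\varphi} - \ket{\chi} \rVert^{2} = \braket{\varphi}{\varphi} - \braket{\varphi}{\chi} - \braket{\chi}{\varphi} + \braket{\chi}{\chi} = 2 - 2 \Re{\braket{\varphi}{\chi}} ,
\end{equation}
where I have used that $\braket{\chi}{\varphi} = \overline{\braket{\varphi}{\chi}}$, so the sum of the two cross terms equals twice the real part.

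From this single identity, the equivalence follows immediately in both directions. The condition $\ket{\varphi} \approxe{\sqrt{2 \varepsilon}} \ket{\chi}$ is by definition $\lVert \ket{\varphi} - \ket{\chi} \rVert^{2} \leq 2 \varepsilon$, which by the above identity is the same as $2 - 2 \Re{\braket{\varphi}{\chi}} \leq 2 \varepsilon$, which rearranges exactly to $\Re{\braket{\varphi}{\chi}} \geq 1 - \varepsilon$. No case analysis or direction-specific reasoning is needed, since the argument is a chain of equivalences.

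There is no real obstacle here: the lemma is essentially a restatement of the cosine-law identity for normalized Hilbert-space vectors, and the only subtlety is remembering to take twice the real part rather than twice the full inner product in the cross terms. One should remark that the hypothesis $\varepsilon \geq 0$ is needed only to ensure that $\sqrt{2 \varepsilon}$ makes sense as a real number (the statement is vacuous or trivially satisfied in edge cases), and that normalization of both states is essential — without it, the $2$ on the right of the expansion would be replaced by $\lVert \ket{\varphi} \rVert^{2} + \lVert \ket{\chi} \rVert^{2}$ and the clean form would be lost.
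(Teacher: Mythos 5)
Your proof is correct and is exactly the computation the paper has in mind when it says the result is ``immediate from the definition of the induced norm'': expanding $\lVert \ket{\varphi} - \ket{\chi} \rVert^{2} = 2 - 2\Re{\braket{\varphi}{\chi}}$ and reading the equivalence off that identity. Nothing to add.
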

\begin{proof}
    Using the property $\lVert \ket{v} \rVert = \sqrt{\braket{v}{v}}$, we can expand
    \begin{equation}
    \begin{split}
        \lVert \ket{\varphi} - \ket{\chi} \rVert^{2}
        & = \braket{\varphi}{\varphi} + \braket{\chi}{\chi} - \braket{\varphi}{\chi} - \braket{\chi}{\varphi} \\
        & = 2 - \braket{\varphi}{\chi} - \braket{\varphi}{\chi}^{*} \\
        & = 2 (1 - \Re{\braket{\varphi}{\chi}}) .
    \end{split}
    \end{equation}
    Therefore, $\lVert \ket{\varphi} - \ket{\chi} \rVert^{2} \leq 2 \varepsilon$ if and only if $1 - \Re{\braket{\varphi}{\chi}} \leq \varepsilon$.
\end{proof}

\subsubsection{Number strings}

Given any string of length $n$, which we denote in bold by $\bm x = (x_{1}, \dots, x_{n})$, we sometimes find it convenient to consider the same string but of length $n-1$ and with its $i$th element removed.
We write this as the original symbol (in bold) for the string with a subscript as
\begin{equation}
    \bm{x}_{i} = (x_{1}, \dots, x_{i-1}, x_{i+1}, \dots, x_{n}) .
\end{equation}
To reiterate: $x_{i}$ is the $i$th element of $\bm x$, while $\bm{x}_{i}$ is $\bm x$ with its $i$th element removed.
An exception to this is given to the $i$th standard basis vector of length $n$, which we always denote unambiguously by
\begin{equation}
    \bm{e}_{i}^{n} = (\delta_{ij})_{j=1}^{n} .
\end{equation}
Let $[P]$ denote the Iverson bracket of a statement $P$.
We sometimes use this notation as an alternative way to express the Kronecker delta function.
That is, we will take
\begin{equation}
    \delta_{jk} = [j = k] .
\end{equation}

\subsubsection{Complexity}

We interchangeably express the functions with argument $n$ that are $2^{\bigO(\log{n})}$ by writing that they are $\poly(n)$.

\subsubsection{Bell expressions}

A (linear) Bell expression $\mathcal{I}$ is defined as a real-valued function taking experimental probabilities $\bm p = (p(a,b \mid x,y))_{a,b,x,y}$ to a linear combination
\begin{equation}
    \mathcal{I}[\bm p] = \sum_{a,b,x,y} \beta_{x,y}^{a,b} p(a,b \mid x,y) ,
\end{equation}
where $\beta_{x,y}^{a,b} \in \mathbb{R}$.
For notational convenience, we often write values of a Bell expression as $\mathcal{I}[p(a,b \mid x,y)]$, where it is understood that all probabilities (varying over $a$, $b$, $x$, and $y$) are arguments to $\mathcal{I}$.

\subsection{Self-testing (with complex measurements)}
\label{sec:self-testing}

In a self-testing scenario, two observers Alice and Bob share an unknown physical quantum state $\rho$ on $\hilb{A} \otimes \hilb{B}$.
For convenience, it is common to instead work with a purification of the physical state $\ket{\psi} \in \hilb{A} \otimes \hilb{B} \otimes \hilb{P}$ for some purifying space $\hilb{P}$ separate from the observers.
Since all operations accessible to the observers act trivially on this purifying space, we suppress it in our notation, and treat $\ket{\psi} \in \hilb{A} \otimes \hilb{B}$ as the physical state.
One may also assume that the measurements are projective (see \cite[Appendix~B]{supic2020self} for a detailed discussion of this topic).
The Born rule states that the probability of outcomes $a$ and $b$ upon being provided with inputs $x$ and $y$ is given by
\begin{equation}
\begin{split}
    p(a,b \mid x,y)
    & = \tr(\proj{\psi} M_{a \mid x} \otimes N_{b \mid y}) \\
    & = \bra{\psi} M_{a \mid x} \otimes N_{b \mid y} \ket{\psi} ,
\end{split}
\end{equation}
where $\{M_{a \mid x}\}_{a}$ and $\{N_{b \mid y}\}_{b}$ are the physical, projective, local measurements of Alice and Bob for questions $x$ and $y$, respectively.
We now state a first definition of what it means to robustly self-test some reference state and measurements.
\begin{definition}[Self-testing of states and real measurements]
\label{def:real_self-testing}
    The probabilities $p(a, b \mid x, y)$ are said to $\delta$-approximately \emph{self-test} the state $\ket{\psi^{\prime}} \in \hilb{A}^{\prime} \otimes \hilb{B}^{\prime}$ and measurement operators $M_{a \mid x}^{\prime} \in \mathcal{L}(\hilb{A}^{\prime})$ and $N_{b \mid y}^{\prime} \in \mathcal{L}(\hilb{B}^{\prime})$ if, for any state $\ket{\psi} \in \hilb{A} \otimes \hilb{B}$ and measurement operators $M_{a \mid x} \in \mathcal{L}(\hilb{A})$ and $N_{b \mid y} \in \mathcal{L}(\hilb{B})$ from which these probabilities may arise, there exists a junk state $\ket{\xi} \in \tilde{\hilb{A}} \otimes \tilde{\hilb{B}}$ and isometries $V_{\hilb{A}} \colon \hilb{A} \to \hilb{A}^{\prime} \otimes \tilde{\hilb{A}}$ and $V_{\hilb{B}} \colon \hilb{B} \to \hilb{B}^{\prime} \otimes \tilde{\hilb{B}}$ defining the local isometry $V = V_{\hilb{A}} \otimes V_{\hilb{B}}$ such that for all $a$, $b$, $x$, and $y$
    \begin{subequations}
    \begin{align}
        V \ket{\psi}
        & \approxe{\delta} \ket{\psi^{\prime}} \otimes \ket{\xi} , \\
        V (M_{a \mid x} \otimes N_{b \mid y}) \ket{\psi}
        & \approxe{\delta} (M_{a \mid x}^{\prime} \otimes N_{b \mid y}^{\prime}) \ket{\psi^{\prime}} \otimes \ket{\xi} .
    \end{align}
    \end{subequations}
\end{definition}
This definition is standard, and accounts for the unobservable possibilities of local unitary basis transformations applied to the state and measurements, as well as embedding of the state and measurement operators in a Hilbert space of larger dimension, or the existence of additional degrees of freedom (on which the measurement operators do not act).
We may assume without loss of generality that the reference state $\ket{\psi^{\prime}}$ is \emph{real}, meaning that $\ket{\psi^{\prime}}^{*} = \ket{\psi^{\prime}}$, since the Schmidt decomposition guarantees the existence of local orthonormal bases in which all entries to its matrix are real.
Unless it is also assumed that $(M_{a \mid x}^{\prime})^{*} = M_{a \mid x}^{\prime}$ and $(N_{b \mid y}^{\prime})^{*} = N_{b \mid y}^{\prime}$ (that the reference measurements are real in this basis), \cref{def:real_self-testing} does not account for the unobservable possibility that Alice and Bob actually implement complex conjugated versions of the reference measurements in a correlated fashion \cite{mckague2011generalized,kaniewski2017self}.
This is because probabilities are real numbers, and so
\begin{equation}
\begin{split}
    p(a,b \mid x,y)
    & = \tr(\proj{\psi^{\prime}} M_{a \mid x}^{\prime} \otimes N_{b \mid y}^{\prime}) \\
    & = \tr(\proj{\psi^{\prime}} (M_{a \mid x}^{\prime})^{*} \otimes (N_{b \mid y}^{\prime})^{*}) \\
    & = p(a,b \mid x,y)^{*} ,
\end{split}
\end{equation}
but complex conjugation is not a unitary transformation.

It is sufficient for our purposes to consider complex conjugation performed in some convenient fixed local orthonormal bases for which $\ket{\psi^{\prime}}$ is real.
This is because complex conjugation of projectors performed in arbitrary local orthonormal bases is equivalent (up to some local unitary transformation) to conjugation performed in the original fixed bases (the complex conjugate of a unitary matrix is also unitary).
In particular, we cannot use \cref{def:real_self-testing} to self-test a reference state $\ket{\Phi^{+}}$ and (on one side) the observables $\sigma_{\mathrm{x}}$, $\sigma_{\mathrm{y}}$, and $\sigma_{\mathrm{z}}$; there is no local orthonormal bases in which the state and corresponding projectors are all real.
In cases with complex measurements, the following definition allowing also for correlated complex conjugation can be used.
\begin{definition}[Self-testing of states and complex measurements]
\label{def:complex_self-testing}
    The probabilities $p(a, b \mid x, y)$ are said to $\delta$-approximately \emph{self-test} the state $\ket{\psi^{\prime}} \in \hilb{A}^{\prime} \otimes \hilb{B}^{\prime}$ and measurement operators $M_{a \mid x}^{\prime} \in \mathcal{L}(\hilb{A}^{\prime})$ and $N_{b \mid y}^{\prime} \in \mathcal{L}(\hilb{B}^{\prime})$ if, for any state $\ket{\psi} \in \hilb{A} \otimes \hilb{B}$ and measurement operators $M_{a \mid x} \in \mathcal{L}(\hilb{A})$ and $N_{b \mid y} \in \mathcal{L}(\hilb{B})$ from which these probabilities may arise, there exists a junk state $\ket{\xi} \in \tilde{\hilb{A}} \otimes \hilb{A}^{\prime\prime} \otimes \tilde{\hilb{B}} \otimes \hilb{B}^{\prime\prime}$ and isometries $V_{\hilb{A}} \colon \hilb{A} \to \hilb{A}^{\prime} \otimes \tilde{\hilb{A}} \otimes \hilb{A}^{\prime\prime}$ and $V_{\hilb{B}} \colon \hilb{B} \to \hilb{B}^{\prime} \otimes \tilde{\hilb{B}} \otimes \hilb{B}^{\prime\prime}$ defining the local isometry $V = V_{\hilb{A}} \otimes V_{\hilb{B}}$ such that for all $a$, $b$, $x$, and $y$
    \begin{subequations}
    \begin{align}
        V \ket{\psi}
        & \approxe{\delta} \ket{\psi^{\prime}} \otimes \ket{\xi} , \\
        V (M_{a \mid x} \otimes N_{b \mid y}) \ket{\psi}
        & \approxe{\delta} (\bar{M}_{a \mid x} \otimes \bar{N}_{b \mid y}) \ket{\psi^{\prime}} \otimes \ket{\xi} ,
    \end{align}
    \end{subequations}
    where
    \begin{subequations}
    \begin{align}
        \bar{M}_{a \mid x} &= M_{a \mid x}^{\prime} \otimes \proj{0}_{\hilb{A}^{\prime\prime}} + (M_{a \mid x}^{\prime})^{*} \otimes \proj{1}_{\hilb{A}^{\prime\prime}} , \\
        \bar{N}_{b \mid y} &= N_{b \mid y}^{\prime} \otimes \proj{0}_{\hilb{B}^{\prime\prime}} + (N_{b \mid y}^{\prime})^{*} \otimes \proj{1}_{\hilb{B}^{\prime\prime}} ,
    \end{align}
    \end{subequations}
    and the state $\ket{\xi}$ has the form
    \begin{equation}
        \ket{\xi}
        = \ket{\xi_{0}} \otimes \ket{0}_{\hilb{A}^{\prime\prime}} \ket{0}_{\hilb{B}^{\prime\prime}}
        + \ket{\xi_{1}} \otimes \ket{1}_{\hilb{A}^{\prime\prime}} \ket{1}_{\hilb{B}^{\prime\prime}}
    \end{equation}
    for some subnormalized $\ket{\xi_{0}}$ and $\ket{\xi_{1}}$ in $\tilde{\hilb{A}} \otimes \tilde{\hilb{B}}$ satisfying $\braket{\xi_{0}}{\xi_{0}} + \braket{\xi_{1}}{\xi_{1}} = 1$.
\end{definition}

It is often sufficient to deduce a self-testing statement such that, instead of a full set of probabilities $p(a, b \mid x, y)$, one need only observe certain combinations of them given by the maximal violation of some Bell inequality $\mathcal{I}[p(a,b \mid x,y)] = \beta$.
One may replace the probabilities in \cref{def:real_self-testing,def:complex_self-testing} with such a maximal violation.
One may also choose to self-test measurement operators on only Alice's subsystem by simply taking Bob's measurement operators in these definitions to be identity operators.
Similarly, one need not choose to make a self-testing statement certifying reference operators used to produce all of the probabilities $p(a, b \mid x, y)$, provided that the result is shown to hold for all compatible sets of physical measurements.

Let $M = M_{+} - M_{-}$ be a $\pm 1$-outcome observable on Alice's subsystem with corresponding projectors $M_{\pm}$, and suppose we have statements of the form
\begin{subequations}
\label{eq:certify_state_obs}
\begin{align}
    V \ket{\psi}
    & \approxe{\delta} \ket{\psi^{\prime}} \otimes \ket{\xi} , \\
    V M \ket{\psi}
    & \approxe{\delta} \bar{M} \ket{\psi^{\prime}} \otimes \ket{\xi} ,
\end{align}
\end{subequations}
where $\bar{M} = M^{\prime} \otimes \proj{0} + (M^{\prime})^{*} \otimes \proj{1}$ for the reference observable $M^{\prime} = M_{+}^{\prime} - M_{-}^{\prime}$.
In this case, one automatically obtains statements in terms of the projectors of the form of \cref{def:complex_self-testing}
\begin{equation}
\label{eq:certify_projectors}
    V M_{\pm} \ket{\psi}
    \approxe{\delta} \bar{M}_{\pm} \ket{\psi^{\prime}} \otimes \ket{\xi} ,
\end{equation}
where $\bar{M}_{\pm} = M_{\pm}^{\prime} \otimes \proj{0} + (M_{\pm}^{\prime})^{*} \otimes \proj{1}$.
This follows from the linearity of $V$, along with the facts
\begin{equation}
    M_{\pm} = \frac{I \pm M}{2} ,\quad
    M_{\pm}^{\prime} = \frac{I \pm M^{\prime}}{2} ,\quad
    (M_{\pm}^{\prime})^{*} = \frac{I \pm (M^{\prime})^{*}}{2} .
\end{equation}

\subsection{Sum-of-squares (SOS) decomposition}
\label{sec:sos_decomposition}

A useful tool for proving robust self-testing statements from Bell inequalities is the sum-of-squares (SOS) decomposition \cite{yang2013robust,bamps2015sum}.
Suppose that a state $\ket{\psi}$ achieves the maximal quantum value $\beta$ of some Bell operator $O$ to within an amount $\varepsilon \geq 0$.
That is, $\bra{\psi} O \ket{\psi} \geq \beta - \varepsilon$.
Suppose also that we can write the shifted Bell operator $\beta - O$ in the form
\begin{equation}
    \beta - O
    = \sum_{j} F_{j}^{\dagger} F_{j}
\end{equation}
for some linear operators $F_{j}$.
Then
\begin{equation}
\begin{split}
    \varepsilon
    & \geq \bra{\psi} (\beta - O) \ket{\psi} \\
    & = \bra{\psi} \sum_{j} F_{j}^{\dagger} F_{j} \ket{\psi} \\
    & = \sum_{j} \lVert F_{j} \ket{\psi} \rVert^{2} .
\end{split}
\end{equation}
Therefore, for all $j$,
\begin{equation}
\label{eq:sos_bound}
    \lVert F_{j} \ket{\psi} \rVert
    \leq \sqrt{\varepsilon} .
\end{equation}
As we also see in \cref{sec:triple_chsh}, the $F_{j}$ may be found to have a form such that \cref{eq:sos_bound} gives useful relations from which a self-testing statement may ultimately be deduced.

\subsection{Regularization of operators}
\label{sec:regularization}

We may wish to evolve a state by an operation that acts in the same way as a unitary operator, but is not itself known to be unitary.
Given $\ket{\psi} \in \hilb{A} \otimes \hilb{B}$ and a Hermitian linear operator $T$ on $\hilb{B}$ such that $T \ket{\psi} \approxe{\varepsilon} U \ket{\psi}$ for some unitary Hermitian operator $U$ on $\hilb{A}$ and $\varepsilon \geq 0$, it is possible to define a new operator $\regular{T}$ on $\hilb{B}$ that is unitary and acts on $\ket{\psi}$ almost identically to $T$ and $U$ \cite{yang2013robust,bamps2015sum,supic2020self}.
Note that the operators $U$ and $T$ commute on $\hilb{A} \otimes \hilb{B}$ since the former is defined to act nontrivially only on $\hilb{A}$, while similarly the latter acts nontrivially only on $\hilb{B}$.

This \emph{regularization} of $T$ is performed in two steps.
One first removes all zero eigenvalues from $T$ by defining a new operator $\tilde{T} = T + P$, where $P$ is the orthogonal projection onto $\ker{T}$ (although we sometimes explicitly write its adjoint $\tilde{T}^{\dagger}$, it should be noted that this operator is Hermitian and $\tilde{T}^{\dagger} = \tilde{T}$, since both $T$ and $P$ are Hermitian).
All nonzero vectors mapped to $\bm 0$ under $T$ remain unchanged under $\tilde{T}$ and so have eigenvalue $1$ instead.
We then have that $\tilde{T}^{\dagger} \tilde{T}$ is positive definite: that is $\bra{v} \tilde{T}^{\dagger} \tilde{T} \ket{v} = \lVert \tilde{T} \ket{v} \rVert^{2} > 0$ for all nonzero vectors $\ket{v}$ since $\tilde{T} \ket{v} \neq \bm 0$ by construction.
Thus, its principal square root $\lvert \tilde{T} \rvert = (\tilde{T}^{\dagger} \tilde{T})^{1/2}$ is also positive definite.
Since this $\lvert \tilde{T} \rvert$ is positive definite, it is therefore also invertible.

In the second step, one defines
\begin{equation}
    \regular{T} = \tilde{T} \lvert \tilde{T} \rvert^{-1} .
\end{equation}
The regularized operator $\regular{T}$ is unitary by construction.
Furthermore, by considering an eigenbasis of $T$, we see that $\regular{T} T = \lvert T \rvert$.
We also have that $\lvert T \rvert = \lvert U^{\dagger} T \rvert$ (since $U^{\dagger}$ is unitary), and the property that $\lvert U^{\dagger} T \rvert \geq U^{\dagger} T$.
This operator inequality is valid since the operator $U^{\dagger} T$ is Hermitian ($U$ commutes with $T$ and both are Hermitian by assumption).
Therefore,
\begin{equation}
\begin{split}
    \lVert \regular{T} \ket{\psi} - T \ket{\psi} \rVert
    & = \lVert \ket{\psi} - \regular{T} T \ket{\psi} \rVert \\
    & = \lVert \ket{\psi} - \lvert T \rvert \ket{\psi} \rVert \\
    & = \lVert \ket{\psi} - \lvert U^{\dagger} T \rvert \ket{\psi} \rVert \\
    & \leq \lVert \ket{\psi} - U^{\dagger} T \ket{\psi} \rVert \\
    & = \lVert U \ket{\psi} - T \ket{\psi} \rVert \\
    & \leq \varepsilon .
\end{split}
\end{equation}
In other words, $\regular{T} \ket{\psi} \approxe{\varepsilon} T \ket{\psi}$ and $\regular{T} \ket{\psi} \approxe{2 \varepsilon} U \ket{\psi}$.

Given a state-dependent anticommutation relation between two Hermitian operators, we may wish to make a similar statement about their regularized versions.

\begin{lemma}
\label{lem:regularized_acomm}
    Let $\ket{\psi} \in \hilb{A} \otimes \hilb{B}$.
    Suppose that $T_{1}$ and $T_{2}$ are Hermitian operators on $\hilb{B}$ such that $T_{1} \ket{\psi} \approxe{\varepsilon} U_{1} \ket{\psi}$ and $T_{2} \ket{\psi} \approxe{\varepsilon} U_{2} \ket{\psi}$ for some unitary Hermitian operators $U_{1}$ and $U_{2}$ on $\hilb{A}$ and $\varepsilon \geq 0$.
    Then the regularized operators $\regular{T}_{1}$ and $\regular{T}_{2}$ defined on $\hilb{B}$ satisfy
    \begin{equation}
        \{ \regular{T}_{1}, \regular{T}_{2} \} \ket{\psi} \approxe{c} \{ T_{1}, T_{2} \} \ket{\psi} ,
    \end{equation}
    where $c = (6 + \lVert T_{1} \rVert + \lVert T_{2} \rVert) \varepsilon$.
\end{lemma}
\begin{proof}
    As discussed earlier, the regularized operators satisfy $\regular{T}_{j} \ket{\psi} \approxe{2 \varepsilon} U_{j} \ket{\psi}$ and $\regular{T}_{j} \ket{\psi} \approxe{\varepsilon} T_{j} \ket{\psi}$.
    Thus,
    \begin{equation}
    \begin{split}
        \{ \regular{T}_{1}, \regular{T}_{2} \} \ket{\psi}
        & \approxe{4 \varepsilon} (\regular{T}_{1} U_{2} + \regular{T}_{2} U_{1}) \ket{\psi} \\
        & \approxe{2 \varepsilon} (T_{1} U_{2} + T_{2} U_{1}) \ket{\psi} ,
    \end{split}
    \end{equation}
    where we have used the unitarity of $\regular{T}_{1}$ and $\regular{T}_{2}$ for the first estimate, and the unitarity of $U_{1}$ and $U_{2}$ for the second estimate.
    In obtaining the second estimate, we used that both $\regular{T}_{1}$ and $T_{1}$ commute with $U_{2}$ and that both $\regular{T}_{2}$ and $T_{2}$ commute with $U_{1}$ on the Hilbert space $\hilb{A} \otimes \hilb{B}$.
    This is again due to each of the operators being defined to act nontrivially on only one of either $\hilb{A}$ or $\hilb{B}$.
    We also have that
    \begin{equation}
    \begin{split}
        & \lVert (T_{1} U_{2} + T_{2} U_{1}) \ket{\psi} - \{ T_{1}, T_{2} \} \ket{\psi} \rVert \\
        & \leq \lVert T_{1} (U_{2} - T_{2}) \ket{\psi} \rVert + \lVert T_{2} (U_{1} - T_{1}) \ket{\psi} \rVert \\
        & \leq (\lVert T_{1} \rVert + \lVert T_{2} \rVert) \varepsilon ,
    \end{split}
    \end{equation}
    and so the result follows.
\end{proof}

\subsection{Verifiable blind quantum computation}
\label{sec:vbqc}

The VBQC protocol introduced by \textcite{fitzsimons2017unconditionally} is built upon the measurement-based quantum computation (MBQC) model.
It enables a client to delegate a quantum computation to a remote server with the security properties of blindness and verifiability.

Following the formalism proposed by \textcite{danos2007measurement}, a single-party computation in MBQC is a sequence of entangling gates and single-qubit measurements in bases determined by previous measurement outcomes.
It is comprised of three ingredients:
\begin{enumerate}
    \item A \emph{graph state} prepared by applying controlled-$Z$ gates to pairs of $\ket{+}$ states according to the edges an undirected graph whose vertices correspond to the qubits.
    \item Values of \emph{measurement angles} $\phi_{i} \in A = \{ \frac{k \pi}{4} \}_{k=0}^{7}$ for all qubits $i$.
    \item A notion of order in which measurements are performed and their dependence on previous measurement outcomes, called the \emph{flow}.
\end{enumerate}
Measurements of the qubits are performed in bases $\ket{\pm_{\phi_{i}^{\prime}}}$, where \emph{corrected} measurement angles $\phi_{i}^{\prime} \in A$ depend in a fixed way on the $\phi_{i}$ and outcomes $s_{j} \in \{0,1\}$ of previous measurements according to the flow.

Imagine a delegated scenario in which Alice initially sends the unentangled $\ket{+}$ qubits to Bob.
Alice then interactively instructs Bob to perform the desired measurement pattern, requesting corrected measurement angles $\phi_{i}^{\prime}$ be performed based on Bob's previous output bits, which he communicates back to Alice at each step in the computation.

Blindness with respect to the angles $\phi_{i}$ and outcomes $s_{j}$ can be achieved by Alice initially sending qubits $\ket{+_{\theta_{i}}}$ to Bob instead, where the $\theta_{i} \in A$ are randomly chosen by Alice and kept secret.
Measuring $\ket{+_{\theta_{i}}}$ using angle $\phi_{i}^{\prime} + \theta_{i}$ is the same as measuring $\ket{+}$ using angle $\phi_{i}^{\prime}$.
Alice then requests that Bob measure in bases with angles $\delta_{i} = \phi_{i}^{\prime} + \theta_{i} + r_{i} \pi$, where the $r_{i} \in \{0,1\}$ are randomly chosen by Alice and kept secret.
The $r_{i}$ do not change the measurement aside from swapping its possible outcomes and, after receiving a measurement output bit $b_{i} \in \{0,1\}$ from Bob, Alice can recover the true outcome as $s_{i} = b_{i} \oplus r_{i}$.
Overall, the secret $\theta_{i}$ introduced hide the true measurement angles from Bob, while the $r_{i}$ hide the true measurement outcomes \cite{broadbent2009universal,barz2012demonstration}.

Verifiability (that the server performs the computation correctly) is achieved by further introducing ``trap'' patterns.
These are so-called because, while their position in the graph state cannot be deduced by Bob, the outcome of an honest measurement at the trap is deterministic and known in advance to Alice.
If Bob incorrectly measures during the computation, he may incorrectly measure at a trap, and such a deviation would be detected by Alice.
To construct a trap pattern, we additionally bestow upon Alice the ability to send computational basis qubits to Bob as part of the initial state.
In this context, these qubits are known as ``dummy'' qubits.
This is because after Bob applies entangling controlled-$Z$ gates, any $\ket{0}$ or $\ket{1}$ qubits involved remain unchanged and disentangled from the rest of the graph state.
A trap pattern is formed by surrounding a $\ket{+_{\theta}}$ qubit with dummy qubits such that all of its neighbors in the graph are dummies.
After entanglement gates have been applied by Bob, this \emph{trap qubit} thus remains entirely disentangled from the rest of the graph state.
Moreover, it is either $\ket{+_{\theta}}$ or $\ket{-_{\theta}}$ depending on whether the number of its neighbors that are $\ket{1}$ is even or odd, respectively.
This parity, and thus which of the two states the trap qubit should take, is known to Alice.
Therefore, by Alice requesting that the trap qubit be measured in the $\ket{\pm_{\theta}}$ basis, she can detect any deviation made by Bob at the trap qubit \cite{fitzsimons2017unconditionally,barz2013experimental,kashefi2017optimised}.

\section{Efficient parallel self-testing for DIVBQC}
\label{sec:efficient_testing}

Standard VBQC protocols (such as the original Fitzsimons--Kashefi protocol \cite{fitzsimons2017unconditionally} we consider) require that $n$ qubits be prepared on the server side, each in one of the states $\ket{\pm_{\theta}}$, or the states $\ket{0}$ or $\ket{1}$ (for dummy qubits).
A self-test appropriate for device-independently preparing states for practical VBQC should satisfy the following properties.
\begin{enumerate}
    \item \label[property]{itm:bell_state_product}
    The self-test should certify the presence of $n$ copies of the Bell state $\ket{\Phi^{+}}$ shared between Alice and Bob.
    That is, the state $\ket{\Phi^{+}}^{\otimes n}$ should be self-tested.

    \item \label[property]{itm:pauli_product}
    Denote by $\ket{\sigma_{\chi}^{\lambda}}$ the eigenstate of $\sigma_{\chi}$ with eigenvalue $\lambda \in \{+1, -1\}$ and $\chi \in \{\mathrm{x}, \mathrm{y}, \mathrm{z}, \mathrm{x} + \mathrm{y}, \mathrm{x} - \mathrm{y}\}$ (see \cref{sec:prelims}).
    For $n$-tuples of Pauli bases $\bm \chi = (\chi_{1}, \dots, \chi_{n}) \in \{\mathrm{x}, \mathrm{y}, \mathrm{z}, \mathrm{x} + \mathrm{y}, \mathrm{x} - \mathrm{y}\}^{n}$, the self-test should certify the projective measurement of the certified state $\ket{\Phi^{+}}^{\otimes n}$ on Alice's side
    \begin{equation}
        \Bigg\{ \bigotimes_{j=1}^{n} \proj*{\sigma_{\chi_{j}}^{\lambda_{j}}} \Bigg\}_{\bm \lambda \in \{+1, -1\}^{n}}
    \end{equation}
    with each measurement operator composed of the tensor product of $n$ projectors.
    We refer to these as \emph{special} or \emph{preparation} measurements for reasons that will become clear.

    \item \label[property]{itm:question_size}
    The number of possible input questions, and hence the number of different measurements required, should be small.
    Specifically, questions should be of size at most logarithmic in $n$.

    \item \label[property]{itm:robust}
    The self-test should be robust to the observation of nonideal statistics.

    \item \label[property]{itm:client_measurement}
    All measurements performed on Alice's subsystem (the client side) must be local to one of her qubits in the honest case.

    \item \label[property]{itm:data_processing}
    Classical processing of the gathered experimental outcomes should scale efficiently in $n$.

    \item \label[property]{itm:conjugation}
    Possible complex conjugation of measurement operators should occur only in the preparation of dummy qubits $\ket{0}$ and $\ket{1}$ (i.e., measurement of Pauli $\sigma_{\mathrm{z}}$).

    \item \label[property]{itm:independent_isom} The isometry on Bob's subsystem and reduced junk state guaranteed by the self-test should be independent of the choice of preparation question $\bm \chi$.
\end{enumerate}

Together, \cref{itm:bell_state_product,itm:pauli_product} allow Alice to remotely prepare $n$ qubit states on Bob's subsystem (up to the freedoms allowed by self-testing), each in one of the two states comprising a prechosen basis for the purpose of VBQC (any of the bases corresponding to observables $\sigma_{\mathrm{x}}$, $\sigma_{\mathrm{y}}$, $\sigma_{\mathrm{z}}$, $\sigma_{\mathrm{x} + \mathrm{y}}$, or $\sigma_{\mathrm{x} - \mathrm{y}}$).
The outcome Alice receives allows her to determine the state that has been prepared on Bob's subsystem.
Meanwhile, Bob is aware of neither the state that was prepared on his side, nor the specific bases chosen.

\Cref{itm:question_size} means that it is experimentally feasible to gather outcome statistics upon all possible questions.
If, instead, questions had size linear in $n$, then the time required to achieve good statistical confidence would scale exponentially with the number of qubits to be prepared.
At first glance, \cref{itm:pauli_product,itm:question_size} may appear slightly contradictory, since the number of possible $n$-tuples of bases is exponential in $n$.
We avoid this issue by requiring that only a single \emph{special} one of these exponentially many measurements be among those self-tested.
We still require, however, that this special measurement may be freely defined using any of the possible basis tuples, prior to the initiation of the self-testing protocol.
The reason for this is that the special measurement will later be used by Alice to prepare the $n$ server-side qubits in the suitable bases to perform an arbitrary computation (that is prechosen by Alice), for which the ability to prepare each qubit in an arbitrary basis is desirable.
In this sense, we must in fact define a whole class of self-testing protocols---one for each choice of special measurement---all of which satisfy the other properties.

The robustness included as \cref{itm:robust} is important to allow for experimental noise and imperfections.
The connection between the level of robustness guaranteed by a self-testing statement and that which can be achieved in remote state preparation is examined in more detail in \cref{sec:post-measurement_states}.

\Cref{itm:client_measurement} is included to ensure that it is sufficient for the client to possess only a simple quantum device.
Such a device may only be capable of performing single-qubit measurements and, thus, we must ensure that all measurements included as part of the honest strategy for the self-test (even those that are not used for preparation) can be performed by the client.

\Cref{itm:data_processing} means that the experimental data that is collected may be combined and processed such that the conditions of the self-test are shown to be true in a reasonable time.
This may not necessarily be the case if, for example, the Bell expressions which must be evaluated in order to perform the certification have exponentially many terms in $n$ \cite{supic2021device}.

\Cref{itm:conjugation} preserves the verifiability property of the FK protocol while still allowing states to be prepared in all required bases.
This is discussed further in \cref{sec:conjugation} (see also \cite{gheorghiu2015robustness}).

\Cref{itm:independent_isom} is a slight restriction on isometries found in self-testing protocols, allowing the physical states prepared using an untrusted strategy to be used as resource states for VBQC.
In the FK protocol, the possibly deviating server of Bob is allowed to apply any unitary (or, more generally, quantum channel) to the reference qubits that are received, but has no knowledge of the bases in which these qubits have been prepared \cite{fitzsimons2017unconditionally}.
We would thus like to interpret the physical state prepared on Bob's side due to a self-testing protocol as originating solely from such a process \cite{gheorghiu2015robustness}.
This can be done be ``undoing'' Bob's self-testing isometry, provided that his junk state (ancillary to the reference qubits) and isometry do not depend on any information about the choice of bases (i.e., the preparation question chosen).

\subsection{Post-measurement states}
\label{sec:post-measurement_states}

Robust parallel self-tests for $n$-qubit states and measurements typically guarantee estimates for observables acting on the states of the form
\begin{equation}
    V M^{\bm s} \ket{\psi} \approxe{\delta} {M^{\prime}}^{\bm s} \ket{\psi^{\prime}} \otimes \ket{\xi}
\end{equation}
for all $\bm s \in \{0, 1\}^{n}$ (or in expectation over all $\bm s$ in some cases \cite{natarajan2017quantum,natarajan2018low}).
This naively leads to statements for individual outcomes $\bm a \in \{+, -\}^{n}$ of the form
\begin{equation}
\label{eq:naive_proj_estimate}
    V M_{\bm a} \ket{\psi} \approxe{\delta} M_{\bm a}^{\prime} \ket{\psi^{\prime}} \otimes \ket{\xi} ,
\end{equation}
where the $M_{\bm a}$ and $M_{\bm a}^{\prime}$ are projection operators (cf. \cref{eq:certify_state_obs,eq:certify_projectors} and \cref{def:real_self-testing,def:complex_self-testing}).
For most practical applications (including the present one), however, one is instead interested in characterizing the physical and reference post-measurement states
\begin{equation}
    \frac{M_{\bm a} \ket{\psi}}{\sqrt{\bra{\psi} M_{\bm a} \ket{\psi}}} ,\quad
    \frac{M_{\bm a}^{\prime} \ket{\psi^{\prime}}}{\sqrt{\bra{\psi} M_{\bm a}^{\prime} \ket{\psi^{\prime}}}} .
\end{equation}
Since there are $2^{n}$ possible outcomes $\bm a$ (and assuming they occur approximately uniformly), their respective probabilities $\bra{\psi} M_{\bm a} \ket{\psi}$ must all be approximately $2^{-n}$.
Therefore, self-testing guarantees of the form of \cref{eq:naive_proj_estimate} typically lead to vector norm distances $\delta \sqrt{2^{n}}$ of physical post-measurement states from their reference counterparts, which blow up exponentially in $n$.

In the following, we overcome this impracticality by slightly relaxing the distance guaranteed by a factor polynomial in $\delta$, and also allowing that this guarantee fails to be satisfied with some probability that is polynomially small in $\delta$.
In the noiseless honest case, $\delta$ is written in terms of some error $\varepsilon$ that (as is typical of the statistical tools used to analyze data for self-testing protocols \cite{lin2018device}) can be experimentally saturated with statistical confidence exponentially close to unity in the number of experimental trials performed.

We first exhibit in \cref{lem:general_robust_prob} a general result on the trace distance between pure states that is compatible with distances that are expressed in expectation (as is the case for some self-testing statements).
We then use a special case of this to show in \cref{thm:robust_prob} that, given certain self-testing guarantees of the form that will be derived in our context later (see \cref{thm:state_preparation,thm:protocol_isometry,cor:protocol_isometry}), the post-measurement states of the physical experiment must be close in trace distance to those of the reference experiment most of the time.

\begin{lemma}
\label{lem:general_robust_prob}
    Let $\Sigma \times \Omega$ be some finite sample space.
    Let $\pi$ be some probability mass function on $\Sigma$ for the random variable $S \colon \Sigma \times \Omega \to \Sigma$ defined by $S(\sigma, \omega) = \sigma$.
    For all $\sigma \in \Sigma$ and $\omega \in \Omega$, let $\ket{u_{\sigma}^{\omega}}$ be vectors satisfying
    \begin{equation}
        \sum_{\omega \in \Omega} \lVert \ket{u_{\sigma}^{\omega}} \rVert^{2} = 1 .
    \end{equation}
    For each $\sigma \in \Sigma$, let $p_{\sigma}$ be the function on $\Omega$ defined by
    \begin{equation}
        p_{\sigma}(\omega) = \lVert \ket{u_{\sigma}^{\omega}} \rVert^{2} .
    \end{equation}
    Define the probability mass function $p$ on $\Sigma \times \Omega$ by
    \begin{equation}
        p(\sigma, \omega) = p_{\sigma}(\omega) \pi(\sigma) .
    \end{equation}
    For each $\sigma \in \Sigma$ and $\omega \in \Omega$ satisfying $p(\sigma, \omega) > 0$, let $\ket{v_{\sigma}^{\omega}}$ be nonzero vectors.
    Denote normalized versions of all the vectors (when they are defined) by
    \begin{equation}
        \ket{\hat{u}_{\sigma}^{\omega}} = \frac{\ket{u_{\sigma}^{\omega}}}{\lVert \ket{u_{\sigma}^{\omega}} \rVert} ,\quad
        \ket{\hat{v}_{\sigma}^{\omega}} = \frac{\ket{v_{\sigma}^{\omega}}}{\lVert \ket{v_{\sigma}^{\omega}} \rVert} .
    \end{equation}
    Let $D$ be a random variable on $\Sigma \times \Omega$ defined by the trace distance between these normalized states
    \begin{equation}
        D(\sigma, \omega) = 
        \begin{cases}
            \frac{1}{2} \lVert \proj{\hat{u}_{\sigma}^{\omega}} - \proj{\hat{v}_{\sigma}^{\omega}} \rVert_{1} & \text{if $p(\sigma, \omega) > 0$,} \\
            0 & \text{if $p(\sigma, \omega) = 0$.}
        \end{cases}
    \end{equation}
    Suppose that for some $\delta \geq 0$ we have
    \begin{equation}
    \label{eq:norm_squares_sum_bound}
        \sum_{\sigma \in \Sigma} \pi(\sigma) \sum_{\omega \in \Omega} \lVert \ket{u_{\sigma}^{\omega}} - \ket{v_{\sigma}^{\omega}} \rVert^{2} \leq \delta^{2} .
    \end{equation}
    Then, for any $c > 0$,
    \begin{equation}
        \Pr(D \leq \delta^{c}) \geq 1 - 4 \delta^{2(1-c)}
    \end{equation}
    for all $\sigma \in \Sigma$ for which $\pi(\sigma) > 0$.
\end{lemma}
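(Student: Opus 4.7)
The plan is to introduce the weighted relative error
\[
    F(\sigma, \omega) = \frac{\lVert \ket{u_{\sigma}^{\omega}} - \ket{v_{\sigma}^{\omega}} \rVert^{2}}{\lVert \ket{u_{\sigma}^{\omega}} \rVert^{2}},
\]
defined on the support of $p$, and reduce the lemma to Markov's inequality applied to $F$. The first step is to observe that because $p(\sigma, \omega) = \lVert \ket{u_{\sigma}^{\omega}} \rVert^{2} \pi(\sigma)$, the factor $\lVert \ket{u_{\sigma}^{\omega}} \rVert^{2}$ in the denominator of $F$ cancels with the same factor in $p_{\sigma}(\omega)$, so that
\[
    \E_{(\sigma, \omega) \sim p}[F] = \sum_{\sigma \in \Sigma} \pi(\sigma) \sum_{\omega \in \Omega} \lVert \ket{u_{\sigma}^{\omega}} - \ket{v_{\sigma}^{\omega}} \rVert^{2} \leq \delta^{2}
\]
by the hypothesis \eqref{eq:norm_squares_sum_bound}. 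Contributions from $(\sigma, \omega)$ with $p(\sigma, \omega) = 0$ are irrelevant since $D$ is zero there by definition and they carry no mass.

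Next, I would establish the pointwise bound $D(\sigma, \omega) \leq 2 \sqrt{F(\sigma, \omega)}$ wherever $p > 0$. The key identity is the standard decomposition
\[
    \ket{\hat{u}_{\sigma}^{\omega}} - \ket{\hat{v}_{\sigma}^{\omega}} = \frac{\ket{u_{\sigma}^{\omega}} - \ket{v_{\sigma}^{\omega}}}{\lVert \ket{u_{\sigma}^{\omega}} \rVert} + \frac{\ket{v_{\sigma}^{\omega}} (\lVert \ket{v_{\sigma}^{\omega}} \rVert - \lVert \ket{u_{\sigma}^{\omega}} \rVert)}{\lVert \ket{u_{\sigma}^{\omega}} \rVert \lVert \ket{v_{\sigma}^{\omega}} \rVert},
\]
which together with the reverse triangle inequality $\lvert \lVert \ket{v} \rVert - \lVert \ket{u} \rVert \rvert \leq \lVert \ket{u} - \ket{v} \rVert$ gives $\lVert \ket{\hat{u}_{\sigma}^{\omega}} - \ket{\hat{v}_{\sigma}^{\omega}} \rVert \leq 2 \lVert \ket{u_{\sigma}^{\omega}} - \ket{v_{\sigma}^{\omega}} \rVert / \lVert \ket{u_{\sigma}^{\omega}} \rVert = 2 \sqrt{F(\sigma, \omega)}$. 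Applying the tight form of \cref{lem:trace_dist_bound} to the now unit vectors $\ket{\hat{u}_{\sigma}^{\omega}}$ and $\ket{\hat{v}_{\sigma}^{\omega}}$ then yields $D(\sigma, \omega) \leq \lVert \ket{\hat{u}_{\sigma}^{\omega}} - \ket{\hat{v}_{\sigma}^{\omega}} \rVert \leq 2 \sqrt{F(\sigma, \omega)}$.

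The lemma then follows from a single application of Markov's inequality to the non-negative $F$: for any $c > 0$,
\[
    \Pr(D > \delta^{c}) \leq \Pr\!\left( F > \frac{\delta^{2c}}{4} \right) \leq \frac{4 \, \E[F]}{\delta^{2c}} \leq 4 \delta^{2(1-c)},
\]
so $\Pr(D \leq \delta^{c}) \geq 1 - 4 \delta^{2(1-c)}$, where the clause ``for all $\sigma \in \Sigma$ for which $\pi(\sigma) > 0$'' just restricts attention to points carrying positive mass under $p$. I do not expect any single step to be a serious obstacle; the only real care point is that when passing from $\lVert \ket{u} - \ket{v} \rVert$ to $\lVert \ket{\hat{u}} - \ket{\hat{v}} \rVert$ one must choose the variant of the decomposition that places $\lVert \ket{u} \rVert$ (rather than $\lVert \ket{v} \rVert$) in the denominator, so that the cancellation against $p_{\sigma}(\omega) = \lVert \ket{u_{\sigma}^{\omega}} \rVert^{2}$ in the expectation step is clean and one does not inherit an unwanted dependence on $\lVert \ket{v_{\sigma}^{\omega}} \rVert$ which the hypothesis does not control.
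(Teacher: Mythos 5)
Your proposal is correct and follows essentially the same route as the paper's proof: the pointwise bound $\lVert \ket{\hat{u}_{\sigma}^{\omega}} - \ket{\hat{v}_{\sigma}^{\omega}} \rVert \leq 2 \lVert \ket{u_{\sigma}^{\omega}} - \ket{v_{\sigma}^{\omega}} \rVert / \lVert \ket{u_{\sigma}^{\omega}} \rVert$ via the reverse triangle inequality, the cancellation of $\lVert \ket{u_{\sigma}^{\omega}} \rVert^{2}$ against $p_{\sigma}(\omega)$ in the expectation, the tight case of \cref{lem:trace_dist_bound} for unit vectors, and Markov's inequality are exactly the ingredients the paper uses (it phrases them as $\E(N^{2}) \leq 4\delta^{2}$ for $N = \lVert \ket{\hat{u}} - \ket{\hat{v}} \rVert$ rather than via your $F$, which is the same computation). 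The only omission is the degenerate case $\delta = 0$, where your Markov step divides by $\delta^{2c} = 0$ and one must instead observe directly from \eqref{eq:norm_squares_sum_bound} that $\ket{u_{\sigma}^{\omega}} = \ket{v_{\sigma}^{\omega}}$ on the support, so $D = 0$ almost surely.
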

\begin{proof}
    See \cref{sec:robust_prob}.
\end{proof}

The proof of \cref{lem:general_robust_prob} relies on the following elementary result, which gives a useful bound on the trace distance between operators of the form $\proj{v}$, where vectors $\ket{v}$ may be subnormalized.

\begin{lemma}
\label{lem:trace_dist_bound}
    Let vectors $\ket{u}$ and $\ket{v}$ belonging to the same Hilbert space satisfy $\lVert \ket{u} \rVert \leq 1$ and $\lVert \ket{v} \rVert \leq 1$.
    The trace distance is then bounded as
    \begin{equation}
        \frac{1}{2} \lVert \proj{u} - \proj{v} \rVert_{1} \leq 2 \lVert \ket{u} - \ket{v} \rVert .
    \end{equation}
    If the vectors have unit length then the bound can be tightened to
    \begin{equation}
        \frac{1}{2} \lVert \proj{u} - \proj{v} \rVert_{1} \leq \lVert \ket{u} - \ket{v} \rVert .
    \end{equation}
\end{lemma}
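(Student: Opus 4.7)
The plan is to use a one-line ``add and subtract'' decomposition of the difference $\proj{u} - \proj{v}$ into rank-one operators, together with the triangle inequality for the trace norm and the elementary identity $\lVert \ket{a}\bra{b} \rVert_{1} = \lVert \ket{a} \rVert \cdot \lVert \ket{b} \rVert$. The latter identity follows since $\ket{a}\bra{b}$ has a single nonzero singular value equal to $\lVert \ket{a} \rVert \cdot \lVert \ket{b} \rVert$, as may be read off from $(\ket{a}\bra{b})^{\dagger} (\ket{a}\bra{b}) = \lVert \ket{a} \rVert^{2} \ket{b}\bra{b}$.

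Concretely, I would first write
\begin{equation}
    \proj{u} - \proj{v} = \ket{u}(\bra{u} - \bra{v}) + (\ket{u} - \ket{v})\bra{v} ,
\end{equation}
which is immediate by inserting $\pm \ket{u}\bra{v}$. Applying the triangle inequality and the rank-one identity to the two summands then yields
\begin{equation}
    \lVert \proj{u} - \proj{v} \rVert_{1} \leq \bigl( \lVert \ket{u} \rVert + \lVert \ket{v} \rVert \bigr) \lVert \ket{u} - \ket{v} \rVert .
\end{equation}
Under the hypothesis $\lVert \ket{u} \rVert, \lVert \ket{v} \rVert \leq 1$, the prefactor is at most $2$, immediately giving $\tfrac{1}{2}\lVert \proj{u} - \proj{v} \rVert_{1} \leq \lVert \ket{u} - \ket{v} \rVert$, which in particular implies the stated bound $\tfrac{1}{2}\lVert \proj{u} - \proj{v} \rVert_{1} \leq 2\lVert \ket{u} - \ket{v} \rVert$. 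In the unit-length case the prefactor equals $2$ exactly, so the same computation gives the advertised sharper bound.

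There is no real obstacle here; the only subtle point is invoking the rank-one trace norm identity, which I would either cite or justify in a single line. The looseness of the factor of $2$ in the subnormalized version of the statement appears to be intentional slack and is obtained automatically from the same estimate used for the unit-length case, so no separate argument is required.
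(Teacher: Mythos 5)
Your proof is correct, and it takes a genuinely different (and shorter) route than the paper. The paper proves the subnormalized case in \cref{sec:estimation_lemmas} by bounding the trace norm via the Hilbert--Schmidt norm, $\lVert T \rVert_{1} \leq \sqrt{\rank(T)} \lVert T \rVert_{2}$, and then grinding through an explicit computation of $\lVert \proj{u} - \proj{v} \rVert_{2}^{2}$, which ultimately yields only $\lVert \proj{u} - \proj{v} \rVert_{1} \leq 4 \lVert \ket{u} - \ket{v} \rVert$; the unit-length case is handled separately via the Fuchs--van de Graaf identity. Your decomposition $\proj{u} - \proj{v} = \ket{u}(\bra{u} - \bra{v}) + (\ket{u} - \ket{v})\bra{v}$, combined with the triangle inequality and the rank-one identity $\lVert \ket{a}\bra{b} \rVert_{1} = \lVert \ket{a} \rVert \cdot \lVert \ket{b} \rVert$, gives $\lVert \proj{u} - \proj{v} \rVert_{1} \leq (\lVert \ket{u} \rVert + \lVert \ket{v} \rVert) \lVert \ket{u} - \ket{v} \rVert \leq 2 \lVert \ket{u} - \ket{v} \rVert$ in one line. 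This is strictly stronger than the lemma's first claim (you obtain the ``tightened'' constant $1$ already under the weaker hypothesis $\lVert \ket{u} \rVert, \lVert \ket{v} \rVert \leq 1$), and it subsumes both claims with a single estimate, so the separate Fuchs--van de Graaf argument becomes unnecessary. The only ingredient you would need to justify or cite is the rank-one trace-norm identity, which your one-line singular-value argument already handles. In short: your proof is valid, simpler, and yields a better constant than the paper's.
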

\begin{proof}
    See \cref{sec:estimation_lemmas}.
    For unit vectors the result immediately follows from the inequality
    \begin{equation}
        \lvert \braket{u}{v} \rvert \geq \Re{\braket{u}{v}} = 1 - \frac{1}{2} \lVert \ket{u} - \ket{v} \rVert^{2}
    \end{equation}
    applied to the Fuchs--van de Graaf expression
    \begin{equation}
        \frac{1}{2} \lVert \proj{u} - \proj{v} \rVert_{1} = \sqrt{1 - \lvert \braket{u}{v} \rvert^{2}}
    \end{equation}
    for pure states.
\end{proof}

We now proceed to use a special case of \cref{lem:general_robust_prob} to state a similar result for the context of self-testing with robustness guarantees given for all possible observables that can be formed from measurement operators for parallel binary outcomes.

\begin{theorem}
\label{thm:robust_prob}
    Let $V \colon \hilb{H} \to \hilb{H}^{\prime}$ be an isometry.
    Define the probability mass function $p$ for outcomes $\bm a \in \{0, 1\}^{n}$ of a projective measurement $\{M_{\bm a}\}_{\bm a} \subset \mathcal{L}(\hilb{H})$ on a state $\ket{\psi} \in \hilb{H}$ by $p(\bm a) = \bra{\psi} M_{\bm a} \ket{\psi}$.
    Let $\ket{\psi^{\prime}} \in \hilb{H}^{\prime}$ be a state and let $\{ M_{\bm a}^{\prime} \}_{\bm a} \subset \mathcal{L}(\hilb{H}^{\prime})$ be a projective measurement such that $M_{\bm a}^{\prime} \ket{\psi^{\prime}} \neq 0$ for all $\bm a$.
    For all $\bm s \in \{0, 1\}^{n}$, define observables
    \begin{equation}
    \label{eq:obs_transform}
        M^{\bm s} = \sum_{\bm a} (-1)^{\bm a \cdot \bm s} M_{\bm a} ,\quad
        {M^{\prime}}^{\bm s} = \sum_{\bm a} (-1)^{\bm a \cdot \bm s} M_{\bm a}^{\prime} .
    \end{equation}
    Suppose that for all $\bm s$ we have
    \begin{equation}
    \label{eq:all_obs_robust}
        V M^{\bm s} \ket{\psi} \approxe{\delta} {M^{\prime}}^{\bm s} \ket{\psi^{\prime}} .
    \end{equation}
    Defining a random variable $D$ for the trace distance between post-measurement states
    \begin{equation}
        D(\bm a)
        = \frac{1}{2} \mathopen{}\left\lVert \frac{V M_{\bm a} \proj{\psi} M_{\bm a} V^{\dagger}}{p(\bm a)}
        - \frac{M_{\bm a}^{\prime} \proj{\psi^{\prime}} M_{\bm a}^{\prime}}{\bra{\psi^{\prime}} M_{\bm a}^{\prime} \ket{\psi^{\prime}}} \right\rVert_{1}
    \end{equation}
    we then have (with respect to the probability distribution $p$) that
    \begin{equation}
        \Pr \mathopen{}\left( D \leq \delta^{2/3} \right) \geq 1 - 4 \delta^{2/3} .
    \end{equation}
\end{theorem}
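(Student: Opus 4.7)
The plan is to apply \cref{lem:general_robust_prob} directly, with the sample space chosen so that $\Sigma$ is a singleton (absorbed trivially) and $\Omega = \{0, 1\}^{n}$ indexing the measurement outcomes $\bm a$. For the two families of vectors, I would take $\ket{u^{\bm a}} = V M_{\bm a} \ket{\psi}$ and $\ket{v^{\bm a}} = M_{\bm a}^{\prime} \ket{\psi^{\prime}}$. The hypotheses of the lemma are then immediate: projectivity of $M_{\bm a}$ together with $V^{\dagger} V = I$ gives $\lVert \ket{u^{\bm a}} \rVert^{2} = \bra{\psi} M_{\bm a} \ket{\psi} = p(\bm a)$, so that $\sum_{\bm a} \lVert \ket{u^{\bm a}} \rVert^{2} = 1$, and the normalized vectors $\ket{\hat{u}^{\bm a}}$ and $\ket{\hat{v}^{\bm a}}$ pull back via $\proj{\hat{u}^{\bm a}} = V M_{\bm a} \proj{\psi} M_{\bm a} V^{\dagger} / p(\bm a)$ and its analog, matching the trace-distance random variable $D(\bm a)$ of the theorem statement exactly.

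The main step is then establishing the bound
\begin{equation*}
    \sum_{\bm a} \lVert V M_{\bm a} \ket{\psi} - M_{\bm a}^{\prime} \ket{\psi^{\prime}} \rVert^{2} \leq \delta^{2} .
\end{equation*}
To do this I would invert the relation in \cref{eq:obs_transform}. The Walsh--Hadamard identity $\sum_{\bm s} (-1)^{(\bm a + \bm a^{\prime}) \cdot \bm s} = 2^{n} [\bm a = \bm a^{\prime}]$ yields
\begin{equation*}
    M_{\bm a} = \frac{1}{2^{n}} \sum_{\bm s} (-1)^{\bm a \cdot \bm s} M^{\bm s} ,\quad
    M_{\bm a}^{\prime} = \frac{1}{2^{n}} \sum_{\bm s} (-1)^{\bm a \cdot \bm s} {M^{\prime}}^{\bm s} ,
\end{equation*}
so defining $\ket{\Delta^{\bm s}} = V M^{\bm s} \ket{\psi} - {M^{\prime}}^{\bm s} \ket{\psi^{\prime}}$ (which satisfies $\lVert \ket{\Delta^{\bm s}} \rVert \leq \delta$ by \cref{eq:all_obs_robust}) one obtains $\ket{u^{\bm a}} - \ket{v^{\bm a}} = 2^{-n} \sum_{\bm s} (-1)^{\bm a \cdot \bm s} \ket{\Delta^{\bm s}}$. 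Since the matrix with entries $2^{-n/2} (-1)^{\bm a \cdot \bm s}$ is unitary, Parseval's identity (applied componentwise in the underlying Hilbert space and summed) gives
\begin{equation*}
    \sum_{\bm a} \lVert \ket{u^{\bm a}} - \ket{v^{\bm a}} \rVert^{2}
    = \frac{1}{2^{n}} \sum_{\bm s} \lVert \ket{\Delta^{\bm s}} \rVert^{2}
    \leq \frac{1}{2^{n}} \cdot 2^{n} \cdot \delta^{2}
    = \delta^{2} ,
\end{equation*}
as required. The desired conclusion follows by invoking \cref{lem:general_robust_prob} with $c = 2/3$, which yields $\Pr(D \leq \delta^{2/3}) \geq 1 - 4 \delta^{2/3}$.

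The only conceptual subtlety I anticipate is the exponential cancellation: naively bounding the sum over $\bm a$ of $\lVert \ket{u^{\bm a}} - \ket{v^{\bm a}} \rVert^{2}$ by pulling the norm inside the sum over $\bm s$ and using the triangle inequality would produce an exponential factor of $2^{n}$ and destroy the estimate, which is precisely the impracticality described just before the theorem. The Parseval (orthogonality) identity for the Walsh--Hadamard transform is what absorbs the $2^{n}$ factor and allows the $\delta^{2}$ bound to survive, and the role of \cref{lem:general_robust_prob} is then to convert this averaged closeness into a high-probability statement about post-measurement states, at the cost of the $\delta^{2} \mapsto \delta^{2/3}$ relaxation with failure probability $4 \delta^{2/3}$.
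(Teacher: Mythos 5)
Your proposal is correct and follows essentially the same route as the paper: both reduce the claim to the bound $\sum_{\bm a} \lVert V M_{\bm a} \ket{\psi} - M_{\bm a}^{\prime} \ket{\psi^{\prime}} \rVert^{2} \leq \delta^{2}$ via the orthogonality of the characters $(-1)^{\bm a \cdot \bm s}$ and then invoke \cref{lem:general_robust_prob} with a singleton $\Sigma$, $\Omega = \{0,1\}^{n}$, and $c = 2/3$. The only cosmetic difference is that you invert the Walsh--Hadamard transform and cite Parseval, whereas the paper expands $M^{\bm s}$ forward and evaluates the character sum $\sum_{\bm s}(-1)^{(\bm a \oplus \bm b)\cdot \bm s}$ explicitly; these are the same computation.
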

\begin{remark}
    We assume without loss of generality that all outcomes satisfy $p(\bm a) > 0$, since values of $D$ that could never be observed would not contribute towards the resulting probability.
\end{remark}
\begin{proof}
    Let us denote
    \begin{equation}
        \ket{w_{\bm a}} = V M_{\bm a} \ket{\psi} - M_{\bm a}^{\prime} \ket{\psi^{\prime}} .
    \end{equation}
    Combining \cref{eq:obs_transform,eq:all_obs_robust}, we can write
    \begin{equation}
    \begin{split}
        \delta^{2}
        & \geq \frac{1}{2^{n}} \sum_{\bm s} \mathopen{}\left\lVert V M^{\bm s} \ket{\psi} - {M^{\prime}}^{\bm s} \ket{\psi^{\prime}} \right\rVert^{2} \\
        & = \frac{1}{2^{n}} \sum_{\bm s} \mathopen{}\left\lVert \sum_{\bm a} (-1)^{\bm a \cdot \bm s} \ket{w_{\bm a}} \right\rVert^{2} \\
        & = \frac{1}{2^{n}} \sum_{\bm a, \bm b, \bm s} (-1)^{(\bm a \oplus \bm b) \cdot \bm s} \braket{w_{\bm b}}{w_{\bm a}} \\
        & = \sum_{\bm a} \lVert V M_{\bm a} \ket{\psi} - M_{\bm a}^{\prime} \ket{\psi^{\prime}} \rVert^{2} ,
    \end{split}
    \end{equation}
    where we have used the fact that
    \begin{equation}
        \sum_{\bm s} (-1)^{\bm c \cdot \bm s} =
        \begin{cases}
            2^{n} & \text{if $\bm c = \bm 0$,} \\
            0 & \text{otherwise.}
        \end{cases}
    \end{equation}
    Note that $\lVert V M_{\bm a} \ket{\psi} \rVert^{2} = \bra{\psi} M_{\bm a} \ket{\psi}$ and $\lVert M_{\bm a}^{\prime} \ket{\psi^{\prime}} \rVert^{2} = \bra{\psi^{\prime}} M_{\bm a}^{\prime} \ket{\psi^{\prime}}$.
    Finally, apply \cref{lem:general_robust_prob} with sample spaces $\Sigma = \{ \sigma \}$ for some $\sigma$ (so that $\pi(\sigma) = 1$) and $\Omega = \{0, 1\}^{n}$, vectors $\ket{u_{\bm a}} = V M_{\bm a} \ket{\psi}$ and $\ket{v_{\bm a}} = M_{\bm a}^{\prime} \ket{\psi^{\prime}}$, and choosing $c = 2/3$.
\end{proof}

\subsection{Correlated complex conjugation}
\label{sec:conjugation}

As we have already seen in \cref{sec:self-testing}, it is impossible to distinguish a reference strategy from that with its measurement operators replaced by their complex conjugates (performed in some fixed local orthonormal bases for which the reference state may be assumed to have real matrix elements).
Since there is no basis in which all of $\sigma_{\mathrm{x}}$, $\sigma_{\mathrm{y}}$, and $\sigma_{\mathrm{z}}$ simultaneously have real matrix representations, some complex conjugation ambiguity must necessarily be included in our self-testing statement in order that we certify measurements of all $\sigma_{\mathrm{x}}$, $\sigma_{\mathrm{y}}$, $\sigma_{\mathrm{z}}$, $\sigma_{\mathrm{x} + \mathrm{y}}$, and $\sigma_{\mathrm{x} - \mathrm{y}}$ on Alice's side, as required to satisfy \cref{itm:pauli_product}.

Note that there exist a pair of local orthonormal bases for which the matrix representations of the state $\ket{\Phi^{+}}$ and observables $\sigma_{\mathrm{x}}$ and $\sigma_{\mathrm{y}}$ are all real, while $\sigma_{\mathrm{z}}^{*} = - \sigma_{\mathrm{z}}$ (with $*$ denoting complex conjugation performed in the aforementioned basis).
The observable $- \sigma_{\mathrm{z}}$ has eigenstate $\ket{1}$ for the eigenvalue $+1$ and eigenstate $\ket{0}$ for the eigenvalue $-1$.
With this choice of complex conjugation basis, the only ambiguity introduced in the preparation of qubits on Bob's side is that a qubit supposedly prepared by the measurement of $\sigma_{\mathrm{z}}$ in a state $\ket{0}$ or $\ket{1}$ may in fact be prepared in the opposite state $\ket{1}$ or $\ket{0}$, respectively.
States prepared by measurements of any of the other observables $\sigma_{\mathrm{x}}$, $\sigma_{\mathrm{y}}$, $\sigma_{\mathrm{x} + \mathrm{y}}$, or $\sigma_{\mathrm{x} - \mathrm{y}}$ remain unambiguous.

The FK protocol correctly handles input states in which all of the ``dummy'' qubits of the honest input $\ket{0}$ or $\ket{1}$ are unknowingly (with some unknown probability) flipped to $\ket{1}$ and $\ket{0}$, respectively.
Hence, for the remainder of this work, we take complex conjugation to be performed in the local orthonormal bases described in this section, unless otherwise stated.

\begin{proposition}[{\cite[Lemma~10]{gheorghiu2015robustness}}]
    If the initial input state of the FK protocol is close to the ideal input state with all dummy qubits $\ket{0}$ and $\ket{1}$ replaced with $\ket{1}$ and $\ket{0}$, respectively, the protocol will reject it with high probability.
\end{proposition}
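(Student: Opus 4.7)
The plan is to trace the effect of flipping every dummy qubit through the $CZ$-entangling layer of the FK resource state, show that this deterministically corrupts the expected outcome of at least one trap measurement, and then transfer the resulting high rejection probability across the small trace-distance gap to the actual initial state. I would begin by recalling the structure of the FK input: a tensor product of computation qubits $\ket{+_\theta}$, trap qubits $\ket{+_{\theta_t}}$, and dummy qubits $\ket{d}$ with $d \in \{0,1\}$ chosen uniformly at random by Alice, all entangled by the $CZ$ gates of a dotted-graph resource. The crucial structural property of the FK construction is that every trap qubit is adjacent only to dummy qubits.

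Using the identity $CZ(\ket{\phi}\otimes\ket{b}) = (Z^{b}\ket{\phi})\otimes\ket{b}$, the operation ``flip dummy $d$'' commutes past the fixed $CZ$ layer and becomes the application of a Pauli $Z$ on each of $d$'s neighbours. Simultaneously flipping every dummy therefore applies $Z^{k_t}$ to each trap $t$, where $k_t$ counts its dummy neighbours. The FK graph is designed so that $k_t$ is odd for at least one trap, hence $\ket{+_{\theta_t}} \mapsto \ket{-_{\theta_t}}$, and Alice's subsequent measurement of that trap in its expected basis returns the opposite outcome to her classical prediction. Since FK rejects as soon as any single trap check fails, the ideal dummy-flipped state is rejected with probability close to one.

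The last step is to propagate this rejection probability to any state $\rho$ within small trace distance of the flipped ideal input. Because the accept/reject decision in the FK protocol is the outcome of a fixed sequence of measurements applied to the input, the accept event is described by a single POVM element, and the operational interpretation of trace distance as the maximum distinguishing advantage between two quantum states then bounds the difference in acceptance probabilities between $\rho$ and the flipped ideal by exactly that distance. Combined with the previous paragraph, this yields a rejection probability for $\rho$ that is close to one.

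The main obstacle I expect is the combinatorial check that, for the specific dotted-graph resource used in the FK construction (and for every random dummy placement that Alice might produce), at least one trap really does receive an odd effective $Z$ twist from the global dummy flip; handling this cleanly requires unpacking the trap/dummy gadget structure used in \cite{fitzsimons2017unconditionally}. Once that graph-theoretic fact is established, the remaining steps reduce to routine $CZ$-algebra and the contractivity of trace distance under quantum channels.
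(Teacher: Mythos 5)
Your proposal is correct and follows essentially the same route as the paper's proof sketch: both rest on the observation that globally flipping the dummies applies an extra $Z^{k_t}$ to each trap $t$ (equivalently, flips the parity of $\ket{1}$ neighbours the verifier uses for her $Z$ correction), so any trap with an odd number of dummy neighbours deterministically returns the wrong outcome, and the trace-distance step is the standard operational bound. The ``combinatorial obstacle'' you anticipate is not one: in the FK protocol the verifier herself chooses the trap and dummy placement, so she simply ensures at least one trap has an odd number of dummy neighbours, which the paper notes is easily achievable.
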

\begin{proof}[Proof sketch]
    Given a trap that has an odd number of dummy qubit neighbors, the verifier expects to apply a $Z$ correction based on the parity of the number of $\ket{1}$ neighbors.
    With all $\ket{0}$ and $\ket{1}$ qubits of the input flipped with respect to the ideal input, the number of $\ket{1}$ neighbors of the trap has opposite parity to what the verifier expects.
    In this case, the verifier will always get the opposite result from the trap to that which is expected.
    Therefore, as long as the verifier makes sure that at least one trap has an odd number of dummy neighbors (which is easily achievable), the state is rejected in the protocol.
\end{proof}

We require that possible flipping of the dummy qubits in the input state occurs \emph{globally}: either all such states are flipped or none are flipped.
Note that this is in correspondence with \cref{def:complex_self-testing} of complex self-testing, in which complex conjugation is possibly performed on the whole reference measurement operator without any mention of its structure (in our case the many-qubit tensor product structure).
One may otherwise imagine a weaker statement of self-testing for the special case of $n$-fold product states, in which the reference experiment is certified up to complex conjugation at any combination of the $n$ positions.
It is possible to construct tests that enforce global complex conjugation from some such statements \cite{bowles2018self}.

In previous works lifting the FK protocol to the device-independent scenario, that complex conjugation must be accounted for in a global fashion was not a consideration, since state preparation was performed sequentially (self-testing single EPR pairs as in \cite{hajdusek2015device} or based on the rigidity of the CHSH game \cite{reichardt2013classical2,reichardt2013classical1} as in \cite{gheorghiu2015robustness}).

\section{Triple CHSH inequality}
\label{sec:triple_chsh}

Let us first consider the well-known problem of self-testing a single Bell state and single-qubit Pauli observables in the following scenario.
In each round, Alice is provided with one of three possible input questions $x \in \{1, 2, 3\}$ and answers with $a \in \{+1, -1\}$.
These are denoted by the $\pm 1$-outcome observables acting on Alice's subsystem $A_{1}$, $A_{2}$, and $A_{3}$ respectively.
Meanwhile, Bob is provided with one of six possible input questions $y \in \{1, \dots, 6\}$ and answers with $b \in \{+1, -1\}$.
These are denoted by the $\pm 1$-outcome observables acting on Bob's subsystem $D_{\mathrmpair{z}{x}}$, $E_{\mathrmpair{z}{x}}$, $D_{\mathrmpair{z}{y}}$, $E_{\mathrmpair{z}{y}}$, $D_{\mathrmpair{x}{y}}$, and $E_{\mathrmpair{x}{y}}$, respectively.

Consider the triple CHSH operator \cite{acin2016optimal,bowles2018self} defined as
\begin{equation}
\label{eq:triple_chsh}
\begin{split}
    C
    ={} & A_{3} \otimes (D_{\mathrmpair{z}{x}} + E_{\mathrmpair{z}{x}}) + A_{1} \otimes (D_{\mathrmpair{z}{x}} - E_{\mathrmpair{z}{x}}) \\
    & + A_{3} \otimes (D_{\mathrmpair{z}{y}} + E_{\mathrmpair{z}{y}}) + A_{2} \otimes (D_{\mathrmpair{z}{y}} - E_{\mathrmpair{z}{y}}) \\
    & + A_{1} \otimes (D_{\mathrmpair{x}{y}} + E_{\mathrmpair{x}{y}}) + A_{2} \otimes (D_{\mathrmpair{x}{y}} - E_{\mathrmpair{x}{y}}) .
\end{split}
\end{equation}
This operator is the sum of three CHSH operators, with each of $A_{1}$, $A_{2}$, and $A_{3}$ contained in two of them.
The expectation value satisfies $\bra{\psi} C \ket{\psi} \leq 6\sqrt{2}$ for any state $\ket{\psi}$ shared between Alice and Bob, since each of the three CHSH operators has expectation upper bounded by 2$\sqrt{2}$.
We can saturate this bound by taking the shared state to be
\begin{equation}
\label{eq:triple_chsh_honest_state}
    \ket{\psi} = \ket{\Phi^{+}} \equiv \frac{\ket{00} + \ket{11}}{\sqrt{2}}
\end{equation}
and the observables to be
\begin{equation}
\label{eq:triple_chsh_honest_observables}
\begin{gathered}
    A_{1} = \sigma_{\mathrm{x}} ,\quad
    A_{2} = - \sigma_{\mathrm{y}} ,\quad
    A_{3} = \sigma_{\mathrm{z}} ,\quad \\
    D_{j,k} = \frac{\sigma_{j} + \sigma_{k}}{\sqrt{2}} ,\quad
    E_{j,k} = \frac{\sigma_{j} - \sigma_{k}}{\sqrt{2}} .
\end{gathered}
\end{equation}
In the classical case, we have the triple CHSH inequality $\langle C \rangle \leq 6$.
The minus sign preceding $\sigma_{\mathrm{y}}$ in \cref{eq:triple_chsh_honest_observables} is due to the perfect anticorrelation of $\sigma_{\mathrm{y}}$ between the subsystems of Alice and Bob in the state $\ket{\Phi^{+}}$.
We could just as easily saturate the quantum bound instead taking $A_{2} = \sigma_{\mathrm{y}}$ by changing the sign of each $A_{2}$ in \cref{eq:triple_chsh}.

It has previously been shown (see \cite{acin2016optimal,bowles2018self}) that a maximal violation $\bra{\psi} C \ket{\psi} = 6\sqrt{2}$ self-tests the reference state $\ket{\Phi^{+}}$ and the reference observables $\{\sigma_{\mathrm{x}}, \sigma_{\mathrm{y}}, \sigma_{\mathrm{z}}\}$ or their complex conjugates (in the computational basis) $\{\sigma_{\mathrm{x}}, - \sigma_{\mathrm{y}}, \sigma_{\mathrm{z}}\}$, acting on Alice's subsystem (with the complex measurement $\sigma_{\mathrm{y}}$ self-tested in the sense of \textcite{mckague2011generalized}).
Specifically, we have the following theorem.
\begin{theorem}[\textcite{bowles2018self}]
\label{thm:triple_chsh_test}
    Suppose the state $\ket{\psi} \in \hilb{A} \otimes \hilb{B}$ and the observables $A_{j} \in \mathcal{L}(\hilb{A})$ and $D_{j,k}, E_{j,k} \in \mathcal{L}(\hilb{B})$ satisfy
    \begin{equation}
        \bra{\psi} C \ket{\psi} = 6\sqrt{2} - \varepsilon .
    \end{equation}
    Then there exist linear isometries $V_{\hilb{A}} \colon \hilb{A} \to \hilb{A} \otimes \hilb{A}^{\prime} \otimes \hilb{A}^{\prime\prime}$ and $V_{\hilb{B}} \colon \hilb{B} \to \hilb{B} \otimes \hilb{B}^{\prime} \otimes \hilb{B}^{\prime\prime}$ defining the local isometry $V = V_{\hilb{A}} \otimes V_{\hilb{B}}$ such that
    \begin{subequations}
    \begin{align}
        V \ket{\psi}
        & \approxe{c \sqrt{\varepsilon}} \ket{\Phi^{+}}_{\hilb{A}^{\prime} \hilb{B}^{\prime}} \otimes \ket{\xi} , \\
        V A_{1} \ket{\psi}
        & \approxe{c \sqrt{\varepsilon}} \sigma_{\mathrm{x}}^{\hilb{A}^{\prime}} \ket{\Phi^{+}}_{\hilb{A}^{\prime} \hilb{B}^{\prime}} \otimes \ket{\xi} , \\
        \label{eq:triple_chsh_y}
        V A_{2} \ket{\psi}
        & \approxe{c \sqrt{\varepsilon}} - \sigma_{\mathrm{y}}^{\hilb{A}^{\prime}} \ket{\Phi^{+}}_{\hilb{A}^{\prime} \hilb{B}^{\prime}} \otimes \sigma_{\mathrm{z}}^{\hilb{A}^{\prime\prime}} \ket{\xi} , \\
        V A_{3} \ket{\psi}
        & \approxe{c \sqrt{\varepsilon}} \sigma_{\mathrm{z}}^{\hilb{A}^{\prime}} \ket{\Phi^{+}}_{\hilb{A}^{\prime} \hilb{B}^{\prime}} \otimes \ket{\xi} ,
    \end{align}
    \end{subequations}
    where $c$ is a nonnegative constant and the state $\ket{\xi} \in \hilb{A} \otimes \hilb{A}^{\prime\prime} \otimes \hilb{B} \otimes \hilb{B}^{\prime\prime}$ has the form
    \begin{equation}
        \ket{\xi}
        = \ket{00}_{\hilb{A}^{\prime\prime} \hilb{B}^{\prime\prime}} \otimes \ket{\xi_{0}}_{\hilb{A} \hilb{B}}
        + \ket{11}_{\hilb{A}^{\prime\prime} \hilb{B}^{\prime\prime}} \otimes \ket{\xi_{1}}_{\hilb{A} \hilb{B}}
    \end{equation}
    for some subnormalized $\ket{\xi_0}_{\hilb{A}\hilb{B}}$ and $\ket{\xi_1}_{\hilb{A}\hilb{B}}$ satisfying $\braket{\xi_{0}}{\xi_{0}}_{\hilb{A} \hilb{B}} + \braket{\xi_{1}}{\xi_{1}}_{\hilb{A} \hilb{B}} = 1$.
\end{theorem}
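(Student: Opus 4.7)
The plan is to follow the by-now standard sum-of-squares (SOS) route for CHSH-type self-tests, carefully adapted to accommodate the complex observable $\sigma_{\mathrm y}$ and the resulting correlated complex conjugation ambiguity. The shifted operator $6\sqrt{2}\, I - C$ splits as a sum of the three nonnegative shifted CHSH operators corresponding to the pairs $(\mathrm z, \mathrm x)$, $(\mathrm z, \mathrm y)$, $(\mathrm x, \mathrm y)$, each of which admits the well-known SOS decomposition for CHSH recalled in \cref{sec:sos_decomposition}. Applying the bound \eqref{eq:sos_bound} to each summand gives, for instance, approximate equalities of the form
\begin{equation*}
    A_{1} \ket{\psi} \approxe{c_{1} \sqrt{\varepsilon}} \frac{D_{\mathrmpair{z}{x}} - E_{\mathrmpair{z}{x}}}{\sqrt{2}} \ket{\psi} ,\qquad
    A_{3} \ket{\psi} \approxe{c_{1} \sqrt{\varepsilon}} \frac{D_{\mathrmpair{z}{x}} + E_{\mathrmpair{z}{x}}}{\sqrt{2}} \ket{\psi} ,
\end{equation*}
together with the analogous relations from the two remaining CHSH summands, for some absolute constant $c_{1}$.

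From these relations I would define Bob-side surrogates $B_{\mathrm x}, B_{\mathrm y}, B_{\mathrm z}$ as the appropriate linear combinations of the $D_{j,k}$ and $E_{j,k}$, so that they act on $\ket{\psi}$ approximately as $A_{1}$, $-A_{2}$, $A_{3}$ respectively. Since each of $A_{1}$ and $A_{3}$ appears in two CHSH summands, their approximate anticommutation on $\ket{\psi}$ follows by composing the corresponding SOS estimates and using the unitarity of the Bob-side observables; analogous reasoning for the $(\mathrm z, \mathrm y)$ and $(\mathrm x, \mathrm y)$ summands yields $\{A_{2}, A_{j}\}\ket{\psi} \approxe{c_{2}\sqrt{\varepsilon}} 0$ for $j \in \{1, 3\}$. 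However, the overall sign of $A_{2}$ relative to the pair $(A_{1}, A_{3})$ is left undetermined by the Bell statistics---this is exactly the unobservable freedom that the ancillary qubits $\hilb{A}^{\prime\prime}$ and $\hilb{B}^{\prime\prime}$ must absorb.

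To construct the isometry I would regularize $A_{1}$, $A_{3}$, and their Bob-side counterparts into unitaries using \cref{sec:regularization} (invoking \cref{lem:regularized_acomm} to preserve anticommutation on $\ket{\psi}$), then plug them into the usual swap gadget of McKague--Yang--Scarani acting on fresh ancilla qubits in $\hilb{A}^{\prime}$ and $\hilb{B}^{\prime}$. This already yields the estimates for $V \ket{\psi}$, $V A_{1} \ket{\psi}$, and $V A_{3} \ket{\psi}$ by a routine unwinding of the gadget using the anticommutation relations. To accommodate $A_{2}$, I would append extra qubits $\hilb{A}^{\prime\prime}, \hilb{B}^{\prime\prime}$ initialized in a maximally entangled state, and apply a controlled operation coherently flipping them conditional on a Hermitian branch indicator built from a symmetrized version of the product $i A_{1} A_{2} A_{3}$, which, by the anticommutation relations, approximately squares to $I$ on $\ket{\psi}$ and commutes with $A_{1}, A_{3}$. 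The $\pm 1$ eigenspaces of this indicator label the two conjugation branches, giving the restricted Schmidt form $\ket{00}\ket{\xi_{0}} + \ket{11}\ket{\xi_{1}}$ for $\ket{\xi}$ and producing the correlated $-\sigma_{\mathrm y}^{\hilb{A}^{\prime}} \otimes \sigma_{\mathrm z}^{\hilb{A}^{\prime\prime}}$ action demanded by \eqref{eq:triple_chsh_y}.

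The main obstacle I anticipate is not any single estimate but the bookkeeping of the complex conjugation ambiguity: one must check that the branch-indicator qubits can be appended symmetrically on the two sides using only local operations, that their joint state truly has the restricted Schmidt form stated in the theorem, and that conjugating $A_{2}$ through the swap gadget yields a correlated (rather than independent) action on the reference state. Everything else reduces to repeated triangle inequality arguments with the norm bounds $\lVert A_{j} \rVert, \lVert D_{j,k} \rVert, \lVert E_{j,k} \rVert \leq 1$, with all accumulated constants absorbed into a single $c$ multiplying $\sqrt{\varepsilon}$.
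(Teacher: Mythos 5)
Your outline is essentially the same machinery this paper uses: the theorem itself is imported from Bowles et al., but the paper's own detailed construction of the analogous statement (\cref{prop:swap_and_kickback_isoms}, preceded by the explicit SOS decomposition of $6\sqrt{2} - C$ into the three shifted CHSH operators and followed by \cref{lem:regularized_acomm}) proceeds exactly as you describe --- extract the six correlation relations, obtain state-dependent anticommutation from $\{D+E, D-E\} = 0$, regularize Bob's combinations $(D \pm E)/\sqrt{2}$ into unitaries, and run a swap gadget followed by a phase-kickback stage that absorbs the conjugation branch. Your branch indicator $i A_{1} A_{2} A_{3}$ is also the right object for the theorem as stated: it is (state-dependently) Hermitian, commutes with all three observables, and equals $\pm I$ on the two conjugation branches, which is what produces the correlated $-\sigma_{\mathrm{y}}^{\hilb{A}^{\prime}} \otimes \sigma_{\mathrm{z}}^{\hilb{A}^{\prime\prime}}$ action in \cref{eq:triple_chsh_y}. (The paper's own variant instead controls the kickback directly on the $Z$-type observables so that the ambiguity lands in $\sigma_{\mathrm{z}}$; your choice is the correct one for conjugation in $\sigma_{\mathrm{y}}$.)

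The one step that fails as written is initializing the ancillas $\hilb{A}^{\prime\prime}$ and $\hilb{B}^{\prime\prime}$ ``in a maximally entangled state.'' The isometry must factorize as $V = V_{\hilb{A}} \otimes V_{\hilb{B}}$, so no party can append a state entangled across $\hilb{A}^{\prime\prime} \hilb{B}^{\prime\prime}$; doing so would defeat the entire point of a \emph{local} isometry. The correct construction (cf.\ the second stage of \cref{fig:swap_kickback}) appends the product state $\ket{0}_{\hilb{A}^{\prime\prime}} \ket{0}_{\hilb{B}^{\prime\prime}}$ and applies, on each side separately, a Hadamard-conjugated controlled operation whose target is that side's \emph{local} branch indicator --- yours on Alice's side, and a Bob-side counterpart built from the regularized surrogates. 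The restricted Schmidt form $\ket{00} \otimes \ket{\xi_{0}} + \ket{11} \otimes \ket{\xi_{1}}$ is then not an input to the construction but a \emph{consequence} of the correlation relations: the two local indicators agree on $\ket{\psi}$ up to $\bigO(\sqrt{\varepsilon})$, which kills the $\ket{01}$ and $\ket{10}$ cross terms. This is precisely the check you deferred as your ``main obstacle,'' and it is where the content of the conjugation bookkeeping lives; with it supplied, the rest of your argument goes through by the triangle-inequality accounting you describe.
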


The appearance of the additional $\sigma_{\mathrm{z}}$ observable in \cref{eq:triple_chsh_y} acting on the ancilla space $\hilb{A}^{\prime\prime}$ can be explained as performing a measurement of Alice's junk state ancilla in the computational basis, the outcome of which controls whether $\sigma_{\mathrm{y}}$ or $- \sigma_{\mathrm{y}}$ is applied to Alice's half of $\ket{\Phi^{+}}$.
The probability of applying the complex-conjugate observable is given by $\braket{\xi_{1}}{\xi_{1}}_{\hilb{A} \hilb{B}}$.

We show a variant of this result that is instead consistent with complex conjugation being performed in a basis in which the state $\ket{\Phi^{+}}$ and observables $\sigma_{\mathrm{x}}$ and $\sigma_{\mathrm{y}}$ have real matrices, while $\sigma_{\mathrm{z}}$ satisfies $\sigma_{\mathrm{z}}^{*} = - \sigma_{\mathrm{z}}$.
The isometry used to do this is depicted in \cref{fig:triple_chsh_isom} as a circuit acting on the state $\ket{\psi}_{AB}$.
This circuit is similar to the usual partial \emph{swap} isometry (used for self-testing in \cite{mckague2012robust}) followed by phase kickback unitaries controlled by additional ancilla qubits \cite{mckague2011generalized}.
It is modified to use physical operators corresponding to Pauli $X$ and $Y$ in the first ``swap'' stage, and Pauli $Z$ in the second ``phase kickback'' stage.
The unitary operators $\regular{X}_{\hilb{B}}$, $\regular{Y}_{\hilb{B}}$, and $\regular{Z}_{\hilb{B}}$ contained on Bob's side of the circuit are regularized versions of
\begin{subequations}
\begin{align}
    X_{\hilb{B}} &= \frac{D_{\mathrmpair{x}{y}} + E_{\mathrmpair{x}{y}}}{\sqrt{2}} , \\
    Y_{\hilb{B}} &= \frac{D_{\mathrmpair{x}{y}} - E_{\mathrmpair{x}{y}}}{\sqrt{2}} , \\
    Z_{\hilb{B}} &= \frac{D_{\mathrmpair{z}{x}} + E_{\mathrmpair{z}{x}}}{\sqrt{2}} .
\end{align}
\end{subequations}

\begin{figure*}[htb]
    \centering
    \begin{quantikz}
        \lstick{$\ket{0}_{\hilb{A}^{\prime}}$} & \gate{H} & \ctrl{2} & \gate{H} & \ctrl{2} & \qw & \qw & \qw & \qw & \qw & \qw\rstick[wires=6]{$\ket{\Phi^{+}}_{\hilb{A}^{\prime} \hilb{B}^{\prime}}$} \\
        \lstick{$\ket{0}_{\hilb{A}^{\prime\prime}}$} & \qw & \qw & \qw & \qw & \gate{H} & \ctrl{1} & \gate{H} & \qw\rstick[wires=4]{$\ket{\xi}$} \\
        \makeebit{$\ket{\psi}_{\hilb{A}\hilb{B}}$} & \qw & \gate{i Y_{\hilb{A}} X_{\hilb{A}}} & \qw & \gate{X_{\hilb{A}}} & \qw & \gate{Z_{\hilb{A}}} & \qw & \qw \\
        & \qw & \gate{i \regular{Y}_{\hilb{B}} \regular{X}_{\hilb{B}}} & \qw & \gate{\regular{X}_{\hilb{B}}} & \qw & \gate{\regular{Z}_{\hilb{B}}} & \qw & \qw \\
        \lstick{$\ket{0}_{\hilb{B}^{\prime\prime}}$} & \qw & \qw & \qw & \qw & \gate{H} & \ctrl{-1} & \gate{H} & \qw \\
        \lstick{$\ket{0}_{\hilb{B}^{\prime}}$} & \gate{H} & \ctrl{-2} & \gate{H} & \ctrl{-2} & \qw & \qw & \qw & \qw & \qw & \qw
    \end{quantikz}
    \caption{
        A modified partial swap isometry followed by phase kickback unitaries, acting on the state $\ket{\psi}_{\hilb{A} \hilb{B}}$, which is used to self-test the state $\ket{\Phi^{+}}$ and measurements of $\{\sigma_{\mathrm{x}}, \sigma_{\mathrm{y}}, \sigma_{\mathrm{z}}\}$ or $\{\sigma_{\mathrm{x}}, \sigma_{\mathrm{y}}, - \sigma_{\mathrm{z}}\}$, given a maximal violation of the triple CHSH inequality.
        The unitary operators on Alice's subsystem are simply $X_{\hilb{A}} = A_{1}$, $Y_{\hilb{A}} = - A_{2}$, and $Z_{\hilb{A}} = A_{3}$.
        The unitary operators $\regular{X}_{\hilb{B}}$, $\regular{Y}_{\hilb{B}}$, and $\regular{Z}_{\hilb{B}}$ are regularized versions of $X_{\hilb{B}}$, $Y_{\hilb{B}}$, and $Z_{\hilb{B}}$ respectively, which are each defined in terms of the operators $D_{j,k}$ and $E_{j,k}$.
    }
    \label{fig:triple_chsh_isom}
\end{figure*}

We state the result in a more explicit form than that of \cref{def:complex_self-testing}, as doing so will later prove useful in the proof of our parallel version of the self-test (see \cref{sec:parallel_test_proof}).
To write the result, we introduce the notation $W \colon \hilb{A} \otimes \hilb{B} \to \hilb{A} \otimes \hilb{A}^{\prime} \otimes \hilb{B} \otimes \hilb{B}^{\prime}$ and $K \colon \hilb{A} \otimes \hilb{B} \to \hilb{A} \otimes \hilb{A}^{\prime\prime} \otimes \hilb{B} \otimes \hilb{B}^{\prime\prime}$ for the local ``swap'' and ``phase kickback'' isometries, respectively, that are constructed in \cref{fig:swap_kickback}.

\begin{figure*}[htb]
    \centering
    \begin{quantikz}
        \lstick{$\ket{0}_{\hilb{A}^{\prime}}$} & \gate{H} & \ctrl{2} & \gate{H} & \ctrl{2}\slice{Swapped} & \qw & \qw & \qw & \qw & \qw & \qw\rstick[wires=6]{$\ket{\Phi^{+}}_{\hilb{A}^{\prime} \hilb{B}^{\prime}}$} \\
        \lstick{$\ket{0}_{\hilb{A}^{\prime\prime}}$} & \qw & \qw & \qw & \qw & \gate{H} & \ctrl{1} & \gate{H} & \qw\rstick[wires=4]{$\ket{\xi}$} \\
        \makeebit{$\ket{\psi}_{\hilb{A} \hilb{B}}$} & \qw & \gate{- i S_{2} S_{1}} & \qw & \gate{S_{1}} & \qw & \gate{S_{3}} & \qw & \qw \\
        & \qw & \gate{i T_{2} T_{1}} & \qw & \gate{T_{1}} & \qw & \gate{T_{3}} & \qw & \qw \\
        \lstick{$\ket{0}_{\hilb{B}^{\prime\prime}}$} & \qw & \qw & \qw & \qw & \gate{H} & \ctrl{-1} & \gate{H} & \qw \\
        \lstick{$\ket{0}_{\hilb{B}^{\prime}}$} & \gate{H} & \ctrl{-2} & \gate{H} & \ctrl{-2} & \qw & \qw & \qw & \qw & \qw & \qw
    \end{quantikz}
    \caption{
        The circuit describing the action of the local isometry $V = K W$ on the state $\ket{\psi}_{\hilb{A} \hilb{B}}$.
        The isometries $W_{\hilb{A}} \colon \hilb{A} \to \hilb{A} \otimes \hilb{A}^{\prime}$ and $W_{\hilb{B}} \colon \hilb{B} \to \hilb{B} \otimes \hilb{B}^{\prime}$ and the local isometry $W = W_{\hilb{A}} \otimes W_{\hilb{B}}$ are defined by the first ``swap'' stage of the circuit (preceding the dotted line), in which a maximally entangled state is extracted.
        In the second ``phase kickback'' stage (succeeding the dotted line), denoted by $K = K_{\hilb{A}} \otimes K_{\hilb{B}}$ for isometries $K_{\hilb{A}} \colon \hilb{A} \to \hilb{A} \otimes \hilb{A}^{\prime\prime}$ and $K_{\hilb{B}} \colon \hilb{B} \to \hilb{B} \otimes \hilb{B}^{\prime\prime}$, possible complex conjugation in the presence of a Pauli $\sigma_{\mathrm{z}}$ operator is accounted for.
    }
    \label{fig:swap_kickback}
\end{figure*}

\begin{proposition}
\label{prop:swap_and_kickback_isoms}
    Let $\ket{\psi}_{\hilb{A} \hilb{B}} \in \hilb{A} \otimes \hilb{B}$.
    Suppose for each $q \in \{1, 2, 3\}$ that there exist $\pm 1$-outcome observables $S_{q}$ on $\hilb{A}$ and $T_{q}$ on $\hilb{B}$ satisfying (for some $\eta \geq 0$) the following relations:
    \begin{enumerate}
        \item \label[relation]{rel:corr_single} $(S_{q} - T_{q}) \ket{\psi}_{\hilb{A} \hilb{B}} \approxe{\eta} 0$ for all $q$.
        \item \label[relation]{rel:acomm_single} $\{S_{q}, S_{r}\} \ket{\psi}_{\hilb{A} \hilb{B}} \approxe{\eta} 0$ and $\{T_{q}, T_{r}\} \ket{\psi}_{\hilb{A} \hilb{B}} \approxe{\eta} 0$ for all $q$ and $r$ such that $q \neq r$.
    \end{enumerate}
    Construct the local ``swap'' isometry $W \colon \hilb{A} \otimes \hilb{B} \to \hilb{A} \otimes \hilb{A}^{\prime} \otimes \hilb{B} \otimes \hilb{B}^{\prime}$ and the local ``phase kickback'' isometry $K \colon \hilb{A} \otimes \hilb{B} \to \hilb{A} \otimes \hilb{A}^{\prime\prime} \otimes \hilb{B} \otimes \hilb{B}^{\prime\prime}$ as in \cref{fig:swap_kickback}.
    Then
    \begin{subequations}
    \begin{align}
        W \ket{\psi}_{\hilb{A} \hilb{B}}
        & \approxe{c_{0} \eta} \ket{\Phi^{+}}_{\hilb{A}^{\prime} \hilb{B}^{\prime}} \otimes \ket{\varphi}_{\hilb{A} \hilb{B}} , \\
        W S_{1} \ket{\psi}_{\hilb{A} \hilb{B}}
        & \approxe{c_{1} \eta} \sigma_{\mathrm{x}}^{\hilb{B}^{\prime}} \ket{\Phi^{+}}_{\hilb{A}^{\prime} \hilb{B}^{\prime}}
        \otimes \ket{\varphi}_{\hilb{A} \hilb{B}} , \\
        W S_{2} \ket{\psi}_{\hilb{A} \hilb{B}}
        & \approxe{c_{2} \eta} \sigma_{\mathrm{y}}^{\hilb{B}^{\prime}} \ket{\Phi^{+}}_{\hilb{A}^{\prime} \hilb{B}^{\prime}}
        \otimes \ket{\varphi}_{\hilb{A} \hilb{B}} , \\
        W S_{3} \ket{\psi}_{\hilb{A} \hilb{B}}
        & \approxe{c_{3} \eta} \sigma_{\mathrm{z}}^{\hilb{B}^{\prime}} \ket{\Phi^{+}}_{\hilb{A}^{\prime} \hilb{B}^{\prime}}
        \otimes S_{3} \ket{\varphi}_{\hilb{A} \hilb{B}} ,
    \end{align}
    \end{subequations}
    where $\ket{\varphi}_{\hilb{A} \hilb{B}} \in \hilb{A} \otimes \hilb{B}$ is defined as
    \begin{equation}
        \ket{\varphi}_{\hilb{A} \hilb{B}} = \frac{1}{\sqrt{2}} (I + i T_{2} T_{1}) \ket{\psi}_{\hilb{A} \hilb{B}}
    \end{equation}
    and nonnegative constants $c_{0}$, $c_{1}$, $c_{2}$, and $c_{3}$ are defined as
    \begin{subequations}
    \begin{align}
        c_{0} & = \frac{1}{4} \mathopen{}\left( 18 + 5 \sqrt{2} \right)\mathclose{} , \\
        c_{1} & = \frac{1}{4} \mathopen{}\left( 26 + 5 \sqrt{2} \right)\mathclose{} , \\
        c_{2} & = \frac{1}{4} \mathopen{}\left( 34 + 5 \sqrt{2} \right)\mathclose{} , \\
        c_{3} & = \frac{1}{4} \mathopen{}\left( 66 + 5 \sqrt{2} \right)\mathclose{} .
    \end{align}
    \end{subequations}
    Moreover,
    \begin{subequations}
    \begin{align}
        K \ket{\varphi}_{\hilb{A} \hilb{B}}
        & \approxe{6 \sqrt{2} \eta} \ket{\xi} , \\
        K S_{3} \ket{\varphi}_{\hilb{A} \hilb{B}}
        & \approxe{6 \sqrt{2} \eta} \sigma_{\mathrm{z}}^{\hilb{B}^{\prime\prime}} \ket{\xi} .
    \end{align}
    \end{subequations}
    where the state $\ket{\xi} \in \hilb{A} \otimes \hilb{A}^{\prime\prime} \otimes \hilb{B} \otimes \hilb{B}^{\prime\prime}$ has the form
    \begin{equation}
        \ket{\xi}
        = \ket{0}_{\hilb{A}^{\prime\prime}} \ket{0}_{\hilb{B}^{\prime\prime}} \otimes \ket{\xi_{0}}_{\hilb{A} \hilb{B}}
        + \ket{1}_{\hilb{A}^{\prime\prime}} \ket{1}_{\hilb{B}^{\prime\prime}} \otimes \ket{\xi_{1}}_{\hilb{A} \hilb{B}} ,
    \end{equation}
    and the subnormalized $\ket{\xi_{0}}_{\hilb{A} \hilb{B}}$ and $\ket{\xi_{1}}_{\hilb{A} \hilb{B}}$ are given by
    \begin{subequations}
    \begin{align}
        \ket{\xi_{0}}_{\hilb{A} \hilb{B}}
        & = \frac{I + T_{3}}{2} \ket{\varphi}_{\hilb{A} \hilb{B}} , \\
        \ket{\xi_{1}}_{\hilb{A} \hilb{B}}
        & = \frac{I - T_{3}}{2} \ket{\varphi}_{\hilb{A} \hilb{B}} .
    \end{align}
    \end{subequations}
\end{proposition}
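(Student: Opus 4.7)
The plan is to prove the proposition by direct algebraic expansion of the two stages of the circuit in \cref{fig:swap_kickback}. Before tackling the circuit itself, I would assemble a short toolkit of state-dependent operator identities that follow from \cref{rel:corr_single,rel:acomm_single}: (i) $S_q T_q \ket{\psi} \approxe{\eta} \ket{\psi}$, by combining a correlation with $T_q^{2} = I$; (ii) $S_q S_r \ket{\psi} \approxe{O(\eta)} -T_q T_r \ket{\psi}$ for $q \neq r$, by pulling correlations across sides followed by an anticommutation; (iii) $T_q T_r T_q \ket{\psi} \approxe{\eta} -T_r \ket{\psi}$, which in turn gives $(I + i T_2 T_1)^{2} \ket{\psi} \approxe{O(\eta)} 2 (I + i T_2 T_1) \ket{\psi}$; and (iv) the approximate commutation $T_3 (I + i T_2 T_1) \ket{\psi} \approxe{O(\eta)} (I + i T_2 T_1) T_3 \ket{\psi}$, obtained by two applications of anticommutation. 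Identity (iv) is exactly what is needed to propagate relations involving $T_3$ through the definition of $\ket{\varphi}$.

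I would next expand $W \ket{\psi}$ layer by layer to obtain the branch decomposition
\[
W \ket{\psi}_{\hilb{A} \hilb{B}} = \frac{1}{4} \sum_{a, b \in \{0, 1\}} \ket{ab}_{\hilb{A}^{\prime} \hilb{B}^{\prime}} \otimes F_{ab} \ket{\psi}_{\hilb{A} \hilb{B}} ,
\]
where each operator $F_{ab}$ is an explicit product of $(I \mp i S_2 S_1)$ or $(S_1 \mp i S_2)$ with $(I \pm i T_2 T_1)$ or $(T_1 \pm i T_2)$. Using the toolkit identities, both diagonal branches reduce to $\tfrac{1}{2} (I + i T_2 T_1) \ket{\psi} = \tfrac{1}{\sqrt{2}} \ket{\varphi}_{\hilb{A} \hilb{B}}$, while the off-diagonal branches cancel, yielding $W \ket{\psi} \approxe{c_0 \eta} \ket{\Phi^{+}}_{\hilb{A}^{\prime} \hilb{B}^{\prime}} \otimes \ket{\varphi}_{\hilb{A} \hilb{B}}$. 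Repeating the decomposition with $S_q \ket{\psi}$ in place of $\ket{\psi}$---and commuting $S_q$ through the controlled layers using $S_q^{2} = I$ and the anticommutations---produces exactly the Pauli-kicked Bell state: $\sigma_{\mathrm{x}}^{\hilb{B}^{\prime}}$ for $q = 1$, $\sigma_{\mathrm{y}}^{\hilb{B}^{\prime}}$ for $q = 2$, and $\sigma_{\mathrm{z}}^{\hilb{B}^{\prime}}$ together with an extra $S_3$ factor on the junk for $q = 3$. The extra $S_3$ in the last case arises because $S_3$ anti-commutes with each of $S_1, S_2$ on $\ket{\psi}$, so that the $\ket{11}$ branch picks up the opposite sign relative to the $\ket{00}$ branch, matching the $\sigma_{\mathrm{z}}^{\hilb{B}^{\prime}}$ phase.

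For $K$, I would first combine identity (iv) with $S_3 \ket{\psi} \approxe{\eta} T_3 \ket{\psi}$ to verify $S_3 \ket{\varphi} \approxe{O(\eta)} T_3 \ket{\varphi}$, and then expand the Hadamard--controlled--Hadamard sandwich to write
\[
K \ket{\varphi} = \frac{1}{4} \sum_{c, d \in \{0, 1\}} \ket{cd}_{\hilb{A}^{\prime\prime} \hilb{B}^{\prime\prime}} \otimes \sum_{a, b \in \{0, 1\}} (-1)^{a c + b d} S_3^{a} T_3^{b} \ket{\varphi} .
\]
Using $S_3 \ket{\varphi} \approxe{O(\eta)} T_3 \ket{\varphi}$ and $S_3 T_3 \ket{\varphi} \approxe{O(\eta)} \ket{\varphi}$, the $\ket{00}$ branch collapses to $\tfrac{1}{2} (I + T_3) \ket{\varphi} = \ket{\xi_0}$, the $\ket{11}$ branch to $\ket{\xi_1}$, and the off-diagonal branches to zero. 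The computation for $K S_3 \ket{\varphi}$ is identical but with $S_3^{a}$ replaced throughout by $S_3^{a+1}$, which flips the sign of the $\ket{11}$ branch and yields exactly $\sigma_{\mathrm{z}}^{\hilb{B}^{\prime\prime}} \ket{\xi}$. The conceptual content is routine; the main obstacle is the careful bookkeeping needed to extract the stated constants $c_0 < c_1 < c_2 < c_3$ and the $6 \sqrt{2}$ bound on the kickback stage. These are controlled by the unitarity of each $S_q, T_q$ together with $\lVert I \pm i T_2 T_1 \rVert \leq \sqrt{2}$ and $\lVert T_1 \pm i T_2 \rVert \leq \sqrt{2}$, which cap the cost of propagating each relation through $\ket{\varphi}$, and the monotone ordering $c_0 < c_1 < c_2 < c_3$ reflects that $S_1$ needs only one substitution, $S_2$ one further anticommutation, and $S_3$ must be commuted past both $S_1, S_2$ and the $T_2 T_1$ factor inside $\ket{\varphi}$.
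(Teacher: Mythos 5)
Your plan is essentially the paper's own proof: expand $W$ branch by branch over the $\hilb{A}'\hilb{B}'$ ancillae, kill the off-diagonal branches using the derived relation $(I \pm i S_{2} S_{1})(I \pm i T_{2} T_{1}) \ket{\psi} \approx 0$, collapse the diagonal branches to $\ket{\varphi}$, commute each $S_{q}$ through (with the sign flip on the $\ket{11}$ branch producing the $\sigma_{\mathrm{z}}$ and the residual $S_{3}$), and then collapse the kickback stage via $S_{3} \ket{\varphi} \approx T_{3} \ket{\varphi}$ and the projector identities $(I \pm T_{3})(I \pm T_{3}) = 2(I \pm T_{3})$, $(I \pm T_{3})(I \mp T_{3}) = 0$; your toolkit items are exactly the intermediate relations the paper derives inline. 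One correction to your bookkeeping: the operator-norm bounds you quote are wrong --- $\lVert I \pm i T_{2} T_{1} \rVert \leq 2$ and $\lVert T_{1} \pm i T_{2} \rVert \leq 2$ (the $\sqrt{2}$ bound would require $T_{2} T_{1}$ to be Hermitian, which it is not in general), and using $\sqrt{2}$ there would yield constants tighter than are actually justified; the paper's constants are obtained with the factor $2$.
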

\begin{proof}
    See \cref{sec:single_test_proof} for the ideal case.
    Robustness is discussed in \cref{sec:single_test_robustness}.
\end{proof}

The operators $X_{\hilb{A}}$, $- Y_{\hilb{A}}$, and $Z_{\hilb{A}}$ for Alice, and $\regular{X}_{\hilb{B}}$, $\regular{Y}_{\hilb{B}}$, and $\regular{Z}_{\hilb{B}}$ for Bob satisfy the conditions of \cref{prop:swap_and_kickback_isoms}.
This can be seen from the SOS decomposition (see \cref{sec:sos_decomposition}) of the triple CHSH operator of \cref{eq:triple_chsh}
\begin{equation}
\begin{split}
    6 \sqrt{2} - C ={}
    &   \frac{1}{\sqrt{2}} \mathopen{}\left( A_{3} - \frac{D_{\mathrmpair{z}{x}} + E_{\mathrmpair{z}{x}}}{\sqrt{2}} \right)^{2} \\
    & + \frac{1}{\sqrt{2}} \mathopen{}\left( A_{1} - \frac{D_{\mathrmpair{z}{x}} - E_{\mathrmpair{z}{x}}}{\sqrt{2}} \right)^{2} \\
    & + \frac{1}{\sqrt{2}} \mathopen{}\left( A_{3} - \frac{D_{\mathrmpair{z}{y}} + E_{\mathrmpair{z}{y}}}{\sqrt{2}} \right)^{2} \\
    & + \frac{1}{\sqrt{2}} \mathopen{}\left( A_{2} - \frac{D_{\mathrmpair{z}{y}} - E_{\mathrmpair{z}{y}}}{\sqrt{2}} \right)^{2} \\
    & + \frac{1}{\sqrt{2}} \mathopen{}\left( A_{1} - \frac{D_{\mathrmpair{x}{y}} + E_{\mathrmpair{x}{y}}}{\sqrt{2}} \right)^{2} \\
    & + \frac{1}{\sqrt{2}} \mathopen{}\left( A_{2} - \frac{D_{\mathrmpair{x}{y}} - E_{\mathrmpair{x}{y}}}{\sqrt{2}} \right)^{2} .
\end{split}
\end{equation}
Considering the ideal case for simplicity, a maximal violation $\bra{\psi} C \ket{\psi} = 6 \sqrt{2}$ then implies that
\begin{subequations}
\begin{alignat}{2}
    A_{1} \ket{\psi}
    & = \frac{D_{\mathrmpair{x}{y}} + E_{\mathrmpair{x}{y}}}{\sqrt{2}} \ket{\psi}
    && = \frac{D_{\mathrmpair{z}{x}} - E_{\mathrmpair{z}{x}}}{\sqrt{2}} \ket{\psi} , \\
    A_{2} \ket{\psi}
    & = \frac{D_{\mathrmpair{x}{y}} - E_{\mathrmpair{x}{y}}}{\sqrt{2}} \ket{\psi}
    && = \frac{D_{\mathrmpair{z}{y}} - E_{\mathrmpair{z}{y}}}{\sqrt{2}} \ket{\psi} , \\
    A_{3} \ket{\psi}
    & = \frac{D_{\mathrmpair{z}{x}} + E_{\mathrmpair{z}{x}}}{\sqrt{2}} \ket{\psi}
    && = \frac{D_{\mathrmpair{z}{y}} + E_{\mathrmpair{z}{y}}}{\sqrt{2}} \ket{\psi} .
\end{alignat}
\end{subequations}
Since operators $D+E$ and $D-E$ anticommute for any $\pm 1$-outcome observables $D$ and $E$, the state-dependent anticommutation relations for Alice's observables $X_{\hilb{A}}$, $- Y_{\hilb{A}}$, and $Z_{\hilb{A}}$ are satisfied.
Similarly, since Alice's operators commute with Bob's observables, state-dependent anticommutation relations for $X_{\hilb{B}}$, $Y_{\hilb{B}}$, and $Z_{\hilb{B}}$ are also satisfied.
Applying \cref{lem:regularized_acomm} shows the required state-dependent anticommutation relations for $\regular{X}_{\hilb{B}}$, $\regular{Y}_{\hilb{B}}$, and $\regular{Z}_{\hilb{B}}$ (which are also unitary as required by \cref{prop:swap_and_kickback_isoms}).
Similar arguments apply in the robust case, resulting in the following counterpart to \cref{thm:triple_chsh_test}, which has possible complex conjugation appearing in the certification of $\sigma_{\mathrm{z}}$ rather than $\sigma_{\mathrm{y}}$.

\begin{corollary}
    Suppose the state $\ket{\psi} \in \hilb{A} \otimes \hilb{B}$ and the observables $A_{j} \in \mathcal{L}(\hilb{A})$ and $D_{j,k}, E_{j,k} \in \mathcal{L}(\hilb{B})$ satisfy
    \begin{equation}
        \bra{\psi} C \ket{\psi} = 6\sqrt{2} - \varepsilon .
    \end{equation}
    Then there exist linear isometries $V_{\hilb{A}} \colon \hilb{A} \to \hilb{A} \otimes \hilb{A}^{\prime} \otimes \hilb{A}^{\prime\prime}$ and $V_{\hilb{B}} \colon \hilb{B} \to \hilb{B} \otimes \hilb{B}^{\prime} \otimes \hilb{B}^{\prime\prime}$ defining the local isometry $V = V_{\hilb{A}} \otimes V_{\hilb{B}}$ such that
    \begin{subequations}
    \begin{align}
        V \ket{\psi}
        & \approxe{c \sqrt{\varepsilon}} \ket{\Phi^{+}}_{\hilb{A}^{\prime} \hilb{B}^{\prime}} \otimes \ket{\xi} , \\
        V A_{1} \ket{\psi}
        & \approxe{c \sqrt{\varepsilon}} \sigma_{\mathrm{x}}^{\hilb{A}^{\prime}} \ket{\Phi^{+}}_{\hilb{A}^{\prime} \hilb{B}^{\prime}} \otimes \ket{\xi} , \\
        V A_{2} \ket{\psi}
        & \approxe{c \sqrt{\varepsilon}} - \sigma_{\mathrm{y}}^{\hilb{A}^{\prime}} \ket{\Phi^{+}}_{\hilb{A}^{\prime} \hilb{B}^{\prime}} \otimes \ket{\xi} , \\
        V A_{3} \ket{\psi}
        & \approxe{c \sqrt{\varepsilon}} \sigma_{\mathrm{z}}^{\hilb{A}^{\prime}} \ket{\Phi^{+}}_{\hilb{A}^{\prime} \hilb{B}^{\prime}} \otimes \sigma_{\mathrm{z}}^{\hilb{A}^{\prime\prime}} \ket{\xi} ,
    \end{align}
    \end{subequations}
    where $c$ is a nonnegative constant and the state $\ket{\xi} \in \hilb{A} \otimes \hilb{A}^{\prime\prime} \otimes \hilb{B} \otimes \hilb{B}^{\prime\prime}$ has the form
    \begin{equation}
        \ket{\xi}
        = \ket{00}_{\hilb{A}^{\prime\prime} \hilb{B}^{\prime\prime}} \otimes \ket{\xi_{0}}_{\hilb{A} \hilb{B}}
        + \ket{11}_{\hilb{A}^{\prime\prime} \hilb{B}^{\prime\prime}} \otimes \ket{\xi_{1}}_{\hilb{A} \hilb{B}}
    \end{equation}
    for some subnormalized $\ket{\xi_0}_{\hilb{A}\hilb{B}}$ and $\ket{\xi_1}_{\hilb{A}\hilb{B}}$ satisfying $\braket{\xi_{0}}{\xi_{0}}_{\hilb{A} \hilb{B}} + \braket{\xi_{1}}{\xi_{1}}_{\hilb{A} \hilb{B}} = 1$.
\end{corollary}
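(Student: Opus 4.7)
The plan is to derive the corollary by verifying the hypotheses of \cref{prop:swap_and_kickback_isoms} for an appropriate choice of operators, following the argument sketch already provided in the paragraphs immediately preceding the statement. The key observation is that the SOS decomposition of $6\sqrt{2} - C$ exhibited in the excerpt directly produces the required state-dependent relations once the assignment $S_1 = A_1$, $S_2 = A_2$, $S_3 = A_3$ on Alice's side and $T_1 = \regular{X}_{\hilb{B}}$, $T_2 = \regular{Y}_{\hilb{B}}$, $T_3 = \regular{Z}_{\hilb{B}}$ on Bob's side is made.

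First I would apply the SOS bound of \cref{eq:sos_bound} to each of the six squared terms in the decomposition of $6\sqrt{2} - C$. From $\bra{\psi}C\ket{\psi} = 6\sqrt{2} - \varepsilon$, each term $F_j^\dagger F_j / \sqrt{2}$ then yields $\lVert F_j \ket{\psi} \rVert \leq (\sqrt{2}\varepsilon)^{1/2}$. These six inequalities imply that $A_1 \ket{\psi}$, $A_2 \ket{\psi}$, and $A_3 \ket{\psi}$ each coincide, up to an error of order $\sqrt{\varepsilon}$, with the corresponding $\pm 1$-outcome Bob-side combinations $X_{\hilb{B}}$, $Y_{\hilb{B}}$, and $Z_{\hilb{B}}$ defined in the excerpt. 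Since for any $\pm 1$-outcome observables $D$ and $E$ the combinations $(D+E)/\sqrt{2}$ and $(D-E)/\sqrt{2}$ anticommute exactly, and since each $A_j$ appears in two of the CHSH blocks, pairwise approximate anticommutation relations on $\ket{\psi}$ follow for the three operators $X_{\hilb{B}}$, $Y_{\hilb{B}}$, $Z_{\hilb{B}}$ with constants linear in $\sqrt{\varepsilon}$. The corresponding state-dependent anticommutation for Alice's $A_1, A_2, A_3$ then follows by pushing these Bob-side relations across $\ket{\psi}$ using the commutation of Alice's and Bob's operators.

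Next I would regularize Bob's operators to obtain genuinely unitary $\regular{X}_{\hilb{B}}$, $\regular{Y}_{\hilb{B}}$, $\regular{Z}_{\hilb{B}}$. The regularization discussion of \cref{sec:regularization} gives $\regular{T}_q \ket{\psi} \approxe{2\eta} U_q \ket{\psi}$ with $U_q = A_q$, so the perfect-correlation relation $(S_q - T_q)\ket{\psi} \approxe{\eta} 0$ required by \cref{prop:swap_and_kickback_isoms} holds with $\eta = O(\sqrt{\varepsilon})$, and \cref{lem:regularized_acomm} transports the anticommutation to the regularized versions with the same scaling (the operator norms of $X_{\hilb{B}}$, $Y_{\hilb{B}}$, $Z_{\hilb{B}}$ being bounded by an absolute constant). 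This verifies both \cref{rel:corr_single,rel:acomm_single} of \cref{prop:swap_and_kickback_isoms}.

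Having done that, I would invoke \cref{prop:swap_and_kickback_isoms} directly. The swap stage $W$ produces the certified Bell pair on $\hilb{A}^\prime \hilb{B}^\prime$ and converts $A_1, A_2, A_3$ into $\sigma_{\mathrm{x}}^{\hilb{B}^\prime}$, $\sigma_{\mathrm{y}}^{\hilb{B}^\prime}$, $\sigma_{\mathrm{z}}^{\hilb{B}^\prime}$ acting on $\ket{\Phi^+}_{\hilb{A}^\prime \hilb{B}^\prime}$, with the additional factor of $S_3$ on the residual $\ket{\varphi}_{\hilb{A}\hilb{B}}$ in the case of $A_3$. Using the identities $\sigma_{\mathrm{x}}^{\hilb{B}^\prime}\ket{\Phi^+} = \sigma_{\mathrm{x}}^{\hilb{A}^\prime}\ket{\Phi^+}$, $\sigma_{\mathrm{y}}^{\hilb{B}^\prime}\ket{\Phi^+} = -\sigma_{\mathrm{y}}^{\hilb{A}^\prime}\ket{\Phi^+}$, and $\sigma_{\mathrm{z}}^{\hilb{B}^\prime}\ket{\Phi^+} = \sigma_{\mathrm{z}}^{\hilb{A}^\prime}\ket{\Phi^+}$, the reference-side observables are moved onto $\hilb{A}^\prime$ and the minus sign appears precisely in front of $\sigma_{\mathrm{y}}$, matching the statement of the corollary. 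The phase kickback $K$ then produces $\ket{\xi}$ of the claimed GHZ-like form and sends $K A_3 \ket{\varphi}$ to $\sigma_{\mathrm{z}}^{\hilb{B}^{\prime\prime}}\ket{\xi}$; because $\sigma_{\mathrm{z}}^{\hilb{A}^{\prime\prime}}$ and $\sigma_{\mathrm{z}}^{\hilb{B}^{\prime\prime}}$ act identically on any vector supported on $\{\ket{00}, \ket{11}\}_{\hilb{A}^{\prime\prime}\hilb{B}^{\prime\prime}}$, this $\sigma_{\mathrm{z}}$ may be rewritten as $\sigma_{\mathrm{z}}^{\hilb{A}^{\prime\prime}}$, yielding the last line of the corollary. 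Combining the swap and kickback estimates by the triangle inequality absorbs all constants into a single $c \geq 0$ depending only on those appearing in \cref{prop:swap_and_kickback_isoms,lem:regularized_acomm}.

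The main obstacle I expect is purely bookkeeping rather than conceptual: threading the signs between $A_2 = -Y_{\hilb{A}}$, the anticorrelation of $\sigma_{\mathrm{y}}$ on $\ket{\Phi^+}$, and the fact that the corollary states the isometric images with observables on $\hilb{A}^\prime$ rather than on $\hilb{B}^\prime$ as in \cref{prop:swap_and_kickback_isoms}. The robustness constants from \cref{lem:regularized_acomm} are also slightly awkward to track because they scale with the operator norms of the non-regularized operators, but since those operators are explicit sums of $\pm 1$-outcome observables divided by $\sqrt{2}$, their norms are bounded independently of the scenario, so the final estimate remains $O(\sqrt{\varepsilon})$ with an absolute constant $c$.
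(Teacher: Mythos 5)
Your proposal is correct and follows essentially the same route as the paper: extract the six correlation relations from the SOS decomposition of $6\sqrt{2}-C$, obtain the state-dependent anticommutation relations from the exact anticommutation of $(D+E)/\sqrt{2}$ and $(D-E)/\sqrt{2}$, regularize Bob's operators via \cref{lem:regularized_acomm}, and feed the result into \cref{prop:swap_and_kickback_isoms}. Your handling of the transfer of reference observables from $\hilb{B}^{\prime}$ to $\hilb{A}^{\prime}$ (producing the $-\sigma_{\mathrm{y}}$ sign) and of $\sigma_{\mathrm{z}}^{\hilb{B}^{\prime\prime}}$ versus $\sigma_{\mathrm{z}}^{\hilb{A}^{\prime\prime}}$ on the junk state matches what the paper leaves implicit.
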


\section{Parallel self-testing (with complex measurements)}
\label{sec:parallel_self-testing}

We state in \cref{sec:protocol} a measurement scenario and Bell observations for which any valid measurement strategy results in the existence of observables which satisfy certain natural state-dependent relations, as though they are Pauli operators acting on Bell states.
Typically, one can use such relations to construct a self-testing isometry.
For example, as is often done in the case of real Pauli measurements \cite{mckague2012robust,mckague2016self,wu2016device,mckague2017self,coladangelo2017parallel,chao2018test}.
We require such a result which can, in the complex sense of \cref{def:complex_self-testing}, certify Bell states and all (possibly complex) Pauli operators from similar relations.
Such a construction was also discussed in \cite{bowles2018self}.
In \cref{fig:swap_kickback_xy_copy}, we exhibit a robust isometry of this form, which also captures the effect of complex conjugation in $\sigma_{\mathrm{z}}$ measurements (as desired for VBQC), instead of in $\sigma_{\mathrm{y}}$ as is standard.

\begin{figure*}[htb]
    \centering
    \begin{quantikz}
        \lstick{$\ket{0}_{\hilb{A}_{j}^{\prime}}$} & \gate{H} & \ctrl{2} & \gate{H} & \ctrl{2}\slice{Swapped} & \qw & \qw & \qw & \qw & \qw & \qw\rstick[wires=6]{$\ket{\Phi^{+}}_{\hilb{A}_{j}^{\prime} \hilb{B}_{j}^{\prime}}$} \\
        \lstick{$\ket{0}_{\hilb{A}_{j}^{\prime\prime}}$} & \qw & \qw & \qw & \qw & \gate{H} & \ctrl{1} & \gate{H} & \qw \\
        \makeebit{$\ket{\psi}_{\hilb{A} \hilb{B}}$} & \qw & \gate{- i S_{2}^{(j)} S_{1}^{(j)}} & \qw & \gate{S_{1}^{(j)}} & \qw & \gate{S_{3}^{(j)}} & \qw & \qw \\
        & \qw & \gate{i T_{2}^{(j)} T_{1}^{(j)}} & \qw & \gate{T_{1}^{(j)}} & \qw & \gate{T_{3}^{(j)}} & \qw & \qw \\
        \lstick{$\ket{0}_{\hilb{B}_{j}^{\prime\prime}}$} & \qw & \qw & \qw & \qw & \gate{H} & \ctrl{-1} & \gate{H} & \qw \\
        \lstick{$\ket{0}_{\hilb{B}_{j}^{\prime}}$} & \gate{H} & \ctrl{-2} & \gate{H} & \ctrl{-2} & \qw & \qw & \qw & \qw & \qw & \qw
    \end{quantikz}
    \caption{
        The circuit describing the action of the local isometry $V^{(j)} = K^{(j)} W^{(j)}$ on the state $\ket{\psi}_{\hilb{A} \hilb{B}}$.
        In the first ``swap'' stage (preceding the dotted line), denoted by $W^{(j)} = W_{\hilb{A}}^{(j)} \otimes W_{\hilb{B}}^{(j)}$, a maximally entangled state is extracted.
        In the second ``phase kickback'' stage (succeeding the dotted line), denoted by $K^{(j)} = K_{\hilb{A}}^{(j)} \otimes K_{\hilb{A}}^{(j)}$, possible complex conjugation in the presence of a Pauli $\sigma_{\mathrm{z}}$ operator is accounted for.
        The full isometry $V = V^{(n)} \dots V^{(1)}$ is a parallel version of this circuit.
        It is defined by applying the circuit for each $j$ successively, appending ancillae states in $\hilb{A}_{j}^{\prime}$ and $\hilb{B}_{j}^{\prime}$ for each swap stage, and $\hilb{A}_{j}^{\prime\prime}$ and $\hilb{B}_{j}^{\prime\prime}$ for each phase kickback stage.
        Each isometry $V^{(k)}$ is defined to act trivially on all ancilla spaces with $j < k$.
    }
    \label{fig:swap_kickback_xy_copy}
\end{figure*}

The following result shows (by application of the isometry defined in \cref{fig:swap_kickback_xy_copy}) the existence of a local isometry with self-testing properties, given natural relations that we find for our protocol in \cref{sec:operator_rels}.
\begin{theorem}
\label{thm:single_observable_isometry}
    There exists a function $\delta \colon \mathbb{R}_{\geq 0} \times \mathbb{N}^{*} \to \mathbb{R}_{\geq 0}$ satisfying $\delta(0, n) = 0$ for all $n$ such that the following holds.
    Let $n \in \mathbb{N}^{*}$ and let $\ket{\psi}_{\hilb{A} \hilb{B}}$ be a state.
    Suppose for each $q \in \{1, 2, 3\}$ and $j \in \{1, \dots, n\}$ that there exist $\pm 1$-outcome observables $S_{q}^{(j)}$ on $\hilb{A}$ and $T_{q}^{(j)}$ on $\hilb{B}$ satisfying (for some $\eta \geq 0$) the following relations:
    \begin{enumerate}
        \item \label[relation]{rel:corr} $\left( S_{q}^{(j)} - T_{q}^{(j)} \right) \ket{\psi}_{\hilb{A} \hilb{B}} \approxe{\eta} 0$ for all $q$ and $j$.
        \item \label[relation]{rel:acomm} $\left\{ S_{q}^{(j)}, S_{r}^{(j)} \right\} \ket{\psi}_{\hilb{A} \hilb{B}} \approxe{\eta} 0$ and $\left\{ T_{q}^{(j)}, T_{r}^{(j)} \right\} \ket{\psi}_{\hilb{A} \hilb{B}} \approxe{\eta} 0$ for all $q,r$ and $j$ such that $q \neq r$.
        \item \label[relation]{rel:comm} $\left[ S_{q}^{(j)}, S_{r}^{(k)} \right] \ket{\psi}_{\hilb{A} \hilb{B}} \approxe{\eta} 0$ and $\left[ T_{q}^{(j)}, T_{r}^{(k)} \right] \ket{\psi}_{\hilb{A} \hilb{B}} \approxe{\eta} 0$ for all $q,r$ and $j,k$ such that $j \neq k$.
        \item \label[relation]{rel:conj} $\left( I + S_{1}^{(j)} S_{1}^{(j+1)} S_{2}^{(j)} S_{2}^{(j+1)} S_{3}^{(j)} S_{3}^{(j+1)} \right) \ket{\psi}_{\hilb{A} \hilb{B}} \approxe{\eta} 0$ for all $j < n$.
    \end{enumerate}
    Then there exist subnormalized $\ket{\xi_{0}}_{\hilb{A} \hilb{B}}$ and $\ket{\xi_{1}}_{\hilb{A} \hilb{B}}$ satisfying $\braket{\xi_{0}}{\xi_{0}}_{\hilb{A} \hilb{B}} + \braket{\xi_{1}}{\xi_{1}}_{\hilb{A} \hilb{B}} = 1$ that do not depend on any of the $S_{q}^{(j)}$, an isometry $V_{\hilb{A}} \colon \hilb{A} \to \hilb{A} \otimes \hilb{A}^{\prime} \otimes \hilb{A}^{\prime\prime}$, and an isometry $V_{\hilb{B}} \colon \hilb{B} \to \hilb{B} \otimes \hilb{B}^{\prime} \otimes \hilb{B}^{\prime\prime}$ that does not depend on any of the $S_{q}^{(j)}$ such that for all $q \in \{1, 2, 3\}$ and $k \in \{1, \dots, n\}$ we have
    \begin{subequations}
    \begin{align}
        V \ket{\psi}_{\hilb{A} \hilb{B}}
        & \approxe{\delta(\eta, n)} \bigotimes_{j=1}^{n} \ket{\Phi^{+}}_{\hilb{A}_{j}^{\prime} \hilb{B}_{j}^{\prime}} \otimes \ket{\xi} , \\
        V S_{q}^{(k)} \ket{\psi}_{\hilb{A} \hilb{B}}
        & \approxe{\delta(\eta, n)} \sigma_{q}^{\hilb{B}_{k}^{\prime}} \bigotimes_{j=1}^{n} \ket{\Phi^{+}}_{\hilb{A}_{j}^{\prime} \hilb{B}_{j}^{\prime}} \otimes \sigma_{3 [q = 3]}^{\hilb{B}^{\prime\prime}} \ket{\xi} ,
    \end{align}
    \end{subequations}
    where $V = V_{\hilb{A}} \otimes V_{\hilb{B}}$ and the junk state $\ket{\xi} \in \hilb{A} \otimes \hilb{A}^{\prime\prime} \otimes \hilb{B} \otimes \hilb{B}^{\prime\prime}$ is defined as
    \begin{equation}
    \label{eq:junk_form}
        \ket{\xi}
        = \ket{0}_{\hilb{A}^{\prime\prime}} \ket{0}_{\hilb{B}^{\prime\prime}} \otimes \ket{\xi_{0}}_{\hilb{A} \hilb{B}}
        + \ket{1}_{\hilb{A}^{\prime\prime}} \ket{1}_{\hilb{B}^{\prime\prime}} \otimes \ket{\xi_{1}}_{\hilb{A} \hilb{B}} .
    \end{equation}
\end{theorem}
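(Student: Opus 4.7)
The natural plan is to build $V$ as the sequential composition of the single-position isometries $V^{(j)} = K^{(j)} W^{(j)}$ depicted in \cref{fig:swap_kickback_xy_copy} and reduce the parallel claim to repeated application of \cref{prop:swap_and_kickback_isoms}. \Cref{rel:corr,rel:acomm} supply exactly the hypotheses of that proposition at each position $j$, \cref{rel:comm} lets one commute the ingredients of $V^{(k)}$ past those of $V^{(j)}$ for $k \neq j$, and \cref{rel:conj} is what forces the global complex conjugation ambiguity on which the desired form \eqref{eq:junk_form} of $\ket{\xi}$ depends.

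First I would analyze the composition of swap stages $W = W^{(n)} \cdots W^{(1)}$ by induction on $j$. Applying \cref{prop:swap_and_kickback_isoms} at each step yields the next Bell pair $\ket{\Phi^{+}}_{\hilb{A}_{j}^{\prime} \hilb{B}_{j}^{\prime}}$ tensored with a residual state obtained by applying $\tfrac{1}{\sqrt{2}}(I + i T_{2}^{(j)} T_{1}^{(j)})$ to the previous residual state. Because the operators at position $j$ commute state-dependently with every operator used at positions $i < j$ by \cref{rel:comm}, the anticommutation hypotheses continue to hold on the residual states, and the error accumulates only as $\poly(n) \cdot \eta$. The image of $W S_{q}^{(k)} \ket{\psi}_{\hilb{A} \hilb{B}}$ is handled analogously: \cref{rel:comm} lets $S_{q}^{(k)}$ pass through every $W^{(j)}$ with $j \neq k$ at cost $\bigO(\eta)$ per swap, and the action of $W^{(k)}$ on $S_{q}^{(k)}$ is supplied directly by \cref{prop:swap_and_kickback_isoms}.

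Next I would apply the phase kickback stages $K^{(1)}, \dots, K^{(n)}$. Each $K^{(j)}$ introduces ancillas $\hilb{A}_{j}^{\prime\prime} \hilb{B}_{j}^{\prime\prime}$ and splits the current state via the spectral projectors $\tfrac{I \pm T_{3}^{(j)}}{2}$, so after all $n$ stages the residual factor becomes a sum over $\bm{b} \in \{0, 1\}^{n}$ of terms of the form $\bigotimes_{j} \ket{b_{j} b_{j}}_{\hilb{A}_{j}^{\prime\prime} \hilb{B}_{j}^{\prime\prime}} \otimes \prod_{j} \tfrac{I + (-1)^{b_{j}} T_{3}^{(j)}}{2} \ket{\varphi^{(n)}}_{\hilb{A} \hilb{B}}$. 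A priori this sum contains $2^{n}$ distinct conjugation branches, which does not match \eqref{eq:junk_form}. Here \cref{rel:conj}, rewritten via \cref{rel:comm} as $(S_{1} S_{2} S_{3})^{(j)} (S_{1} S_{2} S_{3})^{(j+1)} \ket{\psi}_{\hilb{A} \hilb{B}} \approxe{\bigO(\eta)} - \ket{\psi}_{\hilb{A} \hilb{B}}$ and converted to an analogous statement on the $T_q^{(j)}$ using \cref{rel:corr}, forces every cross term with $b_{j} \neq b_{j+1}$ for some $j$ to have norm $\bigO(\eta)$; summing over the at most $n - 1$ adjacent mismatches concentrates all amplitude on $\bm{b} = \bm{0}$ and $\bm{b} = \bm{1}$. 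I would then relabel these two surviving branches as $\ket{0}_{\hilb{A}^{\prime\prime}} \ket{0}_{\hilb{B}^{\prime\prime}}$ and $\ket{1}_{\hilb{A}^{\prime\prime}} \ket{1}_{\hilb{B}^{\prime\prime}}$ via an isometry acting as the identity on them, taking the associated residuals as $\ket{\xi_{0}}_{\hilb{A} \hilb{B}}$ and $\ket{\xi_{1}}_{\hilb{A} \hilb{B}}$. By construction these depend only on the $T_{3}^{(j)}$ (and the earlier $T_{1}^{(j)}, T_{2}^{(j)}$ through $\ket{\varphi^{(n)}}_{\hilb{A} \hilb{B}}$) acting on $\ket{\psi}_{\hilb{A} \hilb{B}}$, so neither $V_{\hilb{B}}$ nor $\ket{\xi_{0}}_{\hilb{A} \hilb{B}}, \ket{\xi_{1}}_{\hilb{A} \hilb{B}}$ carry any dependence on the $S_{q}^{(j)}$.

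The main obstacle is this last telescoping step: \cref{rel:conj} is stated only for adjacent pairs, yet the desired conclusion is a coherent $n$-fold correlation of conjugation bits. I would need to argue carefully that an arbitrary forbidden $\bm{b}$ can be decomposed into a chain of adjacent mismatches each contributing $\bigO(\eta)$, and that combining all these contributions across the exponentially many branches keeps the overall error in $\delta(\eta, n)$ polynomial in $n$, rather than exponential. Once this is done, verifying that the $\sigma_{3}^{\hilb{B}^{\prime\prime}}$ factor in the certification of $S_{3}^{(k)}$ is correctly produced under consolidation is immediate, since on the two surviving branches the local operator $\sigma_{3}^{\hilb{B}_{k}^{\prime\prime}}$ and the consolidated $\sigma_{3}^{\hilb{B}^{\prime\prime}}$ act identically (trivially on $\bm{b} = \bm{0}$ and by a sign on $\bm{b} = \bm{1}$), and for $q \in \{1, 2\}$ the operator $S_{q}^{(k)}$ commutes through every kickback stage and leaves no $\sigma_{3}^{\hilb{B}^{\prime\prime}}$ factor behind.
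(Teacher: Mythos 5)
Your route is sound for the ideal case and reaches the same destination, but it orders the stages differently from the paper: you apply all $n$ swap stages first and defer all $n$ phase kickback stages to the end, whereas the paper composes $V = V^{(n)} \dots V^{(1)}$ with $V^{(j)} = K^{(j)} W^{(j)}$, i.e.\ each kickback immediately follows its own swap. The obstacle you flag at the end --- that deferring the kickbacks produces $2^{n}$ conjugation branches indexed by $\bm b \in \{0,1\}^{n}$, of which all but $\bm b = \bm 0$ and $\bm b = \bm 1$ must be killed --- is precisely what the interleaving is designed to avoid. In the paper, \cref{rel:conj} is packaged as \cref{lem:global_conj}, which in the notation $J_{\pm}^{(j)} = \tfrac{1}{2\sqrt{2}}\bigl(I \pm T_{3}^{(j)}\bigr)\bigl(I + i T_{2}^{(j)} T_{1}^{(j)}\bigr)$ yields $J_{\pm}^{(k-1)} J_{\mp}^{(k)} \ket{\psi} \approxe{\bigO(\eta)} 0$; invoking this immediately after each $V^{(k)}$ kills the two freshly created mismatched branches before they can proliferate, so the state always has exactly two branches and the error accumulates additively over $n$ steps rather than over exponentially many orthogonal branches. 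With your ordering, each forbidden $\bm b$ individually has amplitude $\bigO(\eta)$ by an adjacent-mismatch argument, but since the branches are mutually orthogonal in the ancilla registers the discarded remainder has norm of order $\sqrt{2^{n}}\,\eta$, so your telescoping step as described does not keep $\delta(\eta, n)$ polynomial in $n$ without further work.

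That said, the theorem as stated only demands $\delta(0, n) = 0$, and the paper's own proof (like yours) is rigorous only in the ideal case $\eta = 0$, where every forbidden branch vanishes exactly and your deferred-kickback ordering is unproblematic. So your proposal proves what the statement formally claims; the difference between the two orderings only matters for the robustness bounds the paper hopes to inherit, and if you want those you should adopt the interleaved composition and apply the adjacent-pair cancellation at every step rather than once at the end.
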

\begin{remark}
    In the statement, the untrusted observables are treated so that the $T_{2}^{(j)}$ represent Pauli $\sigma_{\mathrm{y}}$ operators acting on Bob's side, while the $S_{2}^{(j)}$ represent $- \sigma_{\mathrm{y}}$ operators acting on Alice's side.

    That $V_{\hilb{B}}$, $\ket{\xi_{0}}$, and $\ket{\xi_{1}}$ do not depend on $S_{q}^{(j)}$ means that if we replaced this observable with some $\tilde{S}_{q}^{(j)}$ such that all assumptions were still satisfied, then the same $V_{\hilb{B}}$, $\ket{\xi_{0}}$, and $\ket{\xi_{1}}$ would still be sufficient to meet the conditions of the result.

    We prove that the robustness $\delta(\eta, n)$ satisfies $\delta(0, n) = 0$ for all $n$, which is the ideal, noiseless case.
    For the case where $\eta > 0$, we expect that standard existing techniques \cite{mckague2016self,mckague2017self,coladangelo2017parallel,chao2018test,bowles2018self} can be applied to achieve polynomial robustness bounds in the sense $\delta(\eta, n) = \bigO(\eta n^{2})$ as $\eta \to 0$ or $n \to \infty$.
\end{remark}
\begin{proof}[Proof sketch]
    See \cref{sec:parallel_test_proof} for full details.
    As noted, here we consider only the ideal case with $\eta = 0$.
    We show that a sufficient local isometry is $V = V^{(n)} \dots V^{(1)}$, where each $V^{(j)} = K^{(j)} W^{(j)}$, and $K^{(j)} = K_{\hilb{A}}^{(j)} \otimes K_{\hilb{B}}^{(j)}$ and $W^{(j)} = W_{\hilb{A}}^{(j)} \otimes W_{\hilb{B}}^{(j)}$ are as defined in \cref{fig:swap_kickback_xy_copy}.
    We could also have considered the isometry constructed by first applying all swap isometries, followed by applying all phase kickback isometries.
    However, applying $K^{(j)}$ immediately after the corresponding $W^{(j)}$ reduces the number of terms that must be manipulated at a time.
    This, along with the usage of \cref{rel:conj} (via \cref{lem:global_conj}) at each step, allows us to keep the number of terms constant after the application of each $V^{(j)}$.

    The isometry $V_{\hilb{B}} = V_{\hilb{B}}^{(n)} \dots V_{\hilb{B}}^{(1)}$, where each $V_{\hilb{B}}^{(j)} = K_{\hilb{B}}^{(j)} W_{\hilb{B}}^{(j)}$, is immediately seen from the construction given in \cref{fig:swap_kickback_xy_copy} not to depend on any of the observables $S_{q}^{(j)}$.
    Let us define, for all $1 \leq k \leq n$, the vectors
    \begin{equation}
    \label{eq:iter_junk_sketch}
        \ket{\xi_{\pm}^{k}}
        = \frac{1}{\left( 2 \sqrt{2} \right)^{k}} \prod_{j=1}^{k} \mathopen{}\left( I \pm T_{3}^{(j)} \right)\mathclose{} \mathopen{}\left( I + i T_{2}^{(j)} T_{1}^{(j)} \right) \ket{\psi} .
    \end{equation}
    With this notation, \cref{prop:swap_and_kickback_isoms} gives that
    \begin{subequations}
    \label{eq:single_isom}
    \begin{align}
    \begin{split}
        \label{eq:single_isom_1_state}
        V^{(1)} \ket{\psi}
        & = \ket{\Phi^{+}}_{\hilb{A}_{1}^{\prime} \hilb{B}_{1}^{\prime}}
        \otimes \Bigl( \ket{0}_{\hilb{A}_{1}^{\prime\prime}} \ket{0}_{\hilb{B}_{1}^{\prime\prime}}
        \otimes \ket{\xi_{+}^{1}} \\
        & \phantom{{}={}} + \ket{1}_{\hilb{A}_{1}^{\prime\prime}} \ket{1}_{\hilb{B}_{1}^{\prime\prime}}
        \otimes \ket{\xi_{-}^{1}} \Bigr)\mathclose{} ,
    \end{split} \\
    \begin{split}
        \label{eq:single_isom_1_obs}
        V^{(1)} S_{q}^{(1)} \ket{\psi}
        & = \sigma_{q}^{\hilb{B}_{1}^{\prime}} \ket{\Phi^{+}}_{\hilb{A}_{1}^{\prime} \hilb{B}_{1}^{\prime}}
        \otimes \Bigl( \ket{0}_{\hilb{A}_{1}^{\prime\prime}} \ket{0}_{\hilb{B}_{1}^{\prime\prime}} \otimes \ket{\xi_{+}^{1}} \\
        & \phantom{{}={}} + (-1)^{[q=3]} \ket{1}_{\hilb{A}_{1}^{\prime\prime}} \ket{1}_{\hilb{B}_{1}^{\prime\prime}} \otimes \ket{\xi_{-}^{1}} \Bigr)\mathclose{} .
    \end{split}
    \end{align}
    \end{subequations}
    For all $1 < k \leq n$, we have the following properties.
    First,
    \begin{subequations}
    \label{eq:isom_step}
    \begin{align}
        V^{(k)} \ket{\xi_{+}^{k-1}}
        & = \ket{\Phi^{+}}_{\hilb{A}_{k}^{\prime} \hilb{B}_{k}^{\prime}}
        \otimes \ket{0}_{\hilb{A}_{k}^{\prime\prime}} \ket{0}_{\hilb{B}_{k}^{\prime\prime}}
        \otimes \ket{\xi_{+}^{k}} , \\
        V^{(k)} \ket{\xi_{-}^{k-1}}
        & = \ket{\Phi^{+}}_{\hilb{A}_{k}^{\prime} \hilb{B}_{k}^{\prime}}
        \otimes \ket{1}_{\hilb{A}_{k}^{\prime\prime}} \ket{1}_{\hilb{B}_{k}^{\prime\prime}}
        \otimes \ket{\xi_{-}^{k}} .
    \end{align}
    \end{subequations}
    That is, $V^{(k)}$ extracts a Bell state from $\ket{\xi_{\pm}^{k-1}}$, raises its superscript index from $k-1$ to $k$, and appends the appropriate ancilla states depending on the sign in subscript.
    Second,
    \begin{equation}
    \label{eq:single_isom_1_state_switch}
        V^{(1)} S_{q}^{(k)} \ket{\psi}
        = S_{q}^{(k)} V^{(1)} \ket{\psi} .
    \end{equation}
    Third, whenever $j \neq k$,
    \begin{equation}
    \label{eq:single_isom_junk_switch}
        V^{(k)} S_{q}^{(j)} \ket{\xi_{\pm}^{k-1}}
        = S_{q}^{(j)} V^{(k)} \ket{\xi_{\pm}^{k-1}} .
    \end{equation}
    Finally,
    \begin{subequations}
    \label{eq:isom_step_obs}
    \begin{align}
        V^{(k)} S_{q}^{(k)} \ket{\xi_{+}^{k-1}}
        & = \sigma_{q}^{\hilb{B}_{k}^{\prime}} \ket{\Phi^{+}}_{\hilb{A}_{k}^{\prime} \hilb{B}_{k}^{\prime}}
        \otimes \ket{0}_{\hilb{A}_{k}^{\prime\prime}} \ket{0}_{\hilb{B}_{k}^{\prime\prime}}
        \otimes \ket{\xi_{+}^{k}} , \\
    \begin{split}
        V^{(k)} S_{q}^{(k)} \ket{\xi_{-}^{k-1}}
        & = (-1)^{[q=3]} \sigma_{q}^{\hilb{B}_{k}^{\prime}} \ket{\Phi^{+}}_{\hilb{A}_{k}^{\prime} \hilb{B}_{k}^{\prime}} \\
        & \phantom{{}={}} \otimes \ket{1}_{\hilb{A}_{k}^{\prime\prime}} \ket{1}_{\hilb{B}_{k}^{\prime\prime}}
        \otimes \ket{\xi_{-}^{k}} .
    \end{split}
    \end{align}
    \end{subequations}

    After the full application of the isometry $V = V^{(n)} \dots V^{(1)}$, and defining
    \begin{subequations}
    \begin{align}
        \ket{0}_{\hilb{A^{\prime\prime}}} & = \ket{0 \dots 0}_{\hilb{A}^{\prime\prime}} , \\
        \ket{1}_{\hilb{A^{\prime\prime}}} & = \ket{1 \dots 1}_{\hilb{A}^{\prime\prime}} , \\
        \ket{0}_{\hilb{B^{\prime\prime}}} & = \ket{0 \dots 0}_{\hilb{B}^{\prime\prime}} , \\
        \ket{1}_{\hilb{B^{\prime\prime}}} & = \ket{1 \dots 1}_{\hilb{B}^{\prime\prime}} ,
    \end{align}
    \end{subequations}
    \cref{eq:single_isom_1_state,eq:isom_step} give
    \begin{equation}
    \begin{split}
    \label{eq:isom_state_result}
        V \ket{\psi}
        = \bigotimes_{j=1}^{n} \ket{\Phi^{+}}_{\hilb{A}_{j}^{\prime} \hilb{B}_{j}^{\prime}}
        \otimes \bigl( &
        \ket{0}_{\hilb{A}^{\prime\prime}} \ket{0}_{\hilb{B}^{\prime\prime}} \otimes \ket{\xi_{+}^{n}} \\
        & + \ket{1}_{\hilb{A}^{\prime\prime}} \ket{1}_{\hilb{B}^{\prime\prime}} \otimes \ket{\xi_{-}^{n}}
        \bigr)\mathclose{} .
    \end{split}
    \end{equation}
    Similarly, using \cref{eq:single_isom_1_obs,eq:isom_step}, we have
    \begin{equation}
    \label{eq:isom_obs_keq1}
    \begin{split}
        V S_{q}^{(1)} \ket{\psi}
        & = \sigma_{q}^{\hilb{B}_{1}^{\prime}} \ket{\Phi^{+}}_{\hilb{A}_{1}^{\prime} \hilb{B}_{1}^{\prime}} \otimes V^{(n)} \dots V^{(2)} \mathopen{}\left( \ket{0}_{\hilb{A}_{1}^{\prime\prime}} \ket{0}_{\hilb{B}_{1}^{\prime\prime}} \otimes \ket{\xi_{+}^{1}} + (-1)^{[q=3]} \ket{1}_{\hilb{A}_{1}^{\prime\prime}} \ket{1}_{\hilb{B}_{1}^{\prime\prime}} \otimes \ket{\xi_{-}^{1}} \right) \\
        & = \sigma_{q}^{\hilb{B}_{1}^{\prime}} \bigotimes_{j=1}^{n} \ket{\Phi^{+}}_{\hilb{A}_{j}^{\prime} \hilb{B}_{j}^{\prime}} \otimes \left( \ket{0}_{\hilb{A}^{\prime\prime}} \ket{0}_{\hilb{B}^{\prime\prime}} \otimes \ket{\xi_{+}^{n}} + (-1)^{[q=3]} \ket{1}_{\hilb{A}^{\prime\prime}} \ket{1}_{\hilb{B}^{\prime\prime}} \otimes \ket{\xi_{-}^{n}} \right) \\
        & = \sigma_{q}^{\hilb{B}_{1}^{\prime}} \bigotimes_{j=1}^{n} \ket{\Phi^{+}}_{\hilb{A}_{j}^{\prime} \hilb{B}_{j}^{\prime}} \otimes \sigma_{3[q=3]}^{\hilb{B}^{\prime\prime}} \mathopen{}\left( \ket{0}_{\hilb{A}^{\prime\prime}} \ket{0}_{\hilb{B}^{\prime\prime}} \otimes \ket{\xi_{+}^{n}} + \ket{1}_{\hilb{A}^{\prime\prime}} \ket{1}_{\hilb{B}^{\prime\prime}} \otimes \ket{\xi_{-}^{n}} \right)\mathclose{} .
    \end{split}
    \end{equation}
    Furthermore, for $1 < k \leq n$, we can use \cref{eq:single_isom_1_state_switch,eq:single_isom_1_state,eq:single_isom_junk_switch,eq:isom_step,eq:isom_step_obs} to write
    \begin{equation}
    \label{eq:isom_obs_kgt1}
    \begin{split}
        V S_{q}^{(k)} \ket{\psi}
        & = \ket{\Phi^{+}}_{\hilb{A}_{1}^{\prime} \hilb{B}_{1}^{\prime}} \otimes V^{(n)} \dots V^{(2)} S_{q}^{(k)} \mathopen{}\left( \ket{0}_{\hilb{A}_{1}^{\prime\prime}} \ket{0}_{\hilb{B}_{1}^{\prime\prime}} \otimes \ket{\xi_{+}^{1}} + \ket{1}_{\hilb{A}_{1}^{\prime\prime}} \ket{1}_{\hilb{B}_{1}^{\prime\prime}} \otimes \ket{\xi_{-}^{1}} \right) \\
        & = \bigotimes_{j=1}^{k-1} \ket{\Phi^{+}}_{\hilb{A}_{j}^{\prime} \hilb{B}_{j}^{\prime}} \otimes V^{(n)} \dots V^{(k)} S_{q}^{(k)} \mathopen{}\left( \ket{0 \dots 0} \otimes \ket{\xi_{+}^{k-1}} + \ket{1 \dots 1} \otimes \ket{\xi_{-}^{k-1}} \right) \\
        & = \sigma_{q}^{\hilb{B}_{k}^{\prime}} \bigotimes_{j=1}^{n} \ket{\Phi^{+}}_{\hilb{A}_{j}^{\prime} \hilb{B}_{j}^{\prime}} \otimes \left( \ket{0}_{\hilb{A}^{\prime\prime}} \ket{0}_{\hilb{B}^{\prime\prime}} \otimes \ket{\xi_{+}^{n}} + (-1)^{[q=3]} \ket{1}_{\hilb{A}^{\prime\prime}} \ket{1}_{\hilb{B}^{\prime\prime}} \otimes \ket{\xi_{-}^{n}} \right) \\
        & = \sigma_{q}^{\hilb{B}_{k}^{\prime}} \bigotimes_{j=1}^{n} \ket{\Phi^{+}}_{\hilb{A}_{j}^{\prime} \hilb{B}_{j}^{\prime}} \otimes \sigma_{3[q=3]}^{\hilb{B}^{\prime\prime}} \mathopen{}\left( \ket{0}_{\hilb{A}^{\prime\prime}} \ket{0}_{\hilb{B}^{\prime\prime}} \otimes \ket{\xi_{+}^{n}} + \ket{1}_{\hilb{A}^{\prime\prime}} \ket{1}_{\hilb{B}^{\prime\prime}} \otimes \ket{\xi_{-}^{n}} \right)\mathclose{} .
    \end{split}
    \end{equation}

    Together, \cref{eq:isom_state_result,eq:isom_obs_keq1,eq:isom_obs_kgt1} have the desired form by taking $\ket{\xi_{0}} = \ket{\xi_{+}^{n}}$ and $\ket{\xi_{1}} = \ket{\xi_{-}^{n}}$.
    These $\ket{\xi_{0}}$ and $\ket{\xi_{1}}$ have the desired properties: the observables $S_{q}^{(j)}$ are not present in their definition given in \cref{eq:iter_junk_sketch}, and they satisfy $\braket{\xi_{+}}{\xi_{+}} + \braket{\xi_{-}}{\xi_{-}} = 1$ due to \cref{eq:isom_state_result} together with the fact that the isometry $V$ preserves inner products.
\end{proof}

\Cref{thm:single_observable_isometry} allows us to certify the action of one Pauli operator at a time.
To prepare all of Bob's qubits, however, we require that Pauli measurements of all $n$ of Alice's qubits be certified simultaneously.
This is not immediate from \cref{thm:single_observable_isometry} since, after applying one of the physical operators to $\ket{\psi}$, its conclusion says nothing about the action of a second physical operator on the new state, even if the two operators commute.
Using the symmetry properties of pairs of Pauli operators with respect to $\ket{\Phi^{+}}$, this limitation can be overcome.
\begin{lemma}
\label{lem:single_to_multiple_observable}
    Let $\ket{\psi} \in \hilb{A} \otimes \hilb{B}$ and $\ket{\phi} \in \tilde{\hilb{A}} \otimes \tilde{\hilb{B}}$.
    For some $m > 1$, let $A_{j} \colon \hilb{A} \to \hilb{A}$ satisfy $\lVert A_{j} \rVert \leq 1$ and $\tilde{A}_{j} \colon \tilde{\hilb{A}} \to \tilde{\hilb{A}}$ for all $j \in \{1, \dots, m\}$.
    Let linear isometries $V_{\hilb{A}} \colon \hilb{A} \to \tilde{\hilb{A}}$ and $V_{\hilb{B}} \colon \hilb{B} \to \tilde{\hilb{B}}$ defining the local isometry $V \colon \hilb{A} \otimes \hilb{B} \to \tilde{\hilb{A}} \otimes \tilde{\hilb{B}}$ by $V = V_{\hilb{A}} \otimes V_{\hilb{B}}$ be such that, for all $j$,
    \begin{subequations}
    \begin{align}
        \label{eq:isom_state}
        V \ket{\psi} & \approxe{\delta} \ket{\phi} , \\
        \label{eq:isom_obs}
        V A_{j} \ket{\psi} & \approxe{\delta} \tilde{A}_{j} \ket{\phi} .
    \end{align}
    \end{subequations}
    Suppose that, for all $j$, there exist $\tilde{B}_{j} \colon \tilde{\hilb{B}} \to \tilde{\hilb{B}}$ satisfying $\lVert \tilde{B}_{j} \rVert \leq 1$ such that
    \begin{equation}
    \label{eq:ref_stab}
        \tilde{A}_{j} \ket{\phi} = \tilde{B}_{j} \ket{\phi} .
    \end{equation}
    Then, the combined action of all operators satisfies
    \begin{equation}
        V (A_{1} \dots A_{m}) \ket{\psi}
        \approxe{(2m + 1) \delta} \tilde{A}_{1} \dots \tilde{A}_{m} \ket{\phi} .
    \end{equation}
\end{lemma}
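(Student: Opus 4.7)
The plan is to proceed by iterated \emph{peeling} of one operator at a time from the inside of the product, converting each $A_{j}$ on $\hilb{A}$ into a $\tilde{B}_{j}$ on $\tilde{\hilb{B}}$, and accumulating errors via the triangle inequality. The crucial structural fact being exploited is that every $\tilde{B}_{j}$ acts trivially on $\tilde{\hilb{A}}$ and therefore commutes with $V_{\hilb{A}}$ and with any operator of the form $V_{\hilb{A}} (A_{i_{1}} \dots A_{i_{k}}) V_{\hilb{A}}^{\dagger}$ on $\tilde{\hilb{A}}$. This turns the goal into a simple chain of substitutions.

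First, I would establish a single-operator peel: for each $j$, chaining \cref{eq:isom_obs}, \cref{eq:ref_stab}, and \cref{eq:isom_state} (the latter requiring $\lVert \tilde{B}_{j} \rVert \leq 1$) yields
\begin{equation*}
    V A_{j} \ket{\psi} \approxe{\delta} \tilde{A}_{j} \ket{\phi} = \tilde{B}_{j} \ket{\phi} \approxe{\delta} \tilde{B}_{j} V \ket{\psi} ,
\end{equation*}
so that $V A_{j} \ket{\psi} \approxe{2 \delta} \tilde{B}_{j} V \ket{\psi}$. Next, since each $A_{j}$ acts on $\hilb{A}$, the operator $Y_{k} = V_{\hilb{A}} (A_{1} \dots A_{k}) V_{\hilb{A}}^{\dagger}$ on $\tilde{\hilb{A}}$ satisfies $\lVert Y_{k} \rVert \leq 1$ (using $\lVert A_{j} \rVert \leq 1$ and $V_{\hilb{A}}^{\dagger} V_{\hilb{A}} = I$) and $Y_{k} V = V A_{1} \dots A_{k}$; crucially, $Y_{k}$ commutes with every $\tilde{B}_{j}$.

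The main body is then an induction on $m$. Writing $V A_{1} \dots A_{m} \ket{\psi} = Y_{m-1} V A_{m} \ket{\psi}$, the single-operator peel combined with the commutation $Y_{m-1} \tilde{B}_{m} = \tilde{B}_{m} Y_{m-1}$ gives
\begin{equation*}
    V A_{1} \dots A_{m} \ket{\psi} \approxe{2 \delta} \tilde{B}_{m} V A_{1} \dots A_{m-1} \ket{\psi} .
\end{equation*}
Iterating this step $m$ times, each time peeling the new innermost $A_{j}$ and commuting the resulting $\tilde{B}_{j}$ out to the left past the surviving $Y$-type factor (at zero cost, since $\lVert \tilde{B}_{j} \rVert \leq 1$ and $\lVert Y_{k} \rVert \leq 1$ ensure the triangle-inequality estimates propagate), accumulates an error of $2m \delta$ and gives $V A_{1} \dots A_{m} \ket{\psi} \approxe{2 m \delta} \tilde{B}_{m} \tilde{B}_{m-1} \dots \tilde{B}_{1} V \ket{\psi}$. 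A single final application of \cref{eq:isom_state}, using $\lVert \tilde{B}_{m} \dots \tilde{B}_{1} \rVert \leq 1$, replaces $V \ket{\psi}$ by $\ket{\phi}$ at cost $\delta$, for a total of $(2m + 1) \delta$. It remains to observe that $\tilde{B}_{m} \dots \tilde{B}_{1} \ket{\phi} = \tilde{A}_{1} \dots \tilde{A}_{m} \ket{\phi}$ exactly: repeated use of $\tilde{B}_{j} \ket{\phi} = \tilde{A}_{j} \ket{\phi}$ (whenever $\tilde{B}_{j}$ is adjacent to $\ket{\phi}$) together with the fact that each $\tilde{A}_{i}$ commutes with each $\tilde{B}_{j}$ (acting on disjoint factors $\tilde{\hilb{A}}$ and $\tilde{\hilb{B}}$) allows the $\tilde{B}_{j}$s to be converted one at a time, from right to left, into the corresponding $\tilde{A}_{j}$s in the desired order.

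The main obstacle is essentially bookkeeping: tracking precisely which tensor factor each operator lives on so that every commutation used is trivially valid, and verifying that the triangle-inequality estimates saturate to $(2m+1) \delta$ rather than a larger constant. The choice to peel from the innermost operator outward (rather than attempting an induction that peels from the outside) is what keeps each step's cost at $2 \delta$ without needing norm control on the $\tilde{A}_{j}$; swapping the order of $\tilde{B}$s into $\tilde{A}$s at the very end avoids any issue with noncommutativity among the $\tilde{B}_{j}$ themselves, since the substitution is always performed on the factor immediately adjacent to $\ket{\phi}$.
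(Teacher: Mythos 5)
Your proof is correct and follows essentially the same route as the paper's: a single use of \cref{eq:isom_state} plus $m$ uses of the $2\delta$ ``switch'' between $V A_{j}\ket{\psi}$ and $\tilde{B}_{j} V\ket{\psi}$, exploiting that the $\tilde{B}_{j}$ commute with operators conjugated onto $\tilde{\hilb{A}}$, and the exact identity $\tilde{B}_{m}\dots\tilde{B}_{1}\ket{\phi} = \tilde{A}_{1}\dots\tilde{A}_{m}\ket{\phi}$. The only (cosmetic) difference is that you work directly with $Y_{k} = V_{\hilb{A}}(A_{1}\dots A_{k})V_{\hilb{A}}^{\dagger}$ and the relation $Y_{k}V = V(A_{1}\dots A_{k})$, which lets you skip the paper's preliminary extension of the isometries to unitaries on the larger spaces---a mild simplification that changes nothing in the error accounting.
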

\begin{proof}
    See \cref{sec:many_operator_action}.
\end{proof}
\begin{remark}
    In practice, the $A_{j}$ may represent some physical $\pm 1$-outcome observables or orthogonal projections acting on a physical state $\ket{\psi}$.
    In both cases, $\lVert A_{j} \rVert \leq 1$ is automatically satisfied.
    Similarly, the $\tilde{A}_{j}$ are understood to represent reference measurements, and we may take the reference state $\ket{\phi} = \ket{\Phi^{+}}^{\otimes n} \otimes \ket{\xi}$.
\end{remark}

\section{The protocol}
\label{sec:protocol}

We consider a scenario in which Alice is provided with one of $m^{n}$ questions $\bm x \in \mathcal{X} \subset \{1, \dots, m\}^{n}$ and answers with $\bm a \in \{+1, -1\}^{n}$.
Bob, on the other hand, is provided with a question $y$ and answers with $\bm b$, whose form depends on the input $y$.
\Cref{prot:summary} exhibits our process for the preparation of $n$ qubits on Bob's side, which together are to act as the initial state in a VBQC protocol.
We soon make all definitions required for this, but first let us introduce some intuition.

\begin{protocol}[htb]
    \caption{
        A protocol that performs the blind preparation of $n$ qubits in appropriate bases on the server-side subsystem, followed by VBQC.
    }
    \label{prot:summary}
    \rule{\linewidth}{0.08em}

    The number of qubits to be prepared is $n \in \mathbb{N}^{*}$.
    A verifier first performs the initial setup of the protocol as follows:
    \begin{enumerate}
        \item The verifier chooses a random set of ``special'' questions $\mathcal{S} \subset \{1, \dots, 5\}^{n}$ with polynomial cardinality $\lvert \mathcal{S} \rvert = \poly(n)$, each element of which represents an $n$-tuple of bases in which $n$ qubits may be prepared.

        \item The verifier expands $\mathcal{S}$ to the full set of input questions for Alice $\mathcal{X} \subset \{1, \dots 5\}^{n}$ (which has cardinality $\lvert \mathcal{X} \rvert = \poly(n)$ by construction) as in \cref{eq:alice_question_set}.
    \end{enumerate}
    The verifier then performs the following subprotocols:
    \begin{enumerate}
        \item \emph{Self-test}:
        In each self-testing round, the verifier chooses questions $\bm x \in \mathcal{X}$ and $y \in \mathcal{Y} = \{1, \dots, 6\} \cup \{\lozenge, \blacklozenge\}$.
        The verifier sends $\bm x$ to Alice and $y$ to Bob, and receives an answer $\bm a \in \{+, -\}^{n}$ from Alice.
        \begin{enumerate}
            \item If $y \in \{1, \dots, 6\}$, Bob answers with $\bm b \in \{+, -\}^{n}$.
            For all $j \in \{1, \dots n\}$, this contributes to the correlations
            \begin{equation}
                \left\langle A_{\bm x}^{(j)} B_{y}^{(j)} \right\rangle\mathclose{} .
            \end{equation}

            \item \emph{Conjugation}: The question sent to Bob was $y \in \{\lozenge, \blacklozenge\}$.
            \begin{enumerate}
                \item If $y = \lozenge$, Bob answers with $\bm b \in \{1, 2, 3, 4\}^{\left\lfloor \frac{n}{2} \right\rfloor}$.

                \item If $y = \blacklozenge$, Bob answers with $\bm b \in \{1, 2, 3, 4\}^{\left\lceil \frac{n}{2} \right\rceil - 1}$.
            \end{enumerate}
            For all $1 \leq j < n$, letting $k = \left\lceil \frac{j}{2} \right\rceil$, these contribute to the correlations
            \begin{equation}
                \left\langle A_{\bm x}^{(j)} A_{\bm x}^{(j+1)} \Gamma_{b_{k}}^{(j)} \right\rangle .
            \end{equation}
        \end{enumerate}
        By combining the correlations appropriately, the verifier can estimate all Bell expressions of \cref{eq:request_triple_chsh,eq:request_perfect_correlations,eq:request_conjugation_correlations} and check that the experiment satisfies the assumptions of the self-testing statement.

        \item \emph{VBQC}:
        On some round after the desired confidence threshold has been reached, the verifier asks Alice a special question $\bm \chi \in \mathcal{S}$ and then performs an interactive FK-type protocol with the server.
    \end{enumerate}

    \rule{\linewidth}{0.08em}
\end{protocol}

The idea is that Alice and Bob should play $n$ triple CHSH games in order achieve $n$ corresponding maximal Bell violations.
For the $j$th game, Alice receives an input basis $x_{j} \in \{\mathrm{x}, \mathrm{y}, \mathrm{z}\}$ and outputs $a_{j} \in \{+, -\}$.
Bob receives the same input $y \in \{1, \dots, 6\}$ for all $n$ games, and outputs $b_{j} \in \{+, -\}$ for the $j$th game.
To ensure that the players cannot cheat by sharing fewer than $n$ Bell states while simultaneously being able to satisfy every test we ask of them [we go on in \cref{thm:protocol_isometry} of \cref{sec:operator_rels} to find that satisfying the Bell expressions of \cref{eq:request_triple_chsh,eq:request_perfect_correlations,eq:request_conjugation_correlations} from \cref{sec:requested_observations} are sufficient tests for the VBQC application we are considering], we must examine the experimental outcomes in such a way that we can detect that they originated from $n$ independent games.
Intuitively this is possible since (round and communication complexities aside) the overall scenario is identical with one in which, instead of reporting the outcomes for all $n$ games simultaneously, Bob's input also contains a specified index of a single one of the games to play and in each round he reports only a single-bit outcome for this game.
Alice is unable to be sure which of the $n$ games Bob is instructed to play, while Alice is not necessarily instructed to measure in the same basis in all $n$ games, and so it is possible to check independence between all games.

It will be necessary in order to perform VBQC that the verifier can also choose to prepare qubits in the eigenbases corresponding to observables $\sigma_{\mathrm{x} + \mathrm{y}}$ and $\sigma_{\mathrm{x} - \mathrm{y}}$.
For this purpose, Alice should accept input questions $x_{j} \in \{\mathrm{x} + \mathrm{y}, \mathrm{x} - \mathrm{y}\}$ to each game, in addition to those already stated.
That is, we take $m = 5$ in our protocol, with the five values $1, \dots, 5$ forming Alice's inputs used interchangeably to denote input bases $\mathrm{x}$, $\mathrm{y}$, $\mathrm{z}$, $\mathrm{x} + \mathrm{y}$, and $\mathrm{x} - \mathrm{y}$.
At positions where $\mathrm{x} + \mathrm{y}$ or $\mathrm{x} - \mathrm{y}$ are chosen in Alice's input, we enforce that relevant perfect correlations between Alice and Bob are observed.
This forces the untrusted operations for both inputs $\mathrm{x} \pm \mathrm{y}$ to act consistently with the correct combination of untrusted operations for inputs $\mathrm{x}$ and $\mathrm{y}$.

We would like to restrict the possible questions for Alice to a subset $\mathcal{X} \subset \{1, \dots, m\}^{n}$ whose cardinality in the worst case scales polynomially in $n$, which is required in order to satisfy \cref{itm:question_size} discussed in \cref{sec:efficient_testing}.
This must be done in such a way that the possibility of cheating does not arise from possible correlations between different positions of Alice's inputs.
It turns out that it is sufficient to keep only inputs with the following two constraints in mind:
\begin{enumerate}
    \item Given any pair of positions, there are inputs with all pairs of values at those positions.
    \item At any given position, there are inputs taking all possible values.
\end{enumerate}
The precise formulation of Alice's question set $\mathcal{X}$ is the subject of \cref{sec:alice_questions}.

While we attempt to ensure the independence of the outcomes of $n$ triple CHSH games, a certain dependence between the measurements used in the different games is desirable.
Namely, in order to satisfy \cref{def:complex_self-testing} of self-testing with complex measurements, it is required that any possible complex conjugation of measurement operators only ever applies simultaneously to measurements at every position.
This \emph{global} complex conjugation has been achieved previously \cite{bowles2018self,coladangelo2019verifier} (and as we do also) by allowing Bob to accept an additional two inputs $y \in \{\lozenge, \blacklozenge\}$, each representing Bell measurements being performed on many pairs of qubits.

Relations that we derive between untrusted observables from the observation of requested Bell values and correlations allow us to make our desired self-testing-based statement for remote state preparation.
After exhibiting it here, we proceed to expand on the details of our discussion thus far.
\begin{theorem}
\label{thm:state_preparation}
    There exists a function $\tau \colon \mathbb{R}_{\geq 0} \times \mathbb{N}^{*} \to \mathbb{R}_{\geq 0}$ satisfying $\tau(0, n) = 0$ for all $n$ such that the following holds.
    Let $n \in \mathbb{N}^{*}$ be the number of qubits to prepare, let $\ket{\psi}$ be a state in $\hilb{A} \otimes \hilb{B}$.
    Choose a set of special questions $\mathcal{S} \subset \{\mathrm{x}, \mathrm{y}, \mathrm{z}, \mathrm{x} + \mathrm{y}, \mathrm{x} - \mathrm{y}\}^{n}$ and let $\Pi_{\bm a \mid \bm \chi}^{\hilb{A}}$ denote the physical projectors on $\hilb{A}$ corresponding to Alice answering with $\bm a \in \{+, -\}^{n}$ upon being asked $\bm \chi \in \mathcal{S}$.
    Define the tensor products of qubits
    \begin{subequations}
    \begin{align}
        \ket{e_{\bm a \mid \bm \chi}} & = \bigotimes_{j=1}^{n} \ket{\sigma_{\chi_{j}}^{a_{j}}} , \\
        \ket{e_{\bm a \mid \bm \chi}^{*}} & = \bigotimes_{j=1}^{n} \ket*{\sigma_{\chi_{j}}^{a_{j} (-1)^{[\chi_{j} = \mathrm{z}]}}} ,
    \end{align}
    \end{subequations}
    and denote the reduced physical states on Bob's subsystem after Alice's possible measurements by
    \begin{equation}
        \rho_{\hilb{B}}^{\bm a \mid \bm \chi}
        = \tr_{\hilb{A}} \mathopen{}\left(
        \frac{\Pi_{\bm a \mid \bm \chi}^{\hilb{A}} \proj{\psi} \Pi_{\bm a \mid \bm \chi}^{\hilb{A}}}{\bra{\psi} \Pi_{\bm a \mid \bm \chi}^{\hilb{A}} \ket{\psi}}
        \right)\mathclose{} .
    \end{equation}
    Suppose that all requested Bell inequalities and correlations are $\varepsilon$-approximately saturated (for some $\varepsilon \geq 0$) as in \cref{eq:request_triple_chsh,eq:request_perfect_correlations,eq:request_conjugation_correlations} of \cref{sec:requested_observations} and let $V_{\hilb{B}} \colon \hilb{B} \to \hilb{B}^{\prime} \otimes \hilb{B}^{\prime\prime} \otimes \hilb{B}$ be the isometry thus constructed from Bob's measurement operators as in \cref{thm:protocol_isometry}.
    Then there exist subnormalized density operators $\beta_{0}$ and $\beta_{1}$ on $\hilb{B}$ satisfying $\tr(\beta_{0}) + \tr(\beta_{1}) = 1$ such that, with probability at least $1 - 4 \tau(\varepsilon, n)$ over all possible answers $\bm a$ given any special question $\bm \chi$, we have
    \begin{equation}
    \begin{split}
        \frac{1}{2} \Bigl\lVert V_{\hilb{B}} \rho_{\hilb{B}}^{\bm a \mid \bm \chi} V_{\hilb{B}}^{\dagger}
        - \Bigl(
        & \proj{e_{\bm a \mid \bm \chi}} \otimes \proj{0} \otimes \beta_{0} \\
        & + \proj{e_{\bm a \mid \bm \chi}^{*}} \otimes \proj{1} \otimes \beta_{1}
        \Bigr)\mathclose{} \Bigr\rVert_{1}
        \leq \tau(\varepsilon, n) .
    \end{split}
    \end{equation}
\end{theorem}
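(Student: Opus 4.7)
The plan is to combine the parallel self-testing statement of \cref{thm:protocol_isometry} with the post-measurement state bound of \cref{thm:robust_prob}, then trace out Alice's subsystem. First, from $\varepsilon$-approximate saturation of all requested Bell values, \cref{thm:protocol_isometry} supplies local isometries $V_{\hilb{A}}$ and $V_{\hilb{B}}$ (with $V_{\hilb{B}}$ independent of $\bm\chi$) together with a junk state $\ket{\xi} = \ket{00}_{\hilb{A}^{\prime\prime}\hilb{B}^{\prime\prime}} \otimes \ket{\xi_{0}}_{\hilb{A}\hilb{B}} + \ket{11}_{\hilb{A}^{\prime\prime}\hilb{B}^{\prime\prime}} \otimes \ket{\xi_{1}}_{\hilb{A}\hilb{B}}$ of the form \cref{eq:junk_form}. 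For each $\bm\chi \in \mathcal{S}$, defining the observables $M^{\bm s}_{\bm\chi} = \sum_{\bm a} (-1)^{\bm s \cdot \bm a} \Pi^{\hilb{A}}_{\bm a \mid \bm\chi}$ on the physical side and their reference counterparts $(M^{\prime}_{\bm\chi})^{\bm s}$ acting on $\hilb{A}^{\prime}$, the self-test yields observable-level estimates $V M^{\bm s}_{\bm\chi} \ket{\psi} \approxe{\delta} \bar{M}^{\bm s}_{\bm\chi} \ket{\Phi^{+}}^{\otimes n} \otimes \ket{\xi}$ uniformly in $\bm s \in \{0, 1\}^{n}$, where $V = V_{\hilb{A}} \otimes V_{\hilb{B}}$ and $\delta = \delta(\varepsilon, n)$ with $\delta(0, n) = 0$.

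Second, I apply \cref{thm:robust_prob} to Alice's projectors $\{\Pi^{\hilb{A}}_{\bm a \mid \bm \chi}\}_{\bm a}$ and their reference counterparts on $\hilb{A}^{\prime}$, with the extra spectator degrees of freedom on the Bob and junk registers absorbed into the vectors. This promotes the observable-level estimate to the statement that, with probability at least $1 - 4 \delta^{2/3}$ over outcomes $\bm a$, the renormalized pure post-measurement states on the full joint Hilbert space $\hilb{A}^{\prime} \otimes \hilb{A} \otimes \hilb{A}^{\prime\prime} \otimes \hilb{B}^{\prime} \otimes \hilb{B} \otimes \hilb{B}^{\prime\prime}$ (physical and reference) are within trace distance $\delta^{2/3}$.

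Third, I evaluate the reference post-measurement state and then partial-trace over Alice's registers $\hilb{A}^{\prime} \otimes \hilb{A} \otimes \hilb{A}^{\prime\prime}$. The reference projection on $\hilb{A}^{\prime}$ teleports $\ket{e_{\bm a \mid \bm\chi}}$ into $\hilb{B}^{\prime}$ in the non-conjugated branch (flagged by $\ket{00}_{\hilb{A}^{\prime\prime}\hilb{B}^{\prime\prime}}$), and teleports $\ket{e^{*}_{\bm a \mid \bm\chi}}$ in the conjugated branch (flagged by $\ket{11}_{\hilb{A}^{\prime\prime}\hilb{B}^{\prime\prime}}$), using that $\sigma_{\mathrm{z}}^{*} = -\sigma_{\mathrm{z}}$ while $\sigma_{\mathrm{x}}$, $\sigma_{\mathrm{y}}$ and $\ket{\Phi^{+}}$ are real in the local bases of \cref{sec:conjugation}. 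Tracing out $\hilb{A}^{\prime\prime}$ kills the cross terms by orthogonality of $\ket{0}$ and $\ket{1}$, yielding the claimed block-diagonal structure on $\hilb{B}^{\prime\prime}$ with $\beta_{i} = \tr_{\hilb{A}} \proj{\xi_{i}}$ and $\tr(\beta_{0}) + \tr(\beta_{1}) = 1$ inherited from $\braket{\xi_{0}}{\xi_{0}} + \braket{\xi_{1}}{\xi_{1}} = 1$. Tracing $\hilb{A}^{\prime}$ against each teleported Bell pair absorbs only the overall normalization $2^{-n}$. On the physical side, since $V_{\hilb{A}}$ acts trivially on $\hilb{B}$, tracing Alice's composite register after applying $V$ produces exactly $V_{\hilb{B}} \rho^{\bm a \mid \bm\chi}_{\hilb{B}} V_{\hilb{B}}^{\dagger}$. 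Monotonicity of the trace distance under partial trace, namely \cref{eq:traced_out_norm}, preserves the $\delta^{2/3}$ bound, giving the conclusion with $\tau(\varepsilon, n) = \delta(\varepsilon, n)^{2/3}$.

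The main obstacle is the bookkeeping in Step 3: one must carefully decompose $\bar{M}^{\prime}_{\bm a \mid \bm\chi} (\ket{\Phi^{+}}^{\otimes n} \otimes \ket{\xi})$ across the two conjugation branches, verify that the cross terms disappear under $\tr_{\hilb{A}^{\prime\prime}}$, and check that the ideal probability $\bra{\psi^{\prime}} M^{\prime}_{\bm a \mid \bm\chi} \ket{\psi^{\prime}}$ used by \cref{thm:robust_prob} for renormalization is consistently absorbed into $\tr(\beta_{0}) + \tr(\beta_{1}) = 1$ so that the resulting reduced operators take the form stated. That $V_{\hilb{B}}$ and the $\beta_{i}$ do not depend on the special question $\bm\chi$ is precisely the property guaranteed by \cref{thm:protocol_isometry} and is what makes the prepared state usable as input to the FK protocol.
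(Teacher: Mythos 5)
Your proposal is correct and follows essentially the same route as the paper: certification of the full products $A_{\bm\chi}^{\bm s}$ (which is \cref{cor:protocol_isometry}, obtained via \cref{lem:single_to_multiple_observable}, rather than \cref{thm:protocol_isometry} alone), then \cref{thm:robust_prob} to pass to normalized post-measurement states with high probability, then the partial trace over Alice's registers with monotonicity of the trace norm, giving $\tau = \gamma^{2/3}$. The only cosmetic difference is that the paper places the reference projectors $\hat{N}_{\bm a \mid \bm\chi}$ directly on Bob's registers $\hilb{B}^{\prime} \otimes \hilb{B}^{\prime\prime}$, which sidesteps the conjugation bookkeeping you flag in Step 3 when the reference measurement is instead placed on $\hilb{A}^{\prime}$.
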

\begin{remark}
    The robustness of \cref{thm:single_observable_isometry} expected using the standard techniques mentioned is inherited here to yield $\tau(\varepsilon, n) = \bigO(\varepsilon^{1/3} n^{4/3})$ as $\varepsilon \to 0$ or $n \to \infty$.
\end{remark}
\begin{proof}
    See \cref{sec:state_preparation_proof}.
    Makes use of \cref{cor:protocol_isometry,thm:robust_prob}.
\end{proof}

That all the special questions $\mathcal{S}$ correspond to valid input states to the FK protocol is easily achievable, since the only requirement is that enough dummy qubits (those in the $\sigma_{\mathrm{z}}$ basis) are prepared from which to create enough traps.
For example, using the ``dotted triple graph'' resource state construction \cite{kashefi2017optimised}, the number of additional physical qubits needed to be prepared for verification is linear in the size of the desired computation (the number of qubits needed without any verification).
Thus, to achieve a given level of verifiability, one only needs that a constant fraction of the qubits prepared are dummies.
This can clearly be achieved by instead preparing a large enough constant multiple of the number of qubits and then discarding some of those that are not prepared as dummies.
The probability that a special question still corresponds to too few dummy qubits is exponentially small.

\subsection{Alice's question set}
\label{sec:alice_questions}

We construct the specific subset of questions $\mathcal{X}$ based on a set of possible special input questions $\mathcal{S} \subset \{1, \dots, m\}^{n}$, each element of which is chosen to correspond to preparation bases desired by the verifier.
Throughout, we take addition of questions to be performed (componentwise for strings) modulo $m$.

We first define the set
\begin{equation}
\label{eq:alice_question_set_base}
    \mathcal{D} = \bigl\{ k \bm{e}_{i}^{n} + l \bm{e}_{j}^{n} \bigm| k,l \in \{0, \dots, m - 1\} \text{ and } 1 \leq i < j \leq n \bigr\}\mathclose{} ,
\end{equation}
where $\bm{e}_{i}^{n} = (\delta_{ij})_{j=1}^{n}$ denotes the $i$th standard basis vector with $n$ entries.
The possible input set $\mathcal{X}$ for Alice is selected by expanding some choice of $\mathcal{S}$.
This is performed as follows.
For each $\bm \chi \in \mathcal{S}$, we let
\begin{equation}
    \label{eq:alice_question_set_single}
    \mathcal{X}_{\bm \chi} = \{ \bm \chi + \bm d \mid \bm d \in \mathcal{D} \} .
\end{equation}
Then, we combine all such questions to form
\begin{equation}
\label{eq:alice_question_set}
    \mathcal{X} = \bigcup_{\bm \chi \in \mathcal{S}} \mathcal{X}_{\bm \chi} .
\end{equation}
For example, if $m = 3$, $n = 3$, and $\mathcal{S} = \{\bm \chi\}$, where $\bm \chi = (1,1,1)$, then $\mathcal{X}_{\bm \chi}$ (and $\mathcal{X}$ in this case) has elements
\begin{align*}
    (1,1,1) , \\
    (2,1,1) , && (3,1,1) , \\
    (1,2,1) , && (1,3,1) , \\
    (1,1,2) , && (1,1,3) , \\
    (2,2,1) , && (3,2,1) , && (2,3,1) , && (3,3,1) , \\
    (2,1,2) , && (3,1,2) , && (2,1,3) , && (3,1,3) , \\
    (1,2,2) , && (1,3,2) , && (1,2,3) , && (1,3,3) .
\end{align*}
Each expanded special set has cardinality $\lvert \mathcal{X}_{\bm \chi} \rvert = 1 + (m-1) n + \frac{1}{2} (m-1)^{2} n (n-1)$, and so the total number of questions for Alice is bounded by
\begin{equation}
\label{eq:alice_question_cardinality_bound}
    \lvert \mathcal{X} \rvert \leq \lvert \mathcal{S} \rvert \mathopen{}\left[1 + (m-1) n + \frac{1}{2} (m-1)^{2} n (n-1) \right]
\end{equation}
(quadratic in the number of qubits if the number of special questions is taken to be constant).
We can thus choose $\mathcal{S}$ to be such that $\lvert \mathcal{S} \rvert = \poly(n)$, so that indeed $\lvert \mathcal{X} \rvert = \poly(n)$.
The sets $\mathcal{X}_{\bm \chi}$ are constructed such that starting with any particular special question, for any pair of positions, every possible pair of values from $\{1, \dots, m\}$ appears.
Additionally, at any position, every possible value appears.
We formalize this here.
\begin{lemma}
\label{lem:alice_question_properties_single}
    Fix $m \geq 1$ and let $\bm \chi \in \mathcal{S} \subset \{1, \dots, m\}^{n}$ for some $n \geq 1$.
    Define $\mathcal{X}_{\bm \chi} \subset \{1, \dots, m\}^{n}$ as in \cref{eq:alice_question_set_single} and $\mathcal{X}$ as in \cref{eq:alice_question_set}.
    The following properties then hold:
    \begin{enumerate}
        \item If $n > 1$ then, for any $1 \leq i < j \leq n$ and $q, r \in \{1, \dots, m\}$, there exists $\bm x \in \mathcal{X}_{\bm \chi} \subset \mathcal{X}$ such that $x_{i} = q$, $x_{j} = r$, and, moreover, $x_{k} = \chi_{k}$ for all $k \in \{1, \dots, n\} \setminus \{i, j\}$.

        \item For any $1 \leq i \leq n$ and $q \in \{1, \dots, m\}$, there exists $\bm x \in \mathcal{X}_{\bm \chi} \subset \mathcal{X}$ such that $x_{i} = q$ and, moreover, $x_{k} = \chi_{k}$ for all $k \neq i$.
    \end{enumerate}
    Furthermore, $\lvert \mathcal{X}_{\bm \chi} \rvert = \bigO(n^{2})$.
\end{lemma}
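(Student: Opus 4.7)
The plan is to verify both properties by directly exhibiting appropriate elements of $\mathcal{D}$ and then count $\lvert\mathcal{X}_{\bm\chi}\rvert$ by classifying elements of $\mathcal{D}$ according to which positions they modify.

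For the first property, suppose $n > 1$ and fix $1 \le i < j \le n$ together with $q, r \in \{1, \dots, m\}$. I set $k = (q - \chi_i) \bmod m$ and $l = (r - \chi_j) \bmod m$, both of which lie in $\{0, \dots, m-1\}$, and take $\bm d = k\bm{e}_i^n + l\bm{e}_j^n$, which is an element of $\mathcal{D}$ by \cref{eq:alice_question_set_base}. Then $\bm x = \bm\chi + \bm d \in \mathcal{X}_{\bm\chi}$ satisfies $x_i \equiv \chi_i + k \equiv q \pmod m$, $x_j \equiv \chi_j + l \equiv r \pmod m$, and $x_p = \chi_p$ for every $p \in \{1, \dots, n\} \setminus \{i, j\}$, since $\bm d$ is supported only on positions $i$ and $j$.

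For the second property, given $1 \le i \le n$ and $q \in \{1, \dots, m\}$, I pick any auxiliary position $j \in \{1, \dots, n\} \setminus \{i\}$ (which exists as soon as $n \ge 2$; the case $n = 1$ is immediate upon taking $k = (q-\chi_i)\bmod m$ in the single-position-modification construction and recognizing that all other coordinates are vacuously equal to $\chi$). Then I set $(i', j') = (\min\{i,j\}, \max\{i,j\})$ and choose $\bm d = k\bm{e}_{i'}^n + l\bm{e}_{j'}^n \in \mathcal{D}$ with exactly one of $k, l$ equal to $(q - \chi_i)\bmod m$ (at the position $i$) and the other equal to $0$. Then $\bm x = \bm\chi + \bm d$ has $x_i = q$ while being unchanged at every other coordinate.

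For the cardinality bound, I partition $\mathcal{D}$ according to the size of the support of $\bm d$. The only vector with empty support is $\bm 0$, contributing $1$. Vectors with support of size exactly one are of the form $k\bm{e}_p^n$ for $k \in \{1, \dots, m-1\}$ and $1 \le p \le n$, contributing $(m-1)n$. Vectors with support of size exactly two are of the form $k\bm{e}_p^n + l\bm{e}_q^n$ with $p < q$ and $k, l \in \{1, \dots, m-1\}$, contributing $\binom{n}{2}(m-1)^2$. Summing gives $\lvert\mathcal{D}\rvert = 1 + (m-1)n + \tfrac{1}{2}(m-1)^2 n(n-1)$, and since the translation $\bm d \mapsto \bm\chi + \bm d$ is a bijection from $\mathcal{D}$ to $\mathcal{X}_{\bm\chi}$, the same formula holds for $\lvert\mathcal{X}_{\bm\chi}\rvert$, which is $\bigO(n^2)$ for fixed $m$. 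There is no genuine obstacle here: the whole statement reduces to unpacking the definitions and performing an elementary counting argument, and the only mildly delicate point is handling the trivial edge cases $n = 1$ and $m = 1$ consistently with the convention used in \cref{eq:alice_question_set_base}.
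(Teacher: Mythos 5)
Your proof is correct and takes essentially the same route as the paper's: the same explicit witness $\bm x = \bm \chi + (q - \chi_{i})\bm{e}_{i}^{n} + (r - \chi_{j})\bm{e}_{j}^{n}$ for the first property, the second property obtained by zeroing one of the two shifts, and the cardinality obtained by counting $\mathcal{D}$ (partitioned by support size) and noting that translation by $\bm\chi$ is a bijection onto $\mathcal{X}_{\bm\chi}$, giving $1 + (m-1)n + \tfrac{1}{2}(m-1)^{2}n(n-1)$. Your explicit treatment of the mod-$m$ reduction and of the $n=1$ edge case is simply a more careful write-up of the same argument.
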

\begin{proof}
    For the first property, take $\bm x = \bm \chi + (q - \chi_{i}) \bm{e}_{i}^{n} + (r - \chi_{j}) \bm{e}_{j}^{n}$.
    The second property is implied by the first for $n > 1$ using the choice $v = \chi_{j}$, and for $n = 1$ we can take $\bm x = \bm \chi + (q - \chi_{1}) \bm{e}_{1}^{n}$.
    Let $\mathcal{D}$ be defined as in \cref{eq:alice_question_set_base}.
    The cardinality $\lvert \mathcal{X}_{\bm \chi} \rvert = \lvert \mathcal{D} \rvert$ since each element of $\mathcal{D}$ is simply shifted by addition with the fixed $\bm \chi$ to form $\mathcal{X}_{\bm \chi}$.
    Finally, $\lvert \mathcal{D} \rvert = 1 + (m-1) n + \frac{1}{2} (m-1)^{2} n (n-1)$.
\end{proof}
\begin{remark}
    It is clear from \cref{eq:alice_question_set} that $\lvert \mathcal{X} \rvert \leq \lvert \mathcal{S} \rvert \lvert \mathcal{X}_{\bm \chi} \rvert = \lvert \mathcal{S} \rvert \lvert \mathcal{D} \rvert$.
\end{remark}

The set $\mathcal{X}$ of \cref{eq:alice_question_set} is generated from the many questions in $\mathcal{S}$ (rather than just from a single choice of preparation question) so that Alice has a low chance of guessing which $\bm \chi$ is used for the computation from her knowledge of the structure of $\mathcal{X}$ (the elements of which she can deduce from the questions asked of her during self-test rounds of \cref{prot:summary}).

We now present another construction of the same set $\mathcal{X}$, which will be useful in \cref{sec:requested_observations} to define the Bell expressions we must consider.
Recall the notation $\bm{x}_{j} = (x_{1}, \dots, x_{j-1}, x_{j+1}, \dots, x_{n})$ for a string $\bm x = (x_{1}, \dots, x_{n})$ excluding the $j$th position.
For each $\bm \chi \in \mathcal{S}$ and $1 \leq j \leq n$, define
\begin{equation}
\label{eq:shortened_questions_single}
    \mathcal{R}_{\bm \chi}^{(j)} = \bigl\{ \bm{\chi}_{j} + k \bm{e}_{i}^{n-1} \bigm| 0 \leq k \leq m-1 \text{ and } 1 \leq i \leq n-1 \bigr\}
\end{equation}
and then use this to define
\begin{equation}
\label{eq:shortened_questions}
    \mathcal{R}^{(j)} = \bigcup_{\bm \chi \in \mathcal{S}} \mathcal{R}_{\bm \chi}^{(j)} .
\end{equation}
These satisfy $\left\lvert \mathcal{R}_{\bm \chi}^{(j)} \right\rvert = 1 + (m - 1) n$ for all $\bm \chi \in \mathcal{S}$ and thus
\begin{equation}
\label{eq:alice_remaining_questions_cardinality_bound}
    \left\lvert \mathcal{R}^{(j)} \right\rvert \leq \lvert \mathcal{S} \rvert [1 + (m - 1) n] .
\end{equation}
Define further
\begin{equation}
\label{eq:pos_j_alice_inputs}
\begin{split}
    \mathcal{X}^{(j)} = \bigl\{
    & (r_{1}, \dots, r_{j-1}, q, r_{j}, \dots, r_{n-1}) \\
    & \bigm| 0 \leq q \leq m-1 \text{ and } \bm r \in \mathcal{R}^{(j)} \bigr\} .
\end{split}
\end{equation}
We now have the notation to reconstruct $\mathcal{X}$ as follows.
\begin{lemma}
\label{lem:alt_alice_question_set}
    The set $\mathcal{X}$ of \cref{eq:alice_question_set} can alternatively be written as
    \begin{equation}
        \mathcal{X} = \bigcup_{j=1}^{n} \mathcal{X}^{(j)} .
    \end{equation}
\end{lemma}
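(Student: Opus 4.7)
The plan is to prove equality of the two sets by a direct double inclusion, unpacking the definitions on each side and handling carefully the index shifting that occurs when one removes or reinstates a position from a string of length $n$.

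For the inclusion $\mathcal{X} \subseteq \bigcup_{j=1}^{n} \mathcal{X}^{(j)}$, I would start with an arbitrary $\bm x \in \mathcal{X}$. By \cref{eq:alice_question_set} there exists $\bm \chi \in \mathcal{S}$ with $\bm x = \bm \chi + k \bm{e}_{i}^{n} + l \bm{e}_{j}^{n}$ for some $1 \leq i < j \leq n$ and $k, l \in \{0, \dots, m-1\}$. I claim that $\bm x \in \mathcal{X}^{(i)}$. Removing the $i$th entry from $\bm x$ gives a string of length $n-1$ that equals $\bm{\chi}_{i} + l \bm{e}_{j-1}^{n-1}$, since removing position $i < j$ shifts the $j$th position down to the $(j-1)$st position in the shortened string. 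Hence $\bm{x}_{i} \in \mathcal{R}_{\bm \chi}^{(i)} \subseteq \mathcal{R}^{(i)}$ by \cref{eq:shortened_questions_single,eq:shortened_questions}, and since $x_{i} \in \{0, \dots, m-1\}$ trivially, the definition \cref{eq:pos_j_alice_inputs} of $\mathcal{X}^{(i)}$ is satisfied.

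For the reverse inclusion $\bigcup_{j=1}^{n} \mathcal{X}^{(j)} \subseteq \mathcal{X}$, I would fix any $j$ and take $\bm x \in \mathcal{X}^{(j)}$. By \cref{eq:pos_j_alice_inputs}, $\bm x_{j} \in \mathcal{R}^{(j)}$, so \cref{eq:shortened_questions_single,eq:shortened_questions} yield some $\bm \chi \in \mathcal{S}$, an index $1 \leq i' \leq n-1$ and $k \in \{0, \dots, m-1\}$ with $\bm{x}_{j} = \bm{\chi}_{j} + k \bm{e}_{i'}^{n-1}$. Reinserting position $j$, the full difference $\bm x - \bm \chi$ is supported on at most the two positions $j$ and $p$, where $p = i'$ if $i' < j$ and $p = i'+1$ if $i' \geq j$; more precisely $\bm x - \bm \chi = (x_{j} - \chi_{j}) \bm{e}_{j}^{n} + k \bm{e}_{p}^{n}$. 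After ordering the index pair as $i < j'$ with $\{i,j'\} = \{p,j\}$, this difference lies in $\mathcal{D}$ by \cref{eq:alice_question_set_base}, so $\bm x \in \mathcal{X}_{\bm \chi} \subseteq \mathcal{X}$ by \cref{eq:alice_question_set_single,eq:alice_question_set}.

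The only nontrivial aspect is the index bookkeeping above, which forces a small case split depending on whether the modified position $i'$ in the shortened string lies before or after the reinserted position $j$. Once that is carried out, both inclusions follow immediately from the definitions, so I would expect the proof to fit in under half a page.
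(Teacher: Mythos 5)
Your proposal is correct and follows essentially the same route as the paper: a direct double inclusion unpacking the definitions of $\mathcal{X}$, $\mathcal{X}^{(j)}$, and $\mathcal{R}^{(j)}$, with the only cosmetic difference being that for a question $\bm \chi + k \bm{e}_{i}^{n} + l \bm{e}_{j}^{n}$ you place it in $\mathcal{X}^{(i)}$ (forcing the $j \mapsto j-1$ index shift) whereas the paper places it in $\mathcal{X}^{(j)}$, which avoids that shift since $i < j$. Your index bookkeeping is handled correctly, so no changes are needed.
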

\begin{proof}
    We first show that $\mathcal{X}^{(j)} \subset \mathcal{X}$ for all $1 \leq j \leq n$.
    Let $0 \leq q \leq m-1$ and $\bm r \in \mathcal{R}^{(j)}$.
    Then there exists $\bm \chi \in \mathcal{S}$ such that $\bm r \in \mathcal{R}_{\bm \chi}^{(j)}$.
    Thus, there exists $0 \leq k \leq m-1$ and $1 \leq i \leq n$ such that $\bm r = \bm{\chi}_{j} + k \bm{e}_{i}^{n-1}$.
    We can therefore write for some $l \neq j$ that
    \begin{equation}
    \begin{split}
        & (r_{1}, \dots, r_{j-1}, q, r_{j}, \dots, r_{n-1}) \\
        & \quad = \bm{\chi} + (q - \chi_{j}) \bm{e}_{j}^{n} + k \bm{e}_{l}^{n}
        \in \mathcal{X}_{\bm \chi} \subset \mathcal{X} .
    \end{split}
    \end{equation}

    For the reverse inclusion, take any $\bm x \in \mathcal{X}$.
    Then there exists $\bm \chi \in \mathcal{S}$ such that $\bm x \in \mathcal{X}_{\bm \chi}$.
    For some $k,l \in \{0, \dots, m-1\}$ and $i,j \in \{1, \dots, n\}$ satisfying $i < j$, we thus have $\bm x = \bm \chi + k \bm{e}_{i}^{n} + l \bm{e}_{j}^{n}$.
    Choosing $q = \chi_{j} + l$ and $\bm r = \bm{\chi}_{j} + k \bm{e}_{i}^{n-1}$ gives $\bm x = (r_{1}, \dots, r_{j-1}, q, r_{j}, \dots, r_{n-1})$.
    Finally, since $\bm r \in \mathcal{R}_{\bm \chi}^{(j)} \subset \mathcal{R}^{(j)}$, we have $\bm x \in \mathcal{X}^{(j)}$.
\end{proof}
A useful property of certain input questions for switching the local subsystem on which pairs of observables act in \cref{sec:individual_relations} is given by the following.
\begin{lemma}
\label{lem:double_shortened_question}
    Let $\bm \chi \in \mathcal{S}$, let $1 \leq i < j \leq n$, and let $q,r \in \{1, \dots, m\}$.
    Suppose that $\bm x \in \{1, \dots, m\}^{n}$ is defined such that $x_{i} = q$, $x_{j} = r$, and $x_{k} = \chi_{k}$ for all $k \in \{1, \dots, n\} \setminus \{i, j\}$.
    Then $\bm{x}_{i} \in \mathcal{R}_{\bm \chi}^{(i)} \subset \mathcal{R}^{(i)}$ and $\bm{x}_{j} \in \mathcal{R}_{\bm \chi}^{(j)} \subset \mathcal{R}^{(j)}$.
\end{lemma}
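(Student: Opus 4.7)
The plan is to prove both containments by direct computation, reducing each $\bm{x}_i$ or $\bm{x}_j$ to a difference from the corresponding $\bm{\chi}_i$ or $\bm{\chi}_j$ of the form required by the definition \eqref{eq:shortened_questions_single}. Since $\bm x$ agrees with $\bm \chi$ outside positions $i$ and $j$, each of these shortened strings will differ from the corresponding shortened special string in at most one entry, and the modular structure will do the rest.

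First I would handle $\bm{x}_j$, which is the easier case. Removing position $j > i$ from $\bm x$ does not shift the index of position $i$, so $\bm{x}_j$ coincides with $\bm{\chi}_j$ in every entry except at position $i$, where it takes the value $q$ instead of $\chi_i$. Thus $\bm{x}_j = \bm{\chi}_j + (q - \chi_i)\bm{e}_i^{n-1}$ (modulo $m$), and since $1 \leq i \leq n-1$ (because $i < j \leq n$) and $q - \chi_i$ reduces to an element of $\{0, \dots, m-1\}$, this is exactly an element of $\mathcal{R}_{\bm \chi}^{(j)}$ by \eqref{eq:shortened_questions_single}.

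Next I would do $\bm{x}_i$. Removing position $i < j$ causes every coordinate beyond position $i$ to shift down by one, so what used to be at position $j$ of $\bm x$ now sits at position $j-1$ of $\bm{x}_i$. All entries of $\bm{x}_i$ match those of $\bm{\chi}_i$ except at this new position $j-1$, where $\bm{x}_i$ takes the value $r$ while $\bm{\chi}_i$ takes the value $\chi_j$. Therefore $\bm{x}_i = \bm{\chi}_i + (r - \chi_j)\bm{e}_{j-1}^{n-1}$ (modulo $m$), and since $i \geq 1$ gives $j \geq 2$, we have $1 \leq j-1 \leq n-1$, placing $\bm{x}_i$ inside $\mathcal{R}_{\bm \chi}^{(i)}$.

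There is no real obstacle here beyond careful index bookkeeping. The only point worth being explicit about is the index shift in the first case: removing a coordinate strictly before $j$ turns position $j$ into position $j-1$, whereas in the second case removing a coordinate strictly after $i$ leaves position $i$ untouched. The inclusions $\bm{x}_i \in \mathcal{R}^{(i)}$ and $\bm{x}_j \in \mathcal{R}^{(j)}$ then follow immediately from \eqref{eq:shortened_questions}.
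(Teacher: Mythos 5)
Your proof is correct and is essentially the paper's own argument: the paper simply states the two identities $\bm{x}_{j} = \bm{\chi}_{j} + (q - \chi_{i}) \bm{e}_{i}^{n-1}$ and $\bm{x}_{i} = \bm{\chi}_{i} + (r - \chi_{j}) \bm{e}_{j-1}^{n-1}$ as clear, which you derive with the same index bookkeeping (including the shift from $j$ to $j-1$ after deleting position $i < j$).
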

\begin{proof}
    It is clear that $\bm{x}_{i} = \bm{\chi}_{i} + (r - \chi_{j}) \bm{e}_{j-1}^{n-1}$ and $\bm{x}_{j} = \bm{\chi}_{j} + (q - \chi_{i}) \bm{e}_{i}^{n-1}$.
\end{proof}

\subsection{Bell value observations}
\label{sec:requested_observations}

Let us model Alice's behavior upon the question $\bm x \in \mathcal{X}$ as projective measurements with projection operators $\Pi_{\bm a \mid \bm x}^{\hilb{A}}$, where $\bm a \in \{+, -\}^{n}$.
Similarly, let us model Bob's behavior upon the question $y \in \{1, \dots, 6\} \cup \{\lozenge, \blacklozenge\}$ using projections $\Pi_{\bm b \mid y}^{\hilb{B}}$.
For $y \in \{1, \dots, 6\}$ Bob answers with $\bm b \in \{+, -\}^{n}$, while for $y = \lozenge$ he answers with $\bm b \in \{1, 2, 3, 4\}^{\left\lfloor \frac{n}{2} \right\rfloor}$ and for $y = \blacklozenge$ with $\bm b \in \{1, 2, 3, 4\}^{\left\lceil \frac{n}{2} \right\rceil - 1}$.

For all questions $\bm x \in \mathcal{X}$ for Alice and $y \in \{1, \dots, 6\}$ for Bob, we define projections corresponding to Alice observing $a$ and Bob observing $b$ at the $j$th positions of their respective answers
\begin{subequations}
\begin{align}
    \label{eq:local_proj_alice}
    \Pi_{a \mid \bm x}^{\hilb{A}, j} & = \sum_{\bm a \mid a_{j} = a} \Pi_{\bm a \mid \bm x}^{\hilb{A}} , \\
    \label{eq:local_proj_bob}
    \Pi_{b \mid y}^{\hilb{B}, j} & = \sum_{\bm b \mid b_{j} = b} \Pi_{\bm b \mid y}^{\hilb{B}} .
\end{align}
\end{subequations}
In the case of $y \in \{\lozenge, \blacklozenge\}$, we similarly define projections corresponding to Bob observing $b \in \{1, 2, 3, 4\}$ at the $k$th position of $\bm b$ as
\begin{subequations}
\label{eq:local_proj_bob_untrusted_bell}
\begin{align}
    \Gamma_{b}^{(2k-1)} & = \sum_{\bm b \mid b_{k} = b} \Pi_{\bm b \mid \lozenge}^{\hilb{B}} , \\
    \Gamma_{b}^{(2k)} & = \sum_{\bm b \mid b_{k} = b} \Pi_{\bm b \mid \blacklozenge}^{\hilb{B}} .
\end{align}
\end{subequations}
Notice that, due to the form of $\bm b$ in the cases $y = \lozenge$ and $y = \blacklozenge$, the projections $\Gamma_{b}^{(j)}$ are defined for $1 \leq j < n$.
In the honest case, such projectors will correspond to an outcome $b$ for the Bell measurement of the $j$th and $(j+1)$th qubit pair of Bob's subsystem.
Performing the measurements $\bigl\{ \Gamma_{1}^{(j)}, \Gamma_{2}^{(j)}, \Gamma_{3}^{(j)}, \Gamma_{4}^{(j)} \bigr\}$ for all odd $j$ is equivalent to the original measurement $\bigl\{ \Pi_{\bm b \mid \lozenge}^{\hilb{B}} \bigr\}_{\bm b}$, and for all even $j$ is equivalent to the original measurement $\bigl\{ \Pi_{\bm b \mid \blacklozenge}^{\hilb{B}} \bigr\}_{\bm b}$.

We may now define corresponding $\pm 1$-outcome observables for each input $\bm x \in \mathcal{X}$ for Alice and $y \in \{1, \dots, 6\}$ for Bob, and each position $j \in \{1, \dots, n\}$, as
\begin{subequations}
\label{eq:model_observables}
\begin{align}
    \label{eq:model_observables_alice}
    A_{\bm x}^{(j)} &= \Pi_{+ \mid \bm x}^{\hilb{A},j} - \Pi_{- \mid \bm x}^{\hilb{A},j} , \\
    B_{y}^{(j)} &= \Pi_{+ \mid y}^{\hilb{B},j} - \Pi_{- \mid y}^{\hilb{B},j} .
\end{align}
\end{subequations}
Alice's observables commute with Bob's observables since they are defined on separate subsystems.
Moreover, for any questions $\bm x$ and $y$, and any positions $j$ and $k$, we have commutation relations
\begin{subequations}
\begin{align}
    \label{eq:obs_commute_alice}
    \left[ A_{\bm x}^{(j)}, A_{\bm x}^{(k)} \right] & = 0 , \\
    \left[ B_{y}^{(j)}, B_{y}^{(k)} \right] & = 0 .
\end{align}
\end{subequations}
Measuring $A_{\bm x}^{(j)}$ and $B_{y}^{(j)}$ for all $j$ is equivalent to performing the original measurements $\bigl\{ \Pi_{\bm a \mid \bm x}^{\hilb{A}} \bigr\}_{\bm a}$ and $\bigl\{ \Pi_{\bm b \mid y}^{\hilb{B}} \bigr\}_{\bm b}$, respectively.

As mentioned previously, we have that $m = 5$ for constructing $\mathcal{X}$ in our protocol.
It will be convenient to assign some alternative labels to the observables defined in \cref{eq:model_observables}.
We sometimes use notation defined by
\begin{equation}
    A_{x_{j}, \bm{x}_{j}}^{(j)} = A_{\bm x}^{(j)}
\end{equation}
for all $1 \leq j \leq n$ and $\bm x \in \mathcal{X}$.
In particular, at any position $j$, we can write labels for the observables of Alice associated with all input values $1 \leq q \leq 5$ at position $j$ and a fixed question at all other positions
\begin{equation}
\label{eq:alice_isom_obs}
    A_{q, \bm{x}_{j}}^{(j)} ,
\end{equation}
provided that this fixed question $\bm x \in \mathcal{X}^{(j)}$.
All such observables are well defined due to \cref{eq:pos_j_alice_inputs,lem:alt_alice_question_set}.

We define $D_{\mathrmpair{z}{x}}^{(j)}$, $E_{\mathrmpair{z}{x}}^{(j)}$, $D_{\mathrmpair{z}{y}}^{(j)}$, $E_{\mathrmpair{z}{y}}^{(j)}$, $D_{\mathrmpair{x}{y}}^{(j)}$, and $E_{\mathrmpair{x}{y}}^{(j)}$ for all $1 \leq j \leq n$ on Bob's side such that
\begin{equation}
    B_{y}^{(j)} =
    \begin{cases}
        D_{\mathrmpair{z}{x}}^{(j)} & \text{if $y = 1$,} \\
        E_{\mathrmpair{z}{x}}^{(j)} & \text{if $y = 2$,} \\
        D_{\mathrmpair{z}{y}}^{(j)} & \text{if $y = 3$,} \\
        E_{\mathrmpair{z}{y}}^{(j)} & \text{if $y = 4$,} \\
        D_{\mathrmpair{x}{y}}^{(j)} & \text{if $y = 5$,} \\
        E_{\mathrmpair{x}{y}}^{(j)} & \text{if $y = 6$.}
    \end{cases}
\end{equation}
We also define on Bob's side the (not necessarily unitary) combinations of these observables
\begin{subequations}
\label{eq:bob_untrusted_pauli}
\begin{align}
    \label{eq:bob_untrusted_pauli_x_xy}
    X_{\mathrmpair{x}{y}}^{(j)} & = \frac{D_{\mathrmpair{x}{y}}^{(j)} + E_{\mathrmpair{x}{y}}^{(j)}}{\sqrt{2}} , \\
    \label{eq:bob_untrusted_pauli_y_xy}
    Y_{\mathrmpair{x}{y}}^{(j)} & = \frac{D_{\mathrmpair{x}{y}}^{(j)} - E_{\mathrmpair{x}{y}}^{(j)}}{\sqrt{2}} , \\
    Z_{\mathrmpair{z}{x}}^{(j)} & = \frac{D_{\mathrmpair{z}{x}}^{(j)} + E_{\mathrmpair{z}{x}}^{(j)}}{\sqrt{2}} , \\
    \label{eq:bob_untrusted_pauli_x_zx}
    X_{\mathrmpair{z}{x}}^{(j)} & = \frac{D_{\mathrmpair{z}{x}}^{(j)} - E_{\mathrmpair{z}{x}}^{(j)}}{\sqrt{2}} , \\
    Y_{\mathrmpair{z}{y}}^{(j)} & = \frac{D_{\mathrmpair{z}{y}}^{(j)} - E_{\mathrmpair{z}{y}}^{(j)}}{\sqrt{2}} , \\
    Z_{\mathrmpair{z}{y}}^{(j)} & = \frac{D_{\mathrmpair{z}{y}}^{(j)} + E_{\mathrmpair{z}{y}}^{(j)}}{\sqrt{2}} .
\end{align}
\end{subequations}
Finally, we define on Bob's side the observables
\begin{subequations}
\label{eq:bob_isom_obs}
\begin{align}
    \label{eq:bob_isom_obs_x}
    Q_{1}^{(j)} & = X_{\mathrmpair{x}{y}}^{(j)} , \\
    \label{eq:bob_isom_obs_y}
    Q_{2}^{(j)} & = Y_{\mathrmpair{x}{y}}^{(j)} , \\
    \label{eq:bob_isom_obs_z}
    Q_{3}^{(j)} & = Z_{\mathrmpair{z}{x}}^{(j)} , \\
    Q_{4}^{(j)} & = D_{\mathrmpair{x}{y}}^{(j)} , \\
    Q_{5}^{(j)} & = E_{\mathrmpair{x}{y}}^{(j)} .
\end{align}
\end{subequations}

It is from (regularized versions of) the observables defined in \cref{eq:alice_isom_obs,eq:bob_isom_obs} which, after proving the necessary relations between them, we construct our isometry.
To prove the relations, we request that certain Bell expressions be observed to take maximal values (within some small $\varepsilon$ to account for experimental imperfections).

\subparagraph{Commutation structure.}

In order that the observables satisfy the main state-dependent commutativity and anticommutativity properties required, we request that, at all positions $1 \leq j \leq n$, maximal violations
\begin{equation}
\label{eq:request_triple_chsh}
    \bra{\psi} C_{\bm{x}_{j}}^{(j)} \ket{\psi} \geq 6\sqrt{2} - \frac{\varepsilon}{\sqrt{2}}
\end{equation}
are observed of the triple CHSH operators
\begin{equation}
\label{eq:request_triple_chsh_operators}
\begin{split}
    C_{\bm{x}_{j}}^{(j)}
    ={} & A_{3, \bm{x}_{j}}^{(j)} \otimes \left( D_{\mathrmpair{z}{x}}^{(j)} + E_{\mathrmpair{z}{x}}^{(j)} \right) \\
    & + A_{1, \bm{x}_{j}}^{(j)} \otimes \left( D_{\mathrmpair{z}{x}}^{(j)} - E_{\mathrmpair{z}{x}}^{(j)} \right) \\
    & + A_{3, \bm{x}_{j}}^{(j)} \otimes \left( D_{\mathrmpair{z}{y}}^{(j)} + E_{\mathrmpair{z}{y}}^{(j)} \right) \\
    & + A_{2, \bm{x}_{j}}^{(j)} \otimes \left( D_{\mathrmpair{z}{y}}^{(j)} - E_{\mathrmpair{z}{y}}^{(j)} \right) \\
    & + A_{1, \bm{x}_{j}}^{(j)} \otimes \left( D_{\mathrmpair{x}{y}}^{(j)} + E_{\mathrmpair{x}{y}}^{(j)} \right) \\
    & + A_{2, \bm{x}_{j}}^{(j)} \otimes \left( D_{\mathrmpair{x}{y}}^{(j)} - E_{\mathrmpair{x}{y}}^{(j)} \right)\mathclose{}
\end{split}
\end{equation}
for all inputs for Alice at other positions $\bm{x}_{j} \in \mathcal{R}^{(j)}$.

The expectation values of all terms appearing in \cref{eq:request_triple_chsh_operators} for $C_{\bm{x}_{j}}^{(j)}$ are derivable from the observed statistics.
For any $1 \leq j \leq n$, since $\bm{x}_{j} \in \mathcal{R}^{(j)}$, the observables for Alice all correspond to questions in $\mathcal{X}^{(j)} \subset \mathcal{X}$ by its definition in \cref{eq:pos_j_alice_inputs}, and so are well defined.
Moreover, for each $j$ there are $12 \mathopen{}\left\lvert \mathcal{R}^{(j)} \right\rvert$ correlations.
Due to the choice $\lvert \mathcal{S} \rvert = \poly(n)$ of special questions, inserting \cref{eq:alice_remaining_questions_cardinality_bound} then gives a total of at most $12 n [1 + 4n] \lvert \mathcal{S} \rvert = \poly(n)$ correlations needed to verify these requests.

\subparagraph{Additional bases.}

So that qubits may be prepared in the additional bases necessary for VBQC, we incorporate the additional observables of Alice corresponding to inputs where $x_{j} = \mathrm{x} + \mathrm{y}$ and $x_{j} = \mathrm{x} - \mathrm{y}$.
We request that, for all special questions $\bm \chi \in \mathcal{S}$, perfect correlations
\begin{subequations}
\label{eq:request_perfect_correlations}
\begin{align}
    \bra{\psi} A_{4, \bm{\chi}_{j}}^{(j)} \otimes D_{\mathrmpair{x}{y}}^{(j)} \ket{\psi} & \geq 1 - \varepsilon , \\
    \bra{\psi} A_{5, \bm{\chi}_{j}}^{(j)} \otimes E_{\mathrmpair{x}{y}}^{(j)} \ket{\psi} & \geq 1 - \varepsilon
\end{align}
\end{subequations}
are observed for all $1 \leq j \leq n$.
This serves to ensure that (in the case of any special question) the untrusted operations corresponding to inputs where $x_{j} = \mathrm{x} \pm \mathrm{y}$ are consistent with the correct combinations of those for the separate inputs $x_{j} = \mathrm{x}$ and $x_{j} = \mathrm{y}$.

All of Alice's observables appearing in \cref{eq:request_perfect_correlations} are well defined (see \cref{lem:alice_question_properties_single}).
There are at most $2n \lvert \mathcal{S} \rvert = \poly(n)$ correlations needed to verify these requests.

\subparagraph{Complex conjugation.}

To ensure that possible complex conjugation of measurement operators may only occur across all positions simultaneously, we enforce certain correlations that include (commuting) pairs of Alice's observables.
This has also been achieved previously by similar methods \cite{bowles2018self,coladangelo2019verifier}.
Fix any choice of $\bm \chi \in \mathcal{S}$.
We request for all $1 \leq j < n$ and $q \in \{1, 2, 3\}$ that
\begin{equation}
\label{eq:request_conjugation_correlations}
\begin{split}
    \bra{\psi} A_{\bm w}^{(j)} A_{\bm w}^{(j+1)}
    \otimes \mathopen{}\Bigl[ &
    (-1)^{[q=2]} \Gamma_{1}^{(j)}
    + (-1)^{[q=1]} \Gamma_{2}^{(j)} \\
    & + (-1)^{[q=3]} \Gamma_{3}^{(j)}
    - \Gamma_{4}^{(j)} \Bigr] \ket{\psi}
    \geq 1 - \frac{\varepsilon}{2} ,
\end{split}
\end{equation}
where $\bm w = (\chi_{1}, \dots, \chi_{j-1}, q, q, \chi_{j+2}, \dots, \chi_{n})$.

All of Alice's observables appearing in \cref{eq:request_conjugation_correlations} are well defined since $\bm w \in \mathcal{X}_{\bm \chi} \subset \mathcal{X}$, as can be seen from \cref{eq:alice_question_set_single,eq:alice_question_set}.
The observables $A_{\bm w}^{(j)}$ and $A_{\bm w}^{(j+1)}$ are chosen so that they both correspond to the same question $\bm w$, and thus commute with one another.
This ensures that the correlations being considered are derivable from observed statistics, since $A_{\bm w}^{(j)} A_{\bm w}^{(j+1)}$ is then a valid observable corresponding to restricting measurement to the product of entries $a_{j} a_{j+1} \in \{+, -\}$ in the outcome $\bm a$.
There are a total of $\bigO(n)$ probabilities involved in verifying these requests.

\subparagraph{Classical processing.}

Overall, the total amount of classical processing required to check all of the Bell values and correlations requested in this section grows at worst as $\bigO(n^{2}) \lvert \mathcal{S} \rvert = \poly(n)$.
It is thus efficient to perform these checks; the desirable \cref{itm:data_processing} discussed in \cref{sec:efficient_testing} is satisfied.

\subsection{Completeness (honest strategy)}
\label{sec:honest_strat}

The ideal values of Bell expressions and correlations given at the end of \cref{sec:requested_observations} can be satisfied using an honest strategy.
We take the shared state to be
\begin{equation}
    \ket{\psi} = \bigotimes_{j=1}^{n} \ket{\Phi^{+}}_{\hilb{A}_{j} \hilb{B}_{j}} ,
\end{equation}
where $\hilb{A}_{j}$ and $\hilb{B}_{j}$ denote the $j$th reference qubit registers of Alice and Bob, respectively.
For all $1 \leq j \leq n$ and all $\bm r$ for which they are defined, we take Alice's observables to be
\begin{subequations}
\begin{align}
    A_{1, \bm r}^{(j)} & = \sigma_{1}^{\hilb{A}_{j}} , \\
    A_{2, \bm r}^{(j)} & = - \sigma_{2}^{\hilb{A}_{j}} , \\
    A_{3, \bm r}^{(j)} & = \sigma_{3}^{\hilb{A}_{j}} , \\
    A_{4, \bm r}^{(j)} & = \sigma_{5}^{\hilb{A}_{j}} , \\
    A_{5, \bm r}^{(j)} & = \sigma_{4}^{\hilb{A}_{j}} ,
\end{align}
\end{subequations}
where $\sigma_{4}$ and $\sigma_{5}$ are defined as stated in \cref{sec:prelims}.
We take Bob's observables for the questions $y \in \{1, \dots, 6\}$ to be
\begin{subequations}
\begin{align}
    D_{k,l}^{(j)} & = \frac{\sigma_{k}^{\hilb{B}_{j}} + \sigma_{l}^{\hilb{B}_{j}}}{\sqrt{2}} , \\
    E_{k,l}^{(j)} & = \frac{\sigma_{k}^{\hilb{B}_{j}} - \sigma_{l}^{\hilb{B}_{j}}}{\sqrt{2}} ,
\end{align}
\end{subequations}
where $k,l \in \{1, 2, 3\}$.
In the case $k = 1$ and $l = 2$, these are $\sigma_{4}$ and $\sigma_{5}$, respectively.
As for the inputs $y \in \{\lozenge, \blacklozenge\}$, the correlations of \cref{eq:request_conjugation_correlations} can be achieved by performing a Bell measurement of pairs of qubits on Bob's subsystem.
The projections $\Gamma_{b}^{(j)}$ denote the projective measurement operator with outcome $b \in \{1, 2, 3, 4\}$ for such a measurement performed on Bob's $j$th and $(j+1)$th qubits.
Specifically, at any position $1 \leq j < n$, we take
\begin{subequations}
\begin{align}
    \Gamma_{1}^{(j)} & = \proj{\Phi^{+}}_{\hilb{B}_{j} \hilb{B}_{j+1}} , \\
    \Gamma_{2}^{(j)} & = \proj{\Phi^{-}}_{\hilb{B}_{j} \hilb{B}_{j+1}} , \\
    \Gamma_{3}^{(j)} & = \proj{\Psi^{+}}_{\hilb{B}_{j} \hilb{B}_{j+1}} , \\
    \Gamma_{4}^{(j)} & = \proj{\Psi^{-}}_{\hilb{B}_{j} \hilb{B}_{j+1}} ,
\end{align}
\end{subequations}
and perform the projective measurement specified by $\bigl\{ \Gamma_{1}^{(j)}, \Gamma_{2}^{(j)}, \Gamma_{3}^{(j)}, \Gamma_{4}^{(j)} \bigr\}$.

\section{Operator relations in the self-test subprotocol}
\label{sec:operator_rels}

In this section, we demonstrate that the values for Bell expressions and correlations requested as part of \cref{prot:summary} (detailed at the end of \cref{sec:requested_observations}) imply the existence of unitary observables satisfying \cref{thm:single_observable_isometry}.
That is, the observed experimental probabilities self-test (in the sense of \cref{def:complex_self-testing}) $n$ Bell states, and also Pauli measurements for Alice.
We go on to state this self-testing result in \cref{thm:protocol_isometry,cor:protocol_isometry}.
The following theorem summarizes the relations derived between observables, which are shown in individual detail subsequently in \cref{sec:individual_relations}.
\begin{theorem}
\label{thm:assumptions_satisfied}
    Suppose that the values of Bell expressions and correlations requested in \cref{sec:requested_observations} are attained (within the tolerance specified in terms of $\varepsilon$).
    Then the $\pm 1$-outcome observables $\regular{Q}_{q}^{(j)}$ (regularized versions of the observables $Q_{q}^{(j)}$) defined by \cref{eq:bob_isom_obs_x,eq:bob_isom_obs_y,eq:bob_isom_obs_z} and, for any choice of special question $\bm \chi \in \mathcal{S} \subset \{1, \dots, 5\}^{n}$, the $\pm 1$-outcome observables $A_{q, \bm{\chi}_{j}}^{(j)}$ defined for $q \in \{1, 2, 3\}$ and $1 \leq j \leq n$ by \cref{eq:alice_isom_obs}, satisfy the assumptions of \cref{thm:single_observable_isometry} with $\eta \leq 21 \sqrt{\varepsilon}$.
\end{theorem}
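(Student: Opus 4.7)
The plan is to verify, in turn, each of \cref{rel:corr,rel:acomm,rel:comm,rel:conj} of \cref{thm:single_observable_isometry}, taking $S_q^{(j)} = A_{q,\bm{\chi}_j}^{(j)}$ on Alice's side and $T_q^{(j)} = \regular{Q}_q^{(j)}$ on Bob's side for $q \in \{1, 2, 3\}$ and $1 \leq j \leq n$. The three families of requests from \cref{sec:requested_observations} play distinct roles: the triple CHSH violations of \cref{eq:request_triple_chsh} yield Alice--Bob correlations (\cref{rel:corr}) and local anticommutation (\cref{rel:acomm}), the structural commutation of Bob's observables across positions yields \cref{rel:comm}, and the conjugation correlations of \cref{eq:request_conjugation_correlations} yield the global relation \cref{rel:conj}.

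First I would apply the SOS decomposition of $C_{\bm{x}_j}^{(j)}$ exhibited at the end of \cref{sec:triple_chsh} to the inequality \cref{eq:request_triple_chsh}. By \cref{eq:sos_bound}, each of the six squared terms produces a state-dependent operator relation of order $\sqrt{\varepsilon}$, giving $\bigl( A_{q,\bm{x}_j}^{(j)} - Q_q^{(j)} \bigr) \ket{\psi} \approxe{\bigO(\sqrt{\varepsilon})} 0$ for each $q \in \{1,2,3\}$ and every $\bm{x}_j \in \mathcal{R}^{(j)}$. Specializing to $\bm{x}_j = \bm{\chi}_j \in \mathcal{R}_{\bm{\chi}}^{(j)}$ immediately yields \cref{rel:corr}. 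The exact anticommutativity $\{D + E, D - E\} = 0$ for any pair of $\pm 1$-outcome observables, combined with the same SOS output and the fact that Alice's and Bob's observables commute, transfers to the state-dependent anticommutators $\bigl\{ A_{q,\bm{\chi}_j}^{(j)}, A_{r,\bm{\chi}_j}^{(j)} \bigr\} \ket{\psi} \approxe{\bigO(\sqrt{\varepsilon})} 0$ on Alice's side and $\bigl\{ Q_q^{(j)}, Q_r^{(j)} \bigr\} \ket{\psi} \approxe{\bigO(\sqrt{\varepsilon})} 0$ on Bob's side. \Cref{lem:regularized_acomm} then lifts the latter to the regularized unitary observables $\regular{Q}_q^{(j)}$, establishing \cref{rel:acomm}.

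For \cref{rel:comm}, on Bob's side I note that each $Q_q^{(j)}$ is a linear combination of $B_y^{(j)}$ operators that commute exactly across $j$ by \cref{eq:obs_commute_alice}, so $\bigl[ Q_q^{(j)}, Q_r^{(k)} \bigr] = 0$ for $j \neq k$; the argument underlying \cref{lem:regularized_acomm} applies verbatim to commutators to pass this to the regularized $\regular{Q}_q^{(j)}$. On Alice's side the observables $A_{q,\bm{\chi}_j}^{(j)}$ and $A_{r,\bm{\chi}_k}^{(k)}$ correspond to different questions and need not commute as operators; instead I route through Bob by the chain $A_{q,\bm{\chi}_j}^{(j)} A_{r,\bm{\chi}_k}^{(k)} \ket{\psi} \approx A_{q,\bm{\chi}_j}^{(j)} Q_r^{(k)} \ket{\psi} = Q_r^{(k)} A_{q,\bm{\chi}_j}^{(j)} \ket{\psi} \approx Q_r^{(k)} Q_q^{(j)} \ket{\psi} = Q_q^{(j)} Q_r^{(k)} \ket{\psi}$, and the symmetric chain for the reversed product, each replacement costing $\bigO(\sqrt{\varepsilon})$ and relying only on $\lVert A \rVert \leq 1$ together with Alice--Bob commutativity.

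Finally, for \cref{rel:conj}, I would observe that the Bob-side operator $M_q^{(j)} = (-1)^{[q=2]} \Gamma_1^{(j)} + (-1)^{[q=1]} \Gamma_2^{(j)} + (-1)^{[q=3]} \Gamma_3^{(j)} - \Gamma_4^{(j)}$ is a $\pm 1$-outcome observable, since $\bigl\{ \Gamma_b^{(j)} \bigr\}_{b}$ is a complete set of orthogonal projectors, and that it satisfies the exact operator identity $M_1^{(j)} M_2^{(j)} M_3^{(j)} = -I$ obtained by multiplying the signs termwise. From \cref{eq:request_conjugation_correlations} I then extract $A_{\bm w^{(q)}}^{(j)} A_{\bm w^{(q)}}^{(j+1)} \ket{\psi} \approxe{\bigO(\sqrt{\varepsilon})} M_q^{(j)} \ket{\psi}$ for $\bm w^{(q)} = (\chi_1, \dots, q, q, \dots, \chi_n)$; both shortened strings $\bm w^{(q)}_j$ and $\bm w^{(q)}_{j+1}$ lie in $\mathcal{R}^{(j)}_{\bm\chi}$ and $\mathcal{R}^{(j+1)}_{\bm\chi}$ respectively, so the triple CHSH correlations let each $\bm w$-subscripted Alice observable be swapped for its $\bm{\chi}$-subscripted counterpart (routing through $Q_q^{(\cdot)}$ to avoid off-state operator equalities) at cost $\bigO(\sqrt{\varepsilon})$. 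Composing the three products for $q = 1, 2, 3$ and commuting each intermediate $M_q^{(j)}$ (on Bob) past the remaining Alice operators collapses the right-hand side to $M_3^{(j)} M_2^{(j)} M_1^{(j)} \ket{\psi} = -\ket{\psi}$, yielding \cref{rel:conj}. The main obstacle is the careful bookkeeping of how many substitution and reordering steps each relation demands, so that the accumulated multiplicative constants combine to exactly the stated bound $\eta \leq 21 \sqrt{\varepsilon}$.
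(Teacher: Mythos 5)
Your overall architecture matches the paper's: the triple CHSH violations give \cref{rel:corr,rel:acomm}, the conjugation correlations give \cref{rel:conj}, and routing between Alice's and Bob's observables via the correlation relations is the central tool. Your treatments of \cref{rel:corr,rel:acomm,rel:conj} are sound in outline (including the identity $M_1^{(j)} M_2^{(j)} M_3^{(j)} = -I$, which is exactly how the paper collapses the product of the three pairs in \cref{prop:conj_rel}).

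There is, however, a genuine gap in your argument for \cref{rel:comm} on Bob's side. You claim $\bigl[ Q_q^{(j)}, Q_r^{(k)} \bigr] = 0$ exactly because each $Q_q^{(j)}$ is a linear combination of the $B_y^{(j)}$, which ``commute exactly across $j$.'' But the exact commutation $\bigl[ B_y^{(j)}, B_y^{(k)} \bigr] = 0$ holds only for marginals of the \emph{same} measurement (same question $y$); the $Q_q^{(j)}$ are built from Bob's responses to \emph{different} questions (e.g.\ $Q_1^{(j)}$ from $y \in \{5,6\}$ while $Q_3^{(k)}$ uses $y \in \{1,2\}$), and marginals of distinct projective measurements need not commute as operators --- assuming they do amounts to assuming a tensor product structure across positions, which is exactly what the self-test must not presuppose. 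The same false identity reappears at the end of your chain for Alice's commutativity ($Q_r^{(k)} Q_q^{(j)} \ket{\psi} = Q_q^{(j)} Q_r^{(k)} \ket{\psi}$). The paper's fix (\cref{prop:com_bob}) runs the routing in the opposite direction: the question set is constructed (\cref{lem:alice_question_properties_single}) so that for any $j \neq k$ and $q, r$ there is a single question $\bm x$ with $x_j = q$ and $x_k = r$; Alice's observables $A_{\bm x}^{(j)}$ and $A_{\bm x}^{(k)}$ for that one question commute \emph{exactly} by \cref{eq:obs_commute_alice}, and \cref{lem:double_shortened_question} guarantees $\bm{x}_j \in \mathcal{R}^{(j)}$ and $\bm{x}_k \in \mathcal{R}^{(k)}$, so the correlation relations transfer this exact commutation to a state-dependent one for $\regular{Q}_q^{(j)}$ and $\regular{Q}_r^{(k)}$. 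Alice's commutativity for the $\bm\chi$-subscripted observables then follows by routing through Bob's now-established approximate commutators, as in \cref{prop:com_alice}.
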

\begin{proof}
    For any given $\bm \chi \in \mathcal{S}$, take $S_{q}^{(j)} = A_{q, \bm{\chi}_{j}}^{(j)}$ and $T_{q}^{(j)} = \regular{Q}_{q}^{(j)}$ for \cref{thm:single_observable_isometry}.
    \Cref{prop:symmetry} shows \cref{rel:corr}.
    \Cref{rel:comm} on the commutativity of observables is shown by \cref{prop:com_bob,prop:com_alice}.
    \Cref{rel:acomm} on anticommutativity is shown by \cref{prop:acomm_alice,prop:acomm_bob}.
    Finally, \cref{rel:conj} (which ensures global complex conjugation) is shown by \cref{prop:conj_rel}.
    All relations are shown to within a maximum of $21 \sqrt{\varepsilon}$ from the ideal, and thus we may choose $\eta \leq 21 \sqrt{\varepsilon}$.
\end{proof}

Each of the triple CHSH operators of \cref{eq:request_triple_chsh_operators} has an SOS decomposition
\begin{equation}
\begin{split}
    6 \sqrt{2} - C_{\bm{x}_{j}}^{(j)} ={}
    &   \frac{1}{\sqrt{2}} \mathopen{}\left( A_{3, \bm{x}_{j}}^{(j)} - \frac{D_{\mathrmpair{z}{x}}^{(j)} + E_{\mathrmpair{z}{x}}^{(j)}}{\sqrt{2}} \right)^{2} \\
    & + \frac{1}{\sqrt{2}} \mathopen{}\left( A_{1, \bm{x}_{j}}^{(j)} - \frac{D_{\mathrmpair{z}{x}}^{(j)} - E_{\mathrmpair{z}{x}}^{(j)}}{\sqrt{2}} \right)^{2} \\
    & + \frac{1}{\sqrt{2}} \mathopen{}\left( A_{3, \bm{x}_{j}}^{(j)} - \frac{D_{\mathrmpair{z}{y}}^{(j)} + E_{\mathrmpair{z}{y}}^{(j)}}{\sqrt{2}} \right)^{2} \\
    & + \frac{1}{\sqrt{2}} \mathopen{}\left( A_{2, \bm{x}_{j}}^{(j)} - \frac{D_{\mathrmpair{z}{y}}^{(j)} - E_{\mathrmpair{z}{y}}^{(j)}}{\sqrt{2}} \right)^{2} \\
    & + \frac{1}{\sqrt{2}} \mathopen{}\left( A_{1, \bm{x}_{j}}^{(j)} - \frac{D_{\mathrmpair{x}{y}}^{(j)} + E_{\mathrmpair{x}{y}}^{(j)}}{\sqrt{2}} \right)^{2} \\
    & + \frac{1}{\sqrt{2}} \mathopen{}\left( A_{2, \bm{x}_{j}}^{(j)} - \frac{D_{\mathrmpair{x}{y}}^{(j)} - E_{\mathrmpair{x}{y}}^{(j)}}{\sqrt{2}} \right)^{2}
\end{split}
\end{equation}
Thus, for all $j$, \cref{eq:request_triple_chsh} implies (see \cref{sec:sos_decomposition}) for all $\bm{x}_{j} \in \mathcal{R}^{(j)}$ specified that
\begin{subequations}
\label{eq:corr_rel}
\begin{align}
    \label{eq:corr_rel_x_xy}
    \left( A_{1, \bm{x}_{j}}^{(j)} - X_{\mathrmpair{x}{y}}^{(j)} \right) \ket{\psi} & \approxe{\sqrt{\varepsilon}} 0 , \\
    \label{eq:corr_rel_y_xy}
    \left( A_{2, \bm{x}_{j}}^{(j)} - Y_{\mathrmpair{x}{y}}^{(j)} \right) \ket{\psi} & \approxe{\sqrt{\varepsilon}} 0 , \\
    \label{eq:corr_rel_z_zx}
    \left( A_{3, \bm{x}_{j}}^{(j)} - Z_{\mathrmpair{z}{x}}^{(j)} \right) \ket{\psi} & \approxe{\sqrt{\varepsilon}} 0 , \\
    \label{eq:corr_rel_x_zx}
    \left( A_{1, \bm{x}_{j}}^{(j)} - X_{\mathrmpair{z}{x}}^{(j)} \right) \ket{\psi} & \approxe{\sqrt{\varepsilon}} 0 , \\
    \label{eq:corr_rel_y_zy}
    \left( A_{2, \bm{x}_{j}}^{(j)} - Y_{\mathrmpair{z}{y}}^{(j)} \right) \ket{\psi} & \approxe{\sqrt{\varepsilon}} 0 , \\
    \label{eq:corr_rel_z_zy}
    \left( A_{3, \bm{x}_{j}}^{(j)} - Z_{\mathrmpair{z}{y}}^{(j)} \right) \ket{\psi} & \approxe{\sqrt{\varepsilon}} 0 .
\end{align}
\end{subequations}
Due to the definitions made in \cref{eq:bob_isom_obs}, Bob's observables in \cref{eq:corr_rel_x_xy,eq:corr_rel_y_xy,eq:corr_rel_z_zx} may also be replaced with $Q_{1}^{(j)}$, $Q_{2}^{(j)}$, and $Q_{3}^{(j)}$, respectively.
Since Alice's observables in \cref{eq:corr_rel} are unitary, regularization can be applied to Bob's observables (see \cref{sec:regularization}).
This results in expressions involving only unitary operators
\begin{subequations}
\label{eq:corr_rel_regular}
\begin{align}
    \label{eq:corr_rel_x_xy_regular}
    \left( A_{1, \bm{x}_{j}}^{(j)} - \regular{X}_{\mathrmpair{x}{y}}^{(j)} \right) \ket{\psi} & \approxe{2 \sqrt{\varepsilon}} 0 , \\
    \label{eq:corr_rel_y_xy_regular}
    \left( A_{2, \bm{x}_{j}}^{(j)} - \regular{Y}_{\mathrmpair{x}{y}}^{(j)} \right) \ket{\psi} & \approxe{2 \sqrt{\varepsilon}} 0 , \\
    \label{eq:corr_rel_z_zx_regular}
    \left( A_{3, \bm{x}_{j}}^{(j)} - \regular{Z}_{\mathrmpair{z}{x}}^{(j)} \right) \ket{\psi} & \approxe{2 \sqrt{\varepsilon}} 0 , \\
    \label{eq:corr_rel_x_zx_regular}
    \left( A_{1, \bm{x}_{j}}^{(j)} - \regular{X}_{\mathrmpair{z}{x}}^{(j)} \right) \ket{\psi} & \approxe{2 \sqrt{\varepsilon}} 0 , \\
    \label{eq:corr_rel_y_zy_regular}
    \left( A_{2, \bm{x}_{j}}^{(j)} - \regular{Y}_{\mathrmpair{z}{y}}^{(j)} \right) \ket{\psi} & \approxe{2 \sqrt{\varepsilon}} 0 , \\
    \label{eq:corr_rel_z_zy_regular}
    \left( A_{3, \bm{x}_{j}}^{(j)} - \regular{Z}_{\mathrmpair{z}{y}}^{(j)} \right) \ket{\psi} & \approxe{2 \sqrt{\varepsilon}} 0 .
\end{align}
\end{subequations}
Similarly to before, Bob's observables in \cref{eq:corr_rel_x_xy_regular,eq:corr_rel_y_xy_regular,eq:corr_rel_z_zx_regular} may also be replaced with $\regular{Q}_{1}^{(j)}$, $\regular{Q}_{2}^{(j)}$, and $\regular{Q}_{3}^{(j)}$, respectively.

Additionally to the relations required for \cref{thm:single_observable_isometry}, we have from \cref{eq:bob_untrusted_pauli_x_xy,eq:bob_untrusted_pauli_y_xy} that, for all positions $1 \leq j \leq n$,
\begin{equation}
    D_{\mathrmpair{x}{y}}^{(j)} = \frac{X_{\mathrmpair{x}{y}}^{(j)} + Y_{\mathrmpair{x}{y}}^{(j)}}{\sqrt{2}} ,\quad
    E_{\mathrmpair{x}{y}}^{(j)} = \frac{X_{\mathrmpair{x}{y}}^{(j)} - Y_{\mathrmpair{x}{y}}^{(j)}}{\sqrt{2}} .
\end{equation}
The required observation of the correlations in \cref{eq:request_perfect_correlations} then implies (via \cref{lem:state_estimate}) that
\begin{subequations}
\begin{align}
    A_{4, \bm{\chi}_{j}}^{(j)} \ket{\psi}
    & \approxe{\sqrt{2 \varepsilon}} \frac{X_{\mathrmpair{x}{y}}^{(j)} + Y_{\mathrmpair{x}{y}}^{(j)}}{\sqrt{2}} \ket{\psi} , \\
    A_{5, \bm{\chi}_{j}}^{(j)} \ket{\psi}
    & \approxe{\sqrt{2 \varepsilon}} \frac{X_{\mathrmpair{x}{y}}^{(j)} - Y_{\mathrmpair{x}{y}}^{(j)}}{\sqrt{2}} \ket{\psi} .
\end{align}
\end{subequations}
From the relations of \cref{eq:corr_rel_x_xy,eq:corr_rel_y_xy} inferred from the triple CHSH inequalities (in the special case $\bm{x}_{j} = \bm{\chi}_{j}$), we then have for all $\bm \chi \in \mathcal{S}$ that
\begin{subequations}
\label{eq:extra_obs_linear}
\begin{align}
    A_{4, \bm{\chi}_{j}}^{(j)} \ket{\psi}
    & \approxe{2 \sqrt{2 \varepsilon}} \frac{A_{1, \bm{\chi}_{j}}^{(j)} + A_{2, \bm{\chi}_{j}}^{(j)}}{\sqrt{2}} \ket{\psi} , \\
    A_{5, \bm{\chi}_{j}}^{(j)} \ket{\psi}
    & \approxe{2 \sqrt{2 \varepsilon}} \frac{A_{1, \bm{\chi}_{j}}^{(j)} - A_{2, \bm{\chi}_{j}}^{(j)}}{\sqrt{2}} \ket{\psi} .
\end{align}
\end{subequations}
We may now write a version of \cref{thm:single_observable_isometry} for the experimental observations required by our protocol which also incorporates certification of $\sigma_{4}$ and $\sigma_{5}$.
This, along with \cref{lem:single_to_multiple_observable}, will allow us to prove \cref{thm:state_preparation}.
\begin{theorem}
\label{thm:protocol_isometry}
    There exists $\delta(\varepsilon, n) \geq 0$ satisfying $\delta(0, n) = 0$ such that the following holds.
    Suppose that values of Bell expressions and correlations requested in \cref{sec:requested_observations} are attained (within the tolerance specified in terms of $\varepsilon$).
    Then there exist subnormalized $\ket{\xi_{0}}_{\hilb{A} \hilb{B}}$ and $\ket{\xi_{1}}_{\hilb{A} \hilb{B}}$ satisfying $\braket{\xi_{0}}{\xi_{0}}_{\hilb{A} \hilb{B}} + \braket{\xi_{1}}{\xi_{1}}_{\hilb{A} \hilb{B}} = 1$, isometries $V_{\hilb{A}}^{\bm \chi} \colon \hilb{A} \to \hilb{A} \otimes \hilb{A}^{\prime} \otimes \hilb{A}^{\prime\prime}$ for all $\bm \chi \in \mathcal{S}$, and an isometry $V_{\hilb{B}} \colon \hilb{B} \to \hilb{B} \otimes \hilb{B}^{\prime} \otimes \hilb{B}^{\prime\prime}$ such that for all $q \in \{1, \dots, 5\}$, $k \in \{1, \dots, n\}$, and $\bm \chi \in \mathcal{S}$ we have
    \begin{subequations}
    \begin{align}
        V^{\bm \chi} \ket{\psi}_{\hilb{A} \hilb{B}}
        & \approxe{\delta(\varepsilon, n)} \bigotimes_{j=1}^{n} \ket{\Phi^{+}}_{\hilb{A}_{j}^{\prime} \hilb{B}_{j}^{\prime}}
        \otimes \ket{\xi} , \\
        V^{\bm \chi} A_{q, \bm{\chi}_{j}}^{(k)} \ket{\psi}_{\hilb{A} \hilb{B}}
        & \approxe{\delta(\varepsilon, n)} \sigma_{q}^{\hilb{B}_{k}^{\prime}} \bigotimes_{j=1}^{n} \ket{\Phi^{+}}_{\hilb{A}_{j}^{\prime} \hilb{B}_{j}^{\prime}}
        \otimes \sigma_{3[q = 3]}^{\hilb{B}^{\prime\prime}} \ket{\xi} ,
    \end{align}
    \end{subequations}
    where $V^{\bm \chi} = V_{\hilb{A}}^{\bm \chi} \otimes V_{\hilb{B}}$ and the junk state $\ket{\xi} \in \hilb{A} \otimes \hilb{A}^{\prime\prime} \otimes \hilb{B} \otimes \hilb{B}^{\prime\prime}$ is defined as
    \begin{equation}
        \ket{\xi}
        = \ket{0}_{\hilb{A}^{\prime\prime}} \ket{0}_{\hilb{B}^{\prime\prime}} \otimes \ket{\xi_{0}}_{\hilb{A} \hilb{B}}
        + \ket{1}_{\hilb{A}^{\prime\prime}} \ket{1}_{\hilb{B}^{\prime\prime}} \otimes \ket{\xi_{1}}_{\hilb{A} \hilb{B}} .
    \end{equation}
\end{theorem}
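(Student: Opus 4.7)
The plan is to combine \cref{thm:assumptions_satisfied}, \cref{thm:single_observable_isometry}, and the linearity estimate \cref{eq:extra_obs_linear} in a direct fashion, with the main nontrivial observation being that the isometry on Bob's side and the junk state can be taken independent of the choice of special question $\bm \chi \in \mathcal{S}$.

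First I would fix any $\bm \chi \in \mathcal{S}$ and apply \cref{thm:assumptions_satisfied}, which supplies observables $S_q^{(j)} = A_{q, \bm{\chi}_j}^{(j)}$ and $T_q^{(j)} = \regular{Q}_q^{(j)}$ (for $q \in \{1,2,3\}$ and $j \in \{1,\dots,n\}$) satisfying \cref{rel:corr,rel:acomm,rel:comm,rel:conj} with tolerance $\eta \leq 21\sqrt{\varepsilon}$. I would then invoke \cref{thm:single_observable_isometry} to obtain an isometry $V_{\hilb{A}}^{\bm \chi} \otimes V_{\hilb{B}}^{\bm \chi}$ and a junk state $\ket{\xi^{\bm \chi}}$ of the form of \cref{eq:junk_form}, certifying the $n$ Bell pairs and the observables for $q \in \{1,2,3\}$ up to some error $\delta(\varepsilon,n)$ that inherits the polynomial robustness of the underlying result and satisfies $\delta(0,n) = 0$.

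The key step is then to observe, using the remark following \cref{thm:single_observable_isometry}, that $V_{\hilb{B}}^{\bm \chi}$, $\ket{\xi_{0}^{\bm \chi}}$, and $\ket{\xi_{1}^{\bm \chi}}$ do not depend on the operators $S_q^{(j)}$; since every other ingredient going into the construction (namely the shared state $\ket{\psi}$, Bob's observables $\regular{Q}_q^{(j)}$, and the tolerance $\eta$) is independent of $\bm \chi$, I may take $V_{\hilb{B}} := V_{\hilb{B}}^{\bm \chi}$ and $\ket{\xi} := \ket{\xi^{\bm \chi}}$ to be the same choice for every $\bm \chi \in \mathcal{S}$, with only $V_{\hilb{A}}^{\bm \chi}$ allowed to vary. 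This is the composability property that \cref{thm:state_preparation} will require.

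It remains to certify $A_{q, \bm{\chi}_k}^{(k)}$ for $q \in \{4,5\}$ at each position $k$. For this I would apply $V^{\bm \chi}$ to both sides of \cref{eq:extra_obs_linear} (which is valid at every position) and combine the linearity and norm-preserving property of $V^{\bm \chi}$ with the already-established certifications for $q = 1, 2$ via the triangle inequality. Since the $q \in \{1,2\}$ certifications act trivially on the $\hilb{B}^{\prime\prime}$ register (because $\sigma_{3[q=3]} = I$ there), the junk state is unaffected, and the identities $\sigma_{4} = (\sigma_{1} + \sigma_{2})/\sqrt{2}$ and $\sigma_{5} = (\sigma_{1} - \sigma_{2})/\sqrt{2}$ then deliver the claim for $q \in \{4,5\}$. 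The accumulated error $2\sqrt{2\varepsilon} + \sqrt{2}\,\delta(\varepsilon,n)$ can be absorbed into a redefined $\delta(\varepsilon,n)$ that still vanishes at $\varepsilon = 0$. No substantial obstacle arises: essentially all the mathematical content is packaged into the two preceding theorems, and the only point needing genuine care is the verification of the $\bm \chi$-independence of Bob's isometry and junk state.
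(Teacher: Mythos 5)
Your proposal is correct and follows essentially the same route as the paper: apply \cref{thm:single_observable_isometry} via \cref{thm:assumptions_satisfied} for each $\bm \chi \in \mathcal{S}$, obtain the $q \in \{4,5\}$ cases from \cref{eq:extra_obs_linear} by linearity, and argue that $V_{\hilb{B}}$ and the junk state are $\bm \chi$-independent because they are built only from Bob's observables (which do not vary with $\bm \chi$) and the shared state. Your identification of the $\bm \chi$-independence of Bob's isometry and junk state as the one point requiring genuine care matches the emphasis of the paper's own proof.
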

\begin{proof}
    For $q \in \{1, 2, 3\}$, \cref{thm:single_observable_isometry} can be applied for each $\bm \chi \in \mathcal{S}$ due to \cref{thm:assumptions_satisfied}, giving an appropriate function $\delta$, junk state $\ket{\xi}$, and isometries $V_{\hilb{A}}^{\bm \chi}$ and $V_{\hilb{B}}$.
    To get the remaining cases $q \in \{4, 5\}$, note the linearity of $V^{\bm \chi}$ and use \cref{eq:extra_obs_linear}.

    The isometry on Bob's subsystem $V_{\hilb{B}}$ given by each application of \cref{thm:single_observable_isometry} to \cref{thm:assumptions_satisfied} (once for each $\bm \chi \in \mathcal{S}$) is that of \cref{fig:swap_kickback_xy_copy}.
    For each $\bm \chi \in \mathcal{S}$ it is constructed from the same set of Bob's observables, as stated in \cref{thm:assumptions_satisfied}.
    Thus, given any choice of measurement strategy for Bob as in \cref{eq:local_proj_bob,eq:local_proj_bob_untrusted_bell}, \cref{thm:single_observable_isometry} guarantees that the isometry $V_{\hilb{B}}$ remains unchanged under all different choices of special question $\bm \chi \in \mathcal{S}$.
    That the same $\ket{\xi}$ is sufficient for all $\bm \chi$ can be seen from the fact that the junk states in \cref{thm:single_observable_isometry} do not depend on any of Alice's observables.
\end{proof}
\begin{remark}
    That $V_{\hilb{B}}$ does not depend on the choice of special question $\bm \chi$ leads to the required \cref{itm:independent_isom} of self-tests applicable to our DIVBQC protocol.
\end{remark}
The local isometries $V^{\bm \chi}$ in \cref{thm:protocol_isometry} are constructed such that the actions of all observables $A_{q, \bm{\chi}_{k}}^{(k)}$ on $\ket{\psi}_{\hilb{A} \hilb{B}}$ under $V^{\bm \chi}$ are shown (in a slightly different form) to be
\begin{equation}
\label{eq:protocol_isometry_single_obs_split}
\begin{split}
    V^{\bm \chi} A_{q, \bm{\chi}_{k}}^{(k)} \ket{\psi}_{\hilb{A} \hilb{B}}
    \approxe{\delta(\varepsilon, n)}
    & \left( \sigma_{q}^{\hilb{B}_{k}^{\prime}} \otimes \proj{0}_{\hilb{B}^{\prime\prime}} + {\sigma_{q}^{\hilb{B}_{k}^{\prime}}}^{*} \otimes \proj{1}_{\hilb{B}^{\prime\prime}} \right) \\
    & \quad \bigotimes_{j=1}^{n} \ket{\Phi^{+}}_{\hilb{A}_{j}^{\prime} \hilb{B}_{j}^{\prime}}
    \otimes \ket{\xi} .
\end{split}
\end{equation}
In particular (by looking at the case $q = \chi_{k}$ for each $k$), it is constructed so that the actions of all $A_{\bm \chi}^{(k)} = A_{\chi_{k}, \bm{\chi}_{k}}^{(k)}$ (corresponding to special questions $\bm \chi \in \mathcal{S}$) are shown.
The following statement is then immediate.
\begin{corollary}
\label{cor:protocol_isometry}
    For some $\gamma(\varepsilon, n) \geq 0$ satisfying $\gamma(0, n) = 0$, the isometries $V^{\bm \chi} = V_{\hilb{A}}^{\bm \chi} \otimes V_{\hilb{B}}$ resulting from \cref{thm:protocol_isometry} act for all $\bm s \in \{0, 1\}^{n}$ as
    \begin{equation}
    \label{eq:protocol_isometry_obs_split}
        V^{\bm \chi} A_{\bm \chi}^{\bm s} \ket{\psi}_{\hilb{A} \hilb{B}}
        \approxe{\gamma(\varepsilon, n)} \left[
        \bigotimes_{j=1}^{n} \mathopen{}\left( \sigma_{\chi_{j}}^{\hilb{B}_{j}^{\prime}} \right)^{s_{j}} \otimes \proj{0}_{\hilb{B}^{\prime\prime}}
        + \bigotimes_{j=1}^{n} \mathopen{}\left( {\sigma_{\chi_{j}}^{\hilb{B}_{j}^{\prime}}}^{*} \right)^{s_{j}} \otimes \proj{1}_{\hilb{B}^{\prime\prime}}
        \right]\mathclose{}
        \bigotimes_{j=1}^{n} \ket{\Phi^{+}}_{\hilb{A}_{j}^{\prime} \hilb{B}_{j}^{\prime}}
        \otimes \ket{\xi} .
    \end{equation}
\end{corollary}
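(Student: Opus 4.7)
The plan is to lift the single-observable certification of \cref{thm:protocol_isometry} to the product action $A_{\bm \chi}^{\bm s}$ by invoking \cref{lem:single_to_multiple_observable}. First, I would specialize \cref{thm:protocol_isometry} to the diagonal case $q = \chi_k$ at each position $k \in \{1, \dots, n\}$, which gives
\begin{equation*}
V^{\bm \chi} A_{\bm \chi}^{(k)} \ket{\psi}_{\hilb{A} \hilb{B}}
\approxe{\delta(\varepsilon, n)}
\bar{\sigma}_{\chi_k}^{(k)}
\bigotimes_{j=1}^{n} \ket{\Phi^{+}}_{\hilb{A}_j^{\prime} \hilb{B}_j^{\prime}} \otimes \ket{\xi} ,
\end{equation*}
where I abbreviate $\bar{\sigma}_{\chi_k}^{(k)} = \sigma_{\chi_k}^{\hilb{B}_k^{\prime}} \otimes \proj{0}_{\hilb{B}^{\prime\prime}} + (\sigma_{\chi_k}^{\hilb{B}_k^{\prime}})^{*} \otimes \proj{1}_{\hilb{B}^{\prime\prime}}$, working throughout in the complex-conjugation basis of \cref{sec:conjugation} (so that $\sigma_{\mathrm{x}}$ and $\sigma_{\mathrm{y}}$ are real and only $\sigma_{\mathrm{z}}$ picks up a sign under conjugation). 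Since the projectors $\Pi_{\bm a \mid \bm \chi}^{\hilb{A}}$ arise from a single POVM on $\hilb{A}$, the observables $A_{\bm \chi}^{(j)}$ for different $j$ mutually commute, so $A_{\bm \chi}^{\bm s} = \prod_{j} (A_{\bm \chi}^{(j)})^{s_j}$ is unambiguously defined and has operator norm at most one.

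Next, I would apply \cref{lem:single_to_multiple_observable} with $A_j = (A_{\bm \chi}^{(j)})^{s_j}$ and reference operators $(\bar{\sigma}_{\chi_j}^{(j)})^{s_j}$ playing the role of $\tilde{A}_j$, noting that the positions with $s_j = 0$ satisfy the single-observable relation trivially from $V^{\bm \chi} \ket{\psi} \approx \ket{\phi}$. Because the $\bar{\sigma}_{\chi_j}^{(j)}$ produced by \cref{thm:protocol_isometry} act on Bob's extended subsystem $\hilb{B}_j^{\prime} \otimes \hilb{B}^{\prime\prime}$ rather than on Alice's side, a small bookkeeping step is required: the stabilization hypothesis $\tilde{A}_j \ket{\phi} = \tilde{B}_j \ket{\phi}$ with $\lVert \tilde{B}_j \rVert \leq 1$ is supplied (up to the symmetric relabelling of $\tilde{\hilb{A}}$ and $\tilde{\hilb{B}}$ in the statement of the lemma) by the partial-transpose identity $\sigma^{\hilb{B}_j^{\prime}} \ket{\Phi^{+}}_{\hilb{A}_j^{\prime} \hilb{B}_j^{\prime}} = (\sigma^{T})^{\hilb{A}_j^{\prime}} \ket{\Phi^{+}}_{\hilb{A}_j^{\prime} \hilb{B}_j^{\prime}}$, which furnishes the required norm-one mirror operator on the complementary reference subsystem (the diagonal $\hilb{B}^{\prime\prime}$ ancilla factors live on the same side as the Paulis and pass through unchanged). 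The lemma then yields
\begin{equation*}
V^{\bm \chi} A_{\bm \chi}^{\bm s} \ket{\psi}_{\hilb{A} \hilb{B}}
\approxe{(2n+1) \delta(\varepsilon, n)}
\mathopen{}\left[\prod_{j=1}^{n} (\bar{\sigma}_{\chi_j}^{(j)})^{s_j}\right]
\bigotimes_{j=1}^{n} \ket{\Phi^{+}}_{\hilb{A}_j^{\prime} \hilb{B}_j^{\prime}} \otimes \ket{\xi} ,
\end{equation*}
so one may take $\gamma(\varepsilon, n) = (2n+1) \delta(\varepsilon, n)$.

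Finally, I would expand the product. Since the Paulis on distinct $\hilb{B}_j^{\prime}$ commute, while $\proj{0}_{\hilb{B}^{\prime\prime}}$ and $\proj{1}_{\hilb{B}^{\prime\prime}}$ are orthogonal idempotent projectors summing to the identity on $\hilb{B}^{\prime\prime}$, all cross terms vanish in the expansion and
\begin{equation*}
\prod_{j=1}^{n} (\bar{\sigma}_{\chi_j}^{(j)})^{s_j}
= \bigotimes_{j=1}^{n} (\sigma_{\chi_j}^{\hilb{B}_j^{\prime}})^{s_j} \otimes \proj{0}_{\hilb{B}^{\prime\prime}}
+ \bigotimes_{j=1}^{n} ((\sigma_{\chi_j}^{\hilb{B}_j^{\prime}})^{*})^{s_j} \otimes \proj{1}_{\hilb{B}^{\prime\prime}} ,
\end{equation*}
which is precisely the operator appearing in the statement of the corollary. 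The main subtlety I foresee is the aforementioned bookkeeping around which side of the reference cut the mirror operator lives on when verifying the hypotheses of \cref{lem:single_to_multiple_observable}; once that identification is fixed, the remaining steps reduce to routine manipulation of the expansion above and the $\tilde{B}_j$ operators commuting with $V_{\hilb{A}}^{\bm \chi}$ and the other $A_k$.
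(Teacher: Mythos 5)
Your proposal is correct and follows essentially the same route as the paper, whose entire proof is a one-line invocation of \cref{lem:single_to_multiple_observable} applied to the single-observable statement \cref{eq:protocol_isometry_single_obs_split} with $\ket{\phi} = \bigotimes_{j=1}^{n} \ket{\Phi^{+}}_{\hilb{A}_{j}^{\prime} \hilb{B}_{j}^{\prime}} \otimes \ket{\xi}$. The only addition on your side is that you explicitly verify the stabilization hypothesis of the lemma (supplying the norm-one mirror operators on the complementary reference subsystem via the transpose identity on $\ket{\Phi^{+}}$ and the diagonal form of $\ket{\xi}$), a bookkeeping point the paper leaves implicit but which you resolve correctly.
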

\begin{proof}
    For each $\bm \chi \in \mathcal{S}$, apply \cref{lem:single_to_multiple_observable} to \cref{eq:protocol_isometry_single_obs_split} with $\ket{\phi} = \bigotimes_{j=1}^{n} \ket{\Phi^{+}}_{\hilb{A}_{j}^{\prime} \hilb{B}_{j}^{\prime}} \otimes \ket{\xi}$.
\end{proof}

As discussed after the statement of \cref{thm:single_observable_isometry}, for which \cref{thm:assumptions_satisfied} guarantees $\eta \leq 21 \sqrt{\varepsilon}$, we expect standard techniques to yield $\delta(\varepsilon, n) = \bigO(\sqrt{\varepsilon} n^{2})$ and $\gamma(\varepsilon, n) = \bigO(\sqrt{\varepsilon} n^{2})$ in the previous \cref{thm:protocol_isometry,cor:protocol_isometry}.

\subsection{Individual relations}
\label{sec:individual_relations}

We now show each relation that forms \cref{thm:assumptions_satisfied} in individual detail.
\Cref{prop:symmetry,prop:com_alice,prop:com_bob,prop:acomm_alice,prop:acomm_bob,prop:conj_rel} that we exhibit here were together used to form the proof of \cref{thm:assumptions_satisfied} at the beginning of this section.
We take it as given throughout the remainder of this section that the values of all Bell expressions and correlations requested in \cref{sec:requested_observations} are attained (within the tolerance specified in terms of $\varepsilon$).

\begin{proposition}[Symmetry, Alice and Bob]
\label{prop:symmetry}
    For any $\bm \chi \in \mathcal{S}$,
    \begin{equation}
        \left( A_{q, \bm{\chi}_{j}}^{(j)} - \regular{Q}_{q}^{(j)} \right) \ket{\psi} \approxe{2 \sqrt{\varepsilon}} 0
    \end{equation}
    for any position $j$ and for all $q \in \{1, 2, 3\}$.
\end{proposition}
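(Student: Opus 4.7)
The plan is to observe that this proposition is essentially an immediate specialization of the regularized CHSH correlation relations \cref{eq:corr_rel_regular} to the particular choice of question coming from a special $\bm\chi \in \mathcal{S}$, combined with the identification of $\regular{Q}_q^{(j)}$ with the appropriate regularized Bob-side observable.

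First, I would check that the index $\bm{\chi}_{j}$ lies in the set $\mathcal{R}^{(j)}$ for which the relations \cref{eq:corr_rel_regular} were derived. This follows directly from \cref{eq:shortened_questions_single}: by taking $k=0$ in that definition we see $\bm{\chi}_{j} = \bm{\chi}_{j} + 0 \cdot \bm{e}_{i}^{n-1} \in \mathcal{R}_{\bm\chi}^{(j)} \subset \mathcal{R}^{(j)}$, so the request \cref{eq:request_triple_chsh} and the corresponding SOS-derived relations apply with $\bm{x}_{j} = \bm{\chi}_{j}$.

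Next, I would unpack the definition of $Q_{q}^{(j)}$ from \cref{eq:bob_isom_obs_x,eq:bob_isom_obs_y,eq:bob_isom_obs_z}, namely $Q_{1}^{(j)} = X_{\mathrmpair{x}{y}}^{(j)}$, $Q_{2}^{(j)} = Y_{\mathrmpair{x}{y}}^{(j)}$, and $Q_{3}^{(j)} = Z_{\mathrmpair{z}{x}}^{(j)}$. Since regularization is performed operator-by-operator, we have $\regular{Q}_{1}^{(j)} = \regular{X}_{\mathrmpair{x}{y}}^{(j)}$, $\regular{Q}_{2}^{(j)} = \regular{Y}_{\mathrmpair{x}{y}}^{(j)}$, and $\regular{Q}_{3}^{(j)} = \regular{Z}_{\mathrmpair{z}{x}}^{(j)}$. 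Substituting $\bm{x}_{j} = \bm{\chi}_{j}$ in \cref{eq:corr_rel_x_xy_regular,eq:corr_rel_y_xy_regular,eq:corr_rel_z_zx_regular} then immediately yields the three desired estimates
\begin{equation}
    \left( A_{q, \bm{\chi}_{j}}^{(j)} - \regular{Q}_{q}^{(j)} \right) \ket{\psi} \approxe{2 \sqrt{\varepsilon}} 0
\end{equation}
for $q \in \{1,2,3\}$, completing the proof.

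There is no real obstacle here; the only things to verify are the bookkeeping that $\bm{\chi}_{j}$ is a valid shortened input (i.e.\ lies in $\mathcal{R}^{(j)}$), and that the regularization is consistent with the labeling used to define $Q_{q}^{(j)}$. The $2\sqrt{\varepsilon}$ tolerance is inherited without modification from the regularized relations derived from the SOS decomposition of $C_{\bm{x}_{j}}^{(j)}$, which themselves picked up a factor of $2$ over the $\sqrt{\varepsilon}$ pre-regularization estimates via the standard regularization bound of \cref{sec:regularization}.
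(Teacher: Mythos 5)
Your proof is correct and takes exactly the same route as the paper, which simply notes that the claimed relations are the special cases $\bm{x}_{j} = \bm{\chi}_{j}$ of \cref{eq:corr_rel_regular}. Your additional bookkeeping (checking $\bm{\chi}_{j} \in \mathcal{R}^{(j)}$ via $k=0$ in \cref{eq:shortened_questions_single}, and matching $\regular{Q}_{q}^{(j)}$ to the regularized observables) is accurate and merely makes explicit what the paper leaves implicit.
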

\begin{proof}
    The relations are special cases of those of \cref{eq:corr_rel_regular} with $\bm{x}_{j} = \bm{\chi}_{j}$.
\end{proof}

\begin{proposition}[Commutativity, Bob]
\label{prop:com_bob}
    For any distinct positions $j$ and $k$ such that $j \neq k$, and for all $q,r \in \{1, 2, 3\}$,
    \begin{equation}
    \label{eq:comm_bob}
        \left[ \regular{Q}_{q}^{(j)}, \regular{Q}_{r}^{(k)} \right] \ket{\psi}
        \approxe{8 \sqrt{\varepsilon}} 0 .
    \end{equation}
\end{proposition}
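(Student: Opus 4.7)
The plan is to transfer the commutator from Bob's side to Alice's side, where two observables at distinct positions coming from the same question are known to commute exactly, and then transfer back.

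First I would choose an appropriate Alice question. Pick any $\bm \chi \in \mathcal{S}$ and let $\bm x \in \{1, \dots, 5\}^n$ be defined by $x_{j} = q$, $x_{k} = r$, and $x_{l} = \chi_{l}$ for all $l \notin \{j,k\}$. By \cref{lem:double_shortened_question}, we have $\bm{x}_{j} \in \mathcal{R}^{(j)}$ and $\bm{x}_{k} \in \mathcal{R}^{(k)}$, which is exactly the condition needed to apply the regularized symmetry relations of \cref{eq:corr_rel_x_xy_regular,eq:corr_rel_y_xy_regular,eq:corr_rel_z_zx_regular} at positions $j$ and $k$, since those three relations cover precisely the cases $q \in \{1,2,3\}$. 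Using the alternative labeling, $A_{q,\bm{x}_{j}}^{(j)} = A_{\bm x}^{(j)}$ and $A_{r,\bm{x}_{k}}^{(k)} = A_{\bm x}^{(k)}$, so both Alice observables arise from the \emph{same} question $\bm x$, and thus commute exactly by \cref{eq:obs_commute_alice}.

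Next I would do a four-step chain of substitutions, each with tolerance $2\sqrt{\varepsilon}$. Start from $\regular{Q}_{q}^{(j)} \regular{Q}_{r}^{(k)} \ket{\psi}$. Apply the relation at position $k$ to swap $\regular{Q}_{r}^{(k)}\ket{\psi}$ for $A_{\bm x}^{(k)}\ket{\psi}$; the outer unitary $\regular{Q}_{q}^{(j)}$ preserves the $2\sqrt{\varepsilon}$ tolerance. Then $A_{\bm x}^{(k)}$ acts on Alice's subsystem while $\regular{Q}_{q}^{(j)}$ acts on Bob's, so they commute trivially; move $A_{\bm x}^{(k)}$ to the outside. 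Apply the relation at position $j$ to swap $\regular{Q}_{q}^{(j)}\ket{\psi}$ for $A_{\bm x}^{(j)}\ket{\psi}$; since $A_{\bm x}^{(k)}$ is a $\pm 1$-outcome observable with operator norm $1$, this again costs only $2\sqrt{\varepsilon}$. At this point we have reached $A_{\bm x}^{(k)} A_{\bm x}^{(j)} \ket{\psi}$, which equals $A_{\bm x}^{(j)} A_{\bm x}^{(k)} \ket{\psi}$ exactly. Now reverse the chain: swap $A_{\bm x}^{(k)}\ket{\psi}$ back to $\regular{Q}_{r}^{(k)}\ket{\psi}$ (tolerance $2\sqrt{\varepsilon}$), use the cross-subsystem commutation to move $\regular{Q}_{r}^{(k)}$ outside past $A_{\bm x}^{(j)}$ (free), and finally swap $A_{\bm x}^{(j)}\ket{\psi}$ back to $\regular{Q}_{q}^{(j)}\ket{\psi}$ (tolerance $2\sqrt{\varepsilon}$).

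Summing four error terms of $2\sqrt{\varepsilon}$ each by the triangle inequality yields
\begin{equation}
    \regular{Q}_{q}^{(j)} \regular{Q}_{r}^{(k)} \ket{\psi}
    \approxe{8\sqrt{\varepsilon}} \regular{Q}_{r}^{(k)} \regular{Q}_{q}^{(j)} \ket{\psi} ,
\end{equation}
which is exactly \cref{eq:comm_bob}. There is no real obstacle here: the content of the argument is simply that each Bob observable $\regular{Q}_{q}^{(j)}$ is certified by an Alice observable via the triple CHSH SOS, and the quadratic structure of the question set $\mathcal{X}$ (captured by \cref{lem:double_shortened_question}) ensures that for any pair $(j,q)$ and $(k,r)$ we can find a single Alice question exhibiting both simultaneously. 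The companion result for Alice (\cref{prop:com_alice}, to follow) will presumably be obtained symmetrically by passing through Bob's observables instead.
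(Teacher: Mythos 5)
Your proposal is correct and follows essentially the same route as the paper: construct a single question $\bm x \in \mathcal{X}_{\bm \chi}$ with $x_{j} = q$ and $x_{k} = r$, use \cref{lem:double_shortened_question} to invoke the regularized correlation relations of \cref{eq:corr_rel_regular} at both positions, and transfer the exact commutation of $A_{\bm x}^{(j)}$ and $A_{\bm x}^{(k)}$ back to Bob's regularized observables, with the four substitutions at $2\sqrt{\varepsilon}$ each giving the stated $8\sqrt{\varepsilon}$. The only cosmetic difference is that you traverse a single forward-and-back chain while the paper substitutes into the two terms of the commutator separately; the error accounting is identical.
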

\begin{proof}
    By construction (see \cref{lem:alice_question_properties_single}), there exists $\bm \chi \in \mathcal{S}$ and an input $\bm x \in \mathcal{X}$ such that $x_{j} = q$, $x_{k} = r$, and $x_{i} = \chi_{i}$ for all $i \in \{1, \dots, n\} \setminus \{j, k\}$.
    For this $\bm x$, consider the observables $A_{\bm x}^{(j)}$ and $A_{\bm x}^{(k)}$ which commute by construction as in \cref{eq:obs_commute_alice}.
    Since $x_{j} = q$ and $x_{k} = r$, we have $A_{\bm x}^{(j)} = A_{q, \bm{x}_{j}}^{(j)}$ and $A_{\bm x}^{(k)} = A_{r, \bm{x}_{k}}^{(k)}$, respectively.
    \Cref{lem:double_shortened_question} ensures that $\bm{x}_{j} \in \mathcal{R}^{(j)}$ and $\bm{x}_{k} \in \mathcal{R}^{(k)}$, and thus from \cref{eq:corr_rel_regular} we have $A_{q, \bm{x}_{j}}^{(j)} \ket{\psi} \approxe{2 \sqrt{\varepsilon}} \regular{Q}_{q}^{(j)} \ket{\psi}$ and $A_{r, \bm{x}_{k}}^{(k)} \ket{\psi} \approxe{2 \sqrt{\varepsilon}} \regular{Q}_{r}^{(k)} \ket{\psi}$.
    Therefore,
    \begin{equation}
        \left[ \regular{Q}_{q}^{(j)}, \regular{Q}_{r}^{(k)} \right] \ket{\psi}
        \approxe{8 \sqrt{\varepsilon}} \left[ A_{r, \bm{x}_{k}}^{(k)}, A_{q, \bm{x}_{j}}^{(j)} \right] \ket{\psi}
        = \left[ A_{\bm x}^{(k)}, A_{\bm x}^{(j)} \right] \ket{\psi}
        = 0
    \end{equation}
    as required.
\end{proof}

\begin{proposition}[Commutativity, Alice]
\label{prop:com_alice}
    For any $\bm \chi \in \mathcal{S}$,
    \begin{equation}
        \left[ A_{q, \bm{\chi}_{j}}^{(j)}, A_{r, \bm{\chi}_{k}}^{(k)} \right] \ket{\psi} \approxe{16 \sqrt{\varepsilon}} 0
    \end{equation}
    for any distinct positions $j$ and $k$ such that $j \neq k$, and all $q,r \in \{1, 2, 3\}$.
\end{proposition}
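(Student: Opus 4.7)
The plan is to reduce this to the already-established Bob-side commutativity of Proposition~\ref{prop:com_bob} by transferring both of Alice's observables across to Bob's subsystem via Proposition~\ref{prop:symmetry}. The key observation is that $A_{q, \bm{\chi}_{j}}^{(j)}$ and $A_{r, \bm{\chi}_{k}}^{(k)}$ correspond to two different questions for Alice (namely $\bm\chi$ with its $j$th coordinate replaced by $q$ and with its $k$th coordinate replaced by $r$, respectively), so the built-in commutativity of $A_{\bm x}^{(j)}$ and $A_{\bm x}^{(k)}$ from \cref{eq:obs_commute_alice} does not apply directly. However, each of these observables is $\pm 1$-valued and therefore unitary, and we can use symmetry together with across-subsystem commutation to route the relation through $\regular{Q}_{q}^{(j)}$ and $\regular{Q}_{r}^{(k)}$.

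First I would treat the summand $A_{q, \bm{\chi}_{j}}^{(j)} A_{r, \bm{\chi}_{k}}^{(k)} \ket{\psi}$. Applying \cref{prop:symmetry} to the innermost factor gives $A_{r, \bm{\chi}_{k}}^{(k)} \ket{\psi} \approxe{2\sqrt{\varepsilon}} \regular{Q}_{r}^{(k)} \ket{\psi}$, and since $A_{q, \bm{\chi}_{j}}^{(j)}$ is unitary this estimate is preserved under left multiplication. The resulting operator $A_{q, \bm{\chi}_{j}}^{(j)} \regular{Q}_{r}^{(k)}$ is a product of operators on the two separate subsystems $\hilb{A}$ and $\hilb{B}$, so they commute exactly, and a second application of \cref{prop:symmetry} to the now-innermost $A_{q, \bm{\chi}_{j}}^{(j)} \ket{\psi}$ adds another $2\sqrt{\varepsilon}$. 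This yields
\begin{equation*}
    A_{q, \bm{\chi}_{j}}^{(j)} A_{r, \bm{\chi}_{k}}^{(k)} \ket{\psi}
    \approxe{4 \sqrt{\varepsilon}} \regular{Q}_{r}^{(k)} \regular{Q}_{q}^{(j)} \ket{\psi} .
\end{equation*}
An identical argument with the roles of the two observables interchanged gives
\begin{equation*}
    A_{r, \bm{\chi}_{k}}^{(k)} A_{q, \bm{\chi}_{j}}^{(j)} \ket{\psi}
    \approxe{4 \sqrt{\varepsilon}} \regular{Q}_{q}^{(j)} \regular{Q}_{r}^{(k)} \ket{\psi} .
\end{equation*}

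I would then invoke \cref{prop:com_bob}, which directly bounds $\bigl[ \regular{Q}_{q}^{(j)}, \regular{Q}_{r}^{(k)} \bigr] \ket{\psi} \approxe{8 \sqrt{\varepsilon}} 0$. Assembling the three estimates via the triangle inequality yields a total tolerance of $4\sqrt{\varepsilon} + 8\sqrt{\varepsilon} + 4\sqrt{\varepsilon} = 16\sqrt{\varepsilon}$, which matches the claimed bound.

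I do not foresee a serious obstacle, since the structure closely parallels that of \cref{prop:com_bob}. The only mild subtlety is that we pay $4\sqrt{\varepsilon}$ per ordered product rather than $2\sqrt{\varepsilon}$ as in the Bob case, because both operators must be swapped to the other subsystem before the known commutativity can be invoked; this is precisely what causes the bound to double from $8\sqrt{\varepsilon}$ to $16\sqrt{\varepsilon}$.
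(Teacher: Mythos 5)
Your proposal is correct and follows essentially the same route as the paper: transfer both of Alice's observables to Bob's side via the symmetry relation (each ordered product costing $4\sqrt{\varepsilon}$ because both factors must be swapped, using unitarity of $A_{q,\bm{\chi}_{j}}^{(j)}$ and of the regularized $\regular{Q}_{r}^{(k)}$ to preserve the estimates), then invoke \cref{prop:com_bob}, for a total of $4\sqrt{\varepsilon}+8\sqrt{\varepsilon}+4\sqrt{\varepsilon}=16\sqrt{\varepsilon}$. The only cosmetic difference is that the paper separately remarks that the single case $q=\chi_{j}$, $r=\chi_{k}$ holds exactly by \cref{eq:obs_commute_alice}, whereas you handle all six cases uniformly, which is equally valid.
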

\begin{proof}
    The observables $A_{q, \bm{\chi}_{j}}^{(j)}$ and $A_{r, \bm{\chi}_{k}}^{(k)}$ are valid due to \cref{lem:alice_question_properties_single}.
    One out of the six commutation relations has both its observables coincide with $\bm \chi$, and so also holds exactly and state-independently.
    Let us suppose $q$ and $r$ are chosen such that this is not the case.
    By \cref{eq:shortened_questions_single,eq:shortened_questions}, $\bm{\chi}_{j} \in \mathcal{R}^{(j)}$ and $\bm{\chi}_{k} \in \mathcal{R}^{(k)}$, and so we may use \cref{eq:corr_rel_regular} to see that
    \begin{equation}
        A_{q, \bm{\chi}_{j}}^{(j)} \ket{\psi} \approxe{2 \sqrt{\varepsilon}} \regular{Q}_{q}^{(j)} \ket{\psi} ,\quad
        A_{r, \bm{\chi}_{k}}^{(k)} \ket{\psi} \approxe{2 \sqrt{\varepsilon}} \regular{Q}_{r}^{(k)} \ket{\psi} .
    \end{equation}
    Therefore,
    \begin{equation}
        \left[ A_{q, \bm{\chi}_{j}}^{(j)}, A_{r, \bm{\chi}_{k}}^{(k)} \right] \ket{\psi}
        \approxe{8 \sqrt{\varepsilon}} \left[ \regular{Q}_{r}^{(k)}, \regular{Q}_{q}^{(j)} \right] \ket{\psi}
        \approxe{8 \sqrt{\varepsilon}} 0 ,
    \end{equation}
    where the final equality is simply \cref{eq:comm_bob}.
\end{proof}

\begin{proposition}[Anticommutativity, Alice]
\label{prop:acomm_alice}
    For any $\bm \chi \in \mathcal{S}$,
    \begin{equation}
        \left\{ A_{q, \bm{\chi}_{j}}^{(j)}, A_{r, \bm{\chi}_{j}}^{(j)} \right\} \ket{\psi}
        \approxe{2 \mathopen{}\left( 1 + \sqrt{2} \right)\mathclose{} \sqrt{\varepsilon}} 0
    \end{equation}
    for any position $j$ and for all distinct $q,r \in \{1, 2, 3\}$ such that $q \neq r$.
\end{proposition}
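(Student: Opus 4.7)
The plan is to exploit the fact that any pair of observables of the form $(D+E)/\sqrt{2}$ and $(D-E)/\sqrt{2}$, built from a common pair of $\pm 1$-outcome observables $D, E$, anticommute \emph{exactly} (since $\{D+E, D-E\} = 2(D^2 - E^2) = 0$). The strategy is to substitute each of Alice's observables with an appropriately chosen Bob-side counterpart of this form using the already-derived relations of \cref{eq:corr_rel}, then commute Alice's and Bob's operators past one another (they act on disjoint tensor factors), and finally appeal to exact anticommutation on Bob's side.

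First I would select, for each pair $(q,r) \in \{(1,2), (1,3), (2,3)\}$, a matched pair of Bob observables from those appearing in \cref{eq:corr_rel}: for $(q,r) = (1,2)$ use $X_{\mathrmpair{x}{y}}^{(j)}$ and $Y_{\mathrmpair{x}{y}}^{(j)}$ (via \cref{eq:corr_rel_x_xy} and \cref{eq:corr_rel_y_xy}); for $(1,3)$ use $X_{\mathrmpair{z}{x}}^{(j)}$ and $Z_{\mathrmpair{z}{x}}^{(j)}$ (via \cref{eq:corr_rel_x_zx} and \cref{eq:corr_rel_z_zx}); and for $(2,3)$ use $Y_{\mathrmpair{z}{y}}^{(j)}$ and $Z_{\mathrmpair{z}{y}}^{(j)}$ (via \cref{eq:corr_rel_y_zy} and \cref{eq:corr_rel_z_zy}). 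In each case the two chosen operators are built from the same underlying $D_{\mathrmpair{\cdot}{\cdot}}^{(j)}$ and $E_{\mathrmpair{\cdot}{\cdot}}^{(j)}$, so they anticommute exactly. The relations in \cref{eq:corr_rel} apply at $\bm{x}_{j} = \bm{\chi}_{j}$ because $\bm{\chi}_{j} \in \mathcal{R}_{\bm \chi}^{(j)} \subset \mathcal{R}^{(j)}$ (take $k = 0$ in \cref{eq:shortened_questions_single}).

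With the pair $B_{q}^{(j)}$, $B_{r}^{(j)}$ fixed, I expand the anticommutator as $A_{q, \bm{\chi}_{j}}^{(j)} A_{r, \bm{\chi}_{j}}^{(j)} \ket{\psi} + A_{r, \bm{\chi}_{j}}^{(j)} A_{q, \bm{\chi}_{j}}^{(j)} \ket{\psi}$ and process each summand in three steps. Replacing the innermost $A$ by its Bob counterpart costs $\sqrt{\varepsilon}$ (since $\lVert A_{\cdot, \bm{\chi}_{j}}^{(j)} \rVert = 1$); swapping the orders of the Alice and Bob operators is free because they act on disjoint subsystems; and replacing the remaining $A$ costs $\sqrt{2} \sqrt{\varepsilon}$ (using $\lVert B_{\cdot}^{(j)} \rVert \leq \sqrt{2}$, since $B = (D \pm E)/\sqrt{2}$ with $\lVert D \rVert = \lVert E \rVert = 1$). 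Each summand therefore accumulates error $(1 + \sqrt{2}) \sqrt{\varepsilon}$. After these substitutions the expression becomes $\{B_{q}^{(j)}, B_{r}^{(j)}\} \ket{\psi}$, which vanishes identically. Summing the errors across the two terms yields the claimed bound $2(1 + \sqrt{2}) \sqrt{\varepsilon}$.

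This proof is essentially routine given the machinery already established; there is no genuine obstacle. The only small piece of bookkeeping is to pair the Alice--Bob substitutions so that the residual anticommutator on Bob's side is zero rather than merely small, and to keep track of the $\sqrt{2}$ factor from $\lVert B_{\cdot}^{(j)} \rVert$ which is what gives rise to the constant $2(1 + \sqrt{2})$ rather than a cleaner coefficient.
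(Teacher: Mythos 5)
Your proposal is correct and follows essentially the same route as the paper: replace each of Alice's observables by the matching Bob-side pair $(D\pm E)/\sqrt{2}$ via \cref{eq:corr_rel}, use that these anticommute exactly, and account for the error using $\lVert A \rVert = 1$ and $\lVert B \rVert \leq \sqrt{2}$ to arrive at $2(1+\sqrt{2})\sqrt{\varepsilon}$. The only difference is presentational bookkeeping (term-by-term substitution versus a single triangle-inequality bound on the difference of anticommutators), which yields the identical constant.
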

\begin{proof}
    The observables $A_{q, \bm{\chi}_{j}}^{(j)}$ and $A_{r, \bm{\chi}_{j}}^{(j)}$ are valid due to \cref{lem:alice_question_properties_single}.
    Note that if $D$ and $E$ are $\pm 1$-outcome observables, then $\{D + E, D - E\} = 0$.
    It follows from the definitions made in \cref{eq:bob_untrusted_pauli} that
    \begin{subequations}
    \label{eq:anticomm_bob_all}
    \begin{align}
        \label{eq:anticomm_bob_xy}
        \left\{ X_{\mathrmpair{x}{y}}^{(j)}, Y_{\mathrmpair{x}{y}}^{(j)} \right\} & = 0 , \\
        \label{eq:anticomm_bob_xz}
        \left\{ X_{\mathrmpair{z}{x}}^{(j)}, Z_{\mathrmpair{z}{x}}^{(j)} \right\} & = 0 , \\
        \label{eq:anticomm_bob_yz}
        \left\{ Y_{\mathrmpair{z}{y}}^{(j)}, Z_{\mathrmpair{z}{y}}^{(j)} \right\} & = 0 .
    \end{align}
    \end{subequations}
    Since $\bm{\chi}_{j} \in \mathcal{R}^{(j)}$ by \cref{eq:shortened_questions_single,eq:shortened_questions}, we can use \cref{eq:corr_rel} to exchange the observables $A_{q, \bm{\chi}_{j}}^{(j)}$ and $A_{r, \bm{\chi}_{j}}^{(j)}$ with the appropriate two of Bob's observables from one of \cref{eq:anticomm_bob_xy,eq:anticomm_bob_xz,eq:anticomm_bob_yz}, matching the values of $q$ and $r$.
    Let us denote these two observables here by $R_{q}$ and $R_{r}$.
    Since $\{R_{q}, R_{r}\} = 0$, we have
    \begin{equation}
    \begin{split}
        \left\lVert \left\{ A_{q, \bm{\chi}_{j}}^{(j)}, A_{r, \bm{\chi}_{j}}^{(j)} \right\} \ket{\psi} \right\rVert
        & = \left\lVert \left\{ A_{q, \bm{\chi}_{j}}^{(j)}, A_{r, \bm{\chi}_{j}}^{(j)} \right\} \ket{\psi} - \{ R_{q}, R_{r} \} \ket{\psi} \right\rVert \\
        & \leq (2 + \lVert R_{q} \rVert + \lVert R_{r} \rVert) \sqrt{\varepsilon} ,
    \end{split}
    \end{equation}
    where the inequality follows from \cref{eq:corr_rel}, the triangle inequality, unitarity of $A_{q}^{(j)}$ and $A_{r}^{(j)}$, and the definition of the operator norm.
    Applying the triangle inequality to \cref{eq:bob_untrusted_pauli} to get
    \begin{equation}
        \lVert R_{q} \rVert \leq \sqrt{2} ,\quad \lVert R_{r} \rVert \leq \sqrt{2}
    \end{equation}
    yields the desired expression.
\end{proof}

\begin{proposition}[Anticommutativity, Bob]
\label{prop:acomm_bob}
    For any position $j$,
    \begin{subequations}
    \begin{align}
        \left\{ \regular{Q}_{1}^{(j)}, \regular{Q}_{2}^{(j)} \right\} \ket{\psi} & \approxe{2 \mathopen{}\left( 3 + \sqrt{2} \right)\mathclose{} \sqrt{\varepsilon}} 0 , \\
        \left\{ \regular{Q}_{1}^{(j)}, \regular{Q}_{3}^{(j)} \right\} \ket{\psi} & \approxe{2 \mathopen{}\left( 4 + \sqrt{2} \right)\mathclose{} \sqrt{\varepsilon}} 0 , \\
        \left\{ \regular{Q}_{2}^{(j)}, \regular{Q}_{3}^{(j)} \right\} \ket{\psi} & \approxe{2 \mathopen{}\left( 5 + \sqrt{2} \right)\mathclose{} \sqrt{\varepsilon}} 0 .
    \end{align}
    \end{subequations}
\end{proposition}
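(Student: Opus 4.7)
The plan mirrors the template used in \cref{prop:acomm_alice}: for each pair $(q,r)$ select one of the exact Bob-side anticommutation relations \cref{eq:anticomm_bob_xy,eq:anticomm_bob_xz,eq:anticomm_bob_yz}, and then estimate the distance between the target regularized anticommutator and that exact zero. The bookkeeping is essentially controlled by \cref{lem:regularized_acomm}, which at its root contributes a cost of $(6 + \lVert T_{1}\rVert + \lVert T_{2} \rVert)\sqrt{\varepsilon} \leq (6+2\sqrt{2})\sqrt{\varepsilon} = 2(3+\sqrt{2})\sqrt{\varepsilon}$ whenever the regularizations $\regular{T}_{1}, \regular{T}_{2}$ can be chosen to be exactly the target operators.

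For the pair $(1,2)$ I would apply \cref{lem:regularized_acomm} directly with $T_{1} = X_{\mathrmpair{x}{y}}^{(j)}$, $T_{2} = Y_{\mathrmpair{x}{y}}^{(j)}$ and auxiliary unitaries $U_{1} = A_{1, \bm{x}_{j}}^{(j)}$, $U_{2} = A_{2, \bm{x}_{j}}^{(j)}$ for any $\bm{x}_{j} \in \mathcal{R}^{(j)}$. The required estimates $T_{i} \ket{\psi} \approxe{\sqrt{\varepsilon}} U_{i} \ket{\psi}$ are \cref{eq:corr_rel_x_xy,eq:corr_rel_y_xy}, and $\{T_{1}, T_{2}\} = 0$ is \cref{eq:anticomm_bob_xy}. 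Since $\regular{Q}_{1}^{(j)} = \regular{T}_{1}$ and $\regular{Q}_{2}^{(j)} = \regular{T}_{2}$, the lemma yields the advertised bound $2(3+\sqrt{2})\sqrt{\varepsilon}$ immediately.

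The pairs $(1,3)$ and $(2,3)$ are more delicate, because the exact anticommutators $\{X_{\mathrmpair{z}{x}}, Z_{\mathrmpair{z}{x}}\} = 0$ and $\{Y_{\mathrmpair{z}{y}}, Z_{\mathrmpair{z}{y}}\} = 0$ are formed from Bob observables that disagree with the targets at one or two positions, respectively. Here I would rewrite each summand $\regular{Q}_{q}^{(j)} \regular{Q}_{r}^{(j)} \ket{\psi}$ in a chain that (i) removes each regularization using the estimates of \cref{sec:regularization}, (ii) swaps the resulting Bob observable for the Alice-side $A_{\cdot, \bm{x}_{j}}^{(j)}$ via the correlation in \cref{eq:corr_rel}, (iii) commutes that Alice observable through the remaining Bob-side operator, and (iv) reintroduces the matched Bob operator ($X_{\mathrmpair{z}{x}}$, $Y_{\mathrmpair{z}{y}}$, or $Z_{\mathrmpair{z}{y}}$, as appropriate) via its own correlation with the same Alice observable. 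Summing the two orderings relates $\{\regular{Q}_{q}^{(j)}, \regular{Q}_{r}^{(j)}\} \ket{\psi}$ to the exact anticommutator of the matched Bob operators, each mismatched index contributing an additional $2\sqrt{\varepsilon}$ relative to the pair $(1,2)$ and thereby producing the stated constants $2(4+\sqrt{2})\sqrt{\varepsilon}$ and $2(5+\sqrt{2})\sqrt{\varepsilon}$.

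The main obstacle is keeping these bounds tight. A difference such as $(\regular{X}_{\mathrmpair{x}{y}}^{(j)} - \regular{X}_{\mathrmpair{z}{x}}^{(j)})$ is only small when applied to $\ket{\psi}$; naively acting with it on an intermediate state like $\regular{Z}_{\mathrmpair{z}{x}}^{(j)} \ket{\psi}$ would force use of the operator norm, which is merely $\bigO(1)$ instead of $\bigO(\sqrt{\varepsilon})$, and would blow up the stated constants. The remedy is exactly the four-step reordering above: first trade the adjacent Bob operator for an Alice observable (which commutes with everything on Bob's side), then peel off the mismatched difference while it acts on $\ket{\psi}$, and finally restore the original Bob operator. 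Executing this reordering consistently in both summands of the anticommutator, without double-counting the error picked up at each step, is the key calculation that delivers the exact stated constants.
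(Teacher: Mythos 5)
Your proposal is correct and follows essentially the same route as the paper: the $(1,2)$ case is an immediate application of \cref{lem:regularized_acomm} to the exactly anticommuting pair $X_{\mathrmpair{x}{y}}^{(j)}, Y_{\mathrmpair{x}{y}}^{(j)}$, and the $(1,3)$, $(2,3)$ cases are handled by routing each product through the commuting Alice observables $A_{q,\bm{\chi}_{j}}^{(j)}$ so that the subscript-mismatched Bob operators are exchanged only while acting directly on $\ket{\psi}$. Your error accounting (an extra $2\sqrt{\varepsilon}$ per mismatched subscript, with the asymmetry that $\regular{Z}_{\mathrmpair{z}{x}}^{(j)}$ in the second ordering of the $(1,3)$ case needs only deregularization) reproduces the stated constants.
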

\begin{proof}
    Since $Q_{1}^{(j)} = X_{\mathrmpair{x}{y}}^{(j)}$ and $Q_{2}^{(j)} = Y_{\mathrmpair{x}{y}}^{(j)}$, we have as before that $Q_{1}^{(j)}$ and $Q_{2}^{(j)}$ anticommute by construction.
    Furthermore, applying the triangle inequality to the operator norms of \cref{eq:bob_untrusted_pauli_x_xy,eq:bob_untrusted_pauli_y_xy} shows that
    \begin{equation}
        \left\lVert Q_{1}^{(j)} \right\rVert \leq \sqrt{2} ,\quad \left\lVert Q_{2}^{(j)} \right\rVert \leq \sqrt{2} .
    \end{equation}
    Thus, \cref{lem:regularized_acomm} applied using \cref{eq:corr_rel_x_xy,eq:corr_rel_y_xy} immediately gives the first desired relation for the regularized operators $\regular{Q}_{1}^{(j)}$ and $\regular{Q}_{2}^{(j)}$.
    
    We also know, as before, that $X_{\mathrmpair{z}{x}}^{(j)}$ anticommutes with $Z_{\mathrmpair{z}{x}}^{(j)}$.
    Our strategy is thus to write the state-dependent anticommutator of the regularized operators $\regular{Q}_{1}^{(j)} = \regular{X}_{\mathrmpair{x}{y}}^{(j)}$ and $\regular{Q}_{3}^{(j)} = \regular{Z}_{\mathrmpair{z}{x}}^{(j)}$ in terms of $X_{\mathrmpair{z}{x}}^{(j)}$ and $Z_{\mathrmpair{z}{x}}^{(j)}$ (note the differing subscripts between $\regular{X}_{\mathrmpair{x}{y}}^{(j)}$ and $X_{\mathrmpair{z}{x}}^{(j)}$).
    Taking any $\bm \chi \in \mathcal{S}$ (which satisfies $\bm{\chi}_{j} \in \mathcal{R}^{(j)}$ by definition), we have
    \begin{equation}
    \label{eq:anticomm_bob_reg_xz}
        \regular{X}_{\mathrmpair{x}{y}}^{(j)} \regular{Z}_{\mathrmpair{z}{x}}^{(j)} \ket{\psi}
        \approxe{4 \sqrt{\varepsilon}} A_{3, \bm{\chi}_{j}}^{(j)} A_{1, \bm{\chi}_{j}}^{(j)} \ket{\psi}
        \approxe{\left( 1 + \sqrt{2} \right)\mathclose{} \sqrt{\varepsilon}} X_{\mathrmpair{z}{x}}^{(j)} Z_{\mathrmpair{z}{x}}^{(j)} \ket{\psi} ,
    \end{equation}
    where the first estimate uses \cref{eq:corr_rel_x_xy_regular,eq:corr_rel_z_zx_regular}, and the second estimate uses \cref{eq:corr_rel_x_zx,eq:corr_rel_z_zx} along with the triangle inequality applied to the operator norm of \cref{eq:bob_untrusted_pauli_x_zx}.
    We can also write
    \begin{equation}
    \label{eq:anticomm_bob_reg_zx}
        \regular{Z}_{\mathrmpair{z}{x}}^{(j)} \regular{X}_{\mathrmpair{x}{y}}^{(j)} \ket{\psi}
        \approxe{3 \sqrt{\varepsilon}} Z_{\mathrmpair{z}{x}}^{(j)} A_{1, \bm{\chi}_{j}}^{(j)} \ket{\psi}
        \approxe{\sqrt{2 \varepsilon}} Z_{\mathrmpair{z}{x}}^{(j)} X_{\mathrmpair{z}{x}}^{(j)} \ket{\psi} ,
    \end{equation}
    where this time for the first estimate (since we do not need to change the subscripts on the first operator) we use the property of regularization that gives
    \begin{equation}
        \regular{Z}_{\mathrmpair{z}{x}}^{(j)} \ket{\psi} \approxe{\sqrt{\varepsilon}} Z_{\mathrmpair{z}{x}}^{(j)} \ket{\psi}
    \end{equation}
    in light of \cref{eq:corr_rel_z_zx} (see \cref{sec:regularization}).
    Therefore, we can now combine \cref{eq:anticomm_bob_reg_xz,eq:anticomm_bob_reg_zx} conclude the second desired relation
    \begin{equation}
    \begin{split}
        \left\{ \regular{Q}_{1}^{(j)}, \regular{Q}_{3}^{(j)} \right\} \ket{\psi}
        & = \left\{ \regular{X}_{\mathrmpair{x}{y}}^{(j)}, \regular{Z}_{\mathrmpair{z}{x}}^{(j)} \right\} \ket{\psi} \\
        \overset{2 \mathopen{}\left( 4 + \sqrt{2} \right)\mathclose{} \sqrt{\varepsilon}}&{\approx} \left\{ X_{\mathrmpair{z}{x}}^{(j)}, Z_{\mathrmpair{z}{x}}^{(j)} \right\} \ket{\psi}
        = 0 .
    \end{split}
    \end{equation}

    To derive the third and final relation, we know that $Y_{\mathrmpair{z}{y}}^{(j)}$ anticommutes with $Z_{\mathrmpair{z}{y}}^{(j)}$.
    Now we can perform the same process as in deriving \cref{eq:anticomm_bob_reg_xz} to write both
    \begin{subequations}
    \begin{align}
        \regular{Y}_{\mathrmpair{x}{y}}^{(j)} \regular{Z}_{\mathrmpair{z}{x}}^{(j)} \ket{\psi}
        & \approxe{4 \sqrt{\varepsilon}} A_{3, \bm{\chi}_{j}}^{(j)} A_{2, \bm{\chi}_{j}}^{(j)} \ket{\psi}
        \approxe{\left( 1 + \sqrt{2} \right)\mathclose{} \sqrt{\varepsilon}} Y_{\mathrmpair{z}{y}}^{(j)} Z_{\mathrmpair{z}{y}}^{(j)} \ket{\psi} , \\
        \regular{Z}_{\mathrmpair{z}{x}}^{(j)} \regular{Y}_{\mathrmpair{x}{y}}^{(j)} \ket{\psi}
        & \approxe{4 \sqrt{\varepsilon}} A_{2, \bm{\chi}_{j}}^{(j)} A_{3, \bm{\chi}_{j}}^{(j)} \ket{\psi}
        \approxe{\left( 1 + \sqrt{2} \right)\mathclose{} \sqrt{\varepsilon}} Z_{\mathrmpair{z}{y}}^{(j)} Y_{\mathrmpair{z}{y}}^{(j)} \ket{\psi} .
    \end{align}
    \end{subequations}
    Finally, we conclude that
    \begin{equation}
    \begin{split}
        \left\{ \regular{Q}_{2}^{(j)}, \regular{Q}_{3}^{(j)} \right\} \ket{\psi}
        & = \left\{ \regular{Y}_{\mathrmpair{x}{y}}^{(j)}, \regular{Z}_{\mathrmpair{z}{x}}^{(j)} \right\} \ket{\psi} \\
        \overset{2 \mathopen{}\left( 5 + \sqrt{2} \right)\mathclose{} \sqrt{\varepsilon}}&{\approx} \left\{ Y_{\mathrmpair{z}{y}}^{(j)}, Z_{\mathrmpair{z}{y}}^{(j)} \right\} \ket{\psi}
        = 0
    \end{split}
    \end{equation}
    and have now shown all desired anticommutation relations for Bob.
\end{proof}

\begin{proposition}[Complex conjugation relation]
\label{prop:conj_rel}
    For any $\bm \chi \in \mathcal{S}$ and $1 \leq j < n$ we have
    \begin{equation}
        \ket{\psi} + A_{1, \bm{\chi}_{j}}^{(j)} A_{1, \bm{\chi}_{j+1}}^{(j+1)} A_{2, \bm{\chi}_{j}}^{(j)} A_{2, \bm{\chi}_{j+1}}^{(j+1)} A_{3, \bm{\chi}_{j}}^{(j)} A_{3, \bm{\chi}_{j+1}}^{(j+1)} \ket{\psi}
        \approxe{21 \sqrt{\varepsilon}} 0 .
    \end{equation}
\end{proposition}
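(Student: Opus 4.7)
The argument rests on a small algebraic fact about the Bob-side operators appearing in \cref{eq:request_conjugation_correlations}. For each $1 \le j < n$ and $q \in \{1, 2, 3\}$, write
\begin{equation*}
M_{q}^{(j)} = (-1)^{[q = 2]} \Gamma_{1}^{(j)} + (-1)^{[q = 1]} \Gamma_{2}^{(j)} + (-1)^{[q = 3]} \Gamma_{3}^{(j)} - \Gamma_{4}^{(j)} .
\end{equation*}
Since $\{\Gamma_{b}^{(j)}\}_{b}$ is a projective measurement, each $M_{q}^{(j)} = \sum_{b} c_{b}^{(q)} \Gamma_{b}^{(j)}$ with $c_{b}^{(q)} \in \{+1, -1\}$, so the three operators are pairwise commuting $\pm 1$-outcome observables. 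A direct enumeration of the signs gives $c_{b}^{(1)} c_{b}^{(2)} c_{b}^{(3)} = -1$ for every $b$, whence $M_{1}^{(j)} M_{2}^{(j)} M_{3}^{(j)} = - I$ on $\hilb{B}$.

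\textbf{Step 1 (extract from the request).} For each $q$ set $\bm{w}^{q} = (\chi_{1}, \dots, \chi_{j-1}, q, q, \chi_{j+2}, \dots, \chi_{n}) \in \mathcal{X}$, so that $A_{\bm{w}^{q}}^{(j)}$ and $A_{\bm{w}^{q}}^{(j+1)}$ share an input and therefore commute exactly, making $A_{\bm{w}^{q}}^{(j)} A_{\bm{w}^{q}}^{(j+1)} \otimes M_{q}^{(j)}$ a $\pm 1$-outcome observable. Its expectation on $\ket{\psi}$ is at least $1 - \varepsilon / 2$ by \cref{eq:request_conjugation_correlations}, so \cref{lem:state_estimate} combined with the involutivity of this observable yields
\begin{equation*}
A_{\bm{w}^{q}}^{(j)} A_{\bm{w}^{q}}^{(j+1)} \ket{\psi} \approxe{\sqrt{\varepsilon}} M_{q}^{(j)} \ket{\psi} .
\end{equation*}

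\textbf{Step 2 (transfer to $\bm{\chi}$).} Let $R_{q} = A_{q, \bm{\chi}_{j}}^{(j)} A_{q, \bm{\chi}_{j+1}}^{(j+1)}$. The shifts in \cref{eq:shortened_questions_single} place both $\bm{\chi}_{j}$ and the position-$j$ truncation of $\bm{w}^{q}$ inside $\mathcal{R}_{\bm{\chi}}^{(j)}$, and analogously at position $j+1$. The regularized symmetry relation \cref{eq:corr_rel_regular} (equivalently \cref{prop:symmetry}) then gives $A_{q, \bm{\chi}_{j}}^{(j)} \ket{\psi} \approxe{2\sqrt{\varepsilon}} \regular{Q}_{q}^{(j)} \ket{\psi}$ and the corresponding statements for the other three Alice factors. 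Substituting one Alice factor at a time, using exact Alice/Bob commutation and the unitarity of each $\regular{Q}_{q}$, both $R_{q} \ket{\psi}$ and $A_{\bm{w}^{q}}^{(j)} A_{\bm{w}^{q}}^{(j+1)} \ket{\psi}$ are shown close to the common intermediate state $\regular{Q}_{q}^{(j+1)} \regular{Q}_{q}^{(j)} \ket{\psi}$. Combined with Step 1 via the triangle inequality, careful bookkeeping of the $\sqrt{\varepsilon}$ contributions yields
\begin{equation*}
R_{q} \ket{\psi} \approxe{7\sqrt{\varepsilon}} M_{q}^{(j)} \ket{\psi} .
\end{equation*}

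\textbf{Step 3 (telescope) and main obstacle.} Since each $M_{q}^{(j)}$ acts only on $\hilb{B}$ and each $R_{q}$ only on $\hilb{A}$, they commute exactly, so pulling each Bob factor leftwards and applying Step 2 three times gives
\begin{equation*}
R_{1} R_{2} R_{3} \ket{\psi}
\approxe{7\sqrt{\varepsilon}} M_{3}^{(j)} R_{1} R_{2} \ket{\psi}
\approxe{7\sqrt{\varepsilon}} M_{3}^{(j)} M_{2}^{(j)} R_{1} \ket{\psi}
\approxe{7\sqrt{\varepsilon}} M_{3}^{(j)} M_{2}^{(j)} M_{1}^{(j)} \ket{\psi}
= -\ket{\psi} ,
\end{equation*}
after which adding $\ket{\psi}$ to both sides yields the claim. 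The delicate step is Step 2: the requested correlation controls only the \emph{same-question} pair $A_{\bm{w}^{q}}^{(j)} A_{\bm{w}^{q}}^{(j+1)}$, while the conclusion requires the pair $A_{q, \bm{\chi}_{j}}^{(j)} A_{q, \bm{\chi}_{j+1}}^{(j+1)}$ built from two \emph{different} questions (differing from $\bm{w}^{q}$ at positions $j+1$ and $j$ respectively). Bridging them forces several independent $\sqrt{\varepsilon}$ substitutions through the regularized Bob observables $\regular{Q}_{q}$, and it is the aggregation of these constants that must be controlled to reach the claimed $21\sqrt{\varepsilon}$ bound after telescoping over the three values of $q$.
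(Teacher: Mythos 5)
Your proposal is correct and follows essentially the same route as the paper: extract $A_{\bm w}^{(j)} A_{\bm w}^{(j+1)} \ket{\psi} \approxe{\sqrt{\varepsilon}} M_{q}^{(j)} \ket{\psi}$ from \cref{eq:request_conjugation_correlations} via \cref{lem:state_estimate}, bridge to the mixed-question pair $A_{q, \bm{\chi}_{j}}^{(j)} A_{q, \bm{\chi}_{j+1}}^{(j+1)}$ through the question-independent regularized Bob observables $\regular{Q}_{q}$ (arriving at the same $7\sqrt{\varepsilon}$ per value of $q$), and telescope using the sign identity equivalent to $M_{1}^{(j)} M_{2}^{(j)} M_{3}^{(j)} = -I$, i.e. $\sum_{b} \Gamma_{b}^{(j)} = I$. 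The only cosmetic difference is that the paper builds $\bm w$ from the single fixed special question $\bm\chi^{\prime}$ for which the correlation is actually requested and lets the $\regular{Q}_{q}$ carry the statement to every $\bm\chi \in \mathcal{S}$, whereas you write $\bm w^{q}$ directly from the target $\bm\chi$; your bridging argument covers this regardless.
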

\begin{proof}
    Suppose that $\bm \chi^{\prime} \in \mathcal{S}$ is the special question for which \cref{eq:request_conjugation_correlations} is satisfied.
    For all $q \in \{1, 2, 3\}$, we can see immediately from the definitions in \cref{eq:shortened_questions_single,eq:shortened_questions} that $\bm{w}_{j} \in \mathcal{R}_{\bm \chi}^{(j)} \subset \mathcal{R}^{(j)}$ and $\bm{w}_{j+1} \in \mathcal{R}_{\bm \chi}^{(j+1)} \subset \mathcal{R}^{(j+1)}$, where $\bm w = (\chi_{1}^{\prime}, \dots, \chi_{j-1}^{\prime}, q, q, \chi_{j+2}^{\prime}, \dots, \chi_{n}^{\prime})$.
    Similarly, we can see for all $\bm \chi \in \mathcal{S}$ that $\bm{\chi}_{i} \in \mathcal{R}_{\bm \chi}^{(i)} \subset \mathcal{R}^{(i)}$ for all $i$.

    We can thus write for all $\bm \chi$ and $q$ that
    \begin{equation}
    \label{eq:conj_to_isom_obs}
        A_{q, \bm{\chi}_{j}}^{(j)} A_{q, \bm{\chi}_{j+1}}^{(j+1)} \ket{\psi}
        \approxe{3 \sqrt{\varepsilon}} \regular{Q}_{q}^{(j+1)} Q_{q}^{(j)} \ket{\psi}
        \approxe{3 \sqrt{\varepsilon}} A_{\bm w}^{(j)} A_{\bm w}^{(j+1)} \ket{\psi} ,
    \end{equation}
    where both estimates use \cref{eq:corr_rel,eq:corr_rel_regular}, and again $\bm w = (\chi_{1}^{\prime}, \dots, \chi_{j-1}^{\prime}, q, q, \chi_{j+2}^{\prime}, \dots, \chi_{n}^{\prime})$.

    Notice now that the operators
    \begin{subequations}
    \begin{align}
          & \Gamma_{1}^{(j)} - \Gamma_{2}^{(j)} + \Gamma_{3}^{(j)} - \Gamma_{4}^{(j)} , \\
        - & \Gamma_{1}^{(j)} + \Gamma_{2}^{(j)} + \Gamma_{3}^{(j)} - \Gamma_{4}^{(j)} , \\
          & \Gamma_{1}^{(j)} + \Gamma_{2}^{(j)} - \Gamma_{3}^{(j)} - \Gamma_{4}^{(j)} ,
    \end{align}
    \end{subequations}
    are unitary, since the $\Gamma_{b}^{(j)}$ as defined in \cref{eq:local_proj_bob_untrusted_bell} form a projective measurement for each $j$.
    Alice's operators appearing in \cref{eq:request_conjugation_correlations} are also unitary by definition in \cref{eq:model_observables_alice}.
    We can thus apply \cref{lem:state_estimate} to \cref{eq:request_conjugation_correlations}.
    Together with \cref{eq:conj_to_isom_obs} holding for all $q$, this implies
    \begin{subequations}
    \begin{align}
        A_{1, \bm{\chi}_{j}}^{(j)} A_{1, \bm{\chi}_{j+1}}^{(j+1)} \ket{\psi}
        & \approxe{7 \sqrt{\varepsilon}} \left( \Gamma_{1}^{(j)} - \Gamma_{2}^{(j)} + \Gamma_{3}^{(j)} - \Gamma_{4}^{(j)} \right) \ket{\psi} , \\
        A_{2, \bm{\chi}_{j}}^{(j)} A_{2, \bm{\chi}_{j+1}}^{(j+1)} \ket{\psi}
        & \approxe{7 \sqrt{\varepsilon}} \left( - \Gamma_{1}^{(j)} + \Gamma_{2}^{(j)} + \Gamma_{3}^{(j)} - \Gamma_{4}^{(j)} \right) \ket{\psi} , \\
        A_{3, \bm{\chi}_{j}}^{(j)} A_{3, \bm{\chi}_{j+1}}^{(j+1)} \ket{\psi}
        & \approxe{7 \sqrt{\varepsilon}} \left( \Gamma_{1}^{(j)} + \Gamma_{2}^{(j)} - \Gamma_{3}^{(j)} - \Gamma_{4}^{(j)} \right) \ket{\psi} .
    \end{align}
    \end{subequations}
    Therefore,
    \begin{equation}
    \begin{split}
        A_{1, \bm{\chi}_{j}}^{(j)} A_{1, \bm{\chi}_{j+1}}^{(j+1)} A_{2, \bm{\chi}_{j}}^{(j)} A_{2, \bm{\chi}_{j+1}}^{(j+1)} A_{3, \bm{\chi}_{j}}^{(j)} A_{3, \bm{\chi}_{j+1}}^{(j+1)} \ket{\psi}
        & \approxe{7 \sqrt{\varepsilon}} \left( \Gamma_{1}^{(j)} + \Gamma_{2}^{(j)} - \Gamma_{3}^{(j)} - \Gamma_{4}^{(j)} \right)\mathclose{} A_{1, \bm{\chi}_{j}}^{(j)} A_{1, \bm{\chi}_{j+1}}^{(j+1)} A_{2, \bm{\chi}_{j}}^{(j)} A_{2, \bm{\chi}_{j+1}}^{(j+1)} \ket{\psi} \\
        & \approxe{7 \sqrt{\varepsilon}} \left( - \Gamma_{1}^{(j)} + \Gamma_{2}^{(j)} - \Gamma_{3}^{(j)} + \Gamma_{4}^{(j)} \right)\mathclose{} A_{1, \bm{\chi}_{j}}^{(j)} A_{1, \bm{\chi}_{j+1}}^{(j+1)} \ket{\psi} \\
        & \approxe{7 \sqrt{\varepsilon}} - \left( \Gamma_{1}^{(j)} + \Gamma_{2}^{(j)} + \Gamma_{3}^{(j)} + \Gamma_{4}^{(j)} \right) \ket{\psi} .
    \end{split}
    \end{equation}
    Since $\sum_{b} \Gamma_{b}^{(j)} = I$, the result follows.
\end{proof}

\section{Discussion}
\label{sec:discussion}

We have shown that our self-testing protocol exhibits all of the requirements to be combined with Fitzsimons--Kashefi-type verifiable blind quantum computation delegation schemes to achieve fully device-independent security.
Furthermore, our protocol achieves several properties that are desirable for future practical implementations.
Of particular note is that, despite being able to certify the remote preparation of a wide variety of states in parallel, only a rudimentary quantum measurement device is needed by the client party.
The input randomness required for generating questions is also small, scaling logarithmically in the number of qubits for the client and with constant-sized questions being sent to the remote server.
Our combined protocol would also enjoy many of the benefits brought by further developments in VBQC protocols.
It can already be optimized, for example, by starting with different resource state structures \cite{kashefi2017optimised,xu2020improved,kashefi2021securing}, and can be made fault-tolerant as in \cite{gheorghiu2015robustness}.
Many of the properties shown of our self-test are also desirable in other applications (especially those involving device-independent state preparation), and as such our work is not restricted to use in delegated quantum computation.
In such cases, it may not be necessary to use as many possible input questions as we have done, and simpler special cases of our tests could be used as is necessary.

\subparagraph{Resource consumption.}

Let us comment on the resources used by our protocol by first focusing on the self-testing and remote state preparation components.
Using standard statistical techniques, achieving a fixed statistical confidence (of say 99\%) for a given error tolerance $\varepsilon$ in our protocol is possible in $\bigO(1 / \varepsilon^{2})$ experimental trials of each of $\bigO(n^{2}) \lvert \mathcal{S} \rvert$ questions.
Using the conservative self-testing robustness estimate we expect to be achievable, some fixed constant distance between physical and reference states in our self-testing and remote state preparation statements (\cref{thm:protocol_isometry,cor:protocol_isometry,thm:state_preparation}) would require an error tolerance $\varepsilon = \Omega(1 / n^{4})$.
Thus, a given fixed robustness can be achieved (with 99\% confidence) in $\bigO(n^{8})$ experimental trials per question.
For the sake of argument, let us take our number of special questions to be $\lvert \mathcal{S} \rvert = \bigO(n)$, resulting in $\bigO(n^{3})$ questions overall and a total of $\bigO(n^{11})$ trials.
Since each of our questions consume $\bigO(\log{n})$ bits of randomness, the total self-testing cost is $\bigO(n^{11} \log{n})$ bits in this case.

A circuit with $g$ gates may be delegated using $N = \bigO(g)$ qubits \cite{childs2005unified,broadbent2009universal,fitzsimons2017unconditionally}.
Starting with the dotted triple graph version of the brickwork resource state used for composability in the robust FK protocol of \cite{gheorghiu2015robustness}, it is possible to achieve exponential security with a number of repetitions that is constant in the size of the computation, all the while conserving the composability and fault tolerance properties of the protocol \cite[Appendix~F]{kashefi2017optimised}.
In this case, the overhead due to verification for a fixed level of security is $\bigO(N)$ additional qubits prepared and $\bigO(N)$ bits of total communication (thanks to the constant number of repetitions).
The total number of qubits that must be prepared is then $n = \bigO(g)$ and the \emph{total} computation cost also scales as $\bigO(g)$.
Errors in soundness and completeness of the robust FK protocol due to a nonideal input state depend only of the trace distance of this state from the ideal \cite{gheorghiu2015robustness}.
Therefore, to obtain a correct answer with some fixed high probability is estimated to cost $\bigO(g^{11} \log{g})$ bits of self-testing resources and $\bigO(g)$ total computation resources, resulting in an overall resource cost estimate for our composite device-independent VBQC protocol that is $\bigO(g^{11} \log{g})$.

Clearly, despite outperforming a number of previous works in this metric, this is not ideal (the state-of-the-art is $\Theta(g \log{g})$ \cite{coladangelo2019verifier}).
It should be noted, however, that while our extra nonlinear cost enters entirely from self-testing, resource estimation is performed under the assumption of noiseless and honest provers, with errors originating only from statistical analysis.
In a more realistic setting with nonadversarial provers contending with depolarizing experimental noise local to each of their $n$ EPR pairs, the comparison is less transparent.
The error tolerance $\varepsilon$ we can achieve would only be worsened by a constant factor (depending on the level of noise), as it refers to outcomes for fixed-sized chunks of registers (one or two EPR pairs each).
Meanwhile, self-testing protocols with robustness depending on an $\epsilon$ that instead represents global failure rate in its tests would see $\epsilon$ increasing to some nonzero constant exponentially quickly in $n$ due to such noise.
In this case, even with robustness guarantees scaling polynomially in $\epsilon$ alone (as $\epsilon \to 0$), it would be extraordinarily difficult to achieve the fidelities required in such an experiment as the number of qubits grows.
Further discussion of this can be found in \cite{arnon2018noise}.

\subparagraph{Future works.}

While we believe that the robustness bounds used for our estimation are conservative and achievable using standard existing techniques, we have opted to wait for improved analytic bounds to be developed (perhaps using techniques such as in \cite{kaniewski2016analytic,kaniewski2017self,gowers2017inverse,natarajan2017quantum,natarajan2018low,sekatski2018certifying,li2019analytic,supic2021device}) that are applicable to local error tolerances.
Any analytic improvements on results of the form of our \cref{thm:single_observable_isometry} or \cref{lem:single_to_multiple_observable} would be of direct consequence to our resource costs.
For practical applicability, numerical optimization approaches such as those using semidefinite programming have yielded much better robustness values (and apparent scaling) than those analytically derived \cite{wu2016device,navascues2007bounding,navascues2008convergent,navascues2015almost,yang2014robust,bancal2015physical,supic2021device,wang2016all,wu2014robust,agresti2021experimental}.
Advances of the computational efficiency of such techniques are, thus, also of great interest.

In case a more technologically capable client device is acceptable, it may be possible to adapt the rigidity results used in \cite{coladangelo2019verifier} to prepare input states to FK-type VBQC protocols.
This would likely lead to a protocol whose total resources scale as $\bigO(g)$ in the noiseless case, an improvement by a logarithmic factor over the state-of-the-art.
Whether the more recent self-testing protocol of \cite{natarajan2018low} with smaller communication could also be used for the required state preparation is an open question.
Device-independent ``one-shot'' tests in the spirit of \cite{arnon2018noise,arnon2019device} could also be studied in the context of state preparation.

\section*{Acknowledgments}

I extend my thanks to Alexandru Gheorghiu and Petros Wallden for many helpful discussions and comments.
I also thank Thomas Vidick for his advice about a technical question.
This work was supported by STFC Grant No. ST/W006537/1 and EPSRC studentship funding under Grant No. EP/R513209/1.

\appendix

\section{Estimation lemmas}
\label{sec:estimation_lemmas}

\begin{proof}[Proof of \cref{lem:trace_dist_bound}]
    First note that for any trace-class operator $T$ we have $\lVert T \rVert_{1} \leq \sqrt{\rank(T)} \lVert T \rVert_{2}$, where $\lVert T \rVert_{2} = \sqrt{\tr(T^{\dagger} T)}$ denotes the Hilbert--Schmidt norm.
    Since $\rank(\proj{u} - \proj{v}) \leq 2$, we thus have
    \begin{equation}
    \label{eq:special_trace_hs_bound}
        \lVert \proj{u} - \proj{v} \rVert_{1} \leq \sqrt{2} \lVert \proj{u} - \proj{v} \rVert_{2} .
    \end{equation}
    We then evaluate
    \begin{equation}
    \label{eq:special_hs}
    \begin{split}
        2 \lVert \proj{u} - \proj{v} \rVert_{2}^{2}
        & = 2 \tr \mathopen{}\left[ (\proj{u} - \proj{v})^2 \right]\mathclose{} \\
        & = 2 \lVert \ket{u} \rVert^{4} + 2 \lVert \ket{v} \rVert^{4} - 4 \lvert \braket{u}{v} \rvert^{2} .
    \end{split}
    \end{equation}
    Using the definition of the norm induced by the inner product,
    \begin{equation}
    \label{eq:overlap_bound}
        2 \lvert \braket{u}{v} \rvert
        \geq 2 \Re{\braket{u}{v}}
        = \lVert \ket{u} \rVert^{2} + \lVert \ket{v} \rVert^{2} - \lVert \ket{u} - \ket{v} \rVert^{2} .
    \end{equation}
    For simplicity, let us adopt the notation
    \begin{subequations}
    \begin{align}
        \delta & = \lVert \ket{u} - \ket{v} \rVert , \\
        M & = \max \{ \lVert \ket{u} \rVert, \lVert \ket{v} \rVert \} .
    \end{align}
    \end{subequations}
    Inserting \cref{eq:overlap_bound} into \cref{eq:special_hs} and using that $\lVert \ket{u} \rVert^{2} + \lVert \ket{v} \rVert^{2} \leq 2M^{2}$, we have
    \begin{equation}
        2 \lVert \proj{u} - \proj{v} \rVert_{2}^{2}
        \leq \left( \lVert \ket{u} \rVert^{2} - \lVert \ket{v} \rVert^{2} \right)^{2}
        + 4 M^{2} \delta^{2}
        - \delta^{4} .
    \end{equation}
    It can be seen (similarly to the reverse triangle inequality) that
    \begin{equation}
    \begin{split}
        \left\lvert \lVert \ket{u} \rVert^{2} - \lVert \ket{v} \rVert^{2} \right\rvert
        & \leq \lVert \ket{u} - \ket{v} \rVert^{2} \\
        & \phantom{{}\leq{}} + 2 \max(\lVert \ket{u} \rVert, \lVert \ket{v} \rVert) \lVert \ket{u} - \ket{v} \rVert .
    \end{split}
    \end{equation}
    Inserting this into the previous equation gives
    \begin{equation}
        2 \lVert \proj{u} - \proj{v} \rVert_{2}^{2}
        \leq 8M^{2} \delta^{2} + 4M \delta^{3} .
    \end{equation}
    Since $\lVert \ket{u} \rVert \leq 1$ and $\lVert \ket{v} \rVert \leq 1$, the triangle inequality implies $\delta \leq 2$, and so $\delta^{3} \leq 2 \delta^{2}$.
    We thus have
    \begin{equation}
        2 \lVert \proj{u} - \proj{v} \rVert_{2}^{2}
        \leq 8 \mathopen{}\left( M^{2} + M \right)\mathclose{} \delta^{2} .
    \end{equation}
    Combining this with \cref{eq:special_trace_hs_bound} gives
    \begin{equation}
        \lVert \proj{u} - \proj{v} \rVert_{1}
        \leq \delta \sqrt{8 \mathopen{}\left( M^{2} + M \right)\mathclose{}} .
    \end{equation}
    Finally, since by assumption $0 \leq M \leq 1$, we get
    \begin{equation}
        \lVert \proj{u} - \proj{v} \rVert_{1}
        \leq 4 \delta
    \end{equation}
    as required.
\end{proof}

\section{Post-measurement robustness probability}
\label{sec:robust_prob}

\begin{proof}[Proof of \cref{lem:general_robust_prob}]
    For the case where $\delta = 0$, it is clear from \cref{eq:norm_squares_sum_bound} that for all $\sigma \in \Sigma$ satisfying $\pi(\sigma) > 0$ we have $\ket{u_{\sigma}^{\omega}} = \ket{v_{\sigma}^{\omega}}$ for all $\omega \in \Omega$, and thus we have that $\Pr(D \leq 0) = 1$.
    We may henceforth assume that $\delta > 0$.
    First, let us introduce a new random variable $N$ on $\Sigma \times \Omega$ defined as
    \begin{equation}
        N(\sigma, \omega) =
        \begin{cases}
            \lVert \ket{\hat{u}_{\sigma}^{\omega}} - \ket{\hat{v}_{\sigma}^{\omega}} \rVert & \text{if $p(\sigma, \omega) > 0$,} \\
            0 & \text{if $p(\sigma, \omega) = 0$.}
        \end{cases}
    \end{equation}
    Let $a > 0$.
    By \cref{lem:trace_dist_bound} we have $D(\sigma, \omega) \leq N(\sigma, \omega)$ for all $\sigma \in \Sigma$ and $\omega \in \Omega$, and thus
    \begin{equation}
    \label{eq:trace_vector_norm_probs}
        \Pr(D \geq a) \leq \Pr(N \geq a) .
    \end{equation}
    We now bound the expected value of $N^{2}$.
    We evaluate
    \begin{equation}
    \label{eq:expected_norm_squares}
    \begin{split}
        \E \mathopen{}\left( N^{2} \right)\mathclose{}
        & = \sum_{\sigma \in \Sigma} \pi(\sigma) \sum_{\omega \in \Omega} p_{\sigma}(\omega) N(\sigma, \omega)^{2} \\
        & = \sum_{\sigma \in \Sigma} \pi(\sigma) \sum_{\omega \in \Omega} \lVert \ket{u_{\sigma}^{\omega}} \rVert^{2} N(\sigma, \omega)^{2} \\
        & = \sum_{\sigma \in \Sigma} \pi(\sigma) \sum_{\omega \in \Omega} \mathopen{}\bigl\lVert \ket{u_{\sigma}^{\omega}} - \lVert \ket{u_{\sigma}^{\omega}} \rVert \ket{\hat{v}_{\sigma}^{\omega}} \bigr\rVert^{2}
    \end{split}
    \end{equation}
    Now note that
    \begin{equation}
    \label{eq:expected_triangle_term}
    \begin{split}
        \bigl\lVert \lVert \ket{u_{\sigma}^{\omega}} \rVert \ket{\hat{v}_{\sigma}^{\omega}} - \ket{v_{\sigma}^{\omega}} \bigr\rVert
        & = \bigl\lvert \lVert \ket{u_{\sigma}^{\omega}} \rVert - \lVert \ket{v_{\sigma}^{\omega}} \rVert \bigr\rvert \\
        & \leq \lVert \ket{u_{\sigma}^{\omega}} - \ket{v_{\sigma}^{\omega}} \rVert ,
    \end{split}
    \end{equation}
    where we have used the reverse triangle inequality.
    Starting with the triangle inequality, we can then write for the terms of \cref{eq:expected_norm_squares} that
    \begin{equation}
    \begin{split}
        \bigl\lVert \ket{u_{\sigma}^{\omega}} - \lVert \ket{u_{\sigma}^{\omega}} \rVert \ket{\hat{v}_{\sigma}^{\omega}} \bigr\rVert
        & \leq \lVert \ket{u_{\sigma}^{\omega}} - \ket{v_{\sigma}^{\omega}} \rVert \\
        & \phantom{{}\leq{}} + \bigl\lVert \ket{v_{\sigma}^{\omega}} - \lVert \ket{u_{\sigma}^{\omega}} \rVert \ket{\hat{v}_{\sigma}^{\omega}} \bigr\rVert \\
        & \leq 2 \lVert \ket{u_{\sigma}^{\omega}} - \ket{v_{\sigma}^{\omega}} \rVert ,
    \end{split}
    \end{equation}
    where the final inequality uses \cref{eq:expected_triangle_term}.
    We thus have
    \begin{equation}
    \label{eq:expected_norm_squares_bound}
        \E \mathopen{}\left( N^{2} \right)\mathclose{}
        \leq 4 \sum_{\sigma \in \Sigma} \pi(\sigma) \sum_{\omega \in \Omega} \lVert \ket{u_{\sigma}^{\omega}} - \ket{v_{\sigma}^{\omega}} \rVert^{2}
        \leq 4 \delta^{2} ,
    \end{equation}
    where the final inequality comes from the assumption of \cref{eq:norm_squares_sum_bound}.
    Markov's inequality states that
    \begin{equation}
    \begin{split}
        \Pr(N \geq a)
        & = \Pr \mathopen{}\left( N^{2} \geq a^{2} \right) \\
        & \leq \frac{1}{a^{2}} \E \mathopen{}\left( N^{2} \right)\mathclose{} .
    \end{split}
    \end{equation}
    Combining this with \cref{eq:trace_vector_norm_probs,eq:expected_norm_squares_bound} gives
    \begin{equation}
        \Pr(D \geq a)
        \leq \frac{4 \delta^{2}}{a^{2}} .
    \end{equation}
    Taking complements yields
    \begin{equation}
        \Pr(D \leq a)
        \geq \Pr(D < a)
        \geq 1 - \frac{4 \delta^{2}}{a^{2}} .
    \end{equation}
    Finally, choosing the parameter $a = \delta^{c}$ gives the desired result.
\end{proof}

\section{Proof of single-copy self-test}
\label{sec:single_test_proof}

Here, we exhibit a proof of \cref{prop:swap_and_kickback_isoms} in the ideal case that $\eta = 0$.
The robust case of $\eta > 0$ is discussed in \cref{sec:single_test_robustness}.

\begin{proof}[Proof of \cref{prop:swap_and_kickback_isoms} (ideal case)]
    We first consider the isometry applied to the state $\ket{\psi}_{\hilb{A} \hilb{B}}$.
    After the ``swap'' stage of the circuit, given by $W$, we have the state
    \begin{equation}
    \label{eq:swapped_k}
    \begin{split}
        W \ket{\psi}
        = \frac{1}{4} [
        &   \ket{00}_{\hilb{A}^{\prime} \hilb{B}^{\prime}} \otimes (I - i S_{2} S_{1}) ( I + i T_{2} T_{1}) \ket{\psi} \\
        & + \ket{01}_{\hilb{A}^{\prime} \hilb{B}^{\prime}} \otimes (I - i S_{2} S_{1}) T_{1} (I - i T_{2} T_{1}) \ket{\psi} \\
        & + \ket{10}_{\hilb{A}^{\prime} \hilb{B}^{\prime}} \otimes S_{1} (I + i S_{2} S_{1}) (I + i T_{2} T_{1}) \ket{\psi} \\
        & + \ket{11}_{\hilb{A}^{\prime} \hilb{B}^{\prime}} \otimes S_{1} (I + i S_{2} S_{1}) T_{1} (I - i T_{2} T_{1}) \ket{\psi}
        ] .
    \end{split}
    \end{equation}
    We now simplify this expression.
    Using the relations of the statement we can write
    \begin{equation}
    \label{eq:swap_vanish_terms}
        (I \pm i S_{2} S_{1}) (I \pm i T_{2} T_{1}) \ket{\psi} = 0 .
    \end{equation}
    Thus, the terms corresponding to ancilla states $\ket{01}_{\hilb{A}^{\prime} \hilb{B}^{\prime}}$ and $\ket{10}_{\hilb{A}^{\prime} \hilb{B}^{\prime}}$ vanish, and we are left with
    \begin{equation}
    \label{eq:swap_vanished}
    \begin{split}
        W \ket{\psi}
        = \frac{1}{4} [
        &   \ket{00}_{\hilb{A}^{\prime} \hilb{B}^{\prime}} \otimes (I - i S_{2} S_{1}) (I + i T_{2} T_{1}) \ket{\psi} \\
        & + \ket{11}_{\hilb{A}^{\prime} \hilb{B}^{\prime}} \otimes S_{1} (I + i S_{2} S_{1}) T_{1} (I - i T_{2} T_{1}) \ket{\psi}
        ] .
    \end{split}
    \end{equation}
    Using \cref{rel:acomm_single}, we have that
    \begin{subequations}
    \label{eq:swap_clean}
    \begin{align}
        S_{1} (I + i S_{2} S_{1}) \ket{\psi}
        & = (I - i S_{2} S_{1}) S_{1} \ket{\psi} , \\
        T_{1} (I - i T_{2} T_{1}) \ket{\psi}
        & = (I + i T_{2} T_{1}) T_{1} \ket{\psi} .
    \end{align}
    \end{subequations}
    Thus, using these in addition to \cref{rel:corr_single} and the fact that our observables are involutory results in
    \begin{equation}
        W \ket{\psi}
        = \ket{\Phi^{+}}_{\hilb{A}^{\prime} \hilb{B}^{\prime}} \otimes \frac{1}{2 \sqrt{2}} (I - i S_{2} S_{1}) (I + i T_{2} T_{1}) \ket{\psi} .
    \end{equation}
    Using the relations, and again that our observables are involutory, the state simplifies to
    \begin{equation}
        W \ket{\psi} = \ket{\Phi^{+}}_{\hilb{A}^{\prime} \hilb{B}^{\prime}} \otimes \ket{\varphi} .
    \end{equation}
    With this, we have now extracted the desired maximally entangled state from our initial unknown state.
    In a similar fashion, it can be shown that
    \begin{subequations}
    \begin{align}
        W S_{1} \ket{\psi}
        & = \sigma_{\mathrm{x}}^{\hilb{B}^{\prime}} \ket{\Phi^{+}}_{\hilb{A}^{\prime} \hilb{B}^{\prime}} \otimes \ket{\varphi} , \\
        W S_{2} \ket{\psi}
        & = \sigma_{\mathrm{y}}^{\hilb{B}^{\prime}} \ket{\Phi^{+}}_{\hilb{A}^{\prime} \hilb{B}^{\prime}} \otimes \ket{\varphi} .
    \end{align}
    \end{subequations}
    Furthermore, although with a little more work (as this is the case where complex conjugation will later become relevant), it can be shown that
    \begin{equation}
        W S_{2} \ket{\psi}
        = \sigma_{\mathrm{z}}^{\hilb{B}^{\prime}} \ket{\Phi^{+}}_{\hilb{A}^{\prime} \hilb{B}^{\prime}} \otimes S_{3} \ket{\varphi} .
    \end{equation}

    We now apply the ``phase kickback'' stage of the isometry, given by $K$, to the above simplified states.
    We suppress the extracted state of the primed ancillae in our notation, as it is entirely unaffected by $K$.
    When the state is $\ket{\varphi}$, this gives
    \begin{equation}
    \begin{split}
        K \ket{\varphi}
        = \frac{1}{4 \sqrt{2}} \mathopen{}\bigl[
        & \ket{0}_{\hilb{A}^{\prime\prime}} \ket{0}_{\hilb{B}^{\prime\prime}}
        \otimes (I + S_{3}) (I + T_{3}) (I + i T_{2} T_{1}) \ket{\psi}
        + \ket{0}_{\hilb{A}^{\prime\prime}} \ket{1}_{\hilb{B}^{\prime\prime}}
        \otimes (I + S_{3}) (I - T_{3}) (I + i T_{2} T_{1}) \ket{\psi} \\
        & + \ket{1}_{\hilb{A}^{\prime\prime}} \ket{0}_{\hilb{B}^{\prime\prime}}
        \otimes (I - S_{3}) (I + T_{3}) (I + i T_{2} T_{1}) \ket{\psi}
        + \ket{1}_{\hilb{A}^{\prime\prime}} \ket{1}_{\hilb{B}^{\prime\prime}}
        \otimes (I - S_{3}) (I - T_{3}) (I + i T_{2} T_{1}) \ket{\psi}
        \bigr]\mathclose{} .
    \end{split}
    \end{equation}
    We now simplify this expression.
    Since $S_{3} (I + i T_{2} T_{1}) \ket{\psi} = T_{3} (I + i T_{2} T_{1}) \ket{\psi}$ by \cref{rel:corr,rel:acomm}, and using that $(I \pm T_{3}) (I \pm T_{3}) = 2 (I \pm T_{3})$ and $(I \pm T_{3}) (I \mp T_{3}) = 0$, we get
    \begin{equation}
    \label{eq:kickback_single_simplified}
        K \ket{\varphi}
        = \ket{0}_{\hilb{A}^{\prime\prime}} \ket{0}_{\hilb{B}^{\prime\prime}}
        \otimes \frac{I + T_{3}}{2} \ket{\varphi}
        + \ket{1}_{\hilb{A}^{\prime\prime}} \ket{1}_{\hilb{B}^{\prime\prime}}
        \otimes \frac{I - T_{3}}{2} \ket{\varphi} .
    \end{equation}
    Otherwise, when the state is $S_{3} \ket{\varphi}$, we have
    \begin{equation}
    \begin{split}
        K S_{3} \ket{\varphi}
        = \frac{1}{4 \sqrt{2}} \mathopen{}\bigl[
        & \ket{0}_{\hilb{A}^{\prime\prime}} \ket{0}_{\hilb{B}^{\prime\prime}}
        \otimes (I + S_{3}) S_{3} (I + T_{3}) (I + i T_{2} T_{1}) \ket{\psi}
        + \ket{0}_{\hilb{A}^{\prime\prime}} \ket{1}_{\hilb{B}^{\prime\prime}}
        \otimes (I + S_{3}) S_{3} (I - T_{3}) (I + i T_{2} T_{1}) \ket{\psi} \\
        & + \ket{1}_{\hilb{A}^{\prime\prime}} \ket{0}_{\hilb{B}^{\prime\prime}}
        \otimes (I - S_{3}) S_{3} (I + T_{3}) (I + i T_{2} T_{1}) \ket{\psi}
        + \ket{1}_{\hilb{A}^{\prime\prime}} \ket{1}_{\hilb{B}^{\prime\prime}}
        \otimes (I - S_{3}) S_{3} (I - T_{3}) (I + i T_{2} T_{1}) \ket{\psi}
        \bigr]\mathclose{} .
    \end{split}
    \end{equation}
    Since $S_{3}$ is involutory, we have $(I + S_{3}) S_{3} = (I + S_{3})$ and $(I - S_{3}) S_{3} = - (I - S_{3})$.
    By the same argument as before, we arrive at
    \begin{equation}
    \label{eq:kickback_single_z_simplified}
    \begin{split}
        K S_{3} \ket{\varphi}
        ={} & \ket{0}_{\hilb{A}^{\prime\prime}} \ket{0}_{\hilb{B}^{\prime\prime}}
        \otimes \frac{I + T_{3}}{2} \ket{\varphi} \\
        & - \ket{1}_{\hilb{A}^{\prime\prime}} \ket{1}_{\hilb{B}^{\prime\prime}}
        \otimes \frac{I - T_{3}}{2} \ket{\varphi} .
    \end{split}
    \end{equation}
    Upon defining $\ket{\xi_{0}}$ and $\ket{\xi_{1}}$ as in the statement, we have now finished applying the isometries.
\end{proof}

\section{Robustness of single-copy self-test}
\label{sec:single_test_robustness}

Here, we expand upon the proof of the ideal case in which $\eta = 0$ discussed in \cref{sec:single_test_proof}.
We fill in details concerning the robustness of that proof, extending it to handle also cases where $\eta > 0$, and thus achieving the full claim of \cref{prop:swap_and_kickback_isoms}.

\begin{proof}[Proof of \cref{prop:swap_and_kickback_isoms} (robustness)]
    Expanding the left-hand side of \cref{eq:swap_vanish_terms} and taking its norm gives
    \begin{equation}
    \label{eq:swap_vanish_norm}
    \begin{split}
        & \lVert (I \pm i S_{2} S_{1}) (I \pm i T_{2} T_{1}) \ket{\psi} \rVert \\
        & \quad = \lVert (I - S_{2} S_{1} T_{2} T_{1}) \ket{\psi} \pm i (S_{2} S_{1} + T_{2} T_{1}) \ket{\psi} \rVert \\
        & \quad \leq \lVert \ket{\psi} - S_{2} S_{1} T_{2} T_{1} \ket{\psi} \rVert + \lVert S_{2} S_{1} \ket{\psi} + T_{2} T_{1} \ket{\psi} \rVert .
    \end{split}
    \end{equation}
    Using \cref{rel:corr_single} and that the observables are both unitary and involutory, we have
    \begin{equation}
        S_{q} T_{q} \ket{\psi}
        \approxe{\eta} S_{q} S_{q} \ket{\psi}
        = \ket{\psi} .
    \end{equation}
    Since norms are preserved under unitary operations, we can then bound the norm of all the expressions
    \begin{equation}
    \label{eq:swap_vanish_norm_1}
    \begin{split}
        \lVert \ket{\psi} - S_{2} S_{1} T_{2} T_{1} \ket{\psi} \rVert
        & = \lVert S_{2} S_{1} \ket{\psi} - T_{1} T_{2} \ket{\psi} \rVert \\
        & = \lVert S_{1} T_{1} \ket{\psi} - S_{2} T_{2} \ket{\psi} \rVert \\
        & \leq 2 \eta .
    \end{split}
    \end{equation}
    Now we have bounded the first term of \cref{eq:swap_vanish_norm}.
    For the second term, using \cref{rel:acomm_single} gives
    \begin{equation}
        S_{2} S_{1} \ket{\psi} + T_{2} T_{1} \ket{\psi}
        \approxe{\eta} S_{2} S_{1} \ket{\psi} - T_{1} T_{2} \ket{\psi} ,
    \end{equation}
    for which we have already bounded the norm.
    Thus,
    \begin{equation}
    \label{eq:swap_vanish_norm_2}
        \lVert S_{2} S_{1} \ket{\psi} + T_{2} T_{1} \ket{\psi} \rVert
        \leq 3 \eta .
    \end{equation}
    Combining \cref{eq:swap_vanish_norm,eq:swap_vanish_norm_1,eq:swap_vanish_norm_2}, we arrive at a robust version of \cref{eq:swap_vanish_terms}.
    That is
    \begin{equation}
        (I \pm i S_{2} S_{1}) (I \pm i T_{2} T_{1}) \ket{\psi}
        \approxe{5 \eta} 0 .
    \end{equation}
    This immediately allows us to write a robust version of \cref{eq:swap_vanished}
    \begin{equation}
    \label{eq:swap_vanished_robust}
    \begin{split}
        W \ket{\psi}
        \approxe{5 \eta / 2} \frac{1}{4} [
        &   \ket{00}_{\hilb{A}^{\prime} \hilb{B}^{\prime}} \otimes (I - i S_{2} S_{1}) (I + i T_{2} T_{1}) \ket{\psi} \\
        & + \ket{11}_{\hilb{A}^{\prime} \hilb{B}^{\prime}} \otimes S_{1} (I + i S_{2} S_{1}) T_{1} (I - i T_{2} T_{1}) \ket{\psi}
        ] .
    \end{split}
    \end{equation}
    We now examine the second term of \cref{eq:swap_vanished_robust}.
    We can write \cref{eq:swap_clean} robustly as
    \begin{subequations}
    \begin{align}
        \label{eq:swap_clean_robust_alice}
        S_{1} (I + i S_{2} S_{1}) \ket{\psi}
        & \approxe{\eta} (I - i S_{2} S_{1}) S_{1} \ket{\psi} , \\
        \label{eq:swap_clean_robust_bob}
        T_{1} (I - i T_{2} T_{1}) \ket{\psi}
        & \approxe{\eta} (I + i T_{2} T_{1}) T_{1} \ket{\psi} .
    \end{align}
    \end{subequations}
    Now, since $\lVert S_{1} (I + i S_{2} S_{1}) \rVert \leq 2$, \cref{eq:swap_clean_robust_bob} implies that
    \begin{equation}
    \label{eq:swap_cleaned_robust_1}
    \begin{split}
        & S_{1} (I + i S_{2} S_{1}) T_{1} (I - i T_{2} T_{1}) \ket{\psi} \\
        & \quad \approxe{2 \eta} S_{1} (I + i S_{2} S_{1}) (I + i T_{2} T_{1}) T_{1} \ket{\psi} .
    \end{split}
    \end{equation}
    Similarly, since $\lVert (I + i T_{2} T_{1}) T_{1} \rVert \leq 2$, \cref{eq:swap_clean_robust_alice} implies that
    \begin{equation}
    \label{eq:swap_cleaned_robust_2}
    \begin{split}
        & S_{1} (I + i S_{2} S_{1}) (I + i T_{2} T_{1}) T_{1} \ket{\psi} \\
        & \quad \approxe{2 \eta} (I - i S_{2} S_{1}) S_{1} (I + i T_{2} T_{1}) T_{1} \ket{\psi} .
    \end{split}
    \end{equation}
    Finally, since $\lVert (I - i S_{2} S_{1}) (I + i T_{2} T_{1}) \rVert \leq 4$, \cref{rel:corr_single} implies that
    \begin{equation}
    \label{eq:swap_cleaned_robust_3}
        (I - i S_{2} S_{1}) S_{1} (I + i T_{2} T_{1}) T_{1} \ket{\psi}
        \approxe{4 \eta} (I - i S_{2} S_{1}) (I + i T_{2} T_{1}) \ket{\psi} .
    \end{equation}
    Combining \cref{eq:swap_cleaned_robust_1,eq:swap_cleaned_robust_2,eq:swap_cleaned_robust_3} through the triangle inequality gives
    \begin{equation}
        S_{1} (I + i S_{2} S_{1}) T_{1} (I - i T_{2} T_{1}) \ket{\psi}
        \approxe{8 \eta} (I - i S_{2} S_{1}) (I + i T_{2} T_{1}) \ket{\psi} .
    \end{equation}
    We can now apply this to \cref{eq:swap_vanished_robust} to estimate $W \ket{\psi}$ by
    \begin{equation}
    \label{eq:cleaned_robust}
        W \ket{\psi}
        \approxe{9 \eta / 2} \ket{\Phi^{+}}_{\hilb{A}^{\prime} \hilb{B}^{\prime}}
        \otimes \frac{1}{2 \sqrt{2}} (I - i S_{2} S_{1}) (I + i T_{2} T_{1}) \ket{\psi} .
    \end{equation}
    Since $\lVert I + i T_{2} T_{1} \rVert \leq 2$, we have
    \begin{equation}
    \begin{split}
        (I - i S_{2} S_{1}) (I + i T_{2} T_{1}) \ket{\psi}
        & \approxe{4 \eta} (I + i T_{2} T_{1}) (I - i T_{1} T_{2}) \ket{\psi} \\
        & = (2I + i T_{2} T_{1} - i T_{1} T_{2}) \ket{\psi} \\
        & \approxe{\eta} 2 (I + i T_{2} T_{1}) \ket{\psi} \\
        & = 2 \sqrt{2} \ket{\varphi} ,
    \end{split}
    \end{equation}
    where for the first line we used \cref{eq:swap_vanish_norm_1} and for the third line we used \cref{rel:acomm_single}.
    Thus, we can apply this to \cref{eq:cleaned_robust} to estimate $W \ket{\psi}$ by
    \begin{equation}
        \left\lVert W \ket{\psi}
        - \ket{\Phi^{+}}_{\hilb{A}^{\prime} \hilb{B}^{\prime}}
        \otimes \ket{\varphi} \right\rVert \leq \frac{1}{4} \mathopen{}\left( 18 + 5 \sqrt{2} \right)\mathclose{} \eta .
    \end{equation}
    
    To estimate $W S_{1} \ket{\psi}$, we note that
    \begin{subequations}
    \begin{align}
    \begin{split}
        & (I - i S_{2} S_{1}) ( I + i T_{2} T_{1}) S_{1} \ket{\psi} \\
        & \quad \approxe{2 \eta} S_{1} (I + i S_{2} S_{1}) ( I + i T_{2} T_{1}) \ket{\psi} ,
    \end{split} \\
    \begin{split}
        & (I - i S_{2} S_{1}) T_{1} (I - i T_{2} T_{1}) S_{1} \ket{\psi} \\
        & \quad \approxe{2 \eta} S_{1} (I + i S_{2} S_{1}) T_{1} (I - i T_{2} T_{1}) \ket{\psi} ,
    \end{split} \\
    \begin{split}
        & S_{1} (I + i S_{2} S_{1}) (I + i T_{2} T_{1}) S_{1} \ket{\psi} \\
        & \quad \approxe{2 \eta} (I - i S_{2} S_{1}) (I + i T_{2} T_{1}) \ket{\psi} ,
    \end{split} \\
    \begin{split}
        & S_{1} (I + i S_{2} S_{1}) T_{1} (I - i T_{2} T_{1}) S_{1} \ket{\psi} \\
        & \quad \approxe{2 \eta} (I - i S_{2} S_{1}) T_{1} (I - i T_{2} T_{1}) \ket{\psi} .
    \end{split}
    \end{align}
    \end{subequations}
    Therefore, we have
    \begin{equation}
        \begin{split}
            W S_{1} \ket{\psi}
            \approxe{2 \eta} \frac{1}{4} [
            &   \ket{00}_{\hilb{A}^{\prime} \hilb{B}^{\prime}} \otimes S_{1} (I + i S_{2} S_{1}) ( I + i T_{2} T_{1}) \ket{\psi} \\
            & + \ket{01}_{\hilb{A}^{\prime} \hilb{B}^{\prime}} \otimes S_{1} (I + i S_{2} S_{1}) T_{1} (I - i T_{2} T_{1}) \ket{\psi} \\
            & + \ket{10}_{\hilb{A}^{\prime} \hilb{B}^{\prime}} \otimes (I - i S_{2} S_{1}) (I + i T_{2} T_{1}) \ket{\psi} \\
            & + \ket{11}_{\hilb{A}^{\prime} \hilb{B}^{\prime}} \otimes (I - i S_{2} S_{1}) T_{1} (I - i T_{2} T_{1}) \ket{\psi}
            ] .
    \end{split}
    \end{equation}
    The right-hand side of this equation is just that of \cref{eq:swapped_k} but with different states in $\hilb{A}^{\prime} \otimes \hilb{B}^{\prime}$ identifying each term of the superposition.
    Considering this, we can use the same robustness arguments as before to deduce that
    \begin{equation}
        \left\lVert W S_{1} \ket{\psi}
        - \sigma_{\mathrm{x}}^{\hilb{B}^{\prime}} \ket{\Phi^{+}}_{\hilb{A}^{\prime} \hilb{B}^{\prime}}
        \otimes \ket{\varphi} \right\rVert \leq \frac{1}{4} \mathopen{}\left( 26 + 5 \sqrt{2} \right)\mathclose{} \eta .
    \end{equation}
    
    To estimate $W S_{2} \ket{\psi}$, we note that
    \begin{subequations}
    \begin{align}
    \begin{split}
        & (I - i S_{2} S_{1}) ( I + i T_{2} T_{1}) S_{2} \ket{\psi} \\
        & \quad \approxe{4 \eta} i S_{1} (I + i S_{2} S_{1}) ( I + i T_{2} T_{1}) \ket{\psi} ,
    \end{split} \\
    \begin{split}
        & (I - i S_{2} S_{1}) T_{1} (I - i T_{2} T_{1}) S_{2} \ket{\psi} \\
        & \quad \approxe{4 \eta} i S_{1} (I + i S_{2} S_{1}) T_{1} (I - i T_{2} T_{1}) \ket{\psi} ,
    \end{split} \\
    \begin{split}
        & S_{1} (I + i S_{2} S_{1}) (I + i T_{2} T_{1}) S_{2} \ket{\psi} \\
        & \quad \approxe{4 \eta} -i (I - i S_{2} S_{1}) ( I + i T_{2} T_{1}) \ket{\psi} ,
    \end{split} \\
    \begin{split}
        & S_{1} (I + i S_{2} S_{1}) T_{1} (I - i T_{2} T_{1}) S_{2} \ket{\psi} \\
        & \quad \approxe{4 \eta} -i (I - i S_{2} S_{1}) T_{1} (I - i T_{2} T_{1}) S_{2} \ket{\psi} .
    \end{split}
    \end{align}
    \end{subequations}
    Therefore, we have
    \begin{equation}
        \begin{split}
            W S_{2} \ket{\psi}
            \approxe{4 \eta} \frac{1}{4} [
            &   i \ket{00}_{\hilb{A}^{\prime} \hilb{B}^{\prime}} \otimes S_{1} (I + i S_{2} S_{1}) ( I + i T_{2} T_{1}) \ket{\psi} \\
            & + i \ket{01}_{\hilb{A}^{\prime} \hilb{B}^{\prime}} \otimes S_{1} (I + i S_{2} S_{1}) T_{1} (I - i T_{2} T_{1}) \ket{\psi} \\
            & - i \ket{10}_{\hilb{A}^{\prime} \hilb{B}^{\prime}} \otimes (I - i S_{2} S_{1}) (I + i T_{2} T_{1}) \ket{\psi} \\
            & - i \ket{11}_{\hilb{A}^{\prime} \hilb{B}^{\prime}} \otimes (I - i S_{2} S_{1}) T_{1} (I - i T_{2} T_{1}) \ket{\psi}
            ] .
    \end{split}
    \end{equation}
    Again, we can use the same robustness arguments as before to deduce that
    \begin{equation}
        \left\lVert W S_{2} \ket{\psi}
        - \sigma_{\mathrm{y}}^{\hilb{B}^{\prime}} \ket{\Phi^{+}}_{\hilb{A}^{\prime} \hilb{B}^{\prime}}
        \otimes \ket{\varphi} \right\rVert \leq \frac{1}{4} \mathopen{}\left( 34 + 5 \sqrt{2} \right)\mathclose{} \eta .
    \end{equation}
    
    The final case of estimating $W S_{3} \ket{\psi}$ requires a little more care.
    By repeatedly applying \cref{rel:acomm_single}, it can be shown that
    \begin{subequations}
    \begin{align}
    \begin{split}
        & (I - i S_{2} S_{1}) ( I + i T_{2} T_{1}) S_{3} \ket{\psi} \\
        & \quad \approxe{8 \eta} S_{3} (I - i S_{2} S_{1}) ( I + i T_{2} T_{1}) \ket{\psi} ,
    \end{split} \\
    \begin{split}
        & (I - i S_{2} S_{1}) T_{1} (I - i T_{2} T_{1}) S_{3} \ket{\psi} \\
        & \quad \approxe{8 \eta} S_{3} (I - i S_{2} S_{1}) T_{1} (I - i T_{2} T_{1}) \ket{\psi} ,
    \end{split} \\
    \begin{split}
        & S_{1} (I + i S_{2} S_{1}) (I + i T_{2} T_{1}) S_{3} \ket{\psi} \\
        & \quad \approxe{16 \eta} - S_{3} S_{1} (I + i S_{2} S_{1}) (I + i T_{2} T_{1}) \ket{\psi} ,
    \end{split} \\
    \begin{split}
        & S_{1} (I + i S_{2} S_{1}) T_{1} (I - i T_{2} T_{1}) S_{3} \ket{\psi} \\
        & \quad \approxe{16 \eta} - S_{3} S_{1} (I + i S_{2} S_{1}) T_{1} (I - i T_{2} T_{1}) \ket{\psi} .
    \end{split}
    \end{align}
    \end{subequations}
    Therefore, we have
    \begin{equation}
        \begin{split}
            W S_{3} \ket{\psi}
            \approxe{12 \eta} \frac{1}{4} [
            &   \ket{00}_{\hilb{A}^{\prime} \hilb{B}^{\prime}} \otimes S_{3} (I - i S_{2} S_{1}) ( I + i T_{2} T_{1}) \ket{\psi} \\
            & + \ket{01}_{\hilb{A}^{\prime} \hilb{B}^{\prime}} \otimes S_{3} (I - i S_{2} S_{1}) T_{1} (I - i T_{2} T_{1}) \ket{\psi} \\
            & - \ket{10}_{\hilb{A}^{\prime} \hilb{B}^{\prime}} \otimes S_{3} S_{1} (I + i S_{2} S_{1}) (I + i T_{2} T_{1}) \ket{\psi} \\
            & - \ket{11}_{\hilb{A}^{\prime} \hilb{B}^{\prime}} \otimes S_{3} S_{1} (I + i S_{2} S_{1}) T_{1} (I - i T_{2} T_{1}) \ket{\psi}
            ] .
    \end{split}
    \end{equation}
    We can use the same robustness arguments as before (but this time being careful to note that each estimate is unaffected by the presence of the observables $S_{3}$ as it is unitary) to deduce that
    \begin{equation}
        \left\lVert W S_{3} \ket{\psi}
        - \sigma_{\mathrm{z}}^{\hilb{B}^{\prime}} \ket{\Phi^{+}}_{\hilb{A}^{\prime} \hilb{B}^{\prime}}
        \otimes S_{3} \ket{\varphi} \right\rVert \leq \frac{1}{4} \mathopen{}\left( 66 + 5 \sqrt{2} \right)\mathclose{} \eta .
    \end{equation}
    
    For the robust phase kickback stage, first note that by \cref{rel:corr_single,rel:acomm_single} we have
    \begin{equation}
        S_{3} (I + i T_{2} T_{1}) \ket{\psi}
        \approxe{6 \eta} T_{3} (I + i T_{2} T_{1}) \ket{\psi} .
    \end{equation}
    Using also that $(I \pm T_{3}) (I \pm T_{3}) = 2 (I \pm T_{3})$ and $(I \pm T_{3}) (I \mp T_{3}) = 0$, we get
    \begin{subequations}
    \begin{gather}
    \begin{split}
        & (I \pm S_{3}) (I \pm T_{3}) (I + i T_{2} T_{1}) \ket{\psi} \\
        & \quad \approxe{12 \eta} 2 (I \pm T_{3}) (I + i T_{2} T_{1}) \ket{\psi} ,
    \end{split} \\
        (I \pm S_{3}) (I \mp T_{3}) (I + i T_{2} T_{1}) \ket{\psi}
        \approxe{12 \eta} 0 .
    \end{gather}
    \end{subequations}
    Therefore, in place of \cref{eq:kickback_single_simplified}, we have the robust version
    \begin{equation}
    \begin{split}
        K \ket{\varphi}
        \overset{6 \sqrt{2} \eta}&{\approx} \ket{0}_{\hilb{A}^{\prime\prime}} \ket{0}_{\hilb{B}^{\prime\prime}}
        \otimes \frac{I + T_{3}}{2} \ket{\varphi} \\
        & \phantom{{}\approx{}} + \ket{1}_{\hilb{A}^{\prime\prime}} \ket{1}_{\hilb{B}^{\prime\prime}}
        \otimes \frac{I - T_{3}}{2} \ket{\varphi} .
    \end{split}
    \end{equation}
    To estimate $K S_{3} \ket{\varphi}$, we use (as in the ideal case) that $(I + S_{3}) S_{3} = (I + S_{3})$ and $(I - S_{3}) S_{3} = - (I - S_{3})$.
    A robust version of \cref{eq:kickback_single_z_simplified} given by
    \begin{equation}
    \begin{split}
        K S_{3} \ket{\varphi}
        \overset{6 \sqrt{2} \eta}&{\approx} \ket{0}_{\hilb{A}^{\prime\prime}} \ket{0}_{\hilb{B}^{\prime\prime}}
        \otimes \frac{I + T_{3}}{2} \ket{\varphi} \\
        & \phantom{{}\approx{}} - \ket{1}_{\hilb{A}^{\prime\prime}} \ket{1}_{\hilb{B}^{\prime\prime}}
        \otimes \frac{I - T_{3}}{2} \ket{\varphi}
    \end{split}
    \end{equation}
    is then immediate by identical argument to before.
\end{proof}

\section{Proof of many-copy self-test}
\label{sec:parallel_test_proof}

Before exhibiting the proof of \cref{thm:single_observable_isometry}, we show the following lemma, which will be used repeatedly to cancel terms corresponding to correlated complex conjugation of reference measurements at only some (but not all) positions.
We note that a similar argument was also used in \cite[Appendix~E]{bowles2018self} in part of a proof of an analogous result.

\begin{lemma}
\label{lem:global_conj}
    Let $\ket{\psi} \in \hilb{A} \otimes \hilb{B}$.
    Suppose for each $q \in \{1, 2, 3\}$ and $j \in \{1, \dots, n\}$ that there exist $\pm 1$-outcome observables $S_{q}^{(j)}$ on $\hilb{A}$ and $T_{q}^{(j)}$ on $\hilb{B}$ satisfying (for some $\eta \geq 0$) the following relations:
    \begin{enumerate}
        \item $\left( S_{q}^{(j)} - T_{q}^{(j)} \right) \ket{\psi} \approxe{\eta} 0$ for all $q$ and $j$.
        \item $\left\{ S_{q}^{(j)}, S_{r}^{(j)} \right\} \ket{\psi} \approxe{\eta} 0$ and $\left\{ T_{q}^{(j)}, T_{r}^{(j)} \right\} \ket{\psi} \approxe{\eta} 0$ for all $q,r$ and $j$ such that $q \neq r$.
        \item $\left[ S_{q}^{(j)}, S_{r}^{(k)} \right] \ket{\psi} \approxe{\eta} 0$ and $\left[ T_{q}^{(j)}, T_{r}^{(k)} \right] \ket{\psi} \approxe{\eta} 0$ for all $q,r$ and $j,k$ such that $j \neq k$.
    \end{enumerate}
    For all $j < n$, if (emulating the conclusion of \cref{prop:conj_rel}) it also holds that
    \begin{equation}
    \label{eq:conj_rel}
        \left( I + S_{1}^{(j)} S_{1}^{(j+1)} S_{2}^{(j)} S_{2}^{(j+1)} S_{3}^{(j)} S_{3}^{(j+1)} \right) \ket{\psi} \approxe{\eta} 0 ,
    \end{equation}
    then we have
    \begin{equation}
    \label{eq:vanshing_conj_terms_bob}
    \begin{split}
        \left( I \pm T_{3}^{(j)} \right)\mathclose{}
        & \mathopen{}\left( I + i T_{2}^{(j)} T_{1}^{(j)} \right)\mathclose{}
        \mathopen{}\left( I \mp T_{3}^{(j+1)} \right)\mathclose{} \\
        & \mathopen{}\left( I + i T_{2}^{(j+1)} T_{1}^{(j+1)} \right)
        \ket{\psi} \approxe{118 \eta} 0 .
    \end{split}
    \end{equation}
\end{lemma}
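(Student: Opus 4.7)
The key step of the proof is to establish the sub-claim
\begin{equation*}
R^{(j)} R^{(j+1)} T_3^{(j)} T_3^{(j+1)} \ket{\psi} \approxe{\bigO(\eta)} \ket{\psi} ,
\end{equation*}
where $R^{(k)} := i T_2^{(k)} T_1^{(k)}$. To obtain this, I first transfer \cref{eq:conj_rel} (which mirrors the conclusion of \cref{prop:conj_rel}) from the $S$-operators to the $T$-operators: since the $S$'s act on $\hilb{A}$ and the $T$'s on $\hilb{B}$, they commute exactly, so each of the six $S_q^{(k)}$ in the relation can be replaced by $T_q^{(k)}$ via the correlation assumption, yielding $T_1^{(j)} T_2^{(j)} T_3^{(j)} T_1^{(j+1)} T_2^{(j+1)} T_3^{(j+1)} \ket{\psi} \approxe{\bigO(\eta)} -\ket{\psi}$ after further rearrangement using cross-position commutation and within-position anticommutation. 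The sub-claim then follows by a direct computation of $R^{(j)} R^{(j+1)} T_3^{(j)} T_3^{(j+1)}$ in terms of the $T_q^{(k)}$, applying the commutation and anticommutation relations once more to match this product to the $T_1 T_2 T_3$ form (which in the ideal case reflects the Pauli identity $\sigma_x \sigma_y \sigma_z = i I$ squared).

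From the sub-claim I deduce the substitution $R^{(j+1)} \ket{\psi} \approxe{\bigO(\eta)} T_3^{(j)} T_3^{(j+1)} R^{(j)} \ket{\psi}$, using $(R^{(j)})^2 \ket{\psi} \approx \ket{\psi}$ together with the commutation relations. Inserting this into $(I \mp T_3^{(j+1)})(I + R^{(j+1)}) \ket{\psi}$ and simplifying via $(T_3^{(j+1)})^2 = I$ yields the factorization
\begin{equation*}
(I \mp T_3^{(j+1)})(I + R^{(j+1)}) \ket{\psi} \approxe{\bigO(\eta)} (I + \tau W^{(j)})(I \mp T_3^{(j+1)}) \ket{\psi} ,
\end{equation*}
where $W^{(j)} := T_3^{(j)} R^{(j)}$ and $\tau := \mp 1$. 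After moving $(I \mp T_3^{(j+1)})$ to the right past the remaining position-$j$ factors via cross-position commutation, it remains to show that $(I \pm T_3^{(j)})(I + R^{(j)})(I + \tau W^{(j)}) \ket{\psi} \approxe{\bigO(\eta)} 0$. This follows from a direct expansion into eight terms, each reducible using the exact identities $T_3^{(j)} W^{(j)} = R^{(j)}$ and $(T_3^{(j)})^2 = I$ together with the state-dependent $R^{(j)} W^{(j)} \ket{\psi} \approx T_3^{(j)} \ket{\psi}$. The resulting coefficients of the four operator types $I$, $R^{(j)}$, $T_3^{(j)}$, and $W^{(j)}$ take the form $1 \pm \tau$ or $\tau \pm 1$, which all vanish identically when $\tau = -(\pm 1)$.

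The main obstacle will be bookkeeping the accumulated $\bigO(\eta)$ errors. Each substitution costs $\eta$ multiplied by an operator-norm bound (up to $\sqrt{2}$ for individual $R^{(k)}$, and up to $4$ for products of two position-$j$ factors), while the sub-claim alone requires six $S$-to-$T$ transfers plus several commutation and anticommutation rearrangements, and each subsequent substitution into $(I \mp T_3^{(j+1)})(I + R^{(j+1)})$ and the final direct expansion adds further error terms. The constant $118$ in the stated bound reflects this accumulation, but the accounting is routine with no growth beyond linear in $\eta$ at any step.
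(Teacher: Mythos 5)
Your proposal is correct in substance but follows a genuinely different route from the paper. The paper proves \cref{eq:vanshing_conj_terms_bob} by brute force: it expands the left-hand side into sixteen terms, groups them into eight pairs, and shows each pair approximately vanishes by applying the correlation, commutation, and anticommutation relations to \cref{eq:conj_rel}, with the eight explicit error constants ($15, 10, 17, 17, 17, 10, 17, 15$) summing to $118$. You instead first convert \cref{eq:conj_rel} into the transfer identity $i T_{2}^{(j+1)} T_{1}^{(j+1)} \ket{\psi} \approxe{\bigO(\eta)} T_{3}^{(j)} T_{3}^{(j+1)} \, i T_{2}^{(j)} T_{1}^{(j)} \ket{\psi}$, use it to rewrite the position-$(j+1)$ factor $\bigl(I \mp T_{3}^{(j+1)}\bigr)\bigl(I + i T_{2}^{(j+1)} T_{1}^{(j+1)}\bigr)$ acting on $\ket{\psi}$ as $\bigl(I \mp W^{(j)}\bigr)\bigl(I \mp T_{3}^{(j+1)}\bigr)$ with $W^{(j)} = T_{3}^{(j)}\, i T_{2}^{(j)} T_{1}^{(j)}$, pull the bounded factor $\bigl(I \mp T_{3}^{(j+1)}\bigr)$ out (most cleanly by swapping it to $I \mp S_{3}^{(j+1)}$, which commutes exactly with everything else), and kill the remaining all-position-$j$ product by a four-term cancellation. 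I checked the algebra: the sub-claim reduces correctly to the transferred relation $T_{1}^{(j)} T_{2}^{(j)} T_{3}^{(j)} T_{1}^{(j+1)} T_{2}^{(j+1)} T_{3}^{(j+1)} \ket{\psi} \approxe{\bigO(\eta)} -\ket{\psi}$, the factorization sign $\tau = \mp 1$ is right, and the final expansion does yield coefficients $1 \pm \tau$ and $\tau \pm 1$ that all vanish. Your version is the more conceptual of the two: it makes transparent that the conjugation relation forces the position-$(j+1)$ conjugation-sector factor to coincide with the position-$j$ one, after which the two opposite-sign factors at position $j$ annihilate each other; the paper's version is opaque but produces the numerical constant directly.

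The one genuine gap is that constant. You assert the bookkeeping is routine and that $118$ "reflects" the accumulation, but $118$ is literally the sum of the eight pairwise constants in the paper's expansion and bears no a priori relation to your route, which involves six $S$-to-$T$ transfers, several state-dependent reorderings performed away from $\ket{\psi}$ (each routed through the $S$-side at a cost of $\eta$ times the operator norm of everything to its left), a substitution inside a product of norm up to $8$, and a final expansion with its own errors. Nothing guarantees your total lands at or below $118\eta$. To prove the lemma as stated you must either carry out the accounting and confirm the constant, or settle for an explicit $\bigO(\eta)$ bound in place of $118\eta$ — which is all that is actually used downstream, since \cref{thm:single_observable_isometry} only needs $\delta(0,n)=0$ and polynomial robustness in $\eta$.
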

\begin{proof}
    The left-hand side of \cref{eq:vanshing_conj_terms_bob} can be expanded and then rewritten by grouping pairs of terms as
    \begin{equation}
    \begin{split}
        & \left( I + T_{3}^{(j)} T_{2}^{(j)} T_{1}^{(j)} T_{3}^{(j+1)} T_{2}^{(j+1)} T_{1}^{(j+1)} \right) \ket{\psi} \\
        & \quad \mp \left( T_{3}^{(j+1)} + T_{3}^{(j)} T_{2}^{(j)} T_{1}^{(j)} T_{2}^{(j+1)} T_{1}^{(j+1)} \right) \ket{\psi} \\
        & \quad + i \left( T_{2}^{(j+1)} T_{1}^{(j+1)} - T_{3}^{(j)} T_{2}^{(j)} T_{1}^{(j)} T_{3}^{(j+1)} \right) \ket{\psi} \\
        & \quad \pm i \left( T_{3}^{(j)} T_{2}^{(j+1)} T_{1}^{(j+1)} - T_{2}^{(j)} T_{1}^{(j)} T_{3}^{(j+1)} \right) \ket{\psi} \\
        & \quad \mp i \left( T_{3}^{(j+1)} T_{2}^{(j+1)} T_{1}^{(j+1)} - T_{3}^{(j)} T_{2}^{(j)} T_{1}^{(j)} \right) \ket{\psi} \\
        & \quad - \left( T_{2}^{(j)} T_{1}^{(j)} T_{2}^{(j+1)} T_{1}^{(j+1)} + T_{3}^{(j)} T_{3}^{(j+1)} \right) \ket{\psi} \\
        & \quad - i \left( T_{3}^{(j)} T_{3}^{(j+1)} T_{2}^{(j+1)} T_{1}^{(j+1)} - T_{2}^{(j)} T_{1}^{(j)} \right) \ket{\psi} \\
        & \quad \pm \left( T_{2}^{(j)} T_{1}^{(j)} T_{3}^{(j+1)} T_{2}^{(j+1)} T_{1}^{(j+1)} + T_{3}^{(j)} \right) \ket{\psi} .
    \end{split}
    \end{equation}
    By applying the given relations to \cref{eq:conj_rel}, it can be seen that each of the eight resulting terms approximately vanishes.
    Specifically, we have
    \begin{subequations}
    \begin{align}
        \left( I + T_{3}^{(j)} T_{2}^{(j)} T_{1}^{(j)} T_{3}^{(j+1)} T_{2}^{(j+1)} T_{1}^{(j+1)} \right) \ket{\psi} & \approxe{15 \eta} 0 , \\
        \left( T_{3}^{(j+1)} + T_{3}^{(j)} T_{2}^{(j)} T_{1}^{(j)} T_{2}^{(j+1)} T_{1}^{(j+1)} \right) \ket{\psi} & \approxe{10 \eta} 0 , \\
        \left( T_{2}^{(j+1)} T_{1}^{(j+1)} - T_{3}^{(j)} T_{2}^{(j)} T_{1}^{(j)} T_{3}^{(j+1)} \right) \ket{\psi} & \approxe{17 \eta} 0 , \\
        \left( T_{3}^{(j)} T_{2}^{(j+1)} T_{1}^{(j+1)} - T_{2}^{(j)} T_{1}^{(j)} T_{3}^{(j+1)} \right) \ket{\psi} & \approxe{17 \eta} 0 , \\
        \left( T_{3}^{(j+1)} T_{2}^{(j+1)} T_{1}^{(j+1)} - T_{3}^{(j)} T_{2}^{(j)} T_{1}^{(j)} \right) \ket{\psi} & \approxe{17 \eta} 0 , \\
        \left( T_{2}^{(j)} T_{1}^{(j)} T_{2}^{(j+1)} T_{1}^{(j+1)} + T_{3}^{(j)} T_{3}^{(j+1)} \right) \ket{\psi} & \approxe{10 \eta} 0 , \\
        \left( T_{3}^{(j)} T_{3}^{(j+1)} T_{2}^{(j+1)} T_{1}^{(j+1)} - T_{2}^{(j)} T_{1}^{(j)} \right) \ket{\psi} & \approxe{17 \eta} 0 , \\
        \left( T_{2}^{(j)} T_{1}^{(j)} T_{3}^{(j+1)} T_{2}^{(j+1)} T_{1}^{(j+1)} + T_{3}^{(j)} \right) \ket{\psi} & \approxe{15 \eta} 0 .
    \end{align}
    \end{subequations}
    The triangle inequality then gives \cref{eq:vanshing_conj_terms_bob}, as required.
\end{proof}

We now proceed with the main body of proof for \cref{thm:single_observable_isometry}.

\begin{proof}[Proof of \cref{thm:single_observable_isometry}]
    Consider the isometry and the corresponding notation introduced in \cref{fig:swap_kickback_xy_copy}.
    We begin by considering, for any $j \in \{1, \dots, n\}$, the action of the isometry $V^{(j)}$.
    For this, \cref{prop:swap_and_kickback_isoms} shows that after the ``swap'' stage of the circuit, given by $W^{(j)}$, we have that
    \begin{subequations}
    \label{eq:swapped_k_simple}
    \begin{align}
        \label{eq:swapped_k_simple_state}
        W^{(j)} \ket{\psi}
        & = \ket{\Phi^{+}}_{\hilb{A}_{j}^{\prime} \hilb{B}_{j}^{\prime}} \otimes \frac{1}{\sqrt{2}} \mathopen{}\left( I + i T_{2}^{(j)} T_{1}^{(j)} \right) \ket{\psi} , \\
        \label{eq:swapped_k_simple_x}
        W^{(j)} S_{1}^{(j)} \ket{\psi}
        & = \sigma_{\mathrm{x}}^{\hilb{B}_{j}^{\prime}} \ket{\Phi^{+}}_{\hilb{A}_{j}^{\prime} \hilb{B}_{j}^{\prime}} \otimes \frac{1}{\sqrt{2}} \mathopen{}\left( I + i T_{2}^{(j)} T_{1}^{(j)} \right) \ket{\psi} , \\
        \label{eq:swapped_k_simple_y}
        W^{(j)} S_{2}^{(j)} \ket{\psi}
        & = \sigma_{\mathrm{y}}^{\hilb{B}_{j}^{\prime}} \ket{\Phi^{+}}_{\hilb{A}_{j}^{\prime} \hilb{B}_{j}^{\prime}} \otimes \frac{1}{\sqrt{2}} \mathopen{}\left( I + i T_{2}^{(j)} T_{1}^{(j)} \right) \ket{\psi} , \\
        \label{eq:swapped_k_simple_z}
        W^{(j)} S_{3}^{(j)} \ket{\psi}
        & = \sigma_{\mathrm{z}}^{\hilb{B}_{j}^{\prime}} \ket{\Phi^{+}}_{\hilb{A}_{j}^{\prime} \hilb{B}_{j}^{\prime}} \otimes \frac{1}{\sqrt{2}} S_{3}^{(j)} \mathopen{}\left( I + i T_{2}^{(j)} T_{1}^{(j)} \right) \ket{\psi} ,
    \end{align}
    \end{subequations}

    We now apply the ``phase kickback'' stage of the isometry, given by $K^{(j)}$, to the expressions of \cref{eq:swapped_k_simple}.
    To make clearer the resulting equations, let us suppress the extracted state of the primed ancillae in our notation for now, as it is entirely unaffected by the remainder of the isometry.
    For this purpose, let us define (similarly to \cref{prop:swap_and_kickback_isoms})
    \begin{equation}
        \ket{\varphi^{j}} = \frac{1}{\sqrt{2}} \mathopen{}\left( I + i T_{2}^{(j)} T_{1}^{(j)} \right) \ket{\psi} .
    \end{equation}
    For \cref{eq:swapped_k_simple_state,eq:swapped_k_simple_x,eq:swapped_k_simple_y}, \cref{prop:swap_and_kickback_isoms} gives the action of $K^{(j)}$ as
    \begin{equation}
    \label{eq:kick_j}
    \begin{split}
        K^{(j)} \ket{\varphi^{j}}
        ={} & \ket{0}_{\hilb{A}_{j}^{\prime\prime}} \ket{0}_{\hilb{B}_{j}^{\prime\prime}} \\
        & \quad \otimes \frac{1}{2 \sqrt{2}} \left( I + T_{3}^{(j)} \right)\mathclose{} \mathopen{}\left( I + i T_{2}^{(j)} T_{1}^{(j)} \right) \ket{\psi} \\
        & + \ket{1}_{\hilb{A}_{j}^{\prime\prime}} \ket{1}_{\hilb{B}_{j}^{\prime\prime}} \\
        & \quad \otimes \frac{1}{2 \sqrt{2}} \left( I - T_{3}^{(j)} \right)\mathclose{} \mathopen{}\left( I + i T_{2}^{(j)} T_{1}^{(j)} \right) \ket{\psi} .
    \end{split}
    \end{equation}
    For \cref{eq:swapped_k_simple_z}, \cref{prop:swap_and_kickback_isoms} gives the action of $K^{(j)}$ as
    \begin{equation}
    \label{eq:kick_j_z}
    \begin{split}
        K^{(j)} S_{3}^{(j)} \ket{\varphi^{j}}
        ={} & \ket{0}_{\hilb{A}_{j}^{\prime\prime}} \ket{0}_{\hilb{B}_{j}^{\prime\prime}} \\
        & \quad \otimes \frac{1}{2 \sqrt{2}} \left( I + T_{3}^{(j)} \right)\mathclose{} \mathopen{}\left( I + i T_{2}^{(j)} T_{1}^{(j)} \right) \ket{\psi} \\
        & - \ket{1}_{\hilb{A}_{j}^{\prime\prime}} \ket{1}_{\hilb{B}_{j}^{\prime\prime}} \\
        & \quad \otimes \frac{1}{2 \sqrt{2}} \left( I - T_{3}^{(j)} \right)\mathclose{} \mathopen{}\left( I + i T_{2}^{(j)} T_{1}^{(j)} \right) \ket{\psi} .
    \end{split}
    \end{equation}
    We have now finished examining the action of $V^{(j)}$ on $\ket{\psi}$ and $S_{q}^{(j)} \ket{\psi}$.

    Notice that \cref{eq:kick_j,eq:kick_j_z} have a form consistent with that required by the junk state of \cref{eq:junk_form}, however, we have only yet extracted a single copy of $\ket{\Phi^{+}}$.
    Suppressing the ancillae once again when convenient, we will examine the action of $V^{(j)}$ on two subnormalized states of a similar form to those contained in the two terms of \cref{eq:kick_j,eq:kick_j_z}.
    Specifically, let us define for all $k \in \{1, \dots, n\}$ the vectors
    \begin{equation}
    \label{eq:iter_junk_def}
    \begin{split}
        \ket{\xi_{\pm}^{k}}_{\hilb{A} \hilb{B}}
        & = \frac{1}{\left( 2 \sqrt{2} \right)^{k}} \prod_{j=1}^{k} \mathopen{}\left( I \pm T_{3}^{(j)} \right)\mathclose{} \mathopen{}\left( I + i T_{2}^{(j)} T_{1}^{(j)} \right) \ket{\psi}_{\hilb{A} \hilb{B}} \\
        & = \prod_{j=1}^{k} J_{\pm}^{(j)} \ket{\psi}_{\hilb{A} \hilb{B}} ,
    \end{split}
    \end{equation}
    where
    \begin{equation}
        J_{\pm}^{(j)} = \frac{1}{2 \sqrt{2}} \mathopen{}\left( I \pm T_{3}^{(j)} \right)\mathclose{} \mathopen{}\left( I + i T_{2}^{(j)} T_{1}^{(j)} \right)\mathclose{} .
    \end{equation}
    We note that, with this notation, we can now combine \cref{eq:swapped_k_simple_state,eq:swapped_k_simple_x,eq:swapped_k_simple_y} with \cref{eq:kick_j} and \cref{eq:swapped_k_simple_z} with \cref{eq:kick_j_z} to write
    \begin{subequations}
    \begin{align}
        \label{eq:single_isom_state}
        V^{(j)} \ket{\psi}
        & = \ket{\Phi^{+}}_{\hilb{A}_{j}^{\prime} \hilb{B}_{j}^{\prime}}
        \otimes \left( \ket{0}_{\hilb{A}_{j}^{\prime\prime}} \ket{0}_{\hilb{B}_{j}^{\prime\prime}}
        \otimes J_{+}^{(j)} \ket{\psi}
        + \ket{1}_{\hilb{A}_{j}^{\prime\prime}} \ket{1}_{\hilb{B}_{j}^{\prime\prime}}
        \otimes J_{-}^{(j)} \ket{\psi} \right)\mathclose{} , \\
        \label{eq:single_isom_obs}
        V^{(j)} S_{q}^{(j)} \ket{\psi}
        & = \sigma_{q}^{\hilb{B}_{j}^{\prime}} \ket{\Phi^{+}}_{\hilb{A}_{j}^{\prime} \hilb{B}_{j}^{\prime}}
        \otimes \left( \ket{0}_{\hilb{A}_{j}^{\prime\prime}} \ket{0}_{\hilb{B}_{j}^{\prime\prime}} \otimes J_{+}^{(j)} \ket{\psi}
        + (-1)^{[q=3]} \ket{1}_{\hilb{A}_{j}^{\prime\prime}} \ket{1}_{\hilb{B}_{j}^{\prime\prime}} \otimes J_{-}^{(j)} \ket{\psi} \right)\mathclose{} .
    \end{align}
    \end{subequations}
    In the special case of $j=1$, since $\ket{\xi_{\pm}^{1}} = J_{\pm}^{(1)} \ket{\psi}$, we recover
    \begin{subequations}
    \label{eq:single_isom_proof}
    \begin{align}
        \label{eq:single_isom_1_state_proof}
        V^{(1)} \ket{\psi}
        & = \ket{\Phi^{+}}_{\hilb{A}_{1}^{\prime} \hilb{B}_{1}^{\prime}}
        \otimes \left( \ket{0}_{\hilb{A}_{1}^{\prime\prime}} \ket{0}_{\hilb{B}_{1}^{\prime\prime}}
        \otimes \ket{\xi_{+}^{1}}
        + \ket{1}_{\hilb{A}_{1}^{\prime\prime}} \ket{1}_{\hilb{B}_{1}^{\prime\prime}}
        \otimes \ket{\xi_{-}^{1}} \right)\mathclose{} , \\
        \label{eq:single_isom_1_obs_proof}
        V^{(1)} S_{q}^{(1)} \ket{\psi}
        & = \sigma_{q}^{\hilb{B}_{1}^{\prime}} \ket{\Phi^{+}}_{\hilb{A}_{1}^{\prime} \hilb{B}_{1}^{\prime}}
        \otimes \left( \ket{0}_{\hilb{A}_{1}^{\prime\prime}} \ket{0}_{\hilb{B}_{1}^{\prime\prime}} \otimes \ket{\xi_{+}^{1}}
        + (-1)^{[q=3]} \ket{1}_{\hilb{A}_{1}^{\prime\prime}} \ket{1}_{\hilb{B}_{1}^{\prime\prime}} \otimes \ket{\xi_{-}^{1}} \right)\mathclose{} .
    \end{align}
    \end{subequations}

    We now examine $V^{(k)} \ket{\xi_{\pm}^{k-1}}_{\hilb{A} \hilb{B}}$, where $1 < k \leq n $.
    We begin by showing that
    \begin{equation}
    \label{eq:isom_bob_iter_comm}
        V_{\hilb{B}}^{(k)} \ket{\xi_{\pm}^{k-1}}
        = \left( \prod_{j=1}^{k-1} J_{\pm}^{(j)} \right)\mathclose{} V_{\hilb{B}}^{(k)} \ket{\psi} .
    \end{equation}
    To do this, notice by the definition of $V_{\hilb{B}}^{(k)} = K_{\hilb{B}}^{(k)} W_{\hilb{B}}^{(k)}$ given in \cref{fig:swap_kickback_xy_copy} that
    \begin{equation}
    \begin{split}
        V_{\hilb{B}}^{(k)} \ket{\psi}
        = \frac{1}{4} \Bigl[ &
        \ket{0}_{\hilb{B}_{k}^{\prime}} \otimes \ket{0}_{\hilb{B}_{k}^{\prime\prime}} \otimes \left( I + i T_{2}^{(k)} T_{1}^{(k)} \right) \ket{\psi}
        + \ket{0}_{\hilb{B}_{k}^{\prime}} \otimes \ket{1}_{\hilb{B}_{k}^{\prime\prime}} \otimes T_{3}^{(k)} \mathopen{}\left( I + i T_{2}^{(k)} T_{1}^{(k)} \right) \ket{\psi} \\
        & + \ket{1}_{\hilb{B}_{k}^{\prime}} \otimes \ket{0}_{\hilb{B}_{k}^{\prime\prime}} \otimes T_{1}^{(k)} \mathopen{}\left( I - i T_{2}^{(k)} T_{1}^{(k)} \right) \ket{\psi}
        + \ket{1}_{\hilb{B}_{k}^{\prime}} \otimes \ket{1}_{\hilb{B}_{k}^{\prime\prime}} \otimes T_{3}^{(k)} T_{1}^{(k)} \mathopen{}\left( I - i T_{2}^{(k)} T_{1}^{(k)} \right) \ket{\psi}
        \Bigr] .
    \end{split}
    \end{equation}
    After applying all the $J_{\pm}^{(j)}$ on the left and bringing operators with index $k$ past all operators with other indices to the front via a chain of state-dependent commutation and swapping operators between Alice and Bob (\cref{rel:corr,rel:comm}) we get
    \begin{equation}
    \begin{split}
        \left( \prod_{j=1}^{k-1} J_{\pm}^{(j)} \right)\mathclose{} V_{\hilb{B}}^{(k)} \ket{\psi}
        = \frac{1}{4} \Bigl[ &
        \ket{0}_{\hilb{B}_{k}^{\prime}} \otimes \ket{0}_{\hilb{B}_{k}^{\prime\prime}} \otimes \left( I + i T_{2}^{(k)} T_{1}^{(k)} \right) \prod_{j=1}^{k-1} J_{\pm}^{(j)} \ket{\psi} \\
        & + \ket{0}_{\hilb{B}_{k}^{\prime}} \otimes \ket{1}_{\hilb{B}_{k}^{\prime\prime}} \otimes T_{3}^{(k)} \mathopen{}\left( I + i T_{2}^{(k)} T_{1}^{(k)} \right) \prod_{j=1}^{k-1} J_{\pm}^{(j)} \ket{\psi} \\
        & + \ket{1}_{\hilb{B}_{k}^{\prime}} \otimes \ket{0}_{\hilb{B}_{k}^{\prime\prime}} \otimes T_{1}^{(k)} \mathopen{}\left( I - i T_{2}^{(k)} T_{1}^{(k)} \right) \prod_{j=1}^{k-1} J_{\pm}^{(j)} \ket{\psi} \\
        & + \ket{1}_{\hilb{B}_{k}^{\prime}} \otimes \ket{1}_{\hilb{B}_{k}^{\prime\prime}} \otimes T_{3}^{(k)} T_{1}^{(k)} \mathopen{}\left( I - i T_{2}^{(k)} T_{1}^{(k)} \right) \prod_{j=1}^{k-1} J_{\pm}^{(j)} \ket{\psi}
        \Bigr] .
    \end{split}
    \end{equation}
    By \cref{eq:iter_junk_def}, and again by the definition of $V_{\hilb{B}}^{(k)}$, the right-hand side is simply $V_{\hilb{B}}^{(k)} \ket{\xi_{\pm}^{k-1}}$.
    This is the desired \cref{eq:isom_bob_iter_comm}.
    Since all $J_{\pm}^{(j)}$ act on Bob's subsystem, they commute with $V_{\hilb{A}}^{(k)}$.
    We can then apply $V_{\hilb{A}}^{(k)}$ to both sides of \cref{eq:isom_bob_iter_comm} to get
    \begin{equation}
        V^{(k)} \ket{\xi_{\pm}^{k-1}}
        = \left( \prod_{j=1}^{k-1} J_{\pm}^{(j)} \right)\mathclose{} V^{(k)} \ket{\psi} .
    \end{equation}
    Substituting \cref{eq:single_isom_state} for $V^{(k)} \ket{\psi}$ then gives
    \begin{equation}
    \label{eq:isom_full_simple}
        V^{(k)} \ket{\xi_{\pm}^{k-1}}
        = \ket{\Phi^{+}}_{\hilb{A}_{k}^{\prime} \hilb{B}_{k}^{\prime}}
        \otimes \left( \prod_{j=1}^{k-1} J_{\pm}^{(j)} \right)\mathclose{}
        \mathopen{}\left(
        \ket{0}_{\hilb{A}_{k}^{\prime\prime}} \ket{0}_{\hilb{B}_{k}^{\prime\prime}}
        \otimes J_{+}^{(k)} \ket{\psi}
        + \ket{1}_{\hilb{A}_{k}^{\prime\prime}} \ket{1}_{\hilb{B}_{k}^{\prime\prime}}
        \otimes J_{-}^{(k)} \ket{\psi}
        \right)\mathclose{} .
    \end{equation}
    \Cref{lem:global_conj} implies
    \begin{equation}
    \begin{split}
        \left( I \pm T_{3}^{(k-1)} \right)\mathclose{}
        & \mathopen{}\left( I + i T_{2}^{(k-1)} T_{1}^{(k-1)} \right)\mathclose{} \\
        & \mathopen{}\left( I \mp T_{3}^{(k)} \right)\mathclose{}
        \mathopen{}\left( I + i T_{2}^{(k)} T_{1}^{(k)} \right) \ket{\psi} = 0
    \end{split}
    \end{equation}
    which, rewriting in terms of the $J_{\pm}^{(j)}$, then implies
    \begin{equation}
    \label{eq:conj_cancel}
        J_{\pm}^{(k-1)} J_{\mp}^{(k)} \ket{\psi} = 0 .
    \end{equation}
    Thus, using this to simplify \cref{eq:isom_full_simple}, we have
    \begin{subequations}
    \label{eq:isom_step_proof}
    \begin{align}
        V^{(k)} \ket{\xi_{+}^{k-1}}
        & = \ket{\Phi^{+}}_{\hilb{A}_{k}^{\prime} \hilb{B}_{k}^{\prime}}
        \otimes \ket{0}_{\hilb{A}_{k}^{\prime\prime}} \ket{0}_{\hilb{B}_{k}^{\prime\prime}}
        \otimes \ket{\xi_{+}^{k}} , \\
        V^{(k)} \ket{\xi_{-}^{k-1}}
        & = \ket{\Phi^{+}}_{\hilb{A}_{k}^{\prime} \hilb{B}_{k}^{\prime}}
        \otimes \ket{1}_{\hilb{A}_{k}^{\prime\prime}} \ket{1}_{\hilb{B}_{k}^{\prime\prime}}
        \otimes \ket{\xi_{-}^{k}} .
    \end{align}
    \end{subequations}

    By definition of $V_{\hilb{A}}^{(k)}$,
    \begin{equation}
    \begin{split}
        V_{\hilb{A}}^{(k)} \ket{\psi}
        = \frac{1}{4} \Bigl[ &
        \ket{0}_{\hilb{A}_{k}^{\prime}} \otimes \ket{0}_{\hilb{A}_{k}^{\prime\prime}} \otimes \left( I - i S_{2}^{(k)} S_{1}^{(k)} \right) \ket{\psi} \\
        & + \ket{0}_{\hilb{A}_{k}^{\prime}} \otimes \ket{1}_{\hilb{A}_{k}^{\prime\prime}} \otimes S_{3}^{(k)} \mathopen{}\left( I - i S_{2}^{(k)} S_{1}^{(k)} \right) \ket{\psi} \\
        & + \ket{1}_{\hilb{A}_{k}^{\prime}} \otimes \ket{0}_{\hilb{A}_{k}^{\prime\prime}} \otimes S_{1}^{(k)} \mathopen{}\left( I + i S_{2}^{(k)} S_{1}^{(k)} \right) \ket{\psi} \\
        & + \ket{1}_{\hilb{A}_{k}^{\prime}} \otimes \ket{1}_{\hilb{A}_{k}^{\prime\prime}} \otimes S_{3}^{(k)} S_{1}^{(k)} \mathopen{}\left( I + i S_{2}^{(k)} S_{1}^{(k)} \right) \ket{\psi}
        \Bigr] .
    \end{split}
    \end{equation}
    Therefore, using \cref{rel:corr,rel:comm}, whenever $j \neq k$
    \begin{equation}
        V_{\hilb{A}}^{(k)} S_{q}^{(j)} \ket{\psi}
        = S_{q}^{(j)} V_{\hilb{A}}^{(k)} \ket{\psi} .
    \end{equation}
    From this, it follows by noting $V_{\hilb{B}}^{k}$ commutes with $S_{q}^{(j)}$ that
    \begin{equation}
    \label{eq:single_isom_state_switch}
        V^{(k)} S_{q}^{(j)} \ket{\psi} = S_{q}^{(j)} V^{(k)} \ket{\psi} .
    \end{equation}
    A special case is (remembering that $1 < k \leq n$)
    \begin{equation}
    \label{eq:single_isom_1_state_switch_proof}
        V^{(1)} S_{q}^{(k)} \ket{\psi} = S_{q}^{(k)} V^{(1)} \ket{\psi} .
    \end{equation}
    Furthermore, again since $V_{\hilb{A}}^{k}$ and $S_{q}^{(j)}$ defined on Alice's side commute with $V_{\hilb{B}}^{(k)}$ and all $J_{\pm}^{(i)}$, and by applying \cref{eq:single_isom_state_switch},
    \begin{equation}
    \label{eq:single_isom_junk_switch_proof}
    \begin{split}
        V^{(k)} S_{q}^{(j)} \ket{\xi_{\pm}^{k-1}}
        & = \left( V_{\hilb{B}}^{(k)} \prod_{i=1}^{k-1} J_{\pm}^{(i)} \right)\mathclose{} V_{\hilb{A}}^{(k)} S_{q}^{(j)} \ket{\psi} \\
        & = \left( V_{\hilb{B}}^{(k)} \prod_{i=1}^{k-1} J_{\pm}^{(i)} \right)\mathclose{} S_{q}^{(j)} V_{\hilb{A}}^{(k)} \ket{\psi} \\
        & = S_{q}^{(j)} V^{(k)} \ket{\xi_{\pm}^{k-1}} .
    \end{split}
    \end{equation}

    Finally, it follows similarly to \cref{eq:isom_step_proof} that
    \begin{subequations}
    \label{eq:isom_step_obs_proof}
    \begin{align}
    \begin{split}
        V^{(k)} S_{q}^{(k)} \ket{\xi_{+}^{k-1}}
        & = \sigma_{q}^{\hilb{B}_{k}^{\prime}} \ket{\Phi^{+}}_{\hilb{A}_{k}^{\prime} \hilb{B}_{k}^{\prime}} \\
        & \phantom{{}={}} \otimes \ket{0}_{\hilb{A}_{k}^{\prime\prime}} \ket{0}_{\hilb{B}_{k}^{\prime\prime}}
        \otimes \ket{\xi_{+}^{k}} ,
    \end{split} \\
    \begin{split}
        V^{(k)} S_{q}^{(k)} \ket{\xi_{-}^{k-1}}
        & = (-1)^{[q=3]} \sigma_{q}^{\hilb{B}_{k}^{\prime}} \ket{\Phi^{+}}_{\hilb{A}_{k}^{\prime} \hilb{B}_{k}^{\prime}} \\
        & \phantom{{}={}} \otimes \ket{1}_{\hilb{A}_{k}^{\prime\prime}} \ket{1}_{\hilb{B}_{k}^{\prime\prime}}
        \otimes \ket{\xi_{-}^{k}} .
    \end{split}
    \end{align}
    \end{subequations}
    This is because acting with $S_{q}^{(k)}$ followed by $V_{\hilb{A}}^{(k)}$ on both sides of \cref{eq:isom_bob_iter_comm} gives
    \begin{equation}
        V^{(k)} S_{q}^{(k)} \ket{\xi_{\pm}^{k-1}}
        = \left( \prod_{j=1}^{k-1} J_{\pm}^{(j)} \right)\mathclose{} V^{(k)} S_{q}^{(k)} \ket{\psi} .
    \end{equation}
    Substituting \cref{eq:single_isom_obs} for $V^{(k)} S_{q}^{(k)} \ket{\psi}$ then gives
    \begin{equation}
        V^{(k)} S_{q}^{(k)} \ket{\xi_{\pm}^{k-1}}
        = \sigma_{q}^{\hilb{B}_{k}^{\prime}} \ket{\Phi^{+}}_{\hilb{A}_{k}^{\prime} \hilb{B}_{k}^{\prime}}
        \otimes \left( \prod_{j=1}^{k-1} J_{\pm}^{(j)} \right)\mathclose{}
        \mathopen{}\Bigl(
        \ket{0}_{\hilb{A}_{k}^{\prime\prime}} \ket{0}_{\hilb{B}_{k}^{\prime\prime}}
        \otimes J_{+}^{(k)} \ket{\psi}
        + (-1)^{[q=3]} \ket{1}_{\hilb{A}_{k}^{\prime\prime}} \ket{1}_{\hilb{B}_{k}^{\prime\prime}}
        \otimes J_{-}^{(k)} \ket{\psi}
        \Bigr)\mathclose{} .
    \end{equation}
    Thus, using \cref{eq:conj_cancel} to simplify this, we have the desired \cref{eq:isom_step_obs_proof}.

    After the full application of the isometry $V = V^{(n)} \dots V^{(1)}$, and defining
    \begin{subequations}
    \begin{align}
        \ket{0}_{\hilb{A^{\prime\prime}}} & = \ket{0 \dots 0}_{\hilb{A}^{\prime\prime}} , \\
        \ket{1}_{\hilb{A^{\prime\prime}}} & = \ket{1 \dots 1}_{\hilb{A}^{\prime\prime}} , \\
        \ket{0}_{\hilb{B^{\prime\prime}}} & = \ket{0 \dots 0}_{\hilb{B}^{\prime\prime}} , \\
        \ket{1}_{\hilb{B^{\prime\prime}}} & = \ket{1 \dots 1}_{\hilb{B}^{\prime\prime}} ,
    \end{align}
    \end{subequations}
    \cref{eq:single_isom_1_state_proof,eq:isom_step_proof} together give
    \begin{equation}
    \label{eq:isom_state_result_proof}
    \begin{split}
        V \ket{\psi}
        = \bigotimes_{j=1}^{n} \ket{\Phi^{+}}_{\hilb{A}_{j}^{\prime} \hilb{B}_{j}^{\prime}}
        \otimes \bigl( &
        \ket{0}_{\hilb{A}^{\prime\prime}} \ket{0}_{\hilb{B}^{\prime\prime}} \otimes \ket{\xi_{+}^{n}} \\
        & + \ket{1}_{\hilb{A}^{\prime\prime}} \ket{1}_{\hilb{B}^{\prime\prime}} \otimes \ket{\xi_{-}^{n}}
        \bigr)\mathclose{} .
    \end{split}
    \end{equation}
    Similarly, we have
    \begin{equation}
    \label{eq:isom_obs_keq1_proof}
    \begin{split}
        V S_{q}^{(1)} \ket{\psi}
        & = \sigma_{q}^{\hilb{B}_{1}^{\prime}} \ket{\Phi^{+}}_{\hilb{A}_{1}^{\prime} \hilb{B}_{1}^{\prime}} \otimes V^{(n)} \dots V^{(2)} \mathopen{}\left( \ket{0}_{\hilb{A}_{1}^{\prime\prime}} \ket{0}_{\hilb{B}_{1}^{\prime\prime}} \otimes \ket{\xi_{+}^{1}} + (-1)^{[q=3]} \ket{1}_{\hilb{A}_{1}^{\prime\prime}} \ket{1}_{\hilb{B}_{1}^{\prime\prime}} \otimes \ket{\xi_{-}^{1}} \right) \\
        & = \sigma_{q}^{\hilb{B}_{1}^{\prime}} \bigotimes_{j=1}^{n} \ket{\Phi^{+}}_{\hilb{A}_{j}^{\prime} \hilb{B}_{j}^{\prime}} \otimes \left( \ket{0}_{\hilb{A}^{\prime\prime}} \ket{0}_{\hilb{B}^{\prime\prime}} \otimes \ket{\xi_{+}^{n}} + (-1)^{[q=3]} \ket{1}_{\hilb{A}^{\prime\prime}} \ket{1}_{\hilb{B}^{\prime\prime}} \otimes \ket{\xi_{-}^{n}} \right) \\
        & = \sigma_{q}^{\hilb{B}_{1}^{\prime}} \bigotimes_{j=1}^{n} \ket{\Phi^{+}}_{\hilb{A}_{j}^{\prime} \hilb{B}_{j}^{\prime}} \otimes \sigma_{3[q=3]}^{\hilb{B}^{\prime\prime}} \mathopen{}\left( \ket{0}_{\hilb{A}^{\prime\prime}} \ket{0}_{\hilb{B}^{\prime\prime}} \otimes \ket{\xi_{+}^{n}} + \ket{1}_{\hilb{A}^{\prime\prime}} \ket{1}_{\hilb{B}^{\prime\prime}} \otimes \ket{\xi_{-}^{n}} \right)\mathclose{} .
    \end{split}
    \end{equation}
    The first equality follows from \cref{eq:single_isom_1_obs_proof} and the second equality from \cref{eq:isom_step_proof}.
    Furthermore, for $1 < k \leq n$, we can write
    \begin{equation}
    \label{eq:isom_obs_kgt1_proof}
    \begin{split}
        V S_{q}^{(k)} \ket{\psi}
        & = \ket{\Phi^{+}}_{\hilb{A}_{1}^{\prime} \hilb{B}_{1}^{\prime}} \otimes V^{(n)} \dots V^{(2)} S_{q}^{(k)} \mathopen{}\left( \ket{0}_{\hilb{A}_{1}^{\prime\prime}} \ket{0}_{\hilb{B}_{1}^{\prime\prime}} \otimes \ket{\xi_{+}^{1}} + \ket{1}_{\hilb{A}_{1}^{\prime\prime}} \ket{1}_{\hilb{B}_{1}^{\prime\prime}} \otimes \ket{\xi_{-}^{1}} \right) \\
        & = \bigotimes_{j=1}^{k-1} \ket{\Phi^{+}}_{\hilb{A}_{j}^{\prime} \hilb{B}_{j}^{\prime}} \otimes V^{(n)} \dots V^{(k)} S_{q}^{(k)} \mathopen{}\left( \ket{0 \dots 0} \ket{0 \dots 0} \otimes \ket{\xi_{+}^{k-1}} + \ket{1 \dots 1} \ket{1 \dots 1} \otimes \ket{\xi_{-}^{k-1}} \right) \\
        & = \sigma_{q}^{\hilb{B}_{k}^{\prime}} \bigotimes_{j=1}^{n} \ket{\Phi^{+}}_{\hilb{A}_{j}^{\prime} \hilb{B}_{j}^{\prime}} \otimes \left( \ket{0}_{\hilb{A}^{\prime\prime}} \ket{0}_{\hilb{B}^{\prime\prime}} \otimes \ket{\xi_{+}^{n}} + (-1)^{[q=3]} \ket{1}_{\hilb{A}^{\prime\prime}} \ket{1}_{\hilb{B}^{\prime\prime}} \otimes \ket{\xi_{-}^{n}} \right) \\
        & = \sigma_{q}^{\hilb{B}_{k}^{\prime}} \bigotimes_{j=1}^{n} \ket{\Phi^{+}}_{\hilb{A}_{j}^{\prime} \hilb{B}_{j}^{\prime}} \otimes \sigma_{3[q=3]}^{\hilb{B}^{\prime\prime}} \mathopen{}\left( \ket{0}_{\hilb{A}^{\prime\prime}} \ket{0}_{\hilb{B}^{\prime\prime}} \otimes \ket{\xi_{+}^{n}} + \ket{1}_{\hilb{A}^{\prime\prime}} \ket{1}_{\hilb{B}^{\prime\prime}} \otimes \ket{\xi_{-}^{n}} \right)\mathclose{} .
    \end{split}
    \end{equation}
    For the first equality we used \cref{eq:single_isom_1_state_switch_proof,eq:single_isom_1_state_proof}; the second equality used \cref{eq:single_isom_junk_switch_proof,eq:isom_step_proof}; and the third equality used \cref{eq:isom_step_obs_proof,eq:isom_step_proof}.
    Together, \cref{eq:isom_state_result_proof,eq:isom_obs_keq1_proof,eq:isom_obs_kgt1_proof} have the desired form by taking $\ket{\xi_{0}} = \ket{\xi_{+}^{n}}$ and $\ket{\xi_{1}} = \ket{\xi_{-}^{n}}$.
\end{proof}

\section{Action of many untrusted operators}
\label{sec:many_operator_action}

\begin{proof}[Proof of \cref{lem:single_to_multiple_observable}]
    Consider some unitary operators $U_{\tilde{\hilb{A}}} \colon \tilde{\hilb{A}} \to \tilde{\hilb{A}}$ and $U_{\tilde{\hilb{B}}} \colon \tilde{\hilb{B}} \to \tilde{\hilb{B}}$ which extend the isometries $V_{\hilb{A}}$ and $V_{\hilb{B}}$ to have domains $\tilde{\hilb{A}}$ and $\tilde{\hilb{B}}$, respectively.
    This can be achieved by extending orthonormal bases of the images of each isometry to orthonormal bases of each full space.
    Define the local unitary $U = U_{\tilde{\hilb{A}}} \otimes U_{\tilde{\hilb{B}}}$ on $\tilde{\hilb{A}} \otimes \tilde{\hilb{B}}$.
    We may also consider the trivial extension (by appropriate direct sums) of all operators $A_{j}$ to $\tilde{\hilb{A}}$ and the state $\ket{\psi}$ to $\tilde{\hilb{A}} \otimes \tilde{\hilb{B}}$, each with zero weight in their new components.
    This preserves the norms of the state and each operator.
    Reusing the notation of the original state and operators also for their trivial extensions, we can now write that for all $j$
    \begin{subequations}
    \begin{align}
        U \ket{\psi} & = V \ket{\psi} , \\
        U A_{j} \ket{\psi} & = V A_{j} \ket{\psi} .
    \end{align}
    \end{subequations}

    From the assumption of \cref{eq:isom_state} and that $\lVert \tilde{B}_{j} \rVert \leq 1$, it follows that
    \begin{equation}
    \label{eq:switch_state}
        \tilde{B}_{j} U \ket{\psi}
        \approxe{\delta} \tilde{B}_{j} \ket{\phi} .
    \end{equation}
    Using the assumptions of \cref{eq:ref_stab,eq:isom_obs}, we can write
    \begin{equation}
    \label{eq:switch_obs}
        \tilde{B}_{j} \ket{\phi}
        = \tilde{A}_{j} \ket{\phi}
        \approxe{\delta} U A_{j} \ket{\psi} .
    \end{equation}
    Thus, combining \cref{eq:switch_state,eq:switch_obs} using the triangle inequality yields
    \begin{equation}
    \label{eq:switch_obs_on_ref_state}
        \tilde{B}_{j} U \ket{\psi}
        \approxe{2 \delta} U A_{j} \ket{\psi} .
    \end{equation}
    Since $V_{\hilb{A}}$ is an isometry, $V_{\hilb{A}}^{\dagger} V_{\hilb{A}} = I_{\hilb{A}}$, where $I_{\hilb{A}}$ is the identity operator on $\hilb{A}$.
    Thus,
    \begin{equation}
    \begin{split}
        V A_{j}
        & = V_{\hilb{A}} A_{j} \otimes V_{\hilb{B}} \\
        & = V_{\hilb{A}} A_{j} V_{\hilb{A}}^{\dagger} V_{\hilb{A}} \otimes V_{\hilb{B}} \\
        & = V_{\hilb{A}} A_{j} V_{\hilb{A}}^{\dagger} V .
    \end{split}
    \end{equation}
    Similarly, $U A_{j} = U_{\tilde{\hilb{A}}} A_{j} U_{\tilde{\hilb{A}}}^{\dagger} U$.
    We can therefore rewrite \cref{eq:switch_obs_on_ref_state} as
    \begin{equation}
    \label{eq:switch_all}
        \tilde{B}_{j} U \ket{\psi}
        \approxe{2 \delta} \left( U_{\tilde{\hilb{A}}} A_{j} U_{\tilde{\hilb{A}}}^{\dagger} \right)\mathclose{} U \ket{\psi} .
    \end{equation}
    We now use the properties just exhibited to examine the state $\tilde{A}_{1} \dots \tilde{A}_{m} \ket{\phi}$.
    Repeated use of \cref{eq:ref_stab} and the fact that operators defined on $\tilde{\hilb{A}}$ commute with those defined on $\tilde{\hilb{B}}$ gives
    \begin{equation}
        \tilde{A}_{1} \dots \tilde{A}_{m} \ket{\phi}
        = \tilde{B}_{m} \dots \tilde{B}_{1} \ket{\phi} .
    \end{equation}
    Using \cref{eq:isom_state} one time, and again that $\lVert \tilde{B}_{j} \rVert \leq 1$, we can then write
    \begin{equation}
        \tilde{A}_{1} \dots \tilde{A}_{m} \ket{\phi}
        \approxe{\delta} \left( \tilde{B}_{m} \dots \tilde{B}_{1} \right)\mathclose{} U \ket{\psi} .
    \end{equation}
    Repeated use of \cref{eq:switch_all}, noting that $\lVert U_{\tilde{\hilb{A}}} A_{j} U_{\tilde{\hilb{A}}}^{\dagger} \rVert \leq \lVert A_{j} \rVert \leq 1$ for all $j$ since $U_{\tilde{\hilb{A}}}$ is unitary, gives via the triangle inequality
    \begin{equation}
        \left( \tilde{B}_{m} \dots \tilde{B}_{1} \right)\mathclose{} U \ket{\psi}
        \approxe{2 m \delta} \left( U_{\tilde{\hilb{A}}} A_{1} U_{\tilde{\hilb{A}}}^{\dagger} \right) \dots \left( U_{\tilde{\hilb{A}}} A_{m} U_{\tilde{\hilb{A}}}^{\dagger} \right)\mathclose{} U \ket{\psi} .
    \end{equation}
    Therefore,
    \begin{equation}
    \label{eq:expanded_isom}
        \tilde{A}_{1} \dots \tilde{A}_{m} \ket{\phi}
        \approxe{(2m + 1) \delta} \left( U_{\tilde{\hilb{A}}} A_{1} U_{\tilde{\hilb{A}}}^{\dagger} \right) \dots \left( U_{\tilde{\hilb{A}}} A_{m} U_{\tilde{\hilb{A}}}^{\dagger} \right)\mathclose{} U \ket{\psi} .
    \end{equation}
    Again using the fact that $V_{\hilb{A}}^{\dagger} V_{\hilb{A}} = I_{\hilb{A}}$, we have
    \begin{equation}
    \begin{split}
        \left( V_{\hilb{A}} A_{1} V_{\hilb{A}}^{\dagger} \right) \dots \left( V_{\hilb{A}} A_{m} V_{\hilb{A}}^{\dagger} \right)\mathclose{} V
        & = V_{\hilb{A}} (A_{1} \dots A_{m}) V_{\hilb{A}}^{\dagger} V \\
        & = V_{\hilb{A}} (A_{1} \dots A_{m}) V_{\hilb{A}}^{\dagger} V_{\hilb{A}} \otimes V_{\hilb{B}} \\
        & = V_{\hilb{A}} (A_{1} \dots A_{m}) \otimes V_{\hilb{B}} \\
        & = V (A_{1} \dots A_{m}) ,
    \end{split}
    \end{equation}
    and similarly
    \begin{equation}
        \left( U_{\tilde{\hilb{A}}} A_{1} U_{\tilde{\hilb{A}}}^{\dagger} \right) \dots \left( U_{\tilde{\hilb{A}}} A_{m} U_{\tilde{\hilb{A}}}^{\dagger} \right)\mathclose{} U
        = U (A_{1} \dots A_{m}) .
    \end{equation}
    Thus, \cref{eq:expanded_isom} becomes
    \begin{equation}
        \tilde{A}_{1} \dots \tilde{A}_{m} \ket{\phi}
        \approxe{(2m + 1) \delta} U (A_{1} \dots A_{m}) \ket{\psi} .
    \end{equation}
    Due to the construction of $U$ and the extended versions of the operators $A_{j}$ and the state $\ket{\psi}$, we have
    $U (A_{1} \dots A_{m}) \ket{\psi} = V (A_{1} \dots A_{m}) \ket{\psi}$.
    Therefore,
    \begin{equation}
        \tilde{A}_{1} \dots \tilde{A}_{m} \ket{\phi}
        \approxe{(2m + 1) \delta} V (A_{1} \dots A_{m}) \ket{\psi}
    \end{equation}
    as required.
\end{proof}

\section{Proof of state preparation}
\label{sec:state_preparation_proof}

\begin{proof}[Proof of \cref{thm:state_preparation}]
    Denote the state
    \begin{equation}
        \ket{\psi^{\prime}}
        = \bigotimes_{j=1}^{n} \ket{\Phi^{+}}_{\hilb{A}_{j}^{\prime} \hilb{B}_{j}^{\prime}} \otimes \ket{\xi}
    \end{equation}
    and projective measurement operators
    \begin{equation}
    \begin{split}
        \hat{N}_{\bm a \mid \bm \chi}
        & = \bigotimes_{j=1}^{n} \proj{\sigma_{\chi_{j}}^{a_{j}}}_{\hilb{B}_{j}^{\prime}} \otimes \proj{0}_{\hilb{B}^{\prime\prime}} \\
        & \phantom{{}={}} + \bigotimes_{j=1}^{n} \proj{\sigma_{\chi_{j}}^{a_{j}}}_{\hilb{B}_{j}^{\prime}}^{*} \otimes \proj{1}_{\hilb{B}^{\prime\prime}} \\
        & = \bigotimes_{j=1}^{n} \proj{\sigma_{\chi_{j}}^{a_{j}}}_{\hilb{B}_{j}^{\prime}} \otimes \proj{0}_{\hilb{B}^{\prime\prime}} \\
        & \phantom{{}={}} + \bigotimes_{j=1}^{n} \proj*{\sigma_{\chi_{j}}^{a_{j} (-1)^{[\chi_{j} = \mathrm{z}]}}}_{\hilb{B}_{j}^{\prime}} \otimes \proj{1}_{\hilb{B}^{\prime\prime}} .
    \end{split}
    \end{equation}
    For any $\chi \in \{1, \dots, 5\}$, the Bell state $\ket{\Phi^{+}}$ can be written in the form
    \begin{equation}
        \ket{\Phi^{+}}
        = \frac{1}{\sqrt{2}} \mathopen{}\left( \ket{\sigma_{\chi}^{+}}^{\star} \otimes \ket{\sigma_{\chi}^{+}}
        + \ket{\sigma_{\chi}^{-}}^{\star} \otimes \ket{\sigma_{\chi}^{-}} \right)\mathclose{} ,
    \end{equation}
    where the superscript $\star$ (as opposed to the usual $*$) denotes complex conjugation performed in the computational basis.
    This is such that
    \begin{subequations}
    \begin{gather}
        \ket{\sigma_{\mathrm{x}}^{\pm}}^{\star} = \ket{\sigma_{\mathrm{x}}^{\pm}} , \quad
        \ket{\sigma_{\mathrm{y}}^{\pm}}^{\star} = \ket{\sigma_{\mathrm{y}}^{\mp}} , \quad
        \ket{\sigma_{\mathrm{z}}^{\pm}}^{\star} = \ket{\sigma_{\mathrm{z}}^{\pm}} , \\
        \ket{\sigma_{\mathrm{x} \pm \mathrm{y}}^{+}}^{\star} = \ket{\sigma_{\mathrm{x} \mp \mathrm{y}}^{+}} , \quad
        \ket{\sigma_{\mathrm{x} \pm \mathrm{y}}^{-}}^{\star} = \ket{\sigma_{\mathrm{x} \mp \mathrm{y}}^{-}} .
    \end{gather}
    \end{subequations}
    We then have
    \begin{equation}
    \begin{split}
        \frac{\hat{N}_{\bm a \mid \bm \chi} \ket{\psi^{\prime}}}{\sqrt{\bra{\psi^{\prime}} \hat{N}_{\bm a \mid \bm \chi} \ket{\psi^{\prime}}}}
        ={} & \bigotimes_{j=1}^{n} \ket{\sigma_{\chi_{j}}^{a_{j}}}_{\hilb{A}_{j}^{\prime}}^{\star}
        \otimes \bigotimes_{j=1}^{n} \ket{\sigma_{\chi_{j}}^{a_{j}}}_{\hilb{B}_{j}^{\prime}}
        \otimes \ket{0}_{\hilb{A}^{\prime\prime}} \ket{0}_{\hilb{B}^{\prime\prime}}
        \otimes \ket{\xi_{0}}_{\hilb{A} \hilb{B}} \\
        & + \bigotimes_{j=1}^{n} \ket*{\sigma_{\chi_{j}}^{a_{j} (-1)^{[\chi_{j} = \mathrm{z}]}}}_{\hilb{A}_{j}^{\prime}}^{\star}
        \otimes \bigotimes_{j=1}^{n} \ket*{\sigma_{\chi_{j}}^{a_{j} (-1)^{[\chi_{j} = \mathrm{z}]}}}_{\hilb{B}_{j}^{\prime}}
        \otimes \ket{1}_{\hilb{A}^{\prime\prime}} \ket{1}_{\hilb{B}^{\prime\prime}}
        \otimes \ket{\xi_{1}}_{\hilb{A} \hilb{B}} .
    \end{split}
    \end{equation}
    Tracing out $\hilb{A}$, $\hilb{A}^{\prime}$, and $\hilb{A}^{\prime\prime}$ gives
    \begin{equation}
    \label{eq:alice_traced_trusted}
    \begin{split}
        \tr_{\hilb{A} \hilb{A}^{\prime} \hilb{A}^{\prime\prime}} \mathopen{}\left(
        \frac{\hat{N}_{\bm a \mid \bm \chi} \proj{\psi^{\prime}} \hat{N}_{\bm a \mid \bm \chi}}{\bra{\psi^{\prime}} \hat{N}_{\bm a \mid \bm \chi} \ket{\psi^{\prime}}}
        \right)
        ={} & \bigotimes_{j=1}^{n} \proj{\sigma_{\chi_{j}}^{a_{j}}}_{\hilb{B}_{j}^{\prime}}
        \otimes \proj{0}_{\hilb{B}^{\prime\prime}}
        \otimes \tr_{\hilb{A}}(\proj{\xi_{0}}) \\
        & + \bigotimes_{j=1}^{n} \proj*{\sigma_{\chi_{j}}^{a_{j} (-1)^{[\chi_{j} = \mathrm{z}]}}}_{\hilb{B}_{j}^{\prime}}
        \otimes \proj{1}_{\hilb{B}^{\prime\prime}}
        \otimes \tr_{\hilb{A}}(\proj{\xi_{1}}) \\
        ={} & \proj{e_{\bm a \mid \bm \chi}}_{\hilb{B}^{\prime}}
        \otimes \proj{0}_{\hilb{B}^{\prime\prime}} \otimes \beta_{0}
        + \proj{e_{\bm a \mid \bm \chi}^{*}}_{\hilb{B}^{\prime}}
        \otimes \proj{1}_{\hilb{B}^{\prime\prime}} \otimes \beta_{1} ,
    \end{split}
    \end{equation}
    where $\beta_{0} = \tr_{\hilb{A}}(\proj{\xi_{0}})$ and $\beta_{1} = \tr_{\hilb{A}}(\proj{\xi_{1}})$ and we have $\tr(\beta_{0}) + \tr(\beta_{1}) = 1$.
    Using properties of the partial trace, we also have
    \begin{equation}
    \label{eq:alice_traced_untrusted}
        \tr_{\hilb{A} \hilb{A}^{\prime} \hilb{A}^{\prime\prime}} \mathopen{}\left(
        \frac{V^{\bm \chi} \Pi_{\bm a \mid \bm \chi}^{\hilb{A}} \proj{\psi} \Pi_{\bm a \mid \bm \chi}^{\hilb{A}} {V^{\bm \chi}}^{\dagger}}{\bra{\psi} \Pi_{\bm a \mid \bm \chi}^{\hilb{A}} \ket{\psi}}
        \right)
        = V_{\hilb{B}} \rho_{\hilb{B}}^{\bm a \mid \bm \chi} V_{\hilb{B}}^{\dagger} .
    \end{equation}
    Using the linearity of the partial trace to combine \cref{eq:alice_traced_trusted,eq:alice_traced_untrusted}, and since the trace class norm is decreasing under the partial trace, we have
    \begin{multline}
    \label{eq:trace_norm_partial_trace_compact}
        \left\lVert
        V_{\hilb{B}} \rho_{\hilb{B}}^{\bm a \mid \bm \chi} V_{\hilb{B}}^{\dagger}
        - \left( \proj{e_{\bm a \mid \bm \chi}} \otimes \proj{0} \otimes \beta_{0}
        + \proj{e_{\bm a \mid \bm \chi}^{*}} \otimes \proj{1} \otimes \beta_{1} \right)\mathclose{}
        \right\rVert_{1} \\
        \leq \left\lVert
        \frac{V^{\bm \chi} \Pi_{\bm a \mid \bm \chi}^{\hilb{A}} \proj{\psi} \Pi_{\bm a \mid \bm \chi}^{\hilb{A}} {V^{\bm \chi}}^{\dagger}}{\bra{\psi} \Pi_{\bm a \mid \bm \chi}^{\hilb{A}} \ket{\psi}}
        - \frac{\hat{N}_{\bm a} \proj{\psi^{\prime}} \hat{N}_{\bm a}}{\bra{\psi^{\prime}} \hat{N}_{\bm a} \ket{\psi^{\prime}}}
        \right\rVert_{1} .
    \end{multline}

    Let us introduce a bijection $u \colon \{+, -\}^{n} \to \{0, 1\}^{n}$ which converts between representations of binary strings by taking every entry $+$ to $0$ and every entry $-$ to $1$.
    Let $\bm s \in \{0, 1\}^{n}$ be any string.
    \Cref{eq:local_proj_alice,eq:model_observables_alice} together imply that
    \begin{equation}
    \label{eq:alice_proj_to_obs_untrusted}
        A_{\bm \chi}^{\bm s}
        = \sum_{\bm a \in \{+, -\}^{n}} (-1)^{u(\bm a) \cdot \bm s} \Pi_{\bm a \mid \bm \chi}^{\hilb{A}} .
    \end{equation}
    Defining $\hat{B}_{\bm \chi}^{\bm s}$ by
    \begin{equation}
    \label{eq:alice_proj_to_obs_trusted}
    \begin{split}
        \hat{B}_{\bm \chi}^{\bm s}
        & = \sum_{\bm a \in \{+, -\}^{n}} (-1)^{u(\bm a) \cdot \bm s} \hat{N}_{\bm a \mid \bm \chi} \\
        & = \bigotimes_{j=1}^{n} \mathopen{}\left( \sigma_{\chi_{j}}^{\hilb{B}_{j}^{\prime}} \right)^{s_{j}} \otimes \proj{0}_{\hilb{B}^{\prime\prime}}
        + \bigotimes_{j=1}^{n} \mathopen{}\left( {\sigma_{\chi_{j}}^{\hilb{B}_{j}^{\prime}}}^{*} \right)^{s_{j}} \otimes \proj{1}_{\hilb{B}^{\prime\prime}} ,
    \end{split}
    \end{equation}
    the result of \cref{cor:protocol_isometry} in this notation is that
    \begin{equation}
    \label{eq:protocol_isometry_obs_split_compact}
        \left\lVert
        V^{\bm \chi} A_{\bm \chi}^{\bm s} \ket{\psi}
        - \hat{B}_{\bm \chi}^{\bm s} \ket{\psi^{\prime}}
        \right\rVert
        \leq \gamma(\varepsilon, n) .
    \end{equation}
    Due to \cref{eq:alice_proj_to_obs_untrusted,eq:alice_proj_to_obs_trusted,eq:protocol_isometry_obs_split_compact}, we can now apply \cref{thm:robust_prob} for each $\bm \chi \in \mathcal{S}$.
    This gives, with probability at least $1 - 4 \gamma(\varepsilon, n)^{2/3}$ over all $\bm a \in \{+, -\}^{n}$ given $\bm \chi$, that one half multiplied by the right-hand side of \cref{eq:trace_norm_partial_trace_compact} is bounded above as
    \begin{equation}
        \frac{1}{2} \left\lVert
        \frac{V^{\bm \chi} \Pi_{\bm a \mid \bm \chi}^{\hilb{A}} \proj{\psi} \Pi_{\bm a \mid \bm \chi}^{\hilb{A}} {V^{\bm \chi}}^{\dagger}}{\bra{\psi} \Pi_{\bm a \mid \bm \chi}^{\hilb{A}} \ket{\psi}}
        - \frac{\hat{N}_{\bm a \mid \bm \chi} \proj{\psi^{\prime}} \hat{N}_{\bm a \mid \bm \chi}}{\bra{\psi^{\prime}} \hat{N}_{\bm a \mid \bm \chi} \ket{\psi^{\prime}}}
        \right\rVert_{1}
        \leq \gamma(\varepsilon, n)^{2/3} .
    \end{equation}
    Therefore, with probability at least $1 - 4 \tau(\varepsilon, n)$ over all $\bm a \in \{+, -\}^{n}$ given $\bm \chi$, we have
    \begin{equation}
    \begin{split}
        \frac{1}{2} \Bigl\lVert
        V_{\hilb{B}} \rho_{\hilb{B}}^{\bm a \mid \bm \chi} V_{\hilb{B}}^{\dagger}
        - \Bigl( & \proj{e_{\bm a \mid \bm \chi}} \otimes \proj{0} \otimes \beta_{0} \\
        & + \proj{e_{\bm a \mid \bm \chi}^{*}} \otimes \proj{1} \otimes \beta_{1} \Bigr)\mathclose{}
        \Bigr\rVert_{1}
        \leq \tau(\varepsilon, n) ,
    \end{split}
    \end{equation}
    where we define $\tau(\varepsilon, n) = \gamma(\varepsilon, n)^{2/3}$.
\end{proof}

\printbibliography

@article{clauser1969proposed,
    title={Proposed experiment to test local hidden-variable theories},
    author={Clauser, John F. and Horne, Michael A. and Shimony, Abner and Holt, Richard A.},
    journal={Phys. Rev. Lett.},
    volume={23},
    number={15},
    pages={880--884},
    year={1969},
    month={10},
    publisher={American Physical Society},
    doi={10.1103/PhysRevLett.23.880},
    issn={1079-7114}
}

@inproceedings{mayers1998quantum,
    author={Mayers, Dominic and Yao, Andrew},
    booktitle={Proceedings 39th Annual Symposium on Foundations of Computer Science (Cat. No. 98CB36280)}, 
    title={Quantum cryptography with imperfect apparatus}, 
    year={1998},
    month={11},
    pages={503--509},
    publisher={IEEE},
    venue={Palo Alto, CA, USA},
    issn={0272-5428},
    doi={10.1109/SFCS.1998.743501}
}

@article{raussendorf2001one,
    title={A one-way quantum computer},
    author={Raussendorf, Robert and Briegel, Hans J.},
    journal={Phys. Rev. Lett.},
    volume={86},
    number={22},
    pages={5188--5191},
    year={2001},
    month={5},
    publisher={American Physical Society},
    doi={10.1103/PhysRevLett.86.5188},
    issn={1079-7114}
}

@article{mayers2004self,
    author={Mayers, Dominic and Yao, Andrew},
    title={Self testing quantum apparatus},
    year={2004},
    month={7},
    volume={4},
    number={4},
    journal={Quantum Inf. Comput.},
    pages={273--286},
    publisher={Rinton Press, Incorporated},
    issn={1533-7146},
    doi={10.26421/QIC4.4-3}
}

@article{childs2005unified,
    title={Unified derivations of measurement-based schemes for quantum computation},
    author={Childs, Andrew M. and Leung, Debbie W. and Nielsen, Michael A.},
    journal={Phys. Rev. A},
    volume={71},
    number={3},
    pages={032318},
    year={2005},
    month={3},
    publisher={American Physical Society},
    doi={10.1103/PhysRevA.71.032318},
    issn={2469-9934}
}

@article{navascues2007bounding,
    title={Bounding the set of quantum correlations},
    author={Navascu{\'e}s, Miguel and Pironio, Stefano and Ac{\'i}n, Antonio},
    journal={Phys. Rev. Lett.},
    volume={98},
    number={1},
    pages={010401},
    year={2007},
    month={1},
    publisher={American Physical Society},
    doi={10.1103/PhysRevLett.98.010401},
    issn={1079-7114}
}

@article{danos2007measurement,
    author={Danos, Vincent and Kashefi, Elham and Panangaden, Prakash},
    title={The measurement calculus},
    year={2007},
    month={4},
    journal={J. ACM},
    publisher={Association for Computing Machinery},
    volume={54},
    number={2},
    pages={8–-es},
    issn={0004-5411},
    doi={10.1145/1219092.1219096}
}

@article{navascues2008convergent,
    title={A convergent hierarchy of semidefinite programs characterizing the set of quantum correlations},
    author={Navascu{\'e}s, Miguel and Pironio, Stefano and Ac{\'i}n, Antonio},
    journal={New J. Phys.},
    volume={10},
    number={7},
    pages={073013},
    year={2008},
    month={7},
    publisher={{IOP} Publishing},
    issn={1367-2630},
    doi={10.1088/1367-2630/10/7/073013}
}

@article{broadbent2009parallelizing,
    title={Parallelizing quantum circuits},
    author={Anne Broadbent and Elham Kashefi},
    journal={Theor. Comput. Sci.},
    volume={410},
    number={26},
    pages={2489--2510},
    year={2009},
    month={1},
    publisher={Elsevier},
    issn={0304-3975},
    doi={https://doi.org/10.1016/j.tcs.2008.12.046}
}

@inproceedings{broadbent2009universal,
    author={Broadbent, Anne and Fitzsimons, Joseph and Kashefi, Elham},
    booktitle={2009 50th Annual IEEE Symposium on Foundations of Computer Science}, 
    title={Universal blind quantum computation}, 
    year={2009},
    month={10},
    pages={517--526},
    doi={10.1109/FOCS.2009.36},
    issn={0272-5428}
}

@inproceedings{mckague2011generalized,
    title={Generalized self-testing and the security of the 6-state protocol},
    booktitle={Theory of Quantum Computation, Communication, and Cryptography},
    isbn={978-3-642-18073-6},
    issn={1611-3349},
    doi={10.1007/978-3-642-18073-6_10},
    publisher={Springer Berlin Heidelberg},
    address={Berlin, Heidelberg},
    author={McKague, Matthew and Mosca, Michele},
    editor={van Dam, Wim and Kendon, Vivien M and Severini, Simone},
    year={2011},
    pages={113–-130}
}

@article{colbeck2011private,
    title={Private randomness expansion with untrusted devices},
    author={Colbeck, Roger and Kent, Adrian},
    journal={J. Phys. A Math. Theor.},
    volume={44},
    number={9},
    pages={095305},
    year={2011},
    month={2},
    publisher={{IOP} Publishing},
    issn={1751-8121},
    doi={10.1088/1751-8113/44/9/095305}
}

@article{barz2012demonstration,
    title={Demonstration of blind quantum computing},
    author={Barz, Stefanie and Kashefi, Elham and Broadbent, Anne and Fitzsimons, Joseph F and Zeilinger, Anton and Walther, Philip},
    journal={Science},
    volume={335},
    number={6066},
    pages={303--308},
    year={2012},
    month={1},
    publisher={American Association for the Advancement of Science},
    doi={10.1126/science.1214707}
}

@article{rastegin2012relations,
    title={Relations for certain symmetric norms and anti-norms before and after partial trace},
    author={Rastegin, Alexey E},
    journal={J. Stat. Phys.},
    volume={148},
    number={6},
    pages={1040--1053},
    year={2012},
    month={8},
    publisher={Springer},
    doi={10.1007/s10955-012-0569-8},
    issn={1572-9613}
}

@article{mckague2012robust,
	doi={10.1088/1751-8113/45/45/455304},
	year={2012},
	month={10},
	publisher={{IOP} Publishing},
	volume={45},
	number={45},
	pages={455304},
	author={McKague, Matthew and Yang, Tzyh Haur and Scarani, Valerio},
	title={Robust self-testing of the singlet},
	journal={J. Phys. A Math. Theor.},
	issn={1751-8121}
}

@article{barrett2013memory,
    title={Memory attacks on device-independent quantum cryptography},
    author={Barrett, Jonathan and Colbeck, Roger and Kent, Adrian},
    journal={Phys. Rev. Lett.},
    volume={110},
    number={1},
    pages={010503},
    year={2013},
    month={1},
    publisher={American Physical Society},
    doi={10.1103/PhysRevLett.110.010503},
    issn={1079-7114}
}

@inproceedings{reichardt2013classical1,
    author={Reichardt, Ben W. and Unger, Falk and Vazirani, Umesh},
    title={A classical leash for a quantum system: Command of quantum systems via rigidity of {CHSH} games},
    year={2013},
    month={1},
    isbn={9781450318594},
    publisher={Association for Computing Machinery},
    address={New York, NY, USA},
    doi={10.1145/2422436.2422473},
    booktitle={Proceedings of the 4th Conference on Innovations in Theoretical Computer Science},
    pages={321--322},
    venue={Berkeley, California, USA},
    series={ITCS '13}
}

@article{reichardt2013classical2,
    title={Classical command of quantum systems},
    author={Reichardt, Ben W and Unger, Falk and Vazirani, Umesh},
    journal={Nature},
    volume={496},
    number={7446},
    pages={456--460},
    year={2013},
    month={4},
    publisher={Nature Publishing Group},
    doi={10.1038/nature12035},
    issn={1476-4687}
}

@article{yang2013robust,
    title={Robust self-testing of unknown quantum systems into any entangled two-qubit states},
    author={Yang, Tzyh Haur and Navascu{\'e}s, Miguel},
    journal={Phys. Rev. A},
    volume={87},
    number={5},
    pages={050102},
    year={2013},
    month={5},
    publisher={American Physical Society},
    doi={10.1103/PhysRevA.87.050102},
    issn={2469-9934}
}

@article{barz2013experimental,
    title={Experimental verification of quantum computation},
    author={Barz, Stefanie and Fitzsimons, Joseph F and Kashefi, Elham and Walther, Philip},
    journal={Nat. Phys.},
    volume={9},
    number={11},
    pages={727--731},
    year={2013},
    month={9},
    publisher={Nature Publishing Group},
    issn={1745-2481},
    doi={10.1038/nphys2763}
}

@article{morimae2014verification,
    title={Verification for measurement-only blind quantum computing},
    author={Morimae, Tomoyuki},
    journal={Phys. Rev. A},
    volume={89},
    number={6},
    pages={060302},
    year={2014},
    month={6},
    publisher={American Physical Society},
    doi={10.1103/PhysRevA.89.060302},
    issn={2469-9934}
}

@article{yang2014robust,
    title={Robust and versatile black-box certification of quantum devices},
    author={Yang, Tzyh Haur and V{\'e}rtesi, Tam{\'a}s and Bancal, Jean-Daniel and Scarani, Valerio and Navascu{\'e}s, Miguel},
    journal={Phys. Rev. Lett.},
    volume={113},
    number={4},
    pages={040401},
    year={2014},
    month={7},
    publisher={American Physical Society},
    doi={10.1103/PhysRevLett.113.040401},
    issn={1079-7114}
}

@article{wu2014robust,
    title={Robust self-testing of the three-qubit $W$ state},
    author={Wu, Xingyao and Cai, Yu and Yang, Tzyh Haur and Le, Huy Nguyen and Bancal, Jean-Daniel and Scarani, Valerio},
    journal={Phys. Rev. A},
    volume={90},
    number={4},
    pages={042339},
    year={2014},
    month={10},
    publisher={American Physical Society},
    doi={10.1103/PhysRevA.90.042339},
    issn={2469-9934}
}

@inbook{mancinska2014maximally,
    author={Man{\v{c}}inska, Laura},
    editor={Calude, Cristian S. and Freivalds, R{\=u}si{\c{n}}{\v{s}} and Kazuo, Iwama},
    title={Maximally entangled state in pseudo-telepathy games},
    booktitle={Computing with New Resources},
    year={2014},
    month={12},
    publisher={Springer International Publishing},
    address={Cham},
    pages={200--207},
    isbn={978-3-319-13350-8},
    doi={10.1007/978-3-319-13350-8_15}
}

@inproceedings{dunjko2014composable,
    author={Dunjko, Vedran and Fitzsimons, Joseph F and Portmann, Christopher and Renner, Renato},
    title={Composable security of delegated quantum computation},
    editor={Sarkar, Palash and Iwata, Tetsu},
    booktitle={Advances in Cryptology -- ASIACRYPT 2014},
    year={2014},
    month={12},
    publisher={Springer Berlin Heidelberg},
    address={Berlin, Heidelberg},
    venue={Kaohsiung, Taiwan},
    pages={406--425},
    doi={10.1007/978-3-662-45608-8_22},
    isbn={978-3-662-45608-8}
}

@article{bancal2015physical,
    title={Physical characterization of quantum devices from nonlocal correlations},
    author={Bancal, Jean-Daniel and Navascu{\'e}s, Miguel and Scarani, Valerio and V{\'e}rtesi, Tam{\'a}s and Yang, Tzyh Haur},
    journal={Phys. Rev. A},
    volume={91},
    number={2},
    pages={022115},
    year={2015},
    month={2},
    publisher={American Physical Society},
    doi={10.1103/PhysRevA.91.022115},
    issn={1094-1622}
}

@article{navascues2015almost,
    title={Almost quantum correlations},
    author={Navascu{\'e}s, Miguel and Guryanova, Yelena and Hoban, Matty J and Ac{\'i}n, Antonio},
    journal={Nat. Commun.},
    volume={6},
    number={1},
    pages={1--7},
    year={2015},
    month={2},
    publisher={Nature Publishing Group},
    issn={2041-1723},
    doi={10.1038/ncomms7288}
}

@article{bamps2015sum,
    title={Sum-of-squares decompositions for a family of {C}lauser--{H}orne--{S}himony--{H}olt-like inequalities and their application to self-testing},
    author={Bamps, C{\'e}dric and Pironio, Stefano},
    journal={Phys. Rev. A},
    volume={91},
    number={5},
    pages={052111},
    year={2015},
    month={5},
    publisher={American Physical Society},
    doi={10.1103/PhysRevA.91.052111},
    issn={2469-9934}
}

@article{gheorghiu2015robustness,
    doi={10.1088/1367-2630/17/8/083040},
    year={2015},
    month={8},
    publisher={{IOP} Publishing},
    volume={17},
    number={8},
    pages={083040},
    author={Alexandru Gheorghiu and Elham Kashefi and Petros Wallden},
    title={Robustness and device independence of verifiable blind quantum computing},
    journal={New J. Phys.},
    issn={1367-2630}
}

@article{hayashi2015verifiable,
    title={Verifiable measurement-only blind quantum computing with stabilizer testing},
    author={Hayashi, Masahito and Morimae, Tomoyuki},
    journal={Phys. Rev. Lett.},
    volume={115},
    number={22},
    pages={220502},
    year={2015},
    month={11},
    publisher={American Physical Society},
    doi={10.1103/PhysRevLett.115.220502},
    issn={1079-7114}
}

@misc{hajdusek2015device,
    title={Device-independent verifiable blind quantum computation},
    author={Hajdu{\v{s}}ek, Michal and P{\'e}rez-Delgado, Carlos A and Fitzsimons, Joseph F},
    year={2015},
    month={12},
    eprint={1502.02563},
    archivePrefix={arXiv},
    primaryClass={quant-ph},
    doi={10.48550/ARXIV.1502.02563}
}

@article{wang2016all,
    doi={10.1088/1367-2630/18/2/025021},
    year={2016},
    month={2},
    publisher={{IOP} Publishing},
    volume={18},
    number={2},
    pages={025021},
    author={Yukun Wang and Xingyao Wu and Valerio Scarani},
    title={All the self-testings of the singlet for two binary measurements},
    journal={New J. Phys.},
    issn={1367-2630}
}

@misc{dunjko2016blind,
    title={Blind quantum computing with two almost identical states},
    author={Dunjko, Vedran and Kashefi, Elham},
    year={2016},
    month={4},
    eprint={1604.01586},
    archivePrefix={arXiv},
    primaryClass={quant-ph},
    doi={10.48550/ARXIV.1604.01586}
}

@article{mckague2016self,
    title={Self-testing in parallel},
    volume={18},
    number={4},
    journal={New J. Phys.},
    publisher={{IOP} Publishing},
    author={Matthew McKague},
    year={2016},
    month={4},
    pages={045013},
    issn={1367-2630},
    doi={10.1088/1367-2630/18/4/045013}
}

@article{acin2016optimal,
    title={Optimal randomness certification from one entangled bit},
    author={Ac{\'i}n, Antonio and Pironio, Stefano and V{\'e}rtesi, Tam{\'a}s and Wittek, Peter},
    journal={Phys. Rev. A},
    volume={93},
    number={4},
    pages={040102},
    year={2016},
    month={4},
    publisher={American Physical Society},
    doi={10.1103/PhysRevA.93.040102},
    issn={2469-9934}
}

@article{mckague2016interactive,
    author={McKague, Matthew},
    title={Interactive proofs for $\mathsf{BQP}$ via self-tested graph states},
    year={2016},
    month={6},
    pages={1--42},
    doi={10.4086/toc.2016.v012a003},
    publisher={Theory of Computing},
    journal={Theory Comput.},
    volume={12},
    number={3},
    issn={1557-2862}
}

@article{wu2016device,
    title={Device-independent parallel self-testing of two singlets},
    author={Wu, Xingyao and Bancal, Jean-Daniel and McKague, Matthew and Scarani, Valerio},
    journal={Phys. Rev. A},
    volume={93},
    number={6},
    pages={062121},
    year={2016},
    month={6},
    publisher={American Physical Society},
    doi={10.1103/PhysRevA.93.062121},
    issn={2469-9934}
}

@article{alsina2016experimental,
    title={Experimental test of {Mermin} inequalities on a five-qubit quantum computer},
    author={Alsina, Daniel and Latorre, Jos{\'e} Ignacio},
    journal={Phys. Rev. A},
    volume={94},
    number={1},
    pages={012314},
    year={2016},
    month={7},
    publisher={American Physical Society},
    doi={10.1103/PhysRevA.94.012314},
    issn={2469-9934}
}

@article{kaniewski2016analytic,
    title={Analytic and nearly optimal self-testing bounds for the {Clauser}-{Horne}-{Shimony}-{Holt} and {Mermin} inequalities},
    author={Kaniewski, J{\k{e}}drzej},
    journal={Phys. Rev. Lett.},
    volume={117},
    number={7},
    pages={070402},
    year={2016},
    month={8},
    publisher={American Physical Society},
    doi={10.1103/PhysRevLett.117.070402},
    issn={1079-7114}
}

@article{devitt2016performing,
    title={Performing quantum computing experiments in the cloud},
    author={Devitt, Simon J.},
    journal={Phys. Rev. A},
    volume={94},
    number={3},
    pages={032329},
    year={2016},
    month={9},
    publisher={American Physical Society},
    doi={10.1103/PhysRevA.94.032329},
    issn={2469-9934}
}

@misc{coudron2016parallel,
    title={The parallel-repeated magic square game is rigid}, 
    author={Matthew Coudron and Anand Natarajan},
    year={2016},
    month={9},
    eprint={1609.06306},
    archivePrefix={arXiv},
    primaryClass={quant-ph},
    doi = {10.48550/ARXIV.1609.06306}
}

@article{ostrev2016structure,
    author={Ostrev, Dimiter},
    title={The structure of nearly-optimal quantum strategies for the non-local {XOR} games},
    year={2016},
    month={10},
    publisher={Rinton Press, Incorporated},
    volume={16},
    number={13--14},
    journal={Quantum Inf. Comput.},
    pages={1191--1211},
    doi={10.26421/QIC16.13-14-6},
    issn={1533-7146}
}

@article{gheorghiu2017rigidity,
    author={Alexandru Gheorghiu and Petros Wallden and Elham Kashefi},
    title={Rigidity of quantum steering and one-sided device-independent verifiable quantum computation},
    year={2017},
    month={2},
    journal={New J. Phys.},
    volume={19},
    number={2},
    pages={023043},
    publisher={{IOP} Publishing},
    doi={10.1088/1367-2630/aa5cff},
    issn={1367-2630}
}

@article{castelvecchi2017ibm,
    title={{IBM}'s quantum cloud computer goes commercial},
    author={Castelvecchi, Davide},
    journal={Nature},
    volume={543},
    number={7644},
    pages={159},
    year={2017},
    month={3},
    publisher={Nature Publishing Group},
    doi={10.1038/nature.2017.21585},
    issn={1476-4687}
}

@article{kashefi2017optimised,
    doi={10.1088/1751-8121/aa5dac},
    year={2017},
    month={3},
    publisher={{IOP} Publishing},
    volume={50},
    number={14},
    pages={145306},
    author={Elham Kashefi and Petros Wallden},
    title={Optimised resource construction for verifiable quantum computation},
    journal={J. Phys. A Math. Theor.},
    issn={1751-8121}
}

@misc{aharonov2017interactive,
    title={Interactive proofs for quantum computations},
    author={Aharonov, Dorit and Ben-Or, Michael and Eban, Elad and Mahadev, Urmila},
    year={2017},
    month={4},
    eprint={1704.04487},
    archivePrefix={arXiv},
    primaryClass={quant-ph},
    doi={10.48550/ARXIV.1704.04487}
}

@article{mckague2017self,
    doi={10.22331/q-2017-04-25-1},
    title={Self-testing in parallel with {CHSH}},
    author={McKague, Matthew},
    journal={Quantum},
    issn={2521-327X},
    publisher={Verein zur F{\"{o}}rderung des Open Access Publizierens in den Quantenwissenschaften},
    volume={1},
    pages={1},
    month={4},
    year={2017}
}

@article{coladangelo2017all,
    title={All pure bipartite entangled states can be self-tested},
    author={Coladangelo, Andrea and Goh, Koon Tong and Scarani, Valerio},
    journal={Nature Communications},
    volume={8},
    number={1},
    pages={1--5},
    year={2017},
    month={5},
    publisher={Nature Publishing Group},
    doi={10.1038/ncomms15485},
    issn={2041-1723}
}

@article{kaniewski2017self,
    title={Self-testing of binary observables based on commutation},
    author={Kaniewski, J{\k{e}}drzej},
    journal={Phys. Rev. A},
    volume={95},
    number={6},
    pages={062323},
    year={2017},
    month={6},
    publisher={American Physical Society},
    doi={10.1103/PhysRevA.95.062323},
    issn={2469-9934}
}

@article{fitzsimons2017private,
    title={Private quantum computation: an introduction to blind quantum computing and related protocols},
    author={Fitzsimons, Joseph F},
    journal={npj Quantum Information},
    volume={3},
    number={1},
    pages={1--11},
    year={2017},
    month={6},
    publisher={Nature Publishing Group},
    doi={10.1038/s41534-017-0025-3},
    issn={2056-6387}
}

@inproceedings{natarajan2017quantum,
    author={Natarajan, Anand and Vidick, Thomas},
    title={A quantum linearity test for robustly verifying entanglement},
    year={2017},
    month={6},
    booktitle={Proceedings of the 49th Annual ACM SIGACT Symposium on Theory of Computing},
    pages={1003--1015},
    publisher={Association for Computing Machinery},
    address={New York, NY, USA},
    series={STOC 2017},
    venue={Montreal, Canada},
    doi={10.1145/3055399.3055468}
}

@article{fitzsimons2017unconditionally,
    title={Unconditionally verifiable blind quantum computation},
    author={Fitzsimons, Joseph F. and Kashefi, Elham},
    journal={Phys. Rev. A},
    volume={96},
    number={1},
    pages={012303},
    year={2017},
    month={7},
    publisher={American Physical Society},
    doi={10.1103/PhysRevA.96.012303},
    issn={2469-9934}
}

@article{coladangelo2017parallel,
    author={Coladangelo, Andrea},
    title={Parallel self-testing of (tilted) {EPR} pairs via copies of (tilted) {CHSH} and the magic square game},
    year={2017},
    publisher={Rinton Press, Incorporated},
    volume={17},
    number={9--10},
    journal={Quantum Inf. Comput.},
    month={8},
    pages={831--865},
    doi={10.26421/QIC17.9-10-6},
    issn={1533-7146}
}

@article{fujii2017verifiable,
    title={Verifiable fault tolerance in measurement-based quantum computation},
    author={Fujii, Keisuke and Hayashi, Masahito},
    journal={Phys. Rev. A},
    volume={96},
    number={3},
    pages={030301},
    year={2017},
    month={9},
    publisher={American Physical Society},
    doi={10.1103/PhysRevA.96.030301},
    issn={2469-9934}
}

@article{morimae2017verification,
    title={Verification of hypergraph states},
    author={Morimae, Tomoyuki and Takeuchi, Yuki and Hayashi, Masahito},
    journal={Phys. Rev. A},
    volume={96},
    number={6},
    pages={062321},
    year={2017},
    month={12},
    publisher={American Physical Society},
    doi={10.1103/PhysRevA.96.062321},
    issn={2469-9934}
}

@article{gowers2017inverse,
    author={W. T. Gowers and O. Hatami},
    title={Inverse and stability theorems for approximate representations of finite groups},
    year={2017},
    month={12},
    publisher={London Mathematical Society, Turpion Ltd and the Russian Academy of Sciences},
    volume={208},
    number={12},
    pages={1784},
    journal={Sb. Math.},
    doi={10.1070/SM8872},
    issn={1064-5616}
}

@article{fitzsimons2018post,
    title={\textit{Post hoc} verification of quantum computation},
    author={Fitzsimons, Joseph F. and Hajdu{\v{s}}ek, Michal and Morimae, Tomoyuki},
    journal={Phys. Rev. Lett.},
    volume={120},
    number={4},
    pages={040501},
    year={2018},
    month={1},
    publisher={American Physical Society},
    doi={10.1103/PhysRevLett.120.040501},
    issn={1079-7114}
}

@article{lin2018device,
    title={Device-independent point estimation from finite data and its application to device-independent property estimation},
    author={Lin, Pei-Sheng and Rosset, Denis and Zhang, Yanbao and Bancal, Jean-Daniel and Liang, Yeong-Cherng},
    journal={Phys. Rev. A},
    volume={97},
    number={3},
    pages={032309},
    year={2018},
    month={3},
    publisher={American Physical Society},
    doi = {10.1103/PhysRevA.97.032309},
    issn={2469-9934}
}

@article{hayashi2018self,
    title={Self-guaranteed measurement-based quantum computation},
    author={Hayashi, Masahito and Hajdu{\v{s}}ek, Michal},
    journal={Phys. Rev. A},
    volume={97},
    number={5},
    pages={052308},
    year={2018},
    month={5},
    publisher={American Physical Society},
    doi={10.1103/PhysRevA.97.052308},
    issn={2469-9934}
}

@article{broadbent2018verify,
    author={Broadbent, Anne},
    title={How to verify a quantum computation},
    year={2018},
    month={6},
    pages={1--37},
    doi={10.4086/toc.2018.v014a011},
    publisher={Theory of Computing},
    journal={Theory Comput.},
    volume={14},
    number={11},
    issn={1557-2862}
}

@inproceedings{arnon2018noise,
    author={Rotem Arnon-Friedman and Henry Yuen},
    title={Noise-tolerant testing of high entanglement of formation},
    booktitle={45th International Colloquium on Automata, Languages, and  Programming (ICALP 2018)},
    pages={11:1--11:12},
    series={Leibniz International Proceedings in Informatics (LIPIcs)},
    venue={Prague, Czech Republic},
    year={2018},
    month={7},
    volume={107},
    editor={Ioannis Chatzigiannakis and Christos Kaklamanis and D{\'a}niel Marx and Donald Sannella},
    publisher={Schloss Dagstuhl -- Leibniz-Zentrum f{\"u}er Informatik},
    address={Dagstuhl, Germany},
    doi={10.4230/LIPIcs.ICALP.2018.11},
    isbn={978-3-95977-076-7},
    issn={1868-8969}
}

@article{chao2018test,
    title={Test for a large amount of entanglement, using few measurements},
    author={Chao, Rui and Reichardt, Ben W and Sutherland, Chris and Vidick, Thomas},
    journal={Quantum},
    volume={2},
    pages={92},
    year={2018},
    month={9},
    publisher={Verein zur F{\"{o}}rderung des Open Access Publizierens in den Quantenwissenschaften},
    doi={10.22331/q-2018-09-03-92},
    issn={2521-327X}
}

@inproceedings{natarajan2018low,
    author={Natarajan, Anand and Vidick, Thomas},
    booktitle={2018 IEEE 59th Annual Symposium on Foundations of Computer Science (FOCS)},
    title={Low-degree testing for quantum states, and a quantum entangled games {PCP} for {QMA}},
    venue={Paris, France},
    year={2018},
    month={10},
    pages={731--742},
    publisher={IEEE},
    doi={10.1109/FOCS.2018.00075},
    issn={2575-8454}
}

@inproceedings{mahadev2018classical,
    author={Mahadev, Urmila},
    booktitle={2018 IEEE 59th Annual Symposium on Foundations of Computer Science (FOCS)},
    title={Classical verification of quantum computations},
    year={2018},
    month={10},
    pages={259--267},
    doi={10.1109/FOCS.2018.00033},
    issn={2575-8454}
}

@article{bowles2018self,
    title={Self-testing of Pauli observables for device-independent entanglement certification},
    author={Bowles, Joseph and {\v{S}}upi{\'c}, Ivan and Cavalcanti, Daniel and Ac{\'i}n, Antonio},
    journal={Phys. Rev. A},
    volume={98},
    number={4},
    pages={042336},
    year={2018},
    month={10},
    publisher={American Physical Society},
    doi={10.1103/PhysRevA.98.042336},
    issn={2469-9934}
}

@article{coladangelo2018generalization,
    title={Generalization of the {Clauser}-{Horne}-{Shimony}-{Holt} inequality self-testing maximally entangled states of any local dimension},
    author={Coladangelo, Andrea},
    journal={Phys. Rev. A},
    volume={98},
    number={5},
    pages={052115},
    year={2018},
    month={11},
    publisher={American Physical Society},
    doi={10.1103/PhysRevA.98.052115},
    issn={2469-9934}
}

@article{sekatski2018certifying,
    title={Certifying the building blocks of quantum computers from {B}ell's theorem},
    author={Sekatski, Pavel and Bancal, Jean-Daniel and Wagner, Sebastian and Sangouard, Nicolas},
    journal={Phys. Rev. Lett.},
    volume={121},
    number={18},
    pages={180505},
    year={2018},
    month={11},
    publisher={American Physical Society},
    doi={10.1103/PhysRevLett.121.180505},
    issn={1079-7114}
}

@article{li2019analytic,
    author={Xinhui Li and Yukun Wang and Yunguang Han and Sujuan Qin and Fei Gao and Qiaoyan Wen},
    title={Analytic robustness bound for self-testing of the singlet with two binary measurements},
    journal={J. Opt. Soc. Am. B},
    volume={36},
    number={2},
    pages={457--463},
    year={2019},
    month={2},
    publisher = {Optica Publishing Group},
    doi={10.1364/JOSAB.36.000457},
    issn={1520-8540}
}

@article{arnon2019device,
    author={Rotem Arnon-Friedman and Jean-Daniel Bancal},
    title={Device-independent certification of one-shot distillable entanglement},
    journal={New J. Phys.},
    volume={21},
    number={3},
    pages={033010},
    publisher={IOP Publishing},
    year={2019},
    month={3},
    doi={10.1088/1367-2630/aafef6},
    issn={1367-2630}
}

@inproceedings{coladangelo2019verifier,
    author={Coladangelo, Andrea and Grilo, Alex B and Jeffery, Stacey and Vidick, Thomas},
    editor={Ishai, Yuval and Rijmen, Vincent},
    title={Verifier-on-a-leash: New schemes for verifiable delegated quantum computation, with quasilinear resources},
    booktitle={Advances in Cryptology -- {EUROCRYPT} 2019},
    venue={Darmstadt, Germany},
    year={2019},
    month={4},
    publisher={Springer International Publishing},
    address={Cham},
    pages={247--277},
    doi={10.1007/978-3-030-17659-4_9},
    isbn={978-3-030-17659-4}
}

@inproceedings{gheorghiu2019computationally,
    author={Gheorghiu, Alexandru and Vidick, Thomas},
    booktitle={2019 IEEE 60th Annual Symposium on Foundations of Computer Science (FOCS)},
    title={Computationally-secure and composable remote state preparation},
    year={2019},
    month={11},
    pages={1024--1033},
    doi={10.1109/FOCS.2019.00066},
    issn={2575-8454}
}

@inproceedings{natarajan2019neexp,
    author={Natarajan, Anand and Wright, John},
    booktitle={2019 IEEE 60th Annual Symposium on Foundations of Computer Science (FOCS)},
    title={$\mathsf{NEEXP}$ is contained in $\mathsf{MIP}^{*}$},
    year={2019},
    month={11},
    pages={510--518},
    publisher={IEEE},
    doi={10.1109/FOCS.2019.00039},
    issn={2575-8454}
}

@article{xu2020improved,
    author={Xu, Qingshan and Tan, Xiaoqing and Huang, Rui},
    title={Improved resource state for verifiable blind quantum computation},
    journal={Entropy},
    publisher={MDPI},
    volume={22},
    year={2020},
    month={9},
    number={9},
    pages={996},
    issn={1099-4300},
    doi={10.3390/e22090996}
}

@article{supic2020self,
    title={Self-testing of quantum systems: A review},
    author={{\v{S}}upi{\'c}, Ivan and Bowles, Joseph},
    journal={Quantum},
    volume={4},
    pages={337},
    year={2020},
    month={9},
    publisher={Verein zur F{\"{o}}rderung des Open Access Publizierens in den Quantenwissenschaften},
    issn={2521-327X},
    doi={10.22331/q-2020-09-30-337}
}

@article{adamson2020quantum,
    title={Quantum magic rectangles: Characterization and application to certified randomness expansion},
    author={Adamson, Sean A. and Wallden, Petros},
    journal={Phys. Rev. Research},
    volume={2},
    number={4},
    pages={043317},
    year={2020},
    month={12},
    publisher={American Physical Society},
    doi={10.1103/PhysRevResearch.2.043317},
    issn={2643-1564}
}

@misc{mancinska2021constant,
    title={Constant-sized robust self-tests for states and measurements of unbounded dimension}, 
    author={Man{\v{c}}inska, Laura and Prakash, Jitendra and Schafhauser, Christopher},
    year={2021},
    month={3},
    eprint={2103.01729},
    archivePrefix={arXiv},
    primaryClass={quant-ph},
    doi = {10.48550/ARXIV.2103.01729}
}

@article{supic2021device,
    doi={10.22331/q-2021-03-23-418},
    title={Device-independent certification of tensor products of quantum states using single-copy self-testing protocols},
    author={{\v{S}}upi{\'c}, Ivan and Cavalcanti, Daniel and Bowles, Joseph},
    journal={Quantum},
    issn={2521-327X},
    publisher={Verein zur F{\"{o}}rderung des Open Access Publizierens in den Quantenwissenschaften},
    volume={5},
    pages={418},
    month={3},
    year={2021}
}

@article{agresti2021experimental,
    title={Experimental Robust Self-Testing of the State Generated by a Quantum Network},
    author={Agresti, Iris and Polacchi, Beatrice and Poderini, Davide and Polino, Emanuele and Suprano, Alessia and {\v{S}}upi{\'c}, Ivan and Bowles, Joseph and Carvacho, Gonzalo and Cavalcanti, Daniel and Sciarrino, Fabio},
    journal={PRX Quantum},
    volume={2},
    number={2},
    pages={020346},
    year={2021},
    month={6},
    publisher={American Physical Society},
    doi={10.1103/PRXQuantum.2.020346},
    issn={2691-3399}
}

@article{metger2021self,
    author={Metger, Tony and Vidick, Thomas},
    title={Self-testing of a single quantum device under computational assumptions},
    journal={Quantum},
    volume={5},
    pages={544},
    year={2021},
    month={9},
    publisher={Verein zur F{\"o}rderung des Open Access Publizierens in den Quantenwissenschaften},
    doi={10.22331/q-2021-09-16-544},
    issn={2521-327X}
}

@article{sarkar2021self,
    title={Self-testing quantum systems of arbitrary local dimension with minimal number of measurements},
    author={Sarkar, Shubhayan and Saha, Debashis and Kaniewski, J{\k{e}}drzej and Augusiak, Remigiusz},
    journal={{npj} Quantum Information},
    volume={7},
    number={1},
    pages={151},
    year={2021},
    month={10},
    publisher={Nature Publishing Group},
    issn={2056-6387},
    doi={10.1038/s41534-021-00490-3}
}

@article{kashefi2021securing,
    title={Verifying {BQP} computations on noisy devices with minimal overhead},
    author={Leichtle, Dominik and Music, Luka and Kashefi, Elham and Ollivier, Harold},
    journal={PRX Quantum},
    volume={2},
    number={4},
    pages={040302},
    year={2021},
    month={10},
    publisher={American Physical Society},
    doi={10.1103/PRXQuantum.2.040302},
    issn={2691-3399}
}

@article{renou2021quantum,
    title={Quantum theory based on real numbers can be experimentally falsified},
    author={Renou, Marc-Olivier and Trillo, David and Weilenmann, Mirjam and Le, Thinh P and Tavakoli, Armin and Gisin, Nicolas and Ac{\'i}n, Antonio and Navascu{\'e}s, Miguel},
    journal={Nature},
    volume={600},
    number={7890},
    pages={625--629},
    year={2021},
    month={12},
    publisher={Nature Publishing Group},
    doi={10.1038/s41586-021-04160-4},
    issn={1476-4687}
}

@article{fu2022constant,
    doi={10.22331/q-2022-01-03-614},
    title={Constant-sized correlations are sufficient to self-test maximally entangled states with unbounded dimension},
    author={Fu, Honghao},
    journal={Quantum},
    issn={2521-327X},
    publisher={Verein zur F{\"{o}}rderung des Open Access Publizierens in den Quantenwissenschaften},
    volume={6},
    pages={614},
    month={1},
    year={2022}
}

@article{adamson2022practical,
    title={Practical parallel self-testing of {Bell} states via magic rectangles},
    author={Adamson, Sean A. and Wallden, Petros},
    journal={Phys. Rev. A},
    volume={105},
    number={3},
    pages={032456},
    year={2022},
    month={3},
    publisher={American Physical Society},
    doi={10.1103/PhysRevA.105.032456},
    issn={2469-9934}
}

@article{portmann2022security,
    title={Security in quantum cryptography},
    author={Portmann, Christopher and Renner, Renato},
    journal={Rev. Mod. Phys.},
    volume={94},
    number={2},
    pages={025008},
    year={2022},
    month={6},
    publisher={American Physical Society},
    doi={10.1103/RevModPhys.94.025008}
}

@article{nadlinger2022experimental,
    title={Experimental quantum key distribution certified by {Bell}'s theorem},
    author={Nadlinger, David P. and Drmota, Peter and Nichol, Bethan C. and Araneda, Gabriel and Main, Dougal and Srinivas, Raghavendra and Lucas, David M. and Ballance, Christopher J. and Ivanov, Kirill and Tan, Ernest Y.-Z. and Sekatski, Pavel and Urbanke, R{\"u}diger Leo and Renner, Renato and Sangouard, Nicolas and Bancal, Jean-Daniel},
    journal={Nature},
    volume={607},
    number={7920},
    pages={682--686},
    year={2022},
    month={7},
    publisher={Nature Publishing Group},
    doi={10.1038/s41586-022-04941-5},
    issn={1476-4687}
}

@inproceedings{gheorghiu2023quantum,
    author={Gheorghiu, Alexandru and Metger, Tony and Poremba, Alexander},
    title={Quantum cryptography with classical communication: Parallel remote state preparation for copy-protection, verification, and more},
    booktitle={50th International Colloquium on Automata, Languages, and Programming (ICALP 2023)},
    pages={67:1--67:17},
    series={Leibniz International Proceedings in Informatics (LIPIcs)},
    venue={Paderborn, Germany},
    isbn={978-3-95977-278-5},
    issn={1868-8969},
    year={2023},
    month={7},
    volume={261},
    editor={Etessami, Kousha and Feige, Uriel and Puppis, Gabriele},
    publisher={Schloss Dagstuhl -- Leibniz-Zentrum f{\"u}r Informatik},
    address={Dagstuhl, Germany},
    doi={10.4230/LIPIcs.ICALP.2023.67}
}

@article{supic2023quantum,
    title={Quantum networks self-test all entangled states},
    author={{\v{S}}upi{\'c}, Ivan and Bowles, Joseph and Renou, Marc-Olivier and Ac{'\i}n, Antonio and Hoban, Matty J},
    journal={Nat. Phys.},
    volume={19},
    number={5},
    pages={670--675},
    year={2023},
    month={2},
    publisher={Nature Publishing Group},
    issn={1745-2481},
    doi={10.1038/s41567-023-01945-4}
}

@misc{ibmquantum,
    author={{IBM Research}},
    title={{IBM Quantum}},
    url={https://quantum-computing.ibm.com/},
    howpublished={Online (visited on 2022-12-08)}
}

@misc{amazonbraket,
    author={{Amazon Web Services}},
    title={{Amazon Braket}},
    url={https://aws.amazon.com/braket/},
    howpublished={Online (visited on 2022-12-08)}
}

@misc{azurequantum,
    author={{Microsoft}},
    title={{Azure Quantum}},
    url={https://azure.microsoft.com/products/quantum/},
    howpublished={Online (visited on 2022-12-08)}
}

\end{document}